\documentclass[acmsmall,screen,nonacm]{acmart}

\renewcommand\footnotetextcopyrightpermission[1]{}
\AtBeginDocument{%
  
}

\usepackage{mathtools}
\usepackage{mathrsfs}
\usepackage{braket}
\newif\ifMintedLiveHighlight
\ifdefined\pdfshellescape
  \ifnum\pdfshellescape=1 \MintedLiveHighlighttrue \fi
\fi
\ifMintedLiveHighlight
  \usepackage[newfloat]{minted}
\else
  \usepackage[newfloat,frozencache]{minted}
\fi
\setmintedinline{breaklines}
\newcommand{\edgeBZ}{\mathrm{edgeB}_Z}
\newcommand{\YsubZero}{_{Y_0}}
\usepackage{multirow}
\usepackage{graphicx}
\newif\iflatexml
\latexmlfalse
\iflatexml\else
  \usepackage{zref-savepos}
\fi
\usepackage{tikz}
\usetikzlibrary{positioning,arrows.meta,calc}

\allowdisplaybreaks

\newcounter{figureAnnotationCounter}
\iflatexml
  
  \newcommand{\figureAnnotation}[1]{%
    \quad\normalfont\text{\scriptsize(#1)}%
  }
\else
  
  \newcommand{\figureAnnotation}[1]{%
    \stepcounter{figureAnnotationCounter}%
    \zsavepos{figureAnnotation\the\value{figureAnnotationCounter}}%
    \rlap{\hbox to
      \dimexpr\zposx{figureAnnotationLeft}sp+\linewidth-\zposx{figureAnnotation\the\value{figureAnnotationCounter}}sp\relax
    {\hfill\normalfont\text{\scriptsize #1}}}%
  }
\fi

\AtBeginDocument{%
  \theoremstyle{acmdefinition}%
  \newtheorem{remark}[theorem]{Remark}%
}

\ExplSyntaxOn
\ior_new:N \g__mixin_in_ior
\iow_new:N \g__mixin_out_iow
\str_new:N \l__mixin_line_str
\str_new:N \l__mixin_dir_str
\str_new:N \l__mixin_name_str
\str_new:N \l__mixin_ext_str
\str_new:N \l__mixin_base_str
\str_new:N \l__mixin_mixinname_str
\cs_new_protected:Npn \mixinprovide #1#2 {
  \file_parse_full_name:nNNN {#1} \l__mixin_dir_str \l__mixin_name_str \l__mixin_ext_str
  \str_set:Nx \l__mixin_base_str { \l__mixin_name_str \l__mixin_ext_str }
  \cs_set:Npx \mixinbase { \str_use:N \l__mixin_base_str }
  \file_if_exist:nT {#1} {
    \ior_open:Nn \g__mixin_in_ior {#1}
    \iow_open:Nn \g__mixin_out_iow { \l__mixin_base_str }
    \ior_str_map_inline:Nn \g__mixin_in_ior {
      \str_set:Nn \l__mixin_line_str { ##1 }
      \int_compare:nNnTF {#2} = {1} {
        \regex_replace_once:nnN { \s*\#.* } { } \l__mixin_line_str
        \regex_replace_once:nnN { \s+ \Z } { } \l__mixin_line_str
        \str_if_empty:NF \l__mixin_line_str
          { \iow_now:Ne \g__mixin_out_iow { \str_use:N \l__mixin_line_str } }
      }{
        \iow_now:Ne \g__mixin_out_iow { \str_use:N \l__mixin_line_str }
      }
    }
    \ior_close:N \g__mixin_in_ior
    \iow_close:N \g__mixin_out_iow
  }
}
\ExplSyntaxOff

\ExplSyntaxOn
\cs_new_protected:Npn \__mixin_halve_indent:N #1 {
  \regex_replace_all:nnN { (\ \ )\ \  } { \1 } #1
}
\cs_new_protected:Npn \mixinstripprovide #1 {
  \file_parse_full_name:nNNN {#1} \l__mixin_dir_str \l__mixin_name_str \l__mixin_ext_str
  \str_set:Nx \l__mixin_base_str { \l__mixin_name_str \l__mixin_ext_str }
  \cs_set:Npx \mixinbase { \str_use:N \l__mixin_base_str }
  \str_set:Nx \l__mixin_mixinname_str { \l__mixin_name_str }
  \regex_replace_once:nnN { \.mixin \Z } { } \l__mixin_mixinname_str
  \file_if_exist:nT {#1} {
    \ior_open:Nn \g__mixin_in_ior {#1}
    \iow_open:Nn \g__mixin_out_iow { \l__mixin_base_str }
    \iow_now:Ne \g__mixin_out_iow { -~ \str_use:N \l__mixin_mixinname_str : }
    \ior_str_map_inline:Nn \g__mixin_in_ior {
      \str_set:Nn \l__mixin_line_str { ##1 }
      \regex_replace_once:nnN { \s*\#.* } { } \l__mixin_line_str
      \regex_replace_all:nnN { Builtin,\s* } { } \l__mixin_line_str
      \regex_replace_once:nnN { \s+ \Z } { } \l__mixin_line_str
      \str_if_empty:NF \l__mixin_line_str
        {
          \__mixin_halve_indent:N \l__mixin_line_str
          \iow_now:Ne \g__mixin_out_iow { \prg_replicate:nn { 2 } { ~ } \str_use:N \l__mixin_line_str }
        }
    }
    \ior_close:N \g__mixin_in_ior
    \iow_close:N \g__mixin_out_iow
  }
}
\ExplSyntaxOff
\newcommand{\mixinstrip}[2][]{\mixinstripprovide{#2}\inputminted[#1]{yaml}{\mixinbase}}

\ExplSyntaxOn
\cs_new_protected:Npn \mixincheatsheetprovide #1#2 {
  \file_parse_full_name:nNNN {#1} \l__mixin_dir_str \l__mixin_name_str \l__mixin_ext_str
  \str_set:Nx \l__mixin_base_str { \l__mixin_name_str \l__mixin_ext_str }
  \cs_set:Npx \mixinbase { \str_use:N \l__mixin_base_str }
  \file_if_exist:nT {#1} {
    \ior_open:Nn \g__mixin_in_ior {#1}
    \iow_open:Nn \g__mixin_out_iow { \l__mixin_base_str }
    \iow_now:Ne \g__mixin_out_iow { -~#2: }
    \ior_str_map_inline:Nn \g__mixin_in_ior {
      \str_set:Nx \l__mixin_line_str { \prg_replicate:nn { 2 } { ~ } ##1 }
      \iow_now:Ne \g__mixin_out_iow { \str_use:N \l__mixin_line_str }
    }
    \ior_close:N \g__mixin_in_ior
    \iow_close:N \g__mixin_out_iow
  }
}
\ExplSyntaxOff
\newcommand{\mixincheatsheet}[3][]{\mixincheatsheetprovide{#2}{#3}\inputminted[#1]{yaml}{\mixinbase}}

\newlength{\namedmixinindent}

\newcommand{\Path}{\mathrm{Path}}
\newcommand{\Reference}{\mathrm{Reference}}
\newcommand{\Tag}{\mathrm{Tag}}
\newcommand{\String}{\mathrm{String}}
\newcommand{\Label}{\mathrm{Label}}
\DeclareMathOperator{\dom}{dom}
\DeclareMathOperator{\overrides}{overrides}
\DeclareMathOperator{\supers}{supers}
\DeclareMathOperator{\inherits}{inherits}
\DeclareMathOperator{\has}{has}

\newcommand{\doc}{\mathrm{doc}}
\newcommand{\ASTroot}{\doc}
\DeclareMathOperator{\bases}{bases}
\DeclareMathOperator{\resolve}{resolve}
\DeclareMathOperator{\this}{this}
\newcommand{\snoc}{\mathbin{\triangleright}}
\newcommand{\cat}{\mathbin{\frown}}
\DeclareMathOperator{\init}{init}
\DeclareMathOperator{\last}{last}
\DeclareMathOperator{\at}{at}
\DeclareMathOperator{\items}{items}
\DeclareMathOperator{\lbl}{label}
\DeclareMathOperator{\lbls}{labels}
\DeclareMathOperator{\isnull}{null}
\DeclareMathOperator{\kv}{kv}
\DeclareMathOperator{\lexical}{lexical}
\DeclareMathOperator{\qualified}{qualified}
\DeclareMathOperator{\indexed}{indexed}
\DeclareMathOperator{\eval}{eval}
\DeclareMathOperator{\lfp}{lfp}

\newcommand{\olbl}[1]{\text{\mintinline{yaml}{#1}}}

\newcommand{\YAMLNode}{\mathrm{Node}}
\newcommand{\YAMLScalar}{\mathrm{Scalar}}
\newcommand{\YAMLSequence}{\mathrm{Sequence}}
\newcommand{\YAMLMapping}{\mathrm{Mapping}}
\newcommand{\seqtag}{\text{\ttfamily !!seq}}
\newcommand{\maptag}{\text{\ttfamily !!map}}
\newcommand{\strtag}{\text{\ttfamily !!str}}
\newcommand{\nulltag}{\text{\ttfamily !!null}}

\makeatletter
\@ifundefined{ifInheritanceBody}{\newif\ifInheritanceBody\InheritanceBodytrue}{}
\@ifundefined{ifInheritanceAppendix}{\newif\ifInheritanceAppendix\InheritanceAppendixtrue}{}
\@ifundefined{ifInheritanceChinese}{\newif\ifInheritanceChinese\InheritanceChinesefalse}{}
\makeatother

\newcommand{\bilingual}[2]{#1}
\ifInheritanceChinese
  \usepackage{CJKutf8}
  \makeatletter
  \DeclareFontFamily{C70}{song}{\hyphenchar\font\m@ne}
  \DeclareFontShape{C70}{song}{m}{n}{<-> CJK * gbsnu}{\CJKnormal}
  \DeclareFontShape{C70}{song}{bx}{n}{<-> CJKb * gbsnu}{\CJKbold}
  \makeatother
  \renewcommand{\bilingual}[2]{\texorpdfstring{#2}{#1}}
\fi

\acmJournal{PACMPL}
\usepackage[ruled,vlined,linesnumbered]{algorithm2e}
\SetKwProg{Fn}{function}{:}{}
\SetKwFunction{TABLED}{Tabled}
\SetKwFunction{RESOLVE}{Resolve}

\begin{document}
\ifInheritanceChinese
  \begin{CJK*}{UTF8}{gbsn}
  \renewcommand{\proofname}{证明}
  \makeatletter
  \newcommand{\inheritanceThmName}[1]{\@ifundefined{inheritance@thm@#1}{#1}{\csname inheritance@thm@#1\endcsname}}
  \renewcommand{\thmname}[1]{\inheritanceThmName{#1}}
  \@namedef{inheritance@thm@Theorem}{定理}
  \@namedef{inheritance@thm@Lemma}{引理}
  \@namedef{inheritance@thm@Corollary}{推论}
  \@namedef{inheritance@thm@Proposition}{命题}
  \@namedef{inheritance@thm@Conjecture}{猜想}
  \@namedef{inheritance@thm@Definition}{定义}
  \@namedef{inheritance@thm@Example}{例}
  \@namedef{inheritance@thm@Remark}{注}
  \@namedef{inheritance@thm@Notation}{记法}
  \makeatother
\fi

\title[A Calculus of Inheritance]{\bilingual{A Calculus of Inheritance}{一种继承的演算}}
\ifInheritanceBody\else
  \subtitle{Supplementary Material}
\fi

\author{Bo Yang}
\orcid{0000-0003-2757-9115}
\affiliation{%
  \institution{Figure AI Inc.}
  \city{San Jose}
  \state{California}
  \country{USA}
}
\email{yang-bo@yang-bo.com}
\thanks{This work was conducted independently prior to the author's employment at Figure AI.}

\begin{CCSXML}
<ccs2012>
<concept>
<concept_id>10003752.10010124.10010131</concept_id>
<concept_desc>Theory of computation~Program semantics</concept_desc>
<concept_significance>500</concept_significance>
</concept>
<concept>
<concept_id>10003752.10010124.10010125.10010128</concept_id>
<concept_desc>Theory of computation~Object oriented constructs</concept_desc>
<concept_significance>500</concept_significance>
</concept>
<concept>
<concept_id>10003752.10010124.10010125.10010127</concept_id>
<concept_desc>Theory of computation~Functional constructs</concept_desc>
<concept_significance>300</concept_significance>
</concept>
<concept>
<concept_id>10003752.10010124.10010125.10010129</concept_id>
<concept_desc>Theory of computation~Program schemes</concept_desc>
<concept_significance>100</concept_significance>
</concept>
<concept>
<concept_id>10011007.10011006.10011039</concept_id>
<concept_desc>Software and its engineering~Formal language definitions</concept_desc>
<concept_significance>500</concept_significance>
</concept>
<concept>
<concept_id>10011007.10011006.10011008.10011009.10011019</concept_id>
<concept_desc>Software and its engineering~Extensible languages</concept_desc>
<concept_significance>500</concept_significance>
</concept>
<concept>
<concept_id>10011007.10011006.10011008.10011009.10011011</concept_id>
<concept_desc>Software and its engineering~Object oriented languages</concept_desc>
<concept_significance>100</concept_significance>
</concept>
</ccs2012>
\end{CCSXML}

\ifInheritanceBody
\begin{abstract}

  Just as the $\lambda$-calculus uses three primitives (abstraction, application, variable) as the foundation of functional programming, inheritance-calculus uses three primitives (mixin, definition, reference) as the foundation of declarative programming. By unifying modules, classes, objects, methods, fields, and locals under a single mixin abstraction, the calculus models inheritance simply as set union. Consequently, composition is inherently commutative, idempotent, and associative, structurally eliminating the multiple-inheritance linearization problem. Its semantics is first-order, denotational, and evaluated by tabling, even for cyclic inheritance hierarchies.

  Inheritance-calculus is distilled from MIXINv2, a practical implementation in which the same code acts as different function colors; ordinary arithmetic yields the relational semantics of logic programming; $\mathtt{this}$ resolves to multiple targets; and programs are immune to nonextensibility in the sense of the Expression Problem. This makes inheritance-calculus strictly more expressive than the $\lambda$-calculus in both common sense and Felleisen's sense.
\end{abstract}

\ccsdesc[500]{Theory of computation~Program semantics}
\ccsdesc[500]{Theory of computation~Object oriented constructs}
\ccsdesc[300]{Theory of computation~Functional constructs}
\ccsdesc[100]{Theory of computation~Program schemes}
\ccsdesc[500]{Software and its engineering~Formal language definitions}
\ccsdesc[500]{Software and its engineering~Extensible languages}
\ccsdesc[100]{Software and its engineering~Object oriented languages}

\keywords{declarative programming, inheritance, fixpoint semantics,
  self-referential records, expression problem,
  first-order semantics,
  \texorpdfstring{$\lambda$-calculus}{lambda-calculus},
\texorpdfstring{L\'evy--Longo trees}{Levy-Longo trees}, configuration languages}
\fi 

\maketitle

\ifInheritanceBody
\section{\bilingual{Introduction}{引言}}
\label{sec:introduction}

\ifInheritanceChinese
诸如 Jsonnet~\cite{cunningham2014-jsonnet}、Hydra~\cite{yadan2023-hydra}、CUE~\cite{van-lohuizen2019-cue}、Dhall~\cite{gonzalez2017-dhall}、Kustomize~\cite{kubernetes-sig-2018-kustomize} 与 JSON Patch~\cite{bryan2013-json-patch} 这样的配置语言,都将配置组装自已有的片段——通过继承、合并、overlay 或 patch——而非每次从零写起。
\else
Configuration languages such as Jsonnet~\cite{cunningham2014-jsonnet}, Hydra~\cite{yadan2023-hydra},
CUE~\cite{van-lohuizen2019-cue}, Dhall~\cite{gonzalez2017-dhall},
Kustomize~\cite{kubernetes-sig-2018-kustomize}, and JSON
Patch~\cite{bryan2013-json-patch} all build configurations by composing
existing pieces---through inheritance, merging, overlay, or patch---rather
than writing each configuration from scratch.
\fi

\ifInheritanceChinese
$\lambda$-演算是支撑函数式编程的抽象演算。对于这些语言所共享的组合规范,我们问:

\begin{quote}
  仅凭配置本身能否胜任通用编程?
\end{quote}

\noindent
为使这一问题精确化,我们以四条\emph{带主张色彩的}代理标准来操作化它,分别归属于问题的两个方面:\emph{配置}与\emph{实用的通用}编程。
\else
The $\lambda$-calculus is the calculus of abstraction underlying functional
programming. For the composition discipline these languages share, we ask:

\begin{quote}
  Can configuration alone be practical general-purpose programming?
\end{quote}

\noindent
To make the question precise, we operationalize it with four
\emph{opinionated} proxies, grouped under the question's two halves:
\emph{configuration} and \emph{practical general-purpose} programming.
\fi

\begin{itemize}
\ifInheritanceChinese
  \item 不可变且一阶%
    \footnote{
      此处\emph{一阶}同时具有三重含义:语义域仅包含数据,不含函数空间、闭包或策略~\cite{vanemden1976-predicate-logic-semantics};定义方程将数据映射到数据~\cite{reynolds1972-definitional-interpreters};语义是一阶归纳定义~\cite{aczel1977-inductive-definitions},其量词仅在数据上量化。
    }
    是我们对\emph{配置}的代理标准:配置被读取,而不是一个读取参数的函数。
  \item 图灵完备加上可扩展性是我们对\emph{实用的通用}编程的代理标准:语言必须能计算一切,并能在表达式问题~\cite{wadler1998-expression-problem}的意义上,通过添加代码而非改写来扩展。
\else
  \item Immutable and first-order%
    \footnote{
      Here \emph{first-order} holds in three senses at once:
      the semantic domain contains only data, no function spaces,
      closures, or
      strategies~\cite{vanemden1976-predicate-logic-semantics};
      the defining equations map data to
      data~\cite{reynolds1972-definitional-interpreters};
      and the semantics is a first-order inductive
      definition~\cite{aczel1977-inductive-definitions} whose
      quantifiers range only over data.
    }
    is our proxy for
    \emph{configuration}: a configuration is read, not a function that
    reads an argument.
  \item Turing-completeness together with extensibility is our proxy for
    \emph{practical general-purpose} programming: a language must be able
    to compute anything and, in the sense of the Expression
    Problem~\cite{wadler1998-expression-problem}, to grow by adding code
    rather than rewriting it.
\fi
\end{itemize}

\noindent
\ifInheritanceChinese
每种经典计算模型满足其中若干代理标准,但至少在一条上失败:
\else
Each classical model satisfies some of these proxies but fails at least
one:
\fi

\begin{table}[h]
  \centering
  \small
  \begin{tabular}{lcccl}
    \textbf{Model} & \textbf{Immutable} & \textbf{First-order} & \textbf{Turing complete} & \textbf{Extensibility} \\
    \hline
    Turing Machine        & $\times$     & $\checkmark$ & $\checkmark$ & $\times$ \\
    $\lambda$-calculus    & $\checkmark$ & $\times$     & $\checkmark$ & opt-in \\
    RAM Machine           & $\times$     & $\checkmark$ & $\checkmark$ & $\times$ \\
    Prolog / Horn clauses & $\checkmark$ & $\checkmark$ & $\checkmark$ & disjunct-only \\
    Configuration languages & $\checkmark$ & $\checkmark$ & $\times$     & merge \\
  \end{tabular}
  \caption{\bilingual{The question operationalized as four proxies: immutable and
    first-order (configuration) versus Turing-complete and extensible
  (practical general-purpose).}{将问题操作化为四条代理标准:不可变且一阶(配置)对图灵完备且可扩展(实用的通用)。}}
  \label{tab:computational-models}
\end{table}

\noindent
\ifInheritanceChinese
图灵机和 RAM 机需要可变状态且没有原生的扩展机制;$\lambda$-演算是高阶而非一阶的,且其可扩展性是选择性的:一个函数仅在其作者预见了扩展的地方才可扩展,例如通过传递开放递归参数,而封闭的函数体在不改写的情况下无法扩展;Prolog 满足其他三条代理标准,但只能通过向现有谓词添加析取子句(备选子句)来扩展,无法向现有值附加新字段或方法。配置语言是不可变且一阶的,并通过合并或叠加片段来扩展,但其声明式核心不是图灵完备的~\cite{van-lohuizen2019-cue, kubernetes-sig-2018-kustomize};那些为重获图灵完备而加入函数的配置语言~\cite{cunningham2014-jsonnet, dolstra2010-nixos}则变成高阶的,因而同样凑不齐四条。没有任何已知的计算模型同时满足全部四条代理标准。
\else
The Turing Machine and RAM Machine require mutable state and have no native
extension mechanism;
the $\lambda$-calculus is higher-order rather than first-order, and its
extensibility is opt-in: a function is extensible only where its author
anticipated extension, for example by threading open-recursion
parameters, while a closed function body cannot be extended without
rewriting it;
Prolog meets the other three proxies but is extensible only by adding
disjuncts (alternative clauses) to existing predicates, not by attaching
new fields or methods to existing values.
Configuration languages are immutable, first-order, and extensible by
merging or overlaying fragments, but their declarative core is not
Turing-complete~\cite{van-lohuizen2019-cue, kubernetes-sig-2018-kustomize};
those that add functions to regain Turing-completeness~\cite{cunningham2014-jsonnet, dolstra2010-nixos}
become higher-order, so they too miss a proxy.
No known computational model satisfies all four proxies simultaneously.
\fi

\ifInheritanceChinese
然而,已有一套机制表现得仿佛这样的模型确实存在。Nix 生态系统~\cite{dolstra2010-nixos} 格外突出,无论是就其规模还是其设计的走向而言。它的软件包集合 \mintinline{nix}{nixpkgs} 收录了超过 113{,}000 个软件包,%
\footnote{
  据 Repology~\cite{repology-contributors-2025-statistics},这接近 Debian 或 Ubuntu 软件包数量的三倍,而贡献者数量相近。
}
是现存最大的软件包集合之一。软件包的组合最初依赖一条线性化的 mixin 链,其中每一层都能看见它的前一层,%
\footnote{
  Nix overlay 通过一条线性化的 \mintinline{nix}{self}/\mintinline{nix}{super} 链来组合软件包集合,遵循 Bracha--Cook 的 mixin 模式~\cite{bracha1990-mixin-inheritance}:\mintinline{nix}{self} 晚绑定到不动点结果,而 \mintinline{nix}{super} 指向前一层。该机制是不对称且依赖顺序的。
}
但该生态系统已日益汇聚到一种对称的替代方案上:NixOS 模块系统~\cite{dolstra2010-nixos, nixos-contributors-nixos-modules},它通过对嵌套的属性集进行递归合并、配以自引用求值与延迟模块~\cite{nixos-contributors-nixos-modules} 来组合,而这种组合被设计为可交换、幂等、可结合。这一期望止步于其单目标解析的结构性局限:两个模块声明同一选项会被拒绝(第~\ref{sec:semantic-variants}~节)。该模块系统本为操作系统配置而设计,却被广泛采纳用于用户环境~\cite{helgesson2017-home-manager}、macOS 配置~\cite{jordan2017-nix-darwin}、磁盘分区~\cite{lassulus2022-disko}、flake 结构~\cite{hensing2022-flake-parts}、开发者环境~\cite{kozar2022-devenv}、Kubernetes 集群~\cite{xtruder2022-kubenix, ingolfsson2024-nixidy},以及多语言软件包构建~\cite{hauer2021-dream2nix},正是原本 \mintinline{nix}{nixpkgs} 不对称机制的领域。在此过程中,一份典型的模块配置几乎完全由嵌套属性集构成,函数仅偶尔作为 \mintinline{nix}{mkIf} 与 \mintinline{nix}{mkMerge} 等 DSL 语法出现,而不用于一般的计算。

Nix 值是惰性求值的且可能无穷,允许对任何单个软件包或配置选项求值,而无需实体化数十万个软件包和选项的全集。大多数其他配置语言则作用于有限的良基结构,因而无需惰性,每个元素本就可独立求值。然而,目前没有任何已知的计算理论刻画了这一机制所展现的可扩展性,这暗示我们正在寻找的解答已然存在于现实中。
\else
Yet there is already one mechanism that behaves as if such a
model exists.
The Nix ecosystem~\cite{dolstra2010-nixos} stands out, both for scale and for
the trajectory of its design.
Its package collection, \mintinline{nix}{nixpkgs}, contains over 113{,}000 packages,%
\footnote{
  According to Repology~\cite{repology-contributors-2025-statistics},
  nearly three times the package count of Debian or Ubuntu,
  achieved with a similar number of contributors.
}
one of the largest in existence.
Package composition originally relied on a linearized mixin chain in which
each layer sees its predecessor,%
\footnote{
  Nix overlays compose package sets via a linearized \mintinline{nix}{self}/\mintinline{nix}{super}
  chain following the Bracha--Cook mixin
  pattern~\cite{bracha1990-mixin-inheritance}: \mintinline{nix}{self} is late-bound to
  the fixed-point result and \mintinline{nix}{super} refers to the previous layer.
  The mechanism is asymmetric and order-dependent.
}
but the ecosystem has increasingly converged on a symmetric alternative: the
NixOS module system~\cite{dolstra2010-nixos, nixos-contributors-nixos-modules},
which composes by recursively merging nested attribute sets with
self-referential evaluation and deferred modules~\cite{nixos-contributors-nixos-modules};
the composition is designed to be commutative, idempotent, and
associative. The expectation stops at a structural limit of its
single-target resolution: two modules declaring the same option are
rejected (Section~\ref{sec:semantic-variants}).
The module system, originally designed for operating system configuration, has
been widely adopted for user environments~\cite{helgesson2017-home-manager},
macOS configuration~\cite{jordan2017-nix-darwin},
disk partitioning~\cite{lassulus2022-disko},
flake structure~\cite{hensing2022-flake-parts},
developer environments~\cite{kozar2022-devenv},
Kubernetes clusters~\cite{xtruder2022-kubenix, ingolfsson2024-nixidy},
and multi-language package building~\cite{hauer2021-dream2nix}, the original
domain of \mintinline{nix}{nixpkgs}'s asymmetric mechanism.
In this process, a typical module configuration consists almost entirely of
nested attribute sets, with functions appearing only occasionally as DSL
syntax such as \mintinline{nix}{mkIf} and \mintinline{nix}{mkMerge}, not for general computation.

Nix values are lazily evaluated and possibly infinite, allowing any single
package or configuration option to be evaluated without materializing the
entire set of hundreds of thousands of packages and options.
Most other configuration languages instead operate on finite, well-founded
structures, where every element is already independently evaluable without
needing laziness.
Yet no known computational theory captures the extensibility that this
mechanism exhibits, which suggests the solution we seek already exists in
practice.
\fi

\ifInheritanceChinese
我们追问,什么最小机制能产生这种惰性、可扩展的组合,并着手将 NixOS 模块系统化约到最小的原语集合。化约产生了 MIXIN(一个早期的 Nix 实现),以及后来的 MIXINv2\anon[~(作为补充材料附上)]{~\cite{yang2026-mixinv2}},目前用 Python 实现:一种仅有三个构造(mixin、定义和引用)、没有函数或 let 绑定的声明式编程语言。
\else
We asked what minimal mechanism gives rise to this lazy, extensible
composition, and set out to reduce the
NixOS module system to a minimal set of primitives.
The reduction produced MIXIN, an early implementation in Nix,
and subsequently
MIXINv2\anon[~(included as supplementary material)]{~\cite{yang2026-mixinv2}}, now implemented in Python: a declarative programming
language with only three constructs (mixin, definition,
and reference) and no functions or let-bindings.
\fi

\ifInheritanceChinese\SetupFloatingEnvironment{listing}{name=清单}\fi
\begin{listing}[tbp]
  \caption{\ifInheritanceChinese
  初看 inheritance-calculus:一个复数类及其加法方法,用 inheritance-calculus 记号表示。每行从左到右分三栏:inheritance-calculus 源;\texttt{\#} 之后是该行扮演的角色;\texttt{|} 之后是它所对应的 Python 代码。要点在于:单一构造,即由继承组合的定义 mixin,同时扮演了所有面向对象角色;每行的角色(模块、类、字段、方法、参数、局部、应用、实例、import)只是标注,它们全是同一种 \mintinline{yaml}{- name:} 定义与裸名继承,并非彼此独立的语言特性。
  \else
  An inheritance-calculus example: a complex-number class with an addition
  method, expressed in inheritance-calculus notation.
  Each line reads left to right in three columns: the inheritance-calculus
  source; after \texttt{\#}, the role that line plays; and after \texttt{|},
  the corresponding Python code.
  The point is that a single construct, a mixin of definitions composed
  by inheritance, plays every object-oriented role at once: the role on
  each line (module, class, field, method, parameter, local, application,
  instance, and import) is only an annotation, all the same
  \mintinline{yaml}{- name:} definition and bare-name inheritance, not
  separate language features.
  \fi}
  \label{lst:cheatsheet}
  \mixincheatsheet{../packages/mixinv2-examples/src/mixinv2_examples/ComplexModule.mixin.yaml}{ComplexModule}%
\end{listing}

\ifInheritanceChinese
在 MIXINv2 中开发实际程序的过程中,包括一个数据库驱动的 Web 服务器(其 Python 外部函数接口仅忠实地包装各个宿主库调用),我们发现模块、类、对象、方法、字段和局部绑定都可以表达为同一构造:一个深可合并的定义 mixin。本文将这一最小计算模型提炼为\textbf{继承演算},在清单~\ref{lst:cheatsheet} 中与 Python 并排展示。
第~\ref{sec:syntax}~节给出其形式语法,第~\ref{sec:mixin-trees}~节将这一初瞥展开为完整的语义。
与以往基于 mixin 的系统~\cite{bracha1990-mixin-inheritance, barrett1996-c3-linearization}不同,继承本质上是可交换、幂等且可结合的(附录~\ref{app:merge-algebra});第~\ref{sec:mixin-trees}~节展示它如何处理 Scala 和 NixOS 模块系统都以静态错误拒绝的多目标情况。
第~\ref{sec:case-study}~节追踪布尔逻辑和算术如何在各自独立的文件中纯粹以演算编码,在表达式问题~\cite{wadler1998-expression-problem}的意义上沿两个维度扩展而无需任何专门机制,以及普通算术如何产生逻辑编程~\cite{vanemden1976-predicate-logic-semantics}的关系语义。第~\ref{sec:emergent-phenomena}~节讨论这些及更多涌现现象,包括函数颜色盲~\cite{nystrom2015-function-color}。
这几节一起回答了四条代理标准:继承演算按其构造是不可变且一阶的(第~\ref{sec:definitions}~节),因为 $\lambda$-演算嵌入其中所以是图灵完备的(第~\ref{sec:forward-translation}~节),且在表达式问题的意义上沿两个维度可扩展(第~\ref{sec:case-study}~节)。
\else
In developing real programs in MIXINv2, including a database-backed web server whose Python foreign-function interface only faithfully wraps individual host-library calls, we found that module, class, object, method, field, and local binding can all be expressed as the same construct: a deep-mergeable mixin of definitions. This paper distills this minimal computational model into \textbf{inheritance-calculus}, shown beside Python in Listing~\ref{lst:cheatsheet}.
Section~\ref{sec:syntax} gives its formal syntax, and
Section~\ref{sec:mixin-trees} expands this first look into the full
semantics.
Unlike previous mixin-based systems~\cite{bracha1990-mixin-inheritance, barrett1996-c3-linearization},
inheritance is inherently commutative, idempotent, and associative
(Appendix~\ref{app:merge-algebra});
Section~\ref{sec:mixin-trees} shows how it also handles the multi-target case
that Scala and the NixOS module system both reject with a static error.
Section~\ref{sec:case-study} traces how boolean logic and
arithmetic, encoded purely in the calculus as separate files, extend in both
dimensions, in the sense of the Expression
Problem~\cite{wadler1998-expression-problem}, without any dedicated
mechanism, and how ordinary arithmetic yields the relational
semantics of logic
programming~\cite{vanemden1976-predicate-logic-semantics}.
Section~\ref{sec:emergent-phenomena} discusses these and further
emergent phenomena, including function color
blindness~\cite{nystrom2015-function-color}.
Together these sections answer the four proxies: inheritance-calculus
is immutable and first-order by construction
(Section~\ref{sec:definitions}), Turing-complete because the
$\lambda$-calculus embeds into it
(Section~\ref{sec:forward-translation}), and extensible in both
dimensions, in the sense of the Expression Problem (Section~\ref{sec:case-study}).
\fi

\begin{figure*}
  \centering
  \begin{tikzpicture}[
      calc/.style={align=center, inner sep=2pt, font=\small},
      more/.style={-{Stealth[length=2mm]}, semithick},
      elbl/.style={font=\scriptsize\itshape, midway, fill=white, inner sep=1pt},
      axis/.style={-{Stealth[length=2mm]}, gray, thin},
      albl/.style={font=\scriptsize\itshape, gray},
      x=1mm, y=1mm,
    ]
    \node[calc] (lam)  at (-12,0)   {$\lambda$-calculus};
    \node[calc] (llam) at (42,0)  {lazy $\lambda$-calculus};

    \node[calc] (lsub) at (42,19)  {$\lambda$-sublanguage of inheritance-calculus (ours)};

    \node[calc] (ic)   at (92,42)  {inheritance-calculus (ours)};

    \draw[axis] (-22,-8) -- (-22,-8 -| ic.east) node[albl, midway, below]{definedness};
    \draw[axis] (-22,-8) -- (-22,46) node[albl, midway, rotate=90, anchor=south]{expressiveness};

    \draw[more] (lam)  -- node[elbl,below]{more defined} (llam);

    \draw[more] (llam) -- node[elbl,left]{adequately embedded} (lsub);

    \draw[more] (lsub) -- node[elbl,align=center]{strictly more expressive\\more defined} (ic);
  \end{tikzpicture}
  \caption{\ifInheritanceChinese
    本文中的各演算;标记"(ours)"的节点是本文的贡献。
    \else
    The calculi of this paper; nodes marked
    (ours) are contributions of this paper.
    \fi}
  \label{fig:calculi-matrix}
\end{figure*}

\ifInheritanceChinese
第~\ref{sec:definitions}~节定义继承演算本身,即图~\ref{fig:calculi-matrix} 的最上一行,通过路径和路径集上的五个互递归一阶函数来定义,不含函数空间、闭包或高阶构造;这些函数可通过表格化~\cite{tamaki1986-tabled-resolution}来求值。第~\ref{sec:forward-translation}~节把这些函数扩展到 $\lambda$-演算,将其嵌入为图的中间一行,即 $\lambda$-子语言。第~\ref{sec:levy-longo-tree}~节表明该嵌入对惰性 $\lambda$-演算是\emph{adequate}的,即下方那条垂直边。第~\ref{sec:asymmetry}~节建立上方那条斜边的\emph{严格更富表达力}标签:惰性 $\lambda$-演算无法宏表达继承构造子,因此继承演算在 Felleisen 的意义~\cite{felleisen1991-expressive-power}上严格更富表达力;同一条边的另一个标签\emph{更已定义}是我们报告的一个现象而非定理:第~\ref{sec:emergent-phenomena}~节递归 Datalog 的\emph{值递归}使循环规则收敛。
\else
Section~\ref{sec:definitions} defines the inheritance-calculus itself,
the top row of Figure~\ref{fig:calculi-matrix}, via five mutually
recursive first-order functions on paths and sets of paths, with no
function spaces, closures, or higher-order constructs; the functions are
evaluated by tabling~\cite{tamaki1986-tabled-resolution}.
Section~\ref{sec:forward-translation} extends these functions to the
$\lambda$-calculus, embedding it as the middle row, the
$\lambda$-sublanguage.
Section~\ref{sec:levy-longo-tree} shows that this embedding is \emph{adequate}
for the lazy $\lambda$-calculus, the lower vertical edge.
Section~\ref{sec:asymmetry} establishes the upper diagonal edge's \emph{strictly
more expressive} label: the lazy $\lambda$-calculus cannot macro-express the
inheritance constructors, so the inheritance-calculus is strictly more expressive in
Felleisen's sense~\cite{felleisen1991-expressive-power}; its other label,
\emph{more defined}, we report as a phenomenon rather than prove as a theorem: the
\emph{value recursion} of Section~\ref{sec:emergent-phenomena}'s recursive Datalog
makes cyclic rules converge.
\fi

\section{\bilingual{Syntax}{语法}}
\label{sec:syntax}

\subsection{\bilingual{Abstract Syntax Trees}{抽象语法树}}
\label{sec:ast}

\ifInheritanceChinese
一个程序是单个 YAML 文档,我们把它的根记作 $\ASTroot$;$\ASTroot$ 的具体类型是下文给出的 YAML AST。
\else
A program is a single YAML document, whose root we write $\ASTroot$; the type
of $\ASTroot$ is the YAML AST given below.
\fi

\paragraph{\bilingual{The AST}{AST}}
\ifInheritanceChinese
AST 就是 YAML 的 AST。我们回顾 YAML 1.2.2 的\emph{表示模型}(Representation Model)~\cite{ben-kiki2021-yaml}——YAML 解析器产生的节点图:每个节点都带有一个\emph{标签}(tag),且只属于以下三种之一:
\else
The AST is just YAML's AST. We recall the YAML 1.2.2 \emph{Representation
Model}~\cite{ben-kiki2021-yaml}, the node graph a YAML parser produces: every node carries
a \emph{tag} and is one of three kinds:
\fi

{\small
\[
  \begin{array}{@{}l@{~~}c@{~~}l@{\quad}l@{}}
    \YAMLNode & = & \YAMLScalar(\Tag \times \String)
      & \text{\bilingual{a tag and a content string}{标签与内容串}} \\
      & \uplus & \YAMLSequence(\Tag \times \YAMLNode^{*})
      & \text{\bilingual{a tag and a sequence of nodes}{标签与节点序列}} \\
      & \uplus & \YAMLMapping(\Tag \times (\YAMLNode \times \YAMLNode)^{*})
      & \text{\bilingual{a tag and a sequence of node pairs}{标签与节点对的序列}} \\
  \end{array}
\]}

\ifInheritanceChinese
我们还回顾同一规范中的 \emph{Core Schema}~\cite{ben-kiki2021-yaml},即它解析标签的方案,它给每个节点指派一个标签。继承演算只读其中四个标签:$\seqtag$、$\maptag$、$\strtag$、$\nulltag$(分别是 \mintinline{yaml}{tag:yaml.org,2002:seq} 等的简写)。
\else
We also recall the YAML 1.2.2 \emph{Core Schema}~\cite{ben-kiki2021-yaml}, the
tag-resolution scheme of the same specification, which assigns a tag to every node.
Inheritance-calculus reads only four of these tags: $\seqtag$, $\maptag$, $\strtag$, and
$\nulltag$ (shorthand for \mintinline{yaml}{tag:yaml.org,2002:seq} and so on).
\fi

\ifInheritanceChinese
全局根 $\ASTroot$ 是一个 $\seqtag$ 序列,即一个 \emph{mixin},其每个成员要么是一条定义,要么是一个引用,正如清单~\ref{lst:cheatsheet} 所示。
\begin{description}
  \item[定义]
    一条\emph{定义}是一个单键的 $\maptag$ 映射,即一个标签后跟一个冒号,它把该标签绑定到一个 mixin,从而使该标签下的路径存在,例如 \mintinline{yaml}{Complex:} 或 \mintinline{yaml}{re:};这里作为键的 $\strtag$ 标量就是一个\emph{标签}。这是本演算唯一的定义形式。

  \item[引用]
    一个\emph{引用}是一个由若干 $\strtag$ 标量构成的 $\seqtag$ 序列,解析到一个 mixin 来继承。从左到右读它的诸标签,就是投影,逗号扮演着别的语言里 \mintinline{java}{.} 的角色。一个\emph{词法引用},例如 \mintinline{yaml}{[a, Plus]},读作 \mintinline{java}{a.Plus}:它向上爬到最内层定义了其首标签 \mintinline{yaml}{a} 的外围作用域,再从那里向下投影 \mintinline{yaml}{a} 与 \mintinline{yaml}{Plus}。第二位上可选的一个 \mintinline{yaml}{~}(即一个标签为 $\nulltag$ 的标量)用作 \mintinline{java}{this} 关键字,而一个作用域名后跟这样一个 \mintinline{yaml}{~} 便构成一个 qualified this,即 \mintinline{java}{ComplexModule.this},亦即 Java 内部类的 qualified \mintinline{java}{this}~\cite{gosling2000-java-language-specification}。于是\emph{带 qualified this 的引用} \mintinline{yaml}{[ComplexModule, ~, Float]} 读作 \mintinline{java}{ComplexModule.this.Float}:它向上爬到自身标签为 \mintinline{yaml}{ComplexModule} 的外围作用域,再从那里向下投影 \mintinline{yaml}{Float}。两种形式的唯一区别在于首标签如何找到它的作用域。词法引用取最近\emph{定义}了该标签的作用域,即把它直接写下、而非继承来的作用域。而 qualified this 取最近\emph{自身即}该标签的作用域,因而即使有更近的作用域定义了同名标签,它仍能按名指定一个选定的外围作用域。

  \item[Mixins]
    一个 mixin 的诸成员构成一个 $\seqtag$ 序列,而 YAML 既可以把序列写成块序列(每个成员占一行,以 \mintinline{yaml}{-} 起头),也可以写成方括号里的流序列;两者是同一份文档。于是一个单成员的体既可行内写作 \mintinline{yaml}{[[a, Plus]]},也可等价地跨行写成只含单个成员 \mintinline{yaml}{- [a, Plus]} 的块序列:外层方括号是成员列表,内层方括号是它所含的那一个引用。这个体本身就是一个 mixin,它继承 \mintinline{yaml}{[a, Plus]} 解析到的那个 mixin;它不是一个 let 绑定。
\end{description}
\else
The global root $\ASTroot$ is a $\seqtag$ sequence, a \emph{mixin} whose members are each a definition or a reference, as Listing~\ref{lst:cheatsheet} shows.
\begin{description}
  \item[Definitions]
    A \emph{definition} is a single-key $\maptag$ mapping, a label followed by a colon, that binds the label to a mixin, making the path under that label exist, such as \mintinline{yaml}{Complex:} or \mintinline{yaml}{re:}; the $\strtag$ scalar serving as the key is a \emph{label}. This is the calculus's only form of definition.

  \item[References]
    A \emph{reference} is a $\seqtag$ sequence of $\strtag$ scalars that resolves to a mixin to inherit from. Reading its labels from left to right is projection, with the comma in the role of the \mintinline{java}{.} of other languages. A \emph{lexical reference} such as \mintinline{yaml}{[a, Plus]} reads as \mintinline{java}{a.Plus}: it climbs to the innermost enclosing scope that defines its leading label \mintinline{yaml}{a}, then projects \mintinline{yaml}{a} and \mintinline{yaml}{Plus} downward from there. An optional \mintinline{yaml}{~} in second position, a scalar whose tag is $\nulltag$, serves as the \mintinline{java}{this} keyword, and a scope label followed by such a \mintinline{yaml}{~} forms a qualified this, \mintinline{java}{ComplexModule.this}, the qualified \mintinline{java}{this} of Java's inner classes~\cite{gosling2000-java-language-specification}. So the \emph{qualified-this reference} \mintinline{yaml}{[ComplexModule, ~, Float]} reads as \mintinline{java}{ComplexModule.this.Float}: it climbs to the enclosing scope whose own label is \mintinline{yaml}{ComplexModule} and projects \mintinline{yaml}{Float} from there. The two forms differ only in how the leading label finds its scope. A lexical reference takes the nearest scope that \emph{defines} that label, one that writes it rather than inheriting it. A qualified this takes the nearest scope that \emph{is} that label, so it can name a chosen enclosing scope even when a nearer one defines the same label.

  \item[Mixins]
    The members of a mixin form a $\seqtag$ sequence, which YAML writes either as a block sequence with one member per \mintinline{yaml}{-} line or as a flow sequence in brackets; the two are the same document. So a one-member body appears inline as \mintinline{yaml}{[[a, Plus]]} or, equivalently, across lines as the block sequence holding the single member \mintinline{yaml}{- [a, Plus]}: the outer brackets are the member list, and the inner brackets are the one reference it holds. This body is itself a mixin that inherits the mixin \mintinline{yaml}{[a, Plus]} resolves to; it is not a let binding.
\end{description}
\fi


\ifInheritanceChinese
语义到底需要读出程序的多少?一个查询就够。在树中的每个位置,它询问该处继承了什么:在那个位置提供了哪些引用。把一个位置写作一条\emph{路径},并沿用上文表层语法中的引用,这个查询就是
\begin{equation}\label{eq:primitives}
  \inherits_{\doc} : \Path \rightharpoonup \mathcal{P}(\Reference).
\end{equation}
下标 $\doc$ 标明该查询由文档 $\doc$(即上文那个全局根)决定。第~\ref{sec:syntactic-helpers}~节给出路径与引用的精确定义,并由 AST 算出 $\inherits$。
\else
How much of a program must the semantics read? A single query
suffices. At each position in the tree it asks what the program
inherits there: which references are supplied at that position.
Writing a position as a \emph{path} and taking references as in the
surface syntax above, the query is
\begin{equation}\label{eq:primitives}
  \inherits_{\doc} : \Path \rightharpoonup \mathcal{P}(\Reference).
\end{equation}
The subscript $\doc$ marks that the query is determined by the document $\doc$, the
global root above. Section~\ref{sec:syntactic-helpers} makes paths and references precise and
computes $\inherits$ from the AST.
\fi

\subsection{\bilingual{Syntactic Helpers}{语法辅助函数}}
\label{sec:syntactic-helpers}

\paragraph{\bilingual{Path}{路径}}
\ifInheritanceChinese
一个\emph{标签}是一个 $\strtag$ 标量的内容串,故 $\Label = \String$。一个\emph{路径}是一个标签的有限序列
$(\ell_1, \ell_2, \ldots, \ell_n)$,即 $\Path = \Label^{*}$ 的一个元素,它标识 AST 中的一个位置。
我们用 $p$ 表示一条路径。\emph{根路径}是空序列 $()$。
给定一条路径 $p$ 与一个标签 $\ell$,$p \snoc \ell$ 是序列 $p$ 用 $\ell$ 扩展后的结果;给定一个标签序列 $w$,$p \cat w$ 是把 $w$ 接到 $p$ 末尾的拼接。
对于非根路径,外围路径 $\init(p)$ 是 $p$ 去掉其最后一个元素后的结果,
而末尾标签 $\last(p)$ 是 $p$ 的最后一个元素。
\else
A \emph{label} is the content string of a $\strtag$ scalar, so $\Label = \String$.
A \emph{path} is a finite sequence of labels
$(\ell_1, \ell_2, \ldots, \ell_n)$, an element of $\Path = \Label^{*}$, that
identifies a position in the AST.
We write $p$ for a path. The \emph{root path} is the empty sequence $()$.
Given a path $p$ and a label $\ell$, $p \snoc \ell$ is the sequence $p$
extended with $\ell$; given a sequence of labels $w$, $p \cat w$ is the
concatenation appending $w$ to $p$. For a nonroot path, the enclosing path $\init(p)$ is $p$
with its last element removed, and the final label $\last(p)$ is the last
element of $p$.
\fi

\paragraph{\bilingual{Reference}{引用}}
\ifInheritanceChinese
一个\emph{引用} $(n, w) \in \Reference = \mathbb{N} \times \Label^{*}$ 是 de~Bruijn 索引化的:一个索引 $n$,数出从书写该引用之处向外要爬过多少层作用域才到达它解析所在的作用域;以及一个投影列表 $w \in \Label^{*}$,从那里向下依次跟随的标签。第~\ref{sec:ast}~节已给出引用的两种具体 YAML 写法,即词法引用与限定 $\this$;\eqref{eq:helpers} 中的辅助函数 $\indexed$ 以分别对应这两种写法的两条方程,把写在路径 $p$ 处的引用 $r$ 索引化为 $\indexed(p, r)$。
\else
A \emph{reference} $(n, w) \in \Reference = \mathbb{N} \times \Label^{*}$ is de-Bruijn-indexed: an index $n$ counting how many scopes to climb outward, from where the reference is written, to the scope where it resolves, together with a projection list $w \in \Label^{*}$ of labels to follow downward from there. Section~\ref{sec:ast} gives the two concrete YAML forms of references, lexical and qualified this; the helper $\indexed$ in~\eqref{eq:helpers}, with an equation for each, indexes the reference $r$ written at path $p$ into $\indexed(p, r)$.
\fi

\ifInheritanceChinese
我们由第~\ref{sec:ast}~节的 YAML AST,经~\eqref{eq:helpers} 中的语法辅助函数,算出~\eqref{eq:primitives} 中的查询 $\inherits$。\eqref{eq:helpers} 及本文其余部分中的每个函数,都和 $\inherits$ 一样由文档 $\doc$ 决定,故都应读作带一个 $\doc$ 下标;为简洁起见,一律略去该下标。
\else
We obtain the query $\inherits$ of~\eqref{eq:primitives} from the YAML
AST of Section~\ref{sec:ast} by the syntactic helpers of~\eqref{eq:helpers}.
Every function in~\eqref{eq:helpers} and the rest of the paper, like $\inherits$ itself, is
determined by the document $\doc$ and so should be read as carrying a $\doc$ subscript, which we
omit throughout for conciseness.
\fi
\begin{equation}\label{eq:helpers}
\begin{aligned}
  \items(\YAMLSequence(\seqtag, S)) &= S, \\
  \lbl(\YAMLScalar(\strtag, \ell)) &= \ell, \\
  \lbls(s_1, \ldots, s_k) &= (\lbl(s_1), \ldots, \lbl(s_k)), \\
  \isnull(s) &\iff \YAMLScalar(\nulltag, {\_}) = s, \\
  \kv(\YAMLMapping(\maptag, (\, (s, m) \,))) &= (\lbl(s), m), \\
  \at(()) &= \ASTroot, \\
  \{\, \at(p \snoc \ell) \,\} &= \Set{ m | \mathit{node} \in \items(\at(p)),\ \text{s.t.}\ (\ell, m) = \kv(\mathit{node}) }, \\
  \has(p, \ell) &\iff \exists\, \mathit{node} \in \items(\at(p)).\; (\ell, \_) = \kv(\mathit{node}), \\
  \lexical(p, \ell) &=
    \begin{cases}
      0 & \text{if } \has(p, \ell), \\
      1 + \lexical(\init(p), \ell) & \text{otherwise},
    \end{cases} \\
  \qualified(p, \ell) &=
    \begin{cases}
      0 & \text{if } \last(p) = \ell, \\
      1 + \qualified(\init(p), \ell) & \text{otherwise},
    \end{cases} \\
  \indexed(p,\; (s_0, s_1, s_2,\ldots,s_k)) &= \bigl(\qualified(\init(p),\ \lbl(s_0)),\ \lbls(s_2,\ldots,s_k)\bigr) \quad \text{if}\ \isnull(s_1), \\
  \indexed(p,\; (s_1,\ldots,s_k)) &= \bigl(\lexical(\init(p),\ \lbl(s_1)),\ \lbls(s_1,\ldots,s_k)\bigr), \\
  \dom(\inherits) &= \Set{ p | p = () \vee \has(\init(p), \last(p)) }, \\
  \inherits(p) &= \Set{ \indexed(p, \items(r)) | r \in \items(\at(p)) }.
\end{aligned}
\end{equation}

\ifInheritanceChinese
\eqref{eq:helpers} 把 $\inherits$ 定义为定义域是 $\dom(\inherits)$ 的偏函数;该定义域与取值 $\inherits(p)$%
\footnote{作为点查询,$\inherits$ 从不枚举一个节点的成员,只回答给定路径处的查询。因此其他 $\inherits$ 实现可以支撑有限 YAML 文档之外的 AST:把路径映射到无法穷举的来源(例如 URL),或非严格地解析 $\ASTroot$、让每个节点的源文本保持未解析,直到某查询沿相应路径强制其子树。}
都预设一个有限、无 YAML anchor 的\emph{良构}程序;良构性是写下文档上一个可判定的局部性质,在附录~\ref{app:well-definedness} 中确立。
\else
Equation~\eqref{eq:helpers} defines $\inherits$ as the partial function with domain $\dom(\inherits)$; both this domain and the value $\inherits(p)$%
\footnote{Being a point query, $\inherits$ never enumerates a node's members; it only answers about a given path. Other $\inherits$ implementations may therefore back an AST other than a finite YAML document: one whose paths map to sources that cannot be listed exhaustively, such as URLs, or an $\ASTroot$ parsed non-strictly, each node's source text left unparsed until a query forces the subtree along the relevant path.}
presuppose a finite program without YAML anchors that is \emph{well-formed}, a decidable, local property of the written document established in Appendix~\ref{app:well-definedness}.
\fi

\section{\bilingual{Semantics}{语义}}
\label{sec:mixin-trees}

\subsection{\bilingual{Mixin Trees}{Mixin 树}}

\ifInheritanceChinese
继承演算只有一个抽象,即
\emph{深可合并 mixin}(deep-mergeable mixin),其成员本身也是深可合并 mixin。%
\footnote{
  在不引起歧义之处,我们简写为``mixin''。正如清单~\ref{lst:cheatsheet} 所示,
  它推广了它在先前工作中的同名物:Bracha--Cook 的
  mixin~\cite{bracha1990-mixin-inheritance}。
}
它可以同时充当模块、类、对象、方法、字段以及
局部绑定,正如清单~\ref{lst:cheatsheet} 所展示的那样,因为
没有不透明的方法:每个值都是
一个透明的 mixin,而继承总是对子树执行一次深度
合并:当一个标签在继承链上被定义不止一次时,合并以交换、幂等、结合的方式(附录~\ref{app:merge-algebra})把这些定义合在一起,而非由后者遮蔽前者。
我们把所得到的结构称为一棵\emph{mixin 树}。

查询 $\inherits$(第~\ref{sec:syntactic-helpers}~节)只读出 AST 在每条路径处写下的内容。继承把这份 AST 扩展成一棵 mixin 树,它与 AST 有两处可观测的不同。其一,路径更多:一个标签在某节点上存在,不仅因为该节点写下了它,也因为该节点的某个 super mixin 定义了它,于是继承合成出没人在那里写过的路径,例如清单~\ref{lst:cheatsheet} 的 \mintinline{yaml}{[a, Plus]},它之所以存在,正因 \mintinline{yaml}{a} 继承了 \mintinline{yaml}{Complex}。其二,每条路径上可观测的更多:除了节点直接写下的引用,mixin 树还暴露该节点传递继承到的 super mixin。这两点都由单个函数捕获,即这棵树的观察式语义,
\begin{equation}\label{eq:supers-signature}
  \supers : \Path \to \mathcal{P}(\Path \times \Path),
\end{equation}
它把每条路径映射为它所暴露的继承;一条不存在的路径无所继承,故 $\supers$ 在该处返回空集。第~\ref{sec:definitions}~节通过与另外四个函数的相互递归来定义 $\supers$。
\else
Inheritance-calculus has a single abstraction, the
\emph{deep-mergeable mixin}, whose members are themselves deep-mergeable mixins.%
\footnote{
  Where no ambiguity arises we write simply ``mixin.'' As
  Listing~\ref{lst:cheatsheet} shows, it generalizes its namesake in prior
  work, the Bracha--Cook mixin~\cite{bracha1990-mixin-inheritance}.
}
It can simultaneously serve as module, class, object, method, field, and
local binding, as Listing~\ref{lst:cheatsheet} illustrates, because there
are no opaque methods: every value is
a transparent mixin, and inheritance always performs a deep
merge of subtrees: where a label is defined more than once along the
inheritance chain, the merge combines those definitions commutatively,
idempotently, and associatively (Appendix~\ref{app:merge-algebra}) rather
than letting a later one shadow an earlier one.
We call the resulting structure a \emph{mixin tree}.

The query $\inherits$ (Section~\ref{sec:syntactic-helpers}) reads off only
what the AST writes at each path. Inheritance expands that AST into a mixin
tree, which differs observably from the AST in two ways. First, it has more paths: a
label exists at a node not only when the node writes it but also when one of the
node's super mixins does, so inheritance synthesizes paths that were never written
there, such as Listing~\ref{lst:cheatsheet}'s \mintinline{yaml}{[a, Plus]},
which exists because \mintinline{yaml}{a} inherits \mintinline{yaml}{Complex}.
Second, more is observable at each path: beyond the references a node writes
directly, the mixin tree exposes the super mixins it reaches transitively.
A single function captures both, the observational semantics of the mixin
tree,
\begin{equation}\label{eq:supers-signature}
  \supers : \Path \to \mathcal{P}(\Path \times \Path),
\end{equation}
mapping each path to the inheritance it exposes; a path that does not exist
inherits from nothing, so $\supers$ returns the empty set there.
Section~\ref{sec:definitions} defines $\supers$ by mutual recursion with four
supporting functions.
\fi

\subsection{\bilingual{Semantic Functions}{语义函数}}
\label{sec:definitions}

\paragraph{Supers}
\ifInheritanceChinese
直观上,$\supers(p)$ 收集 $p$ 所继承的每一条路径:
$p$ 自身的 $\overrides$,
加上 $p$ 的每个
直接基 $\bases$ 的 $\overrides$,如此传递下去。
每个结果与到达它所经由的继承点语境
$\init(p_{\mathrm{base}})$ 配成对。
我们下文定义的 $\this$ 解析需要这一来源,
以便把一个定义点 mixin 映射回
那些纳入它的继承点路径:
\else
Intuitively, $\supers(p)$ collects every path that $p$ inherits from:
the $\overrides$ of $p$ itself,
plus the $\overrides$ of each
direct base $\bases$ of $p$, and so on transitively.
Each result is paired with the inheritance-site context
$\init(p_{\mathrm{base}})$ through which it is reached.
This provenance is needed by $\this$ resolution,
which we define below, to map a definition-site mixin back to
the inheritance-site paths that incorporate it:
\fi
\begin{equation}\label{eq:supers}
\begin{aligned}
  \supers(()) &= \{((),\; ())\}, \\
  \supers(p_{\mathrm{init}} \snoc \ell) &=
  \Set{ (\init(p_{\mathrm{base}}),\; p_{\mathrm{override}}) |
    \begin{aligned}
      &p_{\mathrm{base}} \in \bases^*(p_{\mathrm{init}} \snoc \ell), \\
      &p_{\mathrm{override}} \in \overrides(p_{\mathrm{base}})
    \end{aligned}
  }
\end{aligned}
\end{equation}
\ifInheritanceChinese
这里 $\bases^*$ 表示 $\bases$ 的自反传递闭包。根没有外围继承点,其唯一的 super 即自身,故 $\supers(()) = \{((),\, ())\}$。本节的集合推导式遵循偏函数的标准约定:当内部某个偏函数对某一项无定义时,该项被静默略去(详见附录~\ref{app:well-definedness});例如下文 $\bases$ 中 $\inherits$ 对不在其定义域的 override 无定义,该项即被略去。于是我们把一个标签 $\ell$ 在 $p$ 处\emph{存在}定义为 $\supers(p \snoc \ell) \neq \varnothing$;附录~\ref{app:well-definedness} 的引理~\ref{lem:path-existence} 表明这恰当 $p$ 的某个 super mixin 写下 $\ell$,无论 $p$ 是直接写下它还是经继承到达。
$\supers$ 公式依赖两个函数:
$\overrides$ 与单跳引用目标 $\bases$。
我们先定义 $\overrides$。
\else
Here $\bases^*$ denotes the reflexive-transitive closure of $\bases$.
The root has no enclosing inheritance site and is its own only super, so $\supers(()) = \{((),\, ())\}$. The set comprehensions in this section follow the standard convention for partial functions: a term is silently omitted when an inner partial function is undefined on it (see Appendix~\ref{app:well-definedness}); for instance, in $\bases$ below, $\inherits$ is undefined on an override outside its domain and that term is omitted. We therefore take a label $\ell$ to \emph{exist} at $p$ exactly when $\supers(p \snoc \ell) \neq \varnothing$; Lemma~\ref{lem:path-existence} in Appendix~\ref{app:well-definedness} shows this holds exactly when some super mixin of $p$ writes $\ell$, whether $p$ writes it directly or reaches it through inheritance.
The $\supers$ formula depends on two functions:
$\overrides$ and the one-hop reference targets $\bases$.
We define $\overrides$ first.
\fi

\paragraph{Overrides}
\ifInheritanceChinese
$\overrides(p)$ 收集 $p$ 外围作用域的 super mixin 所贡献的多处同名定义;如同面向对象的方法覆盖从各超类收集一个方法的同名定义,深度合并把它们合为一体。当该路径存在时,它自身也是它的一个 override,与继承来的同名定义并列;根为基例:
\else
$\overrides(p)$ collects the several same-name definitions contributed by
the super mixins of $p$'s enclosing scope; like object-oriented method
override, which collects a method's same-name definitions from the
superclasses, deep merge combines them into one. When the path exists, it is
itself one of its overrides, alongside the same-name definitions it inherits;
the root is the base case:
\fi
\begin{equation}\label{eq:overrides}
\begin{aligned}
  \overrides(()) &= \{()\}, \\
  \overrides(p_{\mathrm{init}} \snoc \ell) &= \Set{ p_{\mathrm{override}} |
    \begin{aligned}
      &(\_,\; p_{\mathrm{branch}}) \in \supers(p_{\mathrm{init}}), \\
      &\text{s.t.}\; p_{\mathrm{branch}} \snoc \ell \in \dom(\inherits), \\
      &p_{\mathrm{override}} \in \left\{ p_{\mathrm{init}} \snoc \ell,\; p_{\mathrm{branch}} \snoc \ell \right\}
    \end{aligned}
  }.
\end{aligned}
\end{equation}
\ifInheritanceChinese
还剩下要定义 $\bases$,即 $\supers$ 的另一个依赖项。
\else
It remains to define $\bases$, the other dependency of $\supers$.
\fi

\paragraph{Bases}
\ifInheritanceChinese
直观上,$\bases(p)$ 是 $p$ 通过引用直接
继承的那些路径,类比于面向对象语言中的直接
基类。
具体地说,$\bases$ 把 $p$ 的 $\overrides$ 中的每个引用
解析一步(根没有外围继承点,故 $\bases(()) = \varnothing$):
\else
Intuitively, $\bases(p)$ are the paths that $p$ directly
inherits from via references, analogous to the direct base
classes in object-oriented languages.
Concretely, $\bases$ resolves every reference in
$p$'s $\overrides$ one step (the root has no enclosing inheritance site, so $\bases(()) = \varnothing$):
\fi
\begin{equation}\label{eq:bases}
\begin{aligned}
  \bases(()) &= \varnothing, \\
  \bases(p_{\mathrm{init}} \snoc \ell) &=
  \Set{ p_{\mathrm{target}} |
    \begin{aligned}
      &p_{\mathrm{override}} \in \overrides(p_{\mathrm{init}} \snoc \ell), \\
      &(n,\; w) \in \inherits(p_{\mathrm{override}}), \\
      &p_{\mathrm{target}} \in \resolve(p_{\mathrm{init}},\; p_{\mathrm{override}},\; n,\; w)
    \end{aligned}
  }.
\end{aligned}
\end{equation}
\ifInheritanceChinese
$\bases$ 公式调用引用解析函数 $\resolve$,我们接下来定义它。
\else
The $\bases$ formula calls the reference resolution function $\resolve$, which we define next.
\fi

\paragraph{\bilingual{Reference resolution}{引用解析}}
\ifInheritanceChinese
直观上,$\resolve$ 把一个语法引用转化为它在
完全继承后的树中所指向的路径集合。
它取一条继承点路径 $p_{\mathrm{site}}$、
一条定义点路径 $p_{\mathrm{def}}$、
一个 de~Bruijn 索引 $n$,
以及投影列表 $w$。
解析分两个阶段进行:
$\this$ 从外围作用域 $\init(p_{\mathrm{def}})$ 出发
执行 $n$ 次向上步进,
把定义点路径映射为继承点路径;
然后把投影列表 $w$ 向下拼接:
\else
Intuitively, $\resolve$ turns a syntactic reference into the
set of paths it points to in the fully inherited tree.
It takes an inheritance-site path $p_{\mathrm{site}}$,
a definition-site path $p_{\mathrm{def}}$,
a de~Bruijn index $n$,
and the projection list $w$.
Resolution proceeds in two phases:
$\this$ performs $n$ upward steps
starting from the enclosing scope $\init(p_{\mathrm{def}})$,
mapping the definition-site path to inheritance-site paths;
then the projection list $w$ is concatenated downward:
\fi
\begin{equation}\label{eq:resolve}
  \resolve(p_{\mathrm{site}},\; p_{\mathrm{def}},\; n,\; w)
  = \Set{
    p_{\mathrm{current}} \cat w
    |
    p_{\mathrm{current}} \in
    \this(\{p_{\mathrm{site}}\},\; \init(p_{\mathrm{def}}),\; n)
  }
\end{equation}
\ifInheritanceChinese
$\resolve$ 返回一个集合而非单条路径,因为
$\this$ 可能解析到多个目标。
我们现在定义 $\this$。
\else
$\resolve$ returns a set rather than a single path because
$\this$ may resolve to multiple targets.
We now define $\this$.
\fi

\paragraph{\bilingual{$\this$ resolution}{$\this$ 解析}}
\ifInheritanceChinese
$\this$ 回答这个问题:
``在完全继承后的树中,定义点作用域
$p_{\mathrm{def}}$ 究竟住在哪里?''
每一步在前沿 $S$ 中每条路径的 supers 当中,
找出那些其 override 分量匹配
$p_{\mathrm{def}}$ 的,并把对应的
继承点路径收集为新的前沿。
经过 $n$ 步后,前沿就包含答案:
\else
$\this$ answers the question:
``in the fully inherited tree, where does
the definition-site scope $p_{\mathrm{def}}$ actually live?''
Each step finds, among the supers of every path in the
frontier $S$, those whose override component matches
$p_{\mathrm{def}}$, and collects the corresponding
inheritance-site paths as the new frontier.
After $n$ steps the frontier contains the answer:
\fi
{\small
  \begin{equation}\label{eq:this}
    \this(S,\; p_{\mathrm{def}},\; n) =
    \begin{cases}
      S & \text{if } n = 0 \\[6pt]
      \this\!\left(
        \Set{ p_{\mathrm{site}} |
          \begin{aligned}
            &p_{\mathrm{current}} \in S, \\
            &(p_{\mathrm{site}},\; p_{\mathrm{override}})
            \in \supers(p_{\mathrm{current}}), \\
            &\text{s.t.}\; p_{\mathrm{override}} = p_{\mathrm{def}}
          \end{aligned}
        },\;
        \init(p_{\mathrm{def}}),\;
        n - 1
      \right)
      & \text{if } n > 0
    \end{cases}
\end{equation}}%
\ifInheritanceChinese
每一步 $\init$ 把 $p_{\mathrm{def}}$ 缩短一个标签,
而 $n$ 减少一。%
\footnote{
  若在某一步不存在匹配对
  $(p_{\mathrm{site}},\; p_{\mathrm{override}})$,
  则前沿变为空集,而
  $\resolve$ 返回 $\varnothing$:该引用没有
  目标;不继承任何路径。
  MIXINv2\anon[~(补充材料)]{~\cite{yang2026-mixinv2}} 更为严格:它的解析器拒绝
  前沿变为空的引用,在编译期报告一个
  解析错误。
}

这就完成了 $\supers$(方程~\ref{eq:supers})及其依赖的四个函数的整套定义。

附录~\ref{app:evaluation-trace} 把这五个方程逐步
应用到一个突出 $\resolve$ 集合值本质的例子上;
在那里前沿 $S$ 携带多个目标,而
$\this$ 同时解析到它们,而非如 Scala~3 与 NixOS 模块
这类面向对象语言中所见的单个目标
(附录~\ref{app:scala-multi-target})。
\else
At each step $\init$ shortens $p_{\mathrm{def}}$ by one label
and $n$ decreases by one.%
\footnote{
  If no matching pair
  $(p_{\mathrm{site}},\; p_{\mathrm{override}})$ exists
  at some step, the frontier becomes the empty set, and
  $\resolve$ returns $\varnothing$: the reference has no
  target; no paths are inherited.
  MIXINv2\anon[~(supplementary material)]{~\cite{yang2026-mixinv2}} is stricter: its resolver rejects
  references whose frontier becomes empty, reporting a
  resolution error at compile time.
}

This completes the definition of $\supers$ (equation~\ref{eq:supers}) and the four functions it depends on.

Appendix~\ref{app:evaluation-trace} applies the five equations step by
step to an example highlighting the set-valued nature of $\resolve$,
where the frontier $S$ carries multiple targets that
$\this$ resolves to simultaneously rather than the single target
seen in object-oriented languages such as Scala~3 and NixOS modules
(Appendix~\ref{app:scala-multi-target}).
\fi

\subsection{\bilingual{Recursive Evaluation}{递归求值}}
\ifInheritanceChinese
方程~(\ref{eq:supers})--(\ref{eq:this}) 的这五个相互递归的函数\emph{就是}解释器:观察者
通过选择检查哪条路径来驱动计算,
而每个方程按需展开。
由于指称是最小不动点而非任何特定策略的轨迹(定理~\ref{thm:well-defined}),
(\ref{eq:supers}) 中的自反传递闭包 $\bases^*$
可以按广度优先或
深度优先来探索而不影响结果;
路径可以被 intern,从而结构相等
退化为指针相等;
而这五个函数构成一个以路径为键的可记忆化
动态规划递推式。

在操作上,求值使用\emph{tabled
求值}~\cite{tamaki1986-tabled-resolution}:
每个查询在首次遇到时被记忆化,而
重入调用,即对应于依赖图中的环的调用,
返回当前的累加器而非发散。
在指称上,五个方程定义了一个单调的
直接后承
算子~$T_P$~\cite{vanemden1976-predicate-logic-semantics};
它的右端只使用 $\in$、相等以及正向集合概括,没有否定、集合
差或补。
由 Knaster--Tarski
定理~\cite{tarski1955-lattice-fixpoint-theorem},$T_P$ 有一个
最小不动点 $\lfp(T_P)$;它充当
指称语义。
这一指称刻画把上文引入的观察式
语义形式化:观察者驱动的查询
由同一组不动点方程计算。取最小不动点作为指称,
是因为它恰好容纳这些方程所强制的继承事实而别无其他,
正对应把 $\supers(p)$ 读作 $p$ 可证继承到的内容。
求值是否终止取决于程序。
回忆 $\supers$(方程~\ref{eq:supers})取
$\bases$(方程~\ref{eq:bases})的自反传递
闭包;所得的图就是
\emph{继承层级}(inheritance hierarchy)。
由于每个继承源在 mixin 树中占据一个不同的节点,
这一层级可从
树结构中直接读出。
继承层级有限且无环的程序在一趟内终止:
这是对有限多可达查询的普通记忆化递归。
当层级包含环时,
若可达查询只有有限多个且每个的答案域有限,求值
终止:对所得有限格的迭代收敛~\cite{bancilhon1986-recursive-query-strategies},
如同在 Datalog~\cite{vanemden1976-predicate-logic-semantics} 中那样。
当层级无限时,求值可能
发散;这是图灵完备性所固有的。
附录~\ref{app:well-definedness} 证明 tabled
求值绝不返回错误答案,并重述
这些终止条件。
\else
The five mutually recursive functions of equations~(\ref{eq:supers})--(\ref{eq:this}) \emph{are} the interpreter: the observer
drives the computation by choosing which path to
inspect, and each equation unfolds on demand.
Because the denotation is a least fixed point rather than the trace
of any particular strategy (Theorem~\ref{thm:well-defined}),
the reflexive-transitive closure $\bases^*$
in~(\ref{eq:supers}) may be explored breadth-first or
depth-first without affecting the result;
paths may be interned so that structural equality
reduces to pointer equality;
and the five functions form a memoizable
dynamic-programming recurrence keyed by paths.

Operationally, evaluation uses \emph{tabled
evaluation}~\cite{tamaki1986-tabled-resolution}:
each query is memoised on first encounter, and
re-entrant calls, which correspond to cycles in the dependency graph,
return the current accumulator rather than diverging.
Denotationally, the five equations define a monotone
immediate consequence
operator~$T_P$~\cite{vanemden1976-predicate-logic-semantics}
whose right-hand sides use only $\in$, equality, and
positive set comprehension, with no negation, set
difference, or complement.
By the Knaster--Tarski
theorem~\cite{tarski1955-lattice-fixpoint-theorem}, $T_P$ has a
least fixed point $\lfp(T_P)$, which serves as
the denotational semantics.
This denotational characterization formalizes the observational
semantics introduced above: the observer-driven queries are
computed by the same fixed-point equations. The least fixed
point is the intended denotation because it admits exactly the
inheritance facts the equations force and no others, matching
the reading of $\supers(p)$ as what $p$ provably inherits.
Whether evaluation terminates depends on the program.
Recall that $\supers$ (equation~\ref{eq:supers}) takes the
reflexive-transitive closure of $\bases$
(equation~\ref{eq:bases}); the resulting graph is the
\emph{inheritance hierarchy}.
Since each inheritance source occupies a distinct node in
the mixin tree, this hierarchy is directly readable from
the tree structure.
Programs with a finite acyclic inheritance hierarchy terminate in
one pass, by ordinary memoised recursion over the finitely many
reachable queries.
When the hierarchy contains cycles,
evaluation terminates when finitely many queries are
reachable and each has a finite answer domain, since
iteration over the resulting finite
lattice converges~\cite{bancilhon1986-recursive-query-strategies},
as in Datalog~\cite{vanemden1976-predicate-logic-semantics}.
When the hierarchy is infinite, evaluation may
diverge; this is inherent to Turing completeness.
Appendix~\ref{app:well-definedness} proves that tabled
evaluation never returns a wrong answer and restates
these termination conditions.
\fi

\section{\bilingual{Embedding the \texorpdfstring{$\lambda$}{λ}-Calculus}{嵌入 \texorpdfstring{$\lambda$}{λ}-演算}}
\label{sec:translation}
\label{sec:forward-translation}

%
\ifInheritanceChinese
下面的翻译将纯 $\lambda$-演算映射到继承演算。$\lambda$-演算包含三种项:
\[
  \begin{array}{r@{\;::=\;}l}
    M & x \mid \lambda x.\, M \mid M_1\; M_2
  \end{array}
\]
翻译函数 $\mathcal{T}$ 是AST的同态:每个 $\lambda$-抽象在翻译后的mixin树中引入一个作用域层级,继承演算的求值机制随后在这棵树上计算弱头范式。关键设计选择是使用变量名作作用域标签,并依赖继承演算的shadowing机制来处理嵌套的同名变量。
\else
The following translation maps pure $\lambda$-calculus to
inheritance-calculus. The $\lambda$-calculus comprises three
kinds of terms:
\[
  \begin{array}{r@{\;::=\;}l}
    M & x \mid \lambda x.\, M \mid M_1\; M_2
  \end{array}
\]
The translation function $\mathcal{T}$ is a homomorphism on AST:
each $\lambda$-abstraction introduces one scope level in the
translated mixin tree, on which the inheritance-calculus evaluation
mechanism then computes the weak head normal form. A key design choice is to use variable
names as scope labels, relying on inheritance-calculus shadowing
to handle nested variables with the same name.
\fi

\newlength{\transcode}%
\settowidth{\transcode}{\mintinline{yaml}{- __whnf: [[__call, __result, __whnf]]}}%
\begin{center}
\begin{tabular}{@{}r@{\;}l@{}}
$\mathcal{T}(x) ={}$ & \mintinline[escapeinside=||]{yaml}{[[|$x$|]]} \\[1ex]
$\mathcal{T}(\lambda x.\, M) ={}$ & \begin{minipage}[t]{\transcode}
\begin{minted}[escapeinside=||]{yaml}
- __whnf:
  - |$x$|: [[__parameter]]
  - __parameter: []
  - __result: |$\mathcal{T}(M)$|
\end{minted}
\end{minipage} \\[1ex]
$\mathcal{T}(M_1\; M_2) ={}$ & \begin{minipage}[t]{\transcode}
\begin{minted}[escapeinside=||]{yaml}
- __callee: |$\mathcal{T}(M_1)$|
- __call:
  - [__callee, __whnf]
  - __parameter: |$\mathcal{T}(M_2)$|
- __whnf: [[__call, __result, __whnf]]
\end{minted}
\end{minipage} \\
\end{tabular}
\end{center}

\ifInheritanceChinese
一个变量 $x$ 翻译为单成员 mixin \mintinline[escapeinside=||]{yaml}{[[|$x$|]]},其唯一成员是词法引用 \mintinline[escapeinside=||]{yaml}{[|$x$|]},在翻译后的mixin树中解析为绑定 $x$ 的抽象。这里外层方括号是成员列表,即 mixin 体本身,而内层 \mintinline[escapeinside=||]{yaml}{[|$x$|]} 是那个引用;翻译函数 $\mathcal{T}$ 的陪域一律是 mixin 体,故每条规则的右端都是这种成员列表。嵌套的同名变量通过名字遮蔽处理:内层的 $x$ 遮蔽外层的 $x$,继承演算的 name resolution 机制自动选择最近的定义。%
\footnote{\label{fn:shadowing}在 $\lambda$-演算中,嵌套的同名变量会产生名字遮蔽。翻译保持了这一语义:当翻译后的继承演算程序有嵌套的同名作用域时,内层作用域遮蔽外层作用域,name resolution 自动选择最近的定义。这样就不需要显式的 de~Bruijn 索引。我们假设翻译引入的合成标签,即 $\olbl{__parameter}$、$\olbl{__result}$、$\olbl{__call}$、$\olbl{__callee}$、$\olbl{__whnf}$,与诸如 $x$ 的用户定义变量名来自不相交的字母表,因此不会发生冲突。}

一个抽象 $\lambda x.\, M$ 本身已是弱头范式,故我们把它的形状放在单一自有定义 $\olbl{__whnf}$ 之下:一个抽象的弱头范式就是它自己,而一个应用通过同一个 $\olbl{__whnf}$ 投影抵达其被调用者的范式。$\olbl{__whnf}$ 之下的形状有三个 label:变量名 $x$ 作为指向 $\olbl{__parameter}$ 的别名;固定的 $\olbl{__parameter}$ 扮演形参声明的角色,接收实际参数;以及 $\olbl{__result}$,持有体的翻译,一旦提供了参数,它便成为缩约项。我们称一个拥有 $\olbl{__parameter}$ 的节点为\emph{抽象形状}。别名 \mintinline[escapeinside=||]{yaml}{|$x$|: [[__parameter]]} 是一个关键的设计:它使应用规则保持可组合,因为应用只需知道固定的 $\olbl{__parameter}$ 标签而无需知道绑定变量的名字,同时让变量引用 $\mathcal{T}(x) ={}$ \mintinline[escapeinside=||]{yaml}{[[|$x$|]]} 按名字解析到正确的 $\olbl{__parameter}$。当应用向 $\olbl{__parameter}$ 写入值时,别名会自动中继它,使得体可以通过引用 \mintinline[escapeinside=||]{yaml}{[|$x$|]} 访问参数。

一个应用 $M_1\; M_2$ 先用 $\olbl{__callee}$ 标签命名被调用者的翻译 $\mathcal{T}(M_1)$。$\olbl{__call}$ 节点继承引用 \mintinline{yaml}{[__callee, __whnf]},它在施用之前先把被调用者规约到其抽象形状,并在 $\olbl{__parameter}$ 中填入参数的翻译 $\mathcal{T}(M_2)$,而不予求值。继承抽象形状的同时覆盖 $\olbl{__parameter}$,使 $\olbl{__call}$ 的 $\olbl{__result}$ 成为代入了参数的体,即单步 $\beta$-规约的缩约项。投影 \mintinline{yaml}{[[__call, __result, __whnf]]} 随后取该缩约项的弱头范式,于是 $\olbl{__whnf}$ 一路追踪规约直到末端。

这三条规则构成了从纯 $\lambda$-演算到继承演算的完整翻译:每个 $\lambda$-项都有一个像,且构造是可组合的。由于 $\lambda$-演算是图灵完备的,且该嵌入保持并反映收敛性(定理~\ref{thm:adequacy}),继承演算也是图灵完备的。
重点是,翻译 $\mathcal{T}$ 是纯结构的 AST 映射,它不执行 $\beta$-规约。下一小节将精确刻画翻译关于惰性 $\lambda$-演算的充分性。
\else
A variable $x$ becomes the one-member mixin \mintinline[escapeinside=||]{yaml}{[[|$x$|]]}, whose sole
member is the lexical reference \mintinline[escapeinside=||]{yaml}{[|$x$|]} that resolves in the
translated mixin tree to the abstraction that binds $x$. The outer
brackets are the member list, that is, the mixin body itself,
while the inner \mintinline[escapeinside=||]{yaml}{[|$x$|]} is the reference; the codomain of $\mathcal{T}$
is always a mixin body, so the right-hand side of every rule is
such a member list.
Nested variables with the same name are handled by name
shadowing: an inner $x$ shadows an outer $x$, and the
inheritance-calculus name resolution mechanism automatically
selects the nearest definition.%
\footnote{\label{fn:shadowing}In the $\lambda$-calculus, nested variables
  with the same name create name shadowing. The translation
  preserves this semantics: when the translated
  inheritance-calculus program has nested scopes with the same
  name, the inner scope shadows the outer one, and name
  resolution automatically selects the nearest definition. This
  eliminates the need for explicit de~Bruijn indices.
  We assume that the labels the translation introduces, namely
  $\olbl{__parameter}$, $\olbl{__result}$, $\olbl{__call}$,
  $\olbl{__callee}$, and $\olbl{__whnf}$, and the
  user-defined variable names such as~$x$ are drawn from disjoint
  alphabets, so no collision can arise.
}

An abstraction $\lambda x.\, M$ is already a weak head normal form,
so we place its shape under a single own definition $\olbl{__whnf}$:
the weak head normal form of an abstraction is itself, and an
application reaches its operator's normal form by the same
$\olbl{__whnf}$ projection. The shape under $\olbl{__whnf}$ has three
labels: the variable name $x$ as an alias for $\olbl{__parameter}$;
the fixed $\olbl{__parameter}$, which plays the role of a parameter
declaration and receives the actual argument;
and the $\olbl{__result}$ that holds the translation of the body,
which becomes the contractum once an argument is supplied. We call a
node that owns the $\olbl{__parameter}$ an \emph{abstraction shape}.
The alias \mintinline[escapeinside=||]{yaml}{|$x$|: [[__parameter]]} is a crucial design choice:
it keeps the application rule compositional, since an application
needs only the fixed $\olbl{__parameter}$ label and not the binder's
variable name, while letting the variable
reference $\mathcal{T}(x) ={}$ \mintinline[escapeinside=||]{yaml}{[[|$x$|]]} resolve by name to the correct
$\olbl{__parameter}$. When an application writes a value to $\olbl{__parameter}$, the
alias automatically relays it, so the body can access the argument
via the reference \mintinline[escapeinside=||]{yaml}{[|$x$|]}.

An application $M_1\; M_2$ first names the operator translation
$\mathcal{T}(M_1)$ with the label $\olbl{__callee}$. The $\olbl{__call}$ node inherits the reference
\mintinline{yaml}{[__callee, __whnf]}, which reduces the operator to its
abstraction shape before applying it, and fills the
$\olbl{__parameter}$ with the argument translation $\mathcal{T}(M_2)$, left
unevaluated. Inheriting the abstraction shape while overriding
$\olbl{__parameter}$ makes the $\olbl{__result}$ of the $\olbl{__call}$
the body with the argument substituted, the contractum of a single
$\beta$-step. The projection \mintinline{yaml}{[[__call, __result, __whnf]]}
then takes the weak head normal form of that contractum, so
$\olbl{__whnf}$ follows the reduction to the end.

These three rules form a complete translation from pure
$\lambda$-calculus to inheritance-calculus: every
$\lambda$-term has an image, and the construction is
compositional. Since the $\lambda$-calculus is Turing complete and
the embedding preserves and reflects convergence
(Theorem~\ref{thm:adequacy}), inheritance-calculus is Turing
complete as well.
The key point is that the translation $\mathcal{T}$ is a purely
structural AST mapping; it does not perform $\beta$-reduction.
The next subsection
pins down the adequacy of the translation for the lazy
$\lambda$-calculus.
\fi

\subsection{\bilingual{Adequacy for the Lazy \texorpdfstring{$\lambda$}{λ}-Calculus}{惰性 $\lambda$-演算的充分性}}
\label{sec:levy-longo-tree}

\ifInheritanceChinese
翻译 $\mathcal{T}$ 将 $\lambda$-演算嵌入继承演算的一个子语言中。mixin 树以附加的语义结构增广了 AST;对于翻译后的 $\lambda$-项,这一附加结构的观测恰好读回 Lévy--Longo 树~\cite{levy1978-reductions-lambda-calcul, longo1983-set-theoretical-models}(定理~\ref{thm:llt-correspondence})。若根的 $\olbl{__whnf}$ 的 $\supers$(它沿继承追踪规约链)到达一个\emph{抽象形状},即一个其下 $\olbl{__parameter}$ 有定义的节点,亦即 $\mathcal{T}(\lambda x.\, M)$ 产生的形状,则称该 mixin 树\emph{收敛}。该翻译关于惰性 $\lambda$-演算是充分的:一个闭合 $\lambda$-项有弱头范式当且仅当其翻译收敛(定理~\ref{thm:adequacy})。形式化陈述和证明见附录~\ref{app:levy-longo-tree-proofs}。继承演算的上下文区分能力严格强于 $\lambda$-演算上下文,如第~\ref{sec:asymmetry} 节所证。下一小节考察继承演算是否也严格更具表达力。
\else
The translation $\mathcal{T}$ embeds the $\lambda$-calculus
into a sublanguage of inheritance-calculus.
A mixin tree augments the AST with
additional semantic structure; for translated
$\lambda$-terms, the observations of this additional structure read
back to exactly the L\'evy--Longo
tree~\cite{levy1978-reductions-lambda-calcul, longo1983-set-theoretical-models}
(Theorem~\ref{thm:llt-correspondence}).
A mixin tree \emph{converges}
if the $\supers$ of the root's $\olbl{__whnf}$, which follows the
reduction chain by inheritance, reach an \emph{abstraction shape}:
a node under which $\olbl{__parameter}$ is defined, the shape produced by
$\mathcal{T}(\lambda x.\, M)$.
The translation is adequate for the lazy $\lambda$-calculus:
a closed $\lambda$-term has a weak head normal form if and only
if its translation converges (Theorem~\ref{thm:adequacy}).
Formal statements and proofs are in
Appendix~\ref{app:levy-longo-tree-proofs}.
Inheritance-calculus contexts are strictly more discriminating
than $\lambda$-calculus contexts, as
Section~\ref{sec:asymmetry} establishes.
The next subsection examines whether inheritance-calculus is
also strictly more expressive.
\fi

\subsection{\bilingual{Expressive Asymmetry}{表达力不对称}}
\label{sec:asymmetry}
\label{sec:formal-separation}
\ifInheritanceChinese
两个在惰性 $\lambda$-演算中可观测等价的 $\lambda$-项,因此在 $\lambda$-演算中无法区分,可以被一个混合了 $\lambda$-演算与继承演算构造子的上下文分离;该上下文通过开放扩展在已有定义上继承新的可观测投影。根据 Felleisen 的表达力判据~\cite{felleisen1991-expressive-power},继承演算严格比惰性 $\lambda$-演算更具表达力(定理~\ref{thm:expressive-asymmetry}):$\lambda$-演算可以直接通过 $\mathcal{T}$ 宏表达于继承演算中(定理~\ref{thm:forward-macro}),但惰性 $\lambda$-演算无法宏表达继承(定理~\ref{thm:cartesian-nonexpressibility})。形式化定义、构造和证明见附录~\ref{app:expressiveness};该分离构造本身在表达式问题~\cite{wadler1998-expression-problem}的意义上是一个微型实例,它在不修改已有定义的情况下添加了一个新的可观测投影。
\else
Two $\lambda$-terms observationally equivalent in the lazy
$\lambda$-calculus, and therefore indistinguishable there,
can be
separated by a context that mixes $\lambda$-calculus and
inheritance-calculus constructors, inheriting new observable
projections onto an existing definition via open extension.
By Felleisen's expressiveness
criterion~\cite{felleisen1991-expressive-power}, inheritance-calculus is
strictly more expressive than the lazy $\lambda$-calculus
(Theorem~\ref{thm:expressive-asymmetry}):
the $\lambda$-calculus is macro-expressed directly by
$\mathcal{T}$ (Theorem~\ref{thm:forward-macro}),
but the lazy $\lambda$-calculus cannot macro-express inheritance
(Theorem~\ref{thm:cartesian-nonexpressibility}).
The formal definitions, constructions, and proofs are given in
Appendix~\ref{app:expressiveness};
the separating construction is itself a miniature instance, in the
sense of the Expression Problem~\cite{wadler1998-expression-problem}, of adding a new observable projection to
an existing definition without modifying it.
\fi

\section{\bilingual{Case Study: Nat Arithmetic}{案例研究：Nat(自然数)算术}}
\label{sec:case-study}
\label{sec:expression-problem}

\ifInheritanceChinese
我们在继承演算中实现了布尔逻辑与自然数算术,包括一元与二进制两种形式。
本节详细追踪一元 Nat 的情形:其数据表示、加法与相等判断。
该实现遵循普通的 Gang-of-Four 设计模式~\cite{gamma1994-design-patterns},采用声明式面向对象风格,将每项关注点沿两个轴分解为 mixin:操作轴与情形轴。由于演算本身没有方法,所有名称均为名词或形容词:\mintinline{yaml}{UpperCamelCase} 名称扮演模块、类型、数据构造子与方法的角色,\mintinline{yaml}{lowerCamelCase} 名称扮演字段与参数的角色。
下文展示的源文件都是 MIXINv2 库的一部分\anon[~(作为补充材料附上)]{~\cite{yang2026-mixinv2}},逐字原样呈现且可执行;本节的行为陈述是测试套件执行这些文件所记录的结果,机器生成的答案表见附录~\ref{app:evaluation-trace}。
\else
We implemented boolean logic and natural number arithmetic, both unary
and binary, in inheritance-calculus.
This section traces the unary Nat case in detail: its data representation,
addition, and equality.
The implementation follows ordinary Gang-of-Four design
patterns~\cite{gamma1994-design-patterns} in a declarative object-oriented
style, with each concern split into mixins along two axes: operations and
cases. Because the calculus has no methods of its own, every name is a noun or
adjective: \mintinline{yaml}{UpperCamelCase} names play the roles of modules, types, data
constructors, and methods, and \mintinline{yaml}{lowerCamelCase} names the roles of fields and
parameters.
The source files shown below are part of the MIXINv2
library\anon[~(included as supplementary
material)]{~\cite{yang2026-mixinv2}}, reproduced verbatim and
executable; the behavioral statements of this section are outcomes
recorded by the test suite executing these files, with the
machine-generated answer table in
Appendix~\ref{app:evaluation-trace}.
\fi

\ifInheritanceChinese
  mixin \mintinline{yaml}{NatData} 是自然数接口及其工厂。\mintinline{yaml}{NatFactory} 是一个工厂,其抽象产品类型被别名为 \mintinline{yaml}{Nat}:
  \else
  The mixin \mintinline{yaml}{NatData} is the natural-number interface and its factory.
  \mintinline{yaml}{NatFactory} is a factory whose abstract product type is aliased as \mintinline{yaml}{Nat}:
  \fi
\mixinstrip{../packages/mixinv2-library/src/mixinv2_library/Builtin/NatData.mixin.yaml}

\ifInheritanceChinese
  每个数据构造子都是一个独立的实现 mixin,继承 \mintinline{yaml}{NatData}。一个 \mintinline{yaml}{Zero} 值继承 \mintinline{yaml}{Product}:
  \else
  Each data constructor is a separate implementation mixin that inherits
  \mintinline{yaml}{NatData}.
  A \mintinline{yaml}{Zero} value inherits \mintinline{yaml}{Product}:
  \fi
\mixinstrip{../packages/mixinv2-library/src/mixinv2_library/Builtin/NatDataZero.mixin.yaml}

\ifInheritanceChinese
  一个 \mintinline{yaml}{Successor} 值继承 \mintinline{yaml}{Product},并暴露一个类型为 \mintinline{yaml}{Product} 的 \mintinline{yaml}{predecessor}:
  \else
  A \mintinline{yaml}{Successor} value inherits \mintinline{yaml}{Product} and
  exposes a \mintinline{yaml}{predecessor} whose type is \mintinline{yaml}{Product}:
  \fi
\mixinstrip{../packages/mixinv2-library/src/mixinv2_library/Builtin/NatDataSuccessor.mixin.yaml}
\ifInheritanceChinese
与 \mintinline{yaml}{NatDataZero} 合在一起,这以开放的方式完成了自然数的 Peano 编码,即函子 $F(X) = 1 + X$ 的初始代数:两个构造子作为独立的 mixin 加入,而非固定在单个封闭的数据类型定义中。
\else
Together with \mintinline{yaml}{NatDataZero}, this completes the Peano encoding of
the natural numbers, the initial algebra of the functor $F(X) = 1 + X$,
in an open way: the two constructors are added as independent mixins
rather than fixed in a single closed datatype definition.
\fi

\ifInheritanceChinese
加法操作沿情形轴分解为一个\emph{接口} mixin \mintinline{yaml}{NatPlus},该接口 mixin 保留关注点的名称并在抽象 \mintinline{yaml}{Product} 上声明命令,以及每个情形对应一个\emph{实现} mixin,实现 mixin 在名称后附加构造子名称(即下文的 \mintinline{yaml}{NatPlusZero} 与 \mintinline{yaml}{NatPlusSuccessor})。\mintinline{yaml}{NatPlus} 继承 \mintinline{yaml}{NatData}。面向对象方法会用动词命名的地方,命令模式~\cite{gamma1994-design-patterns} 将其具化为以名词命名的命令,因此加法方法被命名为 \mintinline{yaml}{Plus} 而非 \mintinline{yaml}{Add};它声明参数 \mintinline{yaml}{addend},并通过对 \mintinline{yaml}{sum} 的投影暴露其结果,二者在此已指定类型,但尚未计算:
\else
The addition operation is split along the case axis into an \emph{interface}
mixin \mintinline{yaml}{NatPlus}, which keeps the concern's name and declares the command
on the abstract \mintinline{yaml}{Product}, and one \emph{implementation} mixin per case,
which appends the constructor name (\mintinline{yaml}{NatPlusZero} and
\mintinline{yaml}{NatPlusSuccessor} below). \mintinline{yaml}{NatPlus} inherits
\mintinline{yaml}{NatData}. Where an
object-oriented method would be named with a verb, the command
pattern~\cite{gamma1994-design-patterns} reifies it as a noun-named command, so
the addition method is named \mintinline{yaml}{Plus} rather than \mintinline{yaml}{Add}; it declares
the parameter \mintinline{yaml}{addend} and exposes its result through projection
of \mintinline{yaml}{sum}, both typed here but not yet computed:
\fi
\mixinstrip{../packages/mixinv2-library/src/mixinv2_library/Builtin/NatPlus.mixin.yaml}

\ifInheritanceChinese
  以 $0$、$\mathrm{S}$、$+$ 分别记 \mintinline{yaml}{Zero}、\mintinline{yaml}{Successor}、\mintinline{yaml}{Plus},基础情形为 $0 + m = m$。这里 \mintinline{yaml}{NatPlusZero} 充当一个导入其他模块的模块:它继承 \mintinline{yaml}{Zero} 数据与 \mintinline{yaml}{Plus} 接口,并直接返回被加数:
  \else
  Writing $0$, $\mathrm{S}$, and $+$ for \mintinline{yaml}{Zero}, \mintinline{yaml}{Successor}, and \mintinline{yaml}{Plus}, the base case is $0 + m = m$. Here \mintinline{yaml}{NatPlusZero} acts as a module that
  imports other modules: it inherits the \mintinline{yaml}{Zero} data and the \mintinline{yaml}{Plus}
  interface, and returns the addend directly:
  \fi
\mixinstrip{../packages/mixinv2-library/src/mixinv2_library/Builtin/NatPlusZero.mixin.yaml}

\ifInheritanceChinese
  递归情形为 $\mathrm{S}(n) + m = n + \mathrm{S}(m)$,方法是以递增后的被加数委托给 $n$ 的 \mintinline{yaml}{Plus}:
  \else
  The recursive case is $\mathrm{S}(n) + m = n + \mathrm{S}(m)$,
  implemented by delegating to $n$'s \mintinline{yaml}{Plus} with an incremented addend:
  \fi
\mixinstrip{../packages/mixinv2-library/src/mixinv2_library/Builtin/NatPlusSuccessor.mixin.yaml}

\ifInheritanceChinese
相等判断需要对数据构造子进行情形分析;我们使用访问者模式~\cite{gamma1994-design-patterns}。\mintinline{yaml}{NatVisitor} 向 \mintinline{yaml}{Product} 添加一个 \mintinline{yaml}{Acceptance} 命令,即访问者模式的 \mintinline{java}{accept} 方法的名词形式。mixin \mintinline{yaml}{NatVisitor} 将其声明为空,留下 \mintinline{yaml}{VisitorMap} 与 \mintinline{yaml}{Accepted} mixin 由各构造子的实现来填充:
\else
Equality requires case analysis on the data constructor; we use the
visitor pattern~\cite{gamma1994-design-patterns}.
\mintinline{yaml}{NatVisitor} adds an \mintinline{yaml}{Acceptance} command to \mintinline{yaml}{Product},
the noun form of the visitor pattern's \mintinline{java}{accept} method.
The mixin \mintinline{yaml}{NatVisitor} declares it empty, leaving a \mintinline{yaml}{VisitorMap} and an
\mintinline{yaml}{Accepted} mixin for the per-constructor implementations to fill in:
\fi
\mixinstrip{../packages/mixinv2-library/src/mixinv2_library/Builtin/NatVisitor.mixin.yaml}

\ifInheritanceChinese
  \mintinline{yaml}{Zero} 的实现选择 \mintinline{yaml}{ZeroVisitor} 分支。这里的访问者模式与~\cite{gamma1994-design-patterns} 中的表述不同,那里数据构造子由 \mintinline{java}{accept} 接口固定;此处通过组合允许添加新的构造子,因为每个构造子作为独立的 mixin 贡献其自身的分支:
  \else
  The \mintinline{yaml}{Zero} implementation selects the \mintinline{yaml}{ZeroVisitor} branch.
  The visitor pattern here, unlike its formulation in
  \cite{gamma1994-design-patterns} where the data constructors are fixed by the
  \mintinline{java}{accept} interface, admits new constructors by composition, since each
  constructor contributes its own branch as a separate mixin:
  \fi
\mixinstrip{../packages/mixinv2-library/src/mixinv2_library/Builtin/NatVisitorZero.mixin.yaml}

\ifInheritanceChinese
  \mintinline{yaml}{Successor} 的实现选择 \mintinline{yaml}{SuccessorVisitor} 分支:
  \else
  The \mintinline{yaml}{Successor} implementation selects the \mintinline{yaml}{SuccessorVisitor} branch:
  \fi
\mixinstrip{../packages/mixinv2-library/src/mixinv2_library/Builtin/NatVisitorSuccessor.mixin.yaml}

\ifInheritanceChinese
  \mintinline{yaml}{BooleanData} 定义输出的域:
  \else
  \mintinline{yaml}{BooleanData} defines the output sort:
  \fi
\mixinstrip{../packages/mixinv2-library/src/mixinv2_library/Builtin/BooleanData.mixin.yaml}
\ifInheritanceChinese
  为了验证 $2 + 3 = 5$,我们需要对 Nat 值进行相等判断。mixin \mintinline{yaml}{NatEquality} 导入 \mintinline{yaml}{NatVisitor} 与 \mintinline{yaml}{BooleanData},并在 \mintinline{yaml}{Product} 上声明一个 \mintinline{yaml}{Equal} 命令,由各构造子的实现提供。由于 \mintinline{yaml}{NatEquality} 继承 \mintinline{yaml}{BooleanData},限定 $\this$ \mintinline{yaml}{[NatEquality, ~]} 可以到达 \mintinline{yaml}{Boolean},因此 \mintinline{yaml}{Equal} 以类型 \mintinline{yaml}{[NatEquality, ~, Boolean]} 声明其结果 \mintinline{yaml}{equal}:
  \else
  To test that $2 + 3 = 5$, we needed equality on Nat values.
  The mixin \mintinline{yaml}{NatEquality} imports \mintinline{yaml}{NatVisitor} and \mintinline{yaml}{BooleanData},
  and declares an \mintinline{yaml}{Equal} command on \mintinline{yaml}{Product} that the per-constructor implementations provide.
  Because \mintinline{yaml}{NatEquality} inherits \mintinline{yaml}{BooleanData},
  the qualified this \mintinline{yaml}{[NatEquality, ~]} can reach \mintinline{yaml}{Boolean},
  so \mintinline{yaml}{Equal} declares its result
  \mintinline{yaml}{equal} with the type \mintinline{yaml}{[NatEquality, ~, Boolean]}:
  \fi
\mixinstrip{../packages/mixinv2-library/src/mixinv2_library/Builtin/NatEquality.mixin.yaml}

\ifInheritanceChinese
\mintinline{yaml}{Zero} 只与另一个 \mintinline{yaml}{Zero} 相等。这里的继承执行情形分析:继承 \mintinline{yaml}{[other, Acceptance]} 对 \mintinline{yaml}{other} 是由 \mintinline{yaml}{Zero} 还是 \mintinline{yaml}{Successor} 构建进行分派,在 \mintinline{yaml}{ZeroVisitor} 分支返回 \mintinline{yaml}{True},否则返回 \mintinline{yaml}{False}。\mintinline{yaml}{True} 与 \mintinline{yaml}{False} 均为 \mintinline{yaml}{BooleanFactory} 的数据构造子;该工厂来自 \mintinline{yaml}{BooleanData},通过继承的 \mintinline{yaml}{NatEquality} 间接导入,并以 \mintinline{yaml}{[NatEqualityZero, ~, BooleanFactory]} 到达。同一继承机制也提供了结果类型:在 \mintinline{yaml}{Accepted} 中,\mintinline{yaml}{equal} 继承 \mintinline{yaml}{[NatEqualityZero, ~, Boolean]},从而声明其类型为 \mintinline{yaml}{Boolean}:
\else
\mintinline{yaml}{Zero} is equal only to another \mintinline{yaml}{Zero}.
Inheriting here performs case analysis:
inheriting \mintinline{yaml}{[other, Acceptance]} dispatches on whether \mintinline{yaml}{other}
was built by \mintinline{yaml}{Zero} or \mintinline{yaml}{Successor}, returning \mintinline{yaml}{True}
on the \mintinline{yaml}{ZeroVisitor} branch and \mintinline{yaml}{False} otherwise.
Both \mintinline{yaml}{True} and \mintinline{yaml}{False} are data constructors of
\mintinline{yaml}{BooleanFactory}, which comes from \mintinline{yaml}{BooleanData}, imported
indirectly through the inherited \mintinline{yaml}{NatEquality}, and is reached as
\mintinline{yaml}{[NatEqualityZero, ~, BooleanFactory]}.
The same inheritance mechanism also supplies the result type: in
\mintinline{yaml}{Accepted}, \mintinline{yaml}{equal} inherits
\mintinline{yaml}{[NatEqualityZero, ~, Boolean]}, declaring its type to
be \mintinline{yaml}{Boolean}:
\fi
\mixinstrip{../packages/mixinv2-library/src/mixinv2_library/Builtin/NatEqualityZero.mixin.yaml}

\ifInheritanceChinese
  \mintinline{yaml}{Successor} 只与另一个具有相等前驱的 \mintinline{yaml}{Successor} 相等,因此其实现通过 \mintinline{yaml}{[Successor, ~, predecessor, Equal]} 进行递归:
  \else
  \mintinline{yaml}{Successor} is equal only to another \mintinline{yaml}{Successor} with an equal
  predecessor, so its implementation recurses through
  \mintinline{yaml}{[Successor, ~, predecessor, Equal]}:
  \fi
\mixinstrip{../packages/mixinv2-library/src/mixinv2_library/Builtin/NatEqualitySuccessor.mixin.yaml}
\ifInheritanceChinese
通过继承来组合 \mintinline{yaml}{Plus} 与 \mintinline{yaml}{Equal} 的实现,无需修改任何 mixin,即可使所得的 Nat 值自动同时具备 \mintinline{yaml}{Plus} 与 \mintinline{yaml}{Equal}。
\else
Composing the \mintinline{yaml}{Plus} and \mintinline{yaml}{Equal} implementations by
inheriting them, with no modifications to any mixin, gives the
resulting Nat values both \mintinline{yaml}{Plus} and \mintinline{yaml}{Equal}
automatically.
\fi
\mixinstrip{../packages/mixinv2-library/tests/NatConstants.mixin.yaml}
\ifInheritanceChinese
  算术测试继承 \mintinline{yaml}{NatConstants} 以及它所需的操作:
  \else
  The arithmetic test inherits \mintinline{yaml}{NatConstants} together with the operations it needs:
  \fi
\mixinstrip{../packages/mixinv2-library/tests/NatArithmeticTest.mixin.yaml}
\ifInheritanceChinese
引用 \mintinline{yaml}{[NatArithmeticTest, ~, threePlusFourEqualsSeven, equal]} 所解析到的 mixin 继承 \mintinline{yaml}{True},验证 $(3+4) = 7$;同一文件中的其他用例覆盖了直接相等以及更多加法结果。%
\footnote{
  读取一个结果,需要借助 FFI 来打印,或者借助元语言观察以在该路径上调用 $\supers$。测试套件使用的入口 mixin \mintinline{yaml}{ArithmeticTest} 继承了此处展示的 \mintinline{yaml}{NatArithmeticTest} 核心,并额外组合了 ToPython FFI,从而以 Python 原生值读回结果;本文展示的是不含 FFI 的核心。
}
没有任何现有 mixin 被修改。\mintinline{yaml}{NatEquality}(一项新操作)与 \mintinline{yaml}{Boolean}(一种新数据类型)各自只需要一个新 mixin:沿操作轴与情形轴均可扩展。
\else
The mixin the reference \mintinline{yaml}{[NatArithmeticTest, ~, threePlusFourEqualsSeven, equal]} resolves to
inherits \mintinline{yaml}{True}, verifying $(3+4) = 7$; other cases in the same file
cover direct equality and further addition results.%
\footnote{
  Reading a result requires either the FFI to print it or a
  metalanguage observation that invokes $\supers$ on the path.
  The entry-point mixin \mintinline{yaml}{ArithmeticTest} used by the test suite
  inherits the \mintinline{yaml}{NatArithmeticTest} core shown here and additionally
  composes the ToPython FFI to read results back as Python natives; the paper
  shows the FFI-free core.
}
No existing mixin was modified.
\mintinline{yaml}{NatEquality}, a new operation, and \mintinline{yaml}{Boolean},
a new data type, each required only a new mixin: extensibility along both
the operation axis and the case axis.
\fi

\paragraph{\bilingual{Tree-level inheritance and return types}{树层次的继承与返回类型}}
\label{sec:discussion-expression-problem}
\ifInheritanceChinese
每个操作都是一个独立的 mixin,扩展工厂的子树。继承递归地合并共享同一标签的子树,因此工厂中的每个标签都从所有被继承的 mixin 获得全部操作。关键在于,这同样适用于返回类型:\mintinline{yaml}{NatPlus} 从 \mintinline{yaml}{NatFactory} 返回值,而 \mintinline{yaml}{NatEquality} 独立地向同一工厂添加 \mintinline{yaml}{Equal}。同时继承两者,使得 \mintinline{yaml}{Plus} 返回的值自动携带 \mintinline{yaml}{Equal},无需修改任何一个 mixin。
\else
Each operation is an independent mixin that extends a factory's
subtree.
Inheritance recursively merges subtrees that share a label,
so every label in the factory acquires all operations from all
inherited mixins.
Crucially, this applies to return types: \mintinline{yaml}{NatPlus} returns
values from \mintinline{yaml}{NatFactory}, and \mintinline{yaml}{NatEquality}
independently adds \mintinline{yaml}{Equal} to the same factory.
Inheriting both causes the values returned by \mintinline{yaml}{Plus} to
carry \mintinline{yaml}{Equal} automatically, without modifying either mixin.
\fi

\paragraph{\bilingual{Peano-encoded Nats are tries}{Peano 编码的 Nat 是字典树}}
\label{sec:case-study-cartesian}
\ifInheritanceChinese
Peano 编码的 Nat 是一个 mixin,其结构映射了构建它所用的数据构造子路径:\mintinline{yaml}{Zero} 是一个平坦 mixin;$\mathrm{S}(0)$ 是一个带有指向一个 \mintinline{yaml}{Zero} mixin 的 \mintinline{yaml}{predecessor} 的 mixin;依此类推。这恰好是字典树~\cite{fredkin1960-trie-memory}的形状。%
\footnote{
  一元 Nat 的字典树可以被视为一个稀疏链表,其深度等于它所代表的数值:在字典树并中,链表中的某些 \mintinline{yaml}{Successor} 节点指向 \mintinline{yaml}{Zero}(叶节点),因此链表中并非每个位置都被占用。\anon[{} 补充材料]{{} MIXINv2 源码~\cite{yang2026-mixinv2}}{} 中包含一个字典树深度为对数级的二进制自然数(BinNat)算术。
}
一个同时继承两个 Nat 值的 mixin 是一个字典树并。下面的测试将 \mintinline{yaml}{Plus} 应用于 $\{1,2\}$ 的字典树并,被加数为 $\{3,4\}$ 的字典树并,然后用 \mintinline{yaml}{Equal} 将结果与自身比较:
\else
A Peano-encoded Nat is a mixin whose structure mirrors the
data-constructor path used to build it: \mintinline{yaml}{Zero} is a flat mixin;
$\mathrm{S}(0)$ is a mixin with a \mintinline{yaml}{predecessor}
pointing to a \mintinline{yaml}{Zero} mixin; and so on.
This is exactly the shape of a trie~\cite{fredkin1960-trie-memory}.%
\footnote{
  The unary Nat trie can be viewed as a sparse linked list
  whose depth equals the number it represents:
  in a trie union, some \mintinline{yaml}{Successor} nodes along
  the list point to \mintinline{yaml}{Zero} (leaf nodes),
  so not every position in the list is occupied.
  The\anon[{} supplementary material]{{} MIXINv2 source~\cite{yang2026-mixinv2}}{} includes a binary natural number
  (BinNat) arithmetic whose trie depth is logarithmic.
}
A mixin that simultaneously inherits two Nat values is a trie union.
The following test applies \mintinline{yaml}{Plus} to a trie union of $\{1,2\}$
with addend a trie union of $\{3,4\}$, then compares the result to itself
with \mintinline{yaml}{Equal}:
\fi
\mixinstrip{../packages/mixinv2-library/tests/NatCartesianProductTest.mixin.yaml}
\ifInheritanceChinese
\mintinline{yaml}{[NatOneOrTwoPlusThreeOrFour, sum]} 求值为字典树 $\{4,5,6\}$,即所有两两之和的集合。\mintinline{yaml}{Equal} 随后对从 $\{4,5,6\} \times \{4,5,6\}$ 中取出的每一对进行应用;引用 \mintinline{yaml}{[NatCartesianProductTest, ~, NatResultEqualitySelf, equal]} 所解析到的 mixin 同时继承 \mintinline{yaml}{True} 与 \mintinline{yaml}{False}。笛卡尔积直接来自语义方程。\mintinline{yaml}{NatOneOrTwo} 同时继承 \mintinline{yaml}{One} 与 \mintinline{yaml}{Two},因此 $\bases(\text{``NatOneOrTwo''})$ 包含两条路径。当 \mintinline{yaml}{NatOneOrTwoPlusThreeOrFour} 继承 \mintinline{yaml}{[NatOneOrTwo, Plus]} 时,其 $\bases$ 包含 \mintinline{yaml}{One} 与 \mintinline{yaml}{Two} 各自的 \mintinline{yaml}{Plus}。每个 \mintinline{yaml}{Plus} 通过 $\this$ 解析其 \mintinline{yaml}{addend},到达 \mintinline{yaml}{NatThreeOrFour};该值本身是 \mintinline{yaml}{Three} 与 \mintinline{yaml}{Four} 的字典树并,因此 \mintinline{yaml}{addend} 路径的 $\supers$ 包含两个值。由于 \mintinline{yaml}{[Successor, Plus]} 以递增后的被加数递归委托给 \mintinline{yaml}{[predecessor, Plus]},而 \mintinline{yaml}{predecessor} 也通过字典树并解析,递归在每个字典树节点处分支。因此结果 \mintinline{yaml}{sum} 收集了通过两个操作数的任意组合所能到达的全部路径,即笛卡尔积。这一分支正是多目标 $\this$(方程~\ref{eq:this})被真正触发之处:在字典树并处,\mintinline{yaml}{_increasedAddend} 内的 \mintinline{yaml}{[addend]} 引用一次解析到 \mintinline{yaml}{[NatOneOrTwoPlusThreeOrFour, addend]} 与 \mintinline{yaml}{[NatOneOrTwoPlusThreeOrFour, _recursiveAddition, addend]} 两条路径,故前沿 $S$ 携带多个目标。

同样操作单个值的 \mintinline{yaml}{Plus} 与 \mintinline{yaml}{Equal} 文件,无需修改,即产生逻辑编程的关系语义:$\overrides$ 与 $\supers$ 收集所有继承路径,每个操作都在其上分布。
\else
\mintinline{yaml}{[NatOneOrTwoPlusThreeOrFour, sum]} evaluates to the trie $\{4,5,6\}$, the set of all pairwise sums.
\mintinline{yaml}{Equal} then applies to every pair drawn from $\{4,5,6\} \times \{4,5,6\}$;
the mixin the reference \mintinline{yaml}{[NatCartesianProductTest, ~, NatResultEqualitySelf, equal]} resolves to
inherits both \mintinline{yaml}{True} and \mintinline{yaml}{False}.
The Cartesian product arises directly from the semantic equations.
\mintinline{yaml}{NatOneOrTwo} inherits both \mintinline{yaml}{One} and \mintinline{yaml}{Two},
so $\bases(\text{``NatOneOrTwo''})$ contains two paths.
When \mintinline{yaml}{NatOneOrTwoPlusThreeOrFour} inherits
\mintinline{yaml}{[NatOneOrTwo, Plus]}, its $\bases$
include the \mintinline{yaml}{Plus} of both
\mintinline{yaml}{One} and \mintinline{yaml}{Two}.
Each \mintinline{yaml}{Plus} resolves its \mintinline{yaml}{addend} via
$\this$, reaching \mintinline{yaml}{NatThreeOrFour}, which is itself a
trie union of \mintinline{yaml}{Three} and \mintinline{yaml}{Four}:
the $\supers$ of the \mintinline{yaml}{addend} path contain both values.
Since \mintinline{yaml}{[Successor, Plus]} recursively delegates to
\mintinline{yaml}{[predecessor, Plus]} with an incremented addend,
and \mintinline{yaml}{predecessor} also resolves through the trie union,
the recursion branches at every trie node.
The result \mintinline{yaml}{sum} therefore collects all paths reachable
through any combination of the two operands, which is the
Cartesian product.
This branching is where multi-target $\this$
(equation~\ref{eq:this}) is exercised: at the trie union the
\mintinline{yaml}{[addend]} reference inside
\mintinline{yaml}{_increasedAddend} resolves at once to the two paths
\mintinline{yaml}{[NatOneOrTwoPlusThreeOrFour, addend]} and
\mintinline{yaml}{[NatOneOrTwoPlusThreeOrFour, _recursiveAddition, addend]},
so the frontier $S$ carries more than one target.

The same \mintinline{yaml}{Plus} and \mintinline{yaml}{Equal} files that operate on
single values thus produce, without modification, the relational
semantics of logic programming:
$\overrides$ and $\supers$ collect all inheritance paths, and
every operation distributes over them.
\fi

\section{\bilingual{Discussion}{讨论}}

\subsection{\bilingual{Emergent Phenomena}{涌现现象}}
\label{sec:emergent-phenomena}

\ifInheritanceChinese
在 MIXINv2 中开发真实程序的过程中，我们发现了两种现象，每一种都是对某种已有设计模式的非预期用法，超出了该模式的原始能力。

\begin{description}
  \item[关系语义]
    第~\ref{sec:case-study}~节的笛卡尔积语义不限于 $\olbl{Nat}$：任何 Datalog 元组都可编码为链表，而深度合并在这些元组上计算集合操作，我们认为这正是 Datalog 语义。
    附录~\ref{app:datalog-encoding}~将 Datalog 编码进继承演算，仅用第~\ref{sec:syntax}~节的三个基本构造即可表达变量连接、常量匹配、多体规则和递归规则。
    递归规则依赖值递归：最小不动点 $\lfp(T_P)$ 将一个引用解析为满足 $S = F(S)$ 的集合 $S$，沿循环累积答案。
    这种值递归使编码后的递归规则收敛。
    这正是图~\ref{fig:calculi-matrix} 中继承演算\emph{比其 $\lambda$-子语言更已定义}的含义：此处作为现象以实例呈现，而非作为定理证明。
    这种逻辑编程语义是对编码案例研究算术的工厂模式的非预期用法中涌现出来的。

  \item[对不可扩展性的免疫]\label{item:immunity}
    第~\ref{sec:case-study}~节的 $\olbl{Nat}$ 编码是无类型表达问题~\cite{wadler1998-expression-problem}的一个实例。%
    \footnote{
      表达问题有两个要求：沿操作轴和数据构造子轴均可扩展，以及静态类型安全。无类型演算无法满足类型安全要求，因此我们只考虑双轴可扩展性要求。
    }
    第~\ref{sec:syntax}~节的每个构造都可通过深度合并（公式~\ref{eq:overrides}）进行扩展，所以不可扩展的那一半根本无法写出；命题~\ref{prop:immunity} 为每个良构程序的每条路径构造一个扩展合成，兑现这一全称断言。
    在其他地方恢复第二个扩展维度需要专门的机制~\cite{liang1995-monad-transformers, lammel2003-scrap-your-boilerplate, loh2006-open-data-types, swierstra2008-data-types-a-la-carte, carette2009-finally-tagless, oliveira2012-object-algebras, kiselyov2013-extensible-effects, wu2014-effect-handlers-in-scope, kiselyov2015-freer-monads, wang2016-expression-problem-trivially, leijen2017-row-typed-algebraic-effects, poulsen2023-hefty-algebras}。
    案例研究具体演练了两个轴:\mintinline{yaml}{NatPlus} 作为新操作加入既有构造子之上,而 \mintinline{yaml}{Successor} 情形作为三个独立 mixin(\mintinline{yaml}{NatDataSuccessor}、\mintinline{yaml}{NatPlusSuccessor}、\mintinline{yaml}{NatEqualitySuccessor})加入并扩展既有操作,不修改任何已展示的文件。
    第~\ref{sec:semantic-variants}~节分析了为何深度合并是唯一将这一免疫与可交换、幂等、可结合的组合同时实现的候选方案。
    这第二个可扩展性维度是对访问者模式的非预期用法中涌现出来的。

\end{description}

\label{sec:practical-function-color}%
更多现象依赖于 FFI，而本文中的示例排除了 FFI。其中一种是函数颜色盲~\cite{nystrom2015-function-color}：同一个 MIXINv2 文件可以在同步和异步运行时下运行，这是通过继承进行依赖注入的非预期用法。补充材料\anon[]{~\cite{yang2026-mixinv2}}给出了相关示例。
\else
Developing real programs in MIXINv2 surfaced two phenomena, each an
off-label use of an adopted design pattern beyond its original abilities.

\begin{description}
  \item[Relational semantics]
    The Cartesian-product semantics of Section~\ref{sec:case-study}
    generalizes beyond $\olbl{Nat}$: any Datalog tuple encodes
    as a linked list, and deep merge computes set operations over such
    tuples, which we believe is Datalog semantics.
    Appendix~\ref{app:datalog-encoding} encodes Datalog into
    inheritance-calculus, expressing variable joins,
    constant matches, multi-body rules, and recursive rules with
    the three primitives of Section~\ref{sec:syntax} alone.
    Recursive rules rely on value recursion: the least fixed point
    $\lfp(T_P)$ resolves a reference to a set $S$ with
    $S = F(S)$, accumulating answers across the cycle.
    This value recursion makes the encoded recursive rules converge.
    This is the sense, reported here as a phenomenon shown by example rather
    than proved as a theorem, in which inheritance-calculus is more defined
    than its $\lambda$-sublanguage (Figure~\ref{fig:calculi-matrix}).
    This logic-programming semantics emerged from an off-label use of
    the factory pattern that encodes the case study's arithmetic.

  \item[Immunity to nonextensibility]\label{item:immunity}
    The $\olbl{Nat}$ encoding of Section~\ref{sec:case-study}
    is an instance of the untyped Expression
    Problem~\cite{wadler1998-expression-problem}.%
    \footnote{
      The Expression Problem has two requirements: extensibility along
      both the operation and the data-constructor axis, and static
      type safety. An untyped calculus cannot meet the type-safety
      requirement, so we consider only the two-axis extensibility
      requirement.
    }
    Every construct of
    Section~\ref{sec:syntax} is extensible via deep merge
    (equation~\ref{eq:overrides}), so the nonextensible half cannot be
    written; Proposition~\ref{prop:immunity} discharges the universal
    claim by constructing, for every path of every well-formed
    program, an extending composition.
    The case study exercises both axes concretely:
    \mintinline{yaml}{NatPlus} joins as a new operation over the
    existing constructors, and the \mintinline{yaml}{Successor} case
    joins as three independent mixins
    (\mintinline{yaml}{NatDataSuccessor},
    \mintinline{yaml}{NatPlusSuccessor},
    \mintinline{yaml}{NatEqualitySuccessor}) extending the existing
    operations, with no previously shown file modified.
    Recovering that second dimension elsewhere takes dedicated
    mechanisms~\cite{liang1995-monad-transformers, lammel2003-scrap-your-boilerplate, loh2006-open-data-types, swierstra2008-data-types-a-la-carte, carette2009-finally-tagless, oliveira2012-object-algebras, kiselyov2013-extensible-effects, wu2014-effect-handlers-in-scope, kiselyov2015-freer-monads, wang2016-expression-problem-trivially, leijen2017-row-typed-algebraic-effects, poulsen2023-hefty-algebras}.
    Section~\ref{sec:semantic-variants} analyzes why deep merge
    is the only candidate that combines this immunity with
    commutative, idempotent, and associative composition.
    This second extensibility dimension emerged from an off-label use
    of the visitor pattern.

\end{description}

\label{sec:practical-function-color}%
Further phenomena rely on the FFI, which the examples in this paper
exclude. One is function color
blindness~\cite{nystrom2015-function-color}: the same MIXINv2 file runs
under both synchronous and asynchronous runtimes, an off-label use of
dependency injection via inheritance. The
supplementary material\anon[]{~\cite{yang2026-mixinv2}} gives the example.
\fi

\subsection{\bilingual{Semantic Alternatives}{语义替代方案}}
\label{sec:semantic-variants}

\ifInheritanceChinese
上述\hyperref[item:immunity]{对不可扩展性的免疫}条目陈述了继承演算对不可扩展性的免疫性，其构造证明见命题~\ref{prop:immunity}。我们将此作为设计目标，并通过调查我们已知的所有候选方案，询问是否有替代设计能共享这一性质。

\paragraph{\bilingual{Degrees of freedom}{自由度}}
以下分析假设 AST 是一棵\emph{命名树}：每条边都携带一个标签，因此每个节点都由从根节点出发的路径唯一标识。对于配置语言，这自然成立：JSON、YAML 或 Nix 属性集中的每个键都是一个标签，而嵌套产生路径。%
\footnote{
  Nix 是一门具有一等函数的函数式语言；仅凭其属性集并不构成命名树 DSL。然而，NixOS 模块系统~\cite{nixos-contributors-nixos-modules}将用户界面限制为通过递归合并组合的嵌套属性集，而正是这种 DSL 层面的结构自然地构成了命名树。
}
对于主流编程语言，AST 并不直接是命名树，因为匿名表达式位置和隐式作用域缺乏标签。%
\footnote{
  例如，$\mathbf{let}$ 的主体。
}
然而，这些位置仍可被分配合成的新鲜标签。%
\footnote{
  正如编译器为 Java 匿名类合成名字。
}
第~\ref{sec:forward-translation}~节的前向翻译正是这么做的：它为匿名位置赋予诸如 $\olbl{__result}$ 和 $\olbl{__call}$ 的合成标签，使每个节点都获得唯一路径，AST 因而成为一棵命名树。

在此前提下，每种程序是带引用的命名树的语言，都可以由两个有限集来刻画：一个由\emph{节点类型}构成的集合 $\mathcal{N}$，每种节点类型决定一个节点如何组织其子树；以及一个由\emph{引用类型}构成的集合 $\mathcal{R}$，每种引用类型决定树中一个位置如何引用另一个位置。语法由 $\mathcal{N}$ 和 $\mathcal{R}$ 的选择固定；语义则由每种节点类型如何构造组合以及每种引用类型如何解析其目标来决定。

{\small
  \begin{center}
    \begin{tabular}{lp{0.38\columnwidth}p{0.33\columnwidth}}
      \bilingual{\textbf{Language}}{\textbf{语言}}
      & \bilingual{\textbf{Node types $\mathcal{N}$}}{\textbf{节点类型 $\mathcal{N}$}}
      & \bilingual{\textbf{Reference types $\mathcal{R}$}}{\textbf{引用类型 $\mathcal{R}$}} \\
      \hline
      Java (OOP)
      & \bilingual{class, interface, method,
      block, conditional, loop,
      \mintinline{java}{throw}, lambda, \ldots}{类、接口、方法、块、条件、循环、\mintinline{java}{throw}、lambda、\ldots}
      & \bilingual{variable, field access, method call,
      \mintinline{java}{extends}/\mintinline{java}{implements},
      \mintinline{java}{import}, \ldots}{变量、字段访问、方法调用、\mintinline{java}{extends}/\mintinline{java}{implements}、\mintinline{java}{import}、\ldots} \\[3pt]
      Haskell (FP)
      & \bilingual{$\lambda$, application, \mintinline{haskell}{let},
      \mintinline{haskell}{case}, \mintinline{haskell}{data},
      type class, instance, \ldots}{$\lambda$、应用、\mintinline{haskell}{let}、\mintinline{haskell}{case}、\mintinline{haskell}{data}、类型类、实例、\ldots}
      & \bilingual{variable, constructor,
      type class dispatch,
      \mintinline{haskell}{import}, \ldots}{变量、构造子、类型类分派、\mintinline{haskell}{import}、\ldots} \\[3pt]
      Scala\footnote{
        \bilingual{Scala is a multi-paradigm language.}{Scala 是一门多范式语言。}
      }
      & \bilingual{class, trait, object, method,
      block, conditional, \mintinline{scala}{match},
      \mintinline{scala}{while}, \mintinline{scala}{throw}, \ldots}{类、trait、对象、方法、块、条件、\mintinline{scala}{match}、\mintinline{scala}{while}、\mintinline{scala}{throw}、\ldots}
      & \bilingual{variable, member access,
      \mintinline{scala}{extends}, \mintinline{scala}{with},
      implicits, \mintinline{scala}{import}, \ldots}{变量、成员访问、\mintinline{scala}{extends}、\mintinline{scala}{with}、隐式参数、\mintinline{scala}{import}、\ldots} \\
      \hline
      $\lambda$-\bilingual{calculus}{演算}
      & \bilingual{abstraction, application}{抽象、应用}
      & \bilingual{variable}{变量} \\
      \bilingual{Inheritance-calculus}{继承演算}
      & \bilingual{mixin, definition}{mixin、定义}
      & \bilingual{reference}{引用} \\
    \end{tabular}
  \end{center}
}

\paragraph{\bilingual{Immunity to nonextensibility}{对不可扩展性的免疫}}
程序 $P$ 与一组新增顶层定义 $Q$ 的\emph{合成体}是同时携带二者的文档，$P$ 的文本原样保留；良构性要求于\emph{合成体}而非单独的 $Q$：$Q$ 中书写的引用可以只在 $P$ 中解析，下文的继承引用即如此。称程序在路径 $p$ 处的\emph{观测}为使 $p \cat w$ 存在的后缀 $w$ 之集（第~\ref{sec:definitions}~节）。首标签定义于良构程序 $P$ 顶层的非空路径 $p$ 是\emph{可扩展的}，若存在 $P$ 与 $Q$ 的良构合成体，$Q$ 中某 mixin 经引用继承该首标签，且合成体在\emph{对应路径}（该 mixin 的路径接 $p$ 的其余部分）处的观测既包含 $P$ 在 $p$ 处的观测，又至少多出一个新后缀。首标签未定义于 $P$ 的路径与根路径仅凭并置即可扩展：$Q$ 径直写下缺失的定义，合成体在 $p$ 自身处观察到新路径，且无内容须保存。一种语言\emph{对不可扩展性免疫}，若其每个良构程序的每条路径都是可扩展的。命题~\ref{prop:immunity}（附录~\ref{app:merge-algebra}）以构造证明继承演算具有该性质。

我们现在对设计空间进行分类，以解释为何我们选择第~\ref{sec:mixin-trees}~节的语义。两个维度保持自由：\emph{组合机制}，决定单一引用类型如何合并定义；以及\emph{引用解析}，决定单一引用类型如何找到其目标。

\subsubsection*{\bilingual{The composition mechanism}{组合机制}}

每种组合机制决定了一个抽象的定义如何被合并进另一个抽象。下表按各候选方案是否满足可交换性~(C)、幂等性~(I) 和可结合性~(A)，以及其可扩展性模型对已知候选方案进行分类。

\begin{center}
  \begin{tabular}{lcccll}
    \bilingual{\textbf{Candidate}}{\textbf{候选方案}} & \textbf{C} & \textbf{I} & \textbf{A}
    & \bilingual{\textbf{Extensibility}}{\textbf{可扩展性}} & \bilingual{\textbf{Representative}}{\textbf{代表}} \\
    \hline
    \bilingual{Deep merge}{深度合并}
    & $\checkmark$ & $\checkmark$ & $\checkmark$
    & \bilingual{universal}{通用} & \bilingual{this paper}{本文} \\
    \bilingual{Shallow merge}{浅层合并}
    & $\times$ & $\checkmark$ & $\checkmark$
    & \bilingual{top-level only}{仅顶层} & JS spread \\
    \bilingual{Conflict rejection}{冲突拒绝}
    & $\checkmark$ & $\checkmark$ & $\checkmark$
    & \bilingual{none on overlap}{重叠时无扩展}
    & Harper--Pierce~\cite{harper1991-symmetric-record-concatenation} \\
    \bilingual{Priority merge}{优先级合并}
    & $\times$ & $\times$ & $\checkmark$
    & \bilingual{universal}{通用} & Jsonnet~\mintinline{jsonnet}{+}, Nix~\mintinline{nix}{//} \\
    $\beta$-\bilingual{reduction}{归约}
    & $\times$ & $\times$ & $\times$
    & \bilingual{opt-in}{按需选择} & Java, Haskell, Scala, \ldots \\
    \bilingual{Conditional}{条件并入}
    & \multicolumn{3}{c}{\bilingual{causes divergence}{导致发散}}
    & --- & --- \\
  \end{tabular}
\end{center}

\begin{description}
  \item[\bilingual{Deep merge}{深度合并}]
    对同名定义的合并递归进入其子树（$\overrides$，公式~\ref{eq:overrides}）。每个标签都是一个扩展点，无论原作者是否预期了扩展。

  \item[\bilingual{Shallow merge}{浅层合并}]
    嵌套标签无法被独立扩展。

  \item[\bilingual{Conflict rejection}{冲突拒绝}]
    同名定义被作为错误拒绝。三个等式性质均成立，但为空洞成立：这些等式得到满足，是因为用于验证它们的组合被拒绝了。冲突拒绝阻止了对已有标签的扩展~\cite{harper1991-symmetric-record-concatenation}，而这恰恰是表达问题~\cite{wadler1998-expression-problem}所要求的可组合性。

  \item[\bilingual{Priority merge}{优先级合并}]
    一个来源具有优先权~\cite{cunningham2014-jsonnet, dolstra2010-nixos}，因此 $A + B \neq B + A$，引入了顺序依赖。

  \item[$\beta$-\bilingual{reduction}{归约}]
    只有作者放置的参数才是扩展点；函数体是封闭的。要使一个新的方面可扩展，原函数必须被重写以接受一个额外参数。任何基于函数应用的组合机制都继承了这一限制，因为 $\beta$-归约使函数体封闭。表达问题~\cite{wadler1998-expression-problem}在此类语言中之所以困难，原因相同：可扩展性是按需选择的。模拟对不可扩展性免疫的框架（访问者模式、对象代数~\cite{oliveira2012-object-algebras}、finally tagless 解释器~\cite{carette2009-finally-tagless}）正是为绕过这一限制而生的机制。

  \item[\bilingual{Conditional incorporation}{条件并入}]
    根据其他定义的存在与否来移除或有条件地包含定义，会产生依赖循环：某定义是否存在，取决于另一定义是否存在，而后者又取决于前者。当添加和移除交替出现时，递归求值（附录~\ref{app:well-definedness}）无法到达不动点，从而导致发散。这与图灵完备性固有的发散不同：图灵完备程序发散是因为计算无界，而条件并入发散是因为直接后果算子 $T_P$ 振荡而非单调递增。
\end{description}

\noindent
在上述候选方案中，深度合并是我们找到的唯一一个可交换、幂等、可结合（附录~\ref{app:merge-algebra}）且对不可扩展性免疫的方案。其他候选方案各自引入了对不可扩展性的脆弱性，或导致发散。

\subsubsection*{\bilingual{The reference resolution}{引用解析}}

一旦将 mixin 上的深度合并确定为组合机制，五个语义方程中的三个便由语法决定：$\bases$~(\ref{eq:bases}) 读取其继承来源，$\supers$~(\ref{eq:supers}) 取传递闭包，$\overrides$~(\ref{eq:overrides}) 实现合并。唯一剩余的自由度是引用的解析方式：$\resolve$~(\ref{eq:resolve}) 和 $\this$~(\ref{eq:this}) 方程。我们考察三个已知候选方案。

\begin{description}
  \item[\bilingual{Early binding}{早期绑定}]
    使用早期绑定时，引用在定义时即解析到一个固定路径，与 mixin 后来如何被继承无关。CUE~\cite{van-lohuizen2019-cue}是这方面的典型：字段引用被静态解析，嵌套字段无法引用其封闭上下文中由后续组合贡献的路径。这使得 mixin 无法充当 $\lambda$-演算的调用栈：$\beta$-归约要求被调用方能观察到调用点提供的参数，而那是一次后续组合。使用早期绑定时，引用在此组合发生前就已被冻结。在方程中，$\resolve$~(\ref{eq:resolve}) 调用 $\this$~(\ref{eq:this})，后者遍历继承的树结构；迟绑定正是使 $\this$ 能够观察到后续组合的 mixin 所贡献路径的原因。若没有图灵完备的 mixin，函数必须被重新引入为组合机制，从而再次暴露上述按需选择的可扩展性问题。

  \item[\bilingual{Dynamic scope}{动态作用域}]
    使用动态作用域时，引用在使用点而非定义点处解析。Jsonnet~\cite{cunningham2014-jsonnet}采用此方式：\mintinline{jsonnet}{self} 始终指向最终合并后的对象，而非定义点处的对象。联合文件系统在文件系统层面表现出相同的性质：符号链接的相对路径在解引用时相对于挂载上下文解析，而非相对于定义该链接的层。动态作用域破坏了第~\ref{sec:forward-translation}~节的 $\lambda$-演算嵌入。查找引理（引理~\ref{lem:lookup}）依赖于绑定器的分离：变量 $y \neq x$ 的 de~Bruijn 索引从定义点路径向上爬升到绑定 $y$ 的作用域，一个与 $x$ 的绑定器不同的作用域层级，因此在 $x$ 的 redex 处以 \mintinline[escapeinside=||]{yaml}{__parameter: |$\mathcal{T}(V)$|} 覆盖对 $y$ 的解析不产生任何效果。动态作用域下，这种分离不再成立。
    与早期绑定一样，mixin 无法完全替代函数。

  \item[\bilingual{Single-target $\this$}{单目标 $\this$}]
    单目标 $\this$ 是一个有前途的候选方案：单路径引理~(\ref{lem:single-path})表明，对于翻译后的 $\lambda$-项，在收敛观测所依赖的每次解析步中，$\this$~(\ref{eq:this}) 中的前沿集 $S$ 至多包含一条路径，因此 $\lambda$-演算嵌入不需要多目标解析。然而，单目标 $\this$ 因另一原因而对不可扩展性脆弱。当多条继承路径到达同一作用域时，单目标解析必须拒绝该情况或选择其中一条路径。挑选必须读取某个不对称信息：读引用顺序，即呈现为交换性破坏；读分组结构，即呈现为结合性破坏；拒绝是让合并三律不破的仅剩选项，而三律从此只空洞地成立。NixOS 模块系统~\cite{nixos-contributors-nixos-modules}会引发重复声明错误（附录~\ref{app:scala-multi-target}）。这拒绝了两个独立编写的模块定义同一选项的嵌套扩展，而这种组合在求解表达问题~\cite{wadler1998-expression-problem}时会出现，如第~\ref{sec:expression-problem}~节所示。

    我们对 MIXINv2 的前三个实现使用了 $\this$ 解析的标准技术：遵循 Cook~\cite{cook1989-denotational-inheritance}的基于闭包的不动点，以及使用 de~Bruijn 指标~\cite{debruijn1972-nameless-dummies}的基于栈的环境查找。当继承引入到同一封闭作用域的多条路径时，三种实现都产生了隐晦的错误。困难在于结构上。闭包捕获单一环境，因此不动点组合子只能为单一 $\this$ 目标求解。基于栈的环境是一条线性链，每个作用域只有一个外围作用域。这两种数据结构都内嵌了一个与多重继承自然产生的多目标情况不相容的单值假设（附录~\ref{app:scala-multi-target}）。公式~(\ref{eq:this})是在放弃这一假设后涌现出来的：它跟踪一个\emph{前沿集} $S$，而非单一当前作用域。
\end{description}

\noindent
下表总结了两个层面上所有替代方案的脆弱性。

\begin{center}
  \begin{tabular}{llll}
    & \bilingual{\textbf{Candidate}}{\textbf{候选方案}} & \bilingual{\textbf{Vulnerability}}{\textbf{脆弱性}}
    & \bilingual{\textbf{Representative}}{\textbf{代表}} \\
    \hline
    \multirow{5}{*}{\rotatebox[origin=c]{90}{\scriptsize \bilingual{composition}{组合}}}
    & \bilingual{Shallow merge}{浅层合并}
    & \bilingual{nested extensions not composable}{嵌套扩展不可组合}
    & JS spread \\
    & \bilingual{Conflict rejection}{冲突拒绝}
    & \bilingual{same-label extensions rejected}{同名扩展被拒绝}
    & Harper--Pierce~\cite{harper1991-symmetric-record-concatenation} \\
    & \bilingual{Priority merge}{优先级合并}
    & \bilingual{ordering dependency}{顺序依赖}
    & Jsonnet~\mintinline{jsonnet}{+}~\cite{cunningham2014-jsonnet},
    Nix~\mintinline{nix}{//}~\cite{dolstra2010-nixos} \\
    & $\beta$-\bilingual{reduction}{归约}
    & \bilingual{opt-in extensibility}{按需选择的可扩展性}
    & Java, Haskell, Scala, \ldots \\
    & \bilingual{Conditional}{条件并入}
    & \bilingual{causes divergence}{导致发散}
    & --- \\
    \hline
    \multirow{3}{*}{\rotatebox[origin=c]{90}{\scriptsize \bilingual{resolution}{解析}}}
    & \bilingual{Early binding}{早期绑定}
    & \bilingual{mixin cannot replace function}{mixin 无法替代函数}
    & CUE~\cite{van-lohuizen2019-cue} \\
    & \bilingual{Dynamic scope}{动态作用域}
    & \bilingual{mixin cannot replace function}{mixin 无法替代函数}
    & Jsonnet~\cite{cunningham2014-jsonnet}, union FS \\
    & \bilingual{Single-target $\this$}{单目标 $\this$}
    & \bilingual{nested extensions rejected}{嵌套扩展被拒绝}
    & NixOS modules~\cite{nixos-contributors-nixos-modules} \\
  \end{tabular}
\end{center}

\noindent
我们在两个层面考察的每个替代方案，都至少失去本节所立设计目标之一：某条合并律、对不可扩展性的免疫，或 mixin 替代函数的能力。

我们通过尝试本节中的每个替代方案并发现其失败，从而得到了这一语义：浅层合并无法组合嵌套扩展；优先级合并引入了顺序依赖；早期绑定在组合能够提供引用目标之前就将引用冻结；动态作用域破坏了 $\lambda$-演算嵌入所依赖的替换；单目标 $\this$ 拒绝了表达问题所要求的多目标组合。深度合并加上多目标迟绑定 $\this$，正是最终剩下的选择。
\else
The \hyperref[item:immunity]{immunity-to-nonextensibility item} above stated that inheritance-calculus is immune
to nonextensibility, proved by construction in
Proposition~\ref{prop:immunity}.
We adopt this as a design goal and ask whether an alternative
design could share this property
by surveying every candidate known to us.

\paragraph{\bilingual{Degrees of freedom}{自由度}}
The analysis below assumes that the AST is a \emph{named tree}:%
every edge carries a label, so that every
node is uniquely identified by the path from the root.
For configuration languages this is naturally the case: every
key in a JSON, YAML, or Nix attribute set is a label, and nesting
produces paths.%
\footnote{
  Nix is a functional language with first-class functions;
  its attribute sets alone do not form a named-tree DSL.
  The NixOS module system~\cite{nixos-contributors-nixos-modules}, however,
  restricts the user-facing interface to nested attribute sets
  composed by recursive merging, and it is this DSL-level
  structure that naturally forms a named tree.
}
For mainstream programming languages the AST is not directly a
named tree, since anonymous expression positions and implicit
scopes lack labels.%
\footnote{
  For example, the body of a $\mathbf{let}$.
}
Such positions can nonetheless be given synthetic fresh labels.%
\footnote{
  Just as a compiler synthesizes a name for a Java anonymous class.
}
The forward translation of Section~\ref{sec:forward-translation}
does exactly this, giving anonymous positions synthetic labels
such as $\olbl{__result}$ and $\olbl{__call}$, so that every node
acquires a unique path and the AST becomes a named tree.

Under this premise, every language whose programs are such
named trees with references can be characterized by two finite sets:
a set~$\mathcal{N}$ of \emph{node types}, each determining how a
node organizes its subtree, and a set~$\mathcal{R}$ of
\emph{reference types}, each determining how one position in the
tree refers to another.
The syntax is fixed by the choice of $\mathcal{N}$ and
$\mathcal{R}$; the semantics is determined by how each node type
structures composition and how each reference type resolves its
target.

{\small
  \begin{center}
    \begin{tabular}{lp{0.38\columnwidth}p{0.33\columnwidth}}
      \textbf{Language}
      & \textbf{Node types $\mathcal{N}$}
      & \textbf{Reference types $\mathcal{R}$} \\
      \hline
      Java (OOP)
      & class, interface, method,
      block, conditional, loop,
      \mintinline{java}{throw}, lambda, \ldots
      & variable, field access, method call,
      \mintinline{java}{extends}/\mintinline{java}{implements},
      \mintinline{java}{import}, \ldots \\[3pt]
      Haskell (FP)
      & $\lambda$, application, \mintinline{haskell}{let},
      \mintinline{haskell}{case}, \mintinline{haskell}{data},
      type class, instance, \ldots
      & variable, constructor,
      type class dispatch,
      \mintinline{haskell}{import}, \ldots \\[3pt]
      Scala\footnote{
        Scala is a multi-paradigm language.
      }
      & class, trait, object, method,
      block, conditional, \mintinline{scala}{match},
      \mintinline{scala}{while}, \mintinline{scala}{throw}, \ldots
      & variable, member access,
      \mintinline{scala}{extends}, \mintinline{scala}{with},
      implicits, \mintinline{scala}{import}, \ldots \\
      \hline
      $\lambda$-calculus
      & abstraction, application
      & variable \\
      Inheritance-calculus
      & mixin, definition
      & reference \\
    \end{tabular}
  \end{center}
}

\paragraph{\bilingual{Immunity to nonextensibility}{对不可扩展性的免疫}}
The \emph{composite} of a program~$P$ and additional top-level
definitions~$Q$ is the document carrying both, with $P$'s text kept
as written; well-formedness is required of the \emph{composite},
not of $Q$ alone: a reference written in~$Q$ may resolve only
through~$P$, as the inheriting reference below does. Call the
\emph{observation} of a program at a path~$p$ the set of suffixes
$w$ such that $p \cat w$ exists (Section~\ref{sec:definitions}). A
nonempty path~$p$ whose first label is defined at the top level of
a well-formed program~$P$ is \emph{extensible} if some well-formed
composite of $P$ and~$Q$, where a mixin of~$Q$ inherits that first
label through a reference, has an observation at the
\emph{corresponding path} (that mixin's path followed by the
remainder of~$p$) containing $P$'s observation at~$p$ together with
at least one new suffix. A path whose first label is not defined
in~$P$, and likewise the root, is extensible by juxtaposition
alone: $Q$ writes the missing definitions outright, and the
composite observes new paths at~$p$ itself, with nothing to
preserve.
A language is \emph{immune to nonextensibility} if every path of
every well-formed program is extensible.
Proposition~\ref{prop:immunity} (Appendix~\ref{app:merge-algebra})
proves by construction that inheritance-calculus has this property.

We now classify the design space to explain why we chose the
semantics of Section~\ref{sec:mixin-trees}.
Two axes remain free:
the \emph{composition mechanism}, which determines how the single reference
type incorporates definitions;
and the \emph{reference resolution}, which determines how the single reference
type finds its target.

\subsubsection*{\bilingual{The composition mechanism}{组合机制}}

Every composition mechanism determines how the definitions of one
abstraction are incorporated into another.
The following table classifies the known candidates by whether
they satisfy commutativity~(C), idempotence~(I), and
associativity~(A), and by their extensibility model.

\begin{center}
  \begin{tabular}{lcccll}
    \bilingual{\textbf{Candidate}}{\textbf{候选方案}} & \textbf{C} & \textbf{I} & \textbf{A}
    & \bilingual{\textbf{Extensibility}}{\textbf{可扩展性}} & \bilingual{\textbf{Representative}}{\textbf{代表}} \\
    \hline
    \bilingual{Deep merge}{深度合并}
    & $\checkmark$ & $\checkmark$ & $\checkmark$
    & \bilingual{universal}{通用} & \bilingual{this paper}{本文} \\
    \bilingual{Shallow merge}{浅层合并}
    & $\times$ & $\checkmark$ & $\checkmark$
    & \bilingual{top-level only}{仅顶层} & JS spread \\
    \bilingual{Conflict rejection}{冲突拒绝}
    & $\checkmark$ & $\checkmark$ & $\checkmark$
    & \bilingual{none on overlap}{重叠时无扩展}
    & Harper--Pierce~\cite{harper1991-symmetric-record-concatenation} \\
    \bilingual{Priority merge}{优先级合并}
    & $\times$ & $\times$ & $\checkmark$
    & \bilingual{universal}{通用} & Jsonnet~\mintinline{jsonnet}{+}, Nix~\mintinline{nix}{//} \\
    $\beta$-\bilingual{reduction}{归约}
    & $\times$ & $\times$ & $\times$
    & \bilingual{opt-in}{按需选择} & Java, Haskell, Scala, \ldots \\
    \bilingual{Conditional}{条件并入}
    & \multicolumn{3}{c}{\bilingual{causes divergence}{导致发散}}
    & --- & --- \\
  \end{tabular}
\end{center}

\begin{description}
  \item[Deep merge]
    Merging same-label definitions recurses into their subtrees
    ($\overrides$, equation~\ref{eq:overrides}).
    Every label is an extension point, whether or not the
    original author anticipated extension.

  \item[Shallow merge]
    Nested labels cannot be independently extended.

  \item[Conflict rejection]
    Same-label definitions are rejected as errors.
    All three identities hold, but vacuously:
    the identities are satisfied because the compositions that
    would test them are refused.
    Conflict rejection prevents extending existing
    labels~\cite{harper1991-symmetric-record-concatenation}, precisely the composability that the
    Expression Problem~\cite{wadler1998-expression-problem} requires.

  \item[Priority merge]
    One source takes
    precedence~\cite{cunningham2014-jsonnet, dolstra2010-nixos},
    so $A + B \neq B + A$, introducing
    an ordering dependency.

  \item[$\beta$-reduction]
    Only the parameters placed by the author are extension
    points; the function body is closed.
    To make a new aspect extensible, the original function must
    be rewritten to accept an additional parameter.
    Any composition mechanism based on function application
    inherits this limitation, because $\beta$-reduction
    closes the function body.
    The Expression Problem~\cite{wadler1998-expression-problem} is hard
    in such languages for the same reason:
    extensibility is opt-in.
    The frameworks that simulate immunity to nonextensibility
    (visitor pattern,
      object algebras~\cite{oliveira2012-object-algebras},
    finally tagless interpreters~\cite{carette2009-finally-tagless})
    are the machinery needed to work around this limitation.

  \item[Conditional incorporation]
    Removing or conditionally including definitions
    based on the presence of other definitions
    creates a dependency cycle:
    whether a definition exists depends on
    whether another definition exists, which in turn
    depends on the first.
    The recursive evaluation
    (Appendix~\ref{app:well-definedness}) cannot
    reach a fixed point when addition and removal
    alternate, causing divergence.
    This differs from the divergence inherent to
    Turing completeness: Turing-complete programs
    diverge because computation is unbounded, whereas
    conditional incorporation diverges because the
    immediate consequence operator $T_P$ oscillates
    rather than ascending monotonically.
\end{description}

\noindent
Among the candidates above, deep merge is the only one we found
that is commutative, idempotent, associative
(Appendix~\ref{app:merge-algebra}), and
immune to nonextensibility.
The other candidates each introduce a vulnerability to
nonextensibility or cause divergence.

\subsubsection*{\bilingual{The reference resolution}{引用解析}}

Once deep merge over mixins is fixed as the composition
mechanism, three of the five semantic equations are determined
by the syntax:
$\bases$~(\ref{eq:bases}) reads its inheritance
sources, $\supers$~(\ref{eq:supers}) takes the transitive
closure, and $\overrides$~(\ref{eq:overrides}) implements
the merge.
The only remaining degree of freedom is how references
resolve: the $\resolve$~(\ref{eq:resolve}) and
$\this$~(\ref{eq:this}) equations.
We examine the three known candidates.

\begin{description}
  \item[Early binding]
    With early binding, references resolve at definition time to a
    fixed path, independently of how the mixin is later inherited.
    CUE~\cite{van-lohuizen2019-cue} exemplifies this: field references are
    statically resolved, and a nested field cannot refer to
    paths of its enclosing context contributed by later
    composition.
    This prevents the mixin from serving as a call stack for the
    $\lambda$-calculus: $\beta$-reduction requires the callee to
    observe the argument supplied at the call site, which is a
    later composition.
    With early binding, references are frozen before this
    composition occurs.
    In the equations,
    $\resolve$~(\ref{eq:resolve}) calls
    $\this$~(\ref{eq:this}), which walks the inherited
    tree structure; late binding is what allows
    $\this$ to see paths contributed by later-composed
    mixins.
    Without a Turing-complete mixin, functions must be
    reintroduced as the composition mechanism, re-exposing
    the opt-in extensibility problem above.

  \item[Dynamic scope]
    With dynamic scope, references resolve at the point of use
    rather than the definition site.
    Jsonnet~\cite{cunningham2014-jsonnet} follows this approach: \mintinline{jsonnet}{self}
    always refers to the final merged object, not the object at
    the definition site.
    Union file systems exhibit the same property at the
    file-system level: a symbolic link's relative path resolves
    against the mount context at dereference time, not against the
    layer that defined the link.
    Dynamic scope breaks the $\lambda$-calculus embedding
    of Section~\ref{sec:forward-translation}.
    The Lookup Lemma (Lemma~\ref{lem:lookup})
    relies on the separation of binders:
    the de~Bruijn index of a variable $y \neq x$ climbs from the
    definition-site path to the scope binding~$y$, a scope
    level different from~$x$'s binder, so overriding
    \mintinline[escapeinside=||]{yaml}{__parameter: |$\mathcal{T}(V)$|} at~$x$'s redex
    has no effect on the resolution of~$y$.
    With dynamic scope this separation fails.
    As with early binding, the mixin cannot serve as a complete
    replacement for functions.

  \item[Single-target $\this$]
    Single-target $\this$ is a promising candidate:
    the Single-Path Lemma~(\ref{lem:single-path}) shows that
    for translated $\lambda$-terms, at every resolution step the
    convergence observation depends on, the frontier set~$S$ in
    $\this$~(\ref{eq:this}) contains at most one path,
    so the $\lambda$-calculus embedding does not require
    multi-target resolution.
    However, single-target $\this$ is vulnerable to
    nonextensibility for a different reason.
    When multiple inheritance routes reach the same scope,
    single-target resolution must reject the situation or select
    one route.
    A selection must read some asymmetry: reading reference order
    surfaces as a failure of commutativity, and reading the grouping
    structure surfaces as a failure of associativity; rejection is
    the one option that leaves the merge laws unbroken, and it
    leaves them holding vacuously.
    The NixOS module system~\cite{nixos-contributors-nixos-modules} raises a
    duplicate-declaration error
    (Appendix~\ref{app:scala-multi-target}).
    This rejects nested extensions where two independently authored
    modules define the same option, a composition that arises
    when solving the Expression
    Problem~\cite{wadler1998-expression-problem},
    as Section~\ref{sec:expression-problem} demonstrates.

    Our first three implementations of MIXINv2 used
    standard techniques for $\this$ resolution: closure-based
    fixed points following Cook~\cite{cook1989-denotational-inheritance}, and
    stack-based environment lookup using de~Bruijn
    indices~\cite{debruijn1972-nameless-dummies}.
    All three produced subtle bugs when inheritance introduced
    multiple routes to the same enclosing scope.
    The difficulty was structural.
    A closure captures a single environment, so a fixed-point
    combinator solves for a single $\this$ target.
    A stack-based environment is a linear chain in which each scope
    has a single enclosing scope.
    Both data structures embed a single-valued assumption
    incompatible with the multi-target situation that multiple
    inheritance naturally produces
    (Appendix~\ref{app:scala-multi-target}).
    Equation~(\ref{eq:this}) emerged from abandoning this
    assumption: it tracks a \emph{frontier set}~$S$ rather than a
    single current scope.
\end{description}

\noindent
The following table summarizes the vulnerabilities of all
alternatives at both levels.

\begin{center}
  \begin{tabular}{llll}
    & \bilingual{\textbf{Candidate}}{\textbf{候选方案}} & \bilingual{\textbf{Vulnerability}}{\textbf{脆弱性}}
    & \bilingual{\textbf{Representative}}{\textbf{代表}} \\
    \hline
    \multirow{5}{*}{\rotatebox[origin=c]{90}{\scriptsize \bilingual{composition}{组合}}}
    & \bilingual{Shallow merge}{浅层合并}
    & \bilingual{nested extensions not composable}{嵌套扩展不可组合}
    & JS spread \\
    & \bilingual{Conflict rejection}{冲突拒绝}
    & \bilingual{same-label extensions rejected}{同名扩展被拒绝}
    & Harper--Pierce~\cite{harper1991-symmetric-record-concatenation} \\
    & \bilingual{Priority merge}{优先级合并}
    & \bilingual{ordering dependency}{顺序依赖}
    & Jsonnet~\mintinline{jsonnet}{+}~\cite{cunningham2014-jsonnet},
    Nix~\mintinline{nix}{//}~\cite{dolstra2010-nixos} \\
    & $\beta$-\bilingual{reduction}{归约}
    & \bilingual{opt-in extensibility}{按需选择的可扩展性}
    & Java, Haskell, Scala, \ldots \\
    & \bilingual{Conditional}{条件并入}
    & \bilingual{causes divergence}{导致发散}
    & --- \\
    \hline
    \multirow{3}{*}{\rotatebox[origin=c]{90}{\scriptsize \bilingual{resolution}{解析}}}
    & \bilingual{Early binding}{早期绑定}
    & \bilingual{mixin cannot replace function}{mixin 无法替代函数}
    & CUE~\cite{van-lohuizen2019-cue} \\
    & \bilingual{Dynamic scope}{动态作用域}
    & \bilingual{mixin cannot replace function}{mixin 无法替代函数}
    & Jsonnet~\cite{cunningham2014-jsonnet}, union FS \\
    & \bilingual{Single-target $\this$}{单目标 $\this$}
    & \bilingual{nested extensions rejected}{嵌套扩展被拒绝}
    & NixOS modules~\cite{nixos-contributors-nixos-modules} \\
  \end{tabular}
\end{center}

\noindent
Each alternative we examined at either level loses one of the design
goals this section set out: a merge law, immunity to
nonextensibility, or the ability of mixins to replace functions.

We arrived at this semantics by trying each alternative in this section
and finding that it failed:
shallow merge could not compose nested extensions;
priority merge introduced ordering dependencies;
early binding froze references before composition could
supply them;
dynamic scope broke the substitution that the
$\lambda$-calculus embedding relies on;
single-target $\this$ rejected the multi-target compositions
that the Expression Problem demands.
Deep merge with multi-target late-binding $\this$ is what
remained.
\fi

\section{\bilingual{Related Work}{相关工作}}
\label{sec:related-work}

\subsection{\bilingual{Computational Models and \texorpdfstring{$\lambda$}{λ}-Calculus Semantics}{计算模型与 \texorpdfstring{$\lambda$}{λ}-演算语义}}

\paragraph{\bilingual{Minimal Turing-complete models}{最小图灵完备模型}}
\ifInheritanceChinese
Sch\"onfinkel~\cite{schonfinkel1924-combinatory-logic} 与 Curry~\cite{curry1958-combinatory-logic} 证明了 $\lambda$-演算可以归约为三个组合子 $S$、$K$、$I$%
\footnote{
  组合子 $I$ 是冗余的,因为 $I = SKK$。
}。将 $\lambda$-项编译为 SKI 的括号抽象是机械的,但会导致项大小的指数级膨胀;在 $\lambda$-演算中已有的计算模式之外,不会涌现出任何新的计算模式。在语义上,两者共享相同的模型:Meyer~\cite{meyer1982-model-of-lambda-calculus} 证明了满足五个附加条件的组合代数恰好就是 $\lambda$-代数。即便在这里,校准的方向也朝向 $\lambda$-演算:那五个附加条件恰好挑出 $\lambda$-代数,而 Scott 的 $D_\infty$ 是为 $\lambda$-演算而构建的,SKI 随后被解释于其中。Dolan~\cite{dolan2013-mov-turing-complete} 证明了 x86 的 \mintinline{nasm}{mov} 指令单独就已图灵完备,其方法是利用寻址模式作为可变内存上的隐式算术;由此得到的程序是正确的,但不提供任何超越 RAM 机的抽象机制。图灵机本身虽然是通用的,却缺乏随机访问:读取距离为 $d$ 处的格需要 $d$ 次顺序的磁带移动。这些模型中的每一个,都在某个已有模型的语义框架内实现了图灵完备性,且对其中任何一个都从未展示过独立的语义域。
\else
Sch\"onfinkel~\cite{schonfinkel1924-combinatory-logic} and
Curry~\cite{curry1958-combinatory-logic} showed that the
$\lambda$-calculus reduces to three combinators $S$, $K$, $I$
\footnote{
  The $I$ combinator is redundant since $I = SKK$.
}.
The bracket abstraction that compiles $\lambda$-terms to SKI
is mechanical but incurs exponential blow-up in term size;
no new computational patterns emerge beyond those already
present in the $\lambda$-calculus.
Semantically, the two share the same models:
Meyer~\cite{meyer1982-model-of-lambda-calculus} showed that combinatory algebras
satisfying five additional conditions are exactly lambda
algebras.
Even there the calibration runs toward the $\lambda$-calculus: the
five conditions single out exactly the lambda algebras, and Scott's
$D_\infty$ was built for the $\lambda$-calculus, with SKI
interpreted inside it afterwards.
Dolan~\cite{dolan2013-mov-turing-complete} showed that the x86 \mintinline{nasm}{mov}
instruction alone is Turing complete by exploiting
addressing modes as implicit arithmetic on mutable memory;
the resulting programs are correct but provide no
abstraction mechanisms beyond those of the RAM Machine.
The Turing Machine itself, while universal, lacks random
access: reading a cell at distance~$d$ requires $d$
sequential tape movements.
Each of these models achieves Turing completeness
within the semantic framework of an existing model,
and no independent semantic domain has been exhibited
for any of them.
\fi

\paragraph{\bilingual{Recursive equations and fixpoint semantics}{递归方程与不动点语义}}
\ifInheritanceChinese
Van~Emden 与 Kowalski~\cite{vanemden1976-predicate-logic-semantics} 证明了谓词逻辑作为编程语言的语义允许三种等价的刻画:操作性的\footnote{通常是 SLD 归结。}、模型论的\footnote{基于最小 Herbrand 模型。}以及不动点的\footnote{即基原子幂集上即时后承算子 $T_P$ 的最小不动点。}。Aczel~\cite{aczel1977-inductive-definitions} 通过幂集格上的单调算子对归纳定义给出了系统性处理,为 Datalog 风格的语义所依赖的逻辑基础奠定了基础。Leroy 与 Grall~\cite{leroy2006-coinductive-big-step} 通过余归纳定义处理了大步语义中的发散,需要在归纳求值判断之外引入一个单独的余归纳判断来说明非终止。Van~Emden 与 Kowalski 的 $T_P$ 在一阶域上为递归 Datalog 程序计算 $\lfp$。传统的 $\lambda$-演算语义在语义域中需要函数空间。利用一阶域上的 $T_P$ 为 $\lambda$-演算赋予一种不需函数空间的指称不动点语义,此外尚未被探索。在近期工作中,Yang~\cite{yang2026-tablambda} 将 tabling 求值~\cite{tamaki1986-tabled-resolution} 直接引入纯 $\lambda$-演算:首部规范化被读作一个余代数方程,其状态为被驻留的 $\lambda$-项,即具有可判定同一性的一阶对象,而 tabling 求解该方程,在有限时间内将 $\Omega$ 这样的非生产性循环判定为 $\bot$。该贡献是操作性的,而非指称性的:它保留了标准的惰性 (Lévy--Longo) 树语义,仅改变了求值器。像 $\Omega$ 这样的项在惰性语义下本就指称 $\bot$,只是普通归约在那里发散而无法报告它;tabling 在有限时间内判定出那个 $\bot$,因此其指称不变,它并不比惰性 $\lambda$-演算更已定义,而只是更可判定。表以整个被驻留的项为键,而非以进入作用域树的路径为键,因此在路径前缀上重叠的查询之间不共享部分结果。偏序不是集合包含下的幂集格,而是树近似序,其中 $\bot$ 叶被精化为子树,且不动点是唯一的(最小与最大因有护性而重合):一个重入调用被以单一的最终 $\bot$ 回答,而后续任何迭代都不会精化该 $\bot$:迭代无法增大一个累加器,也就无法积累出值递归程序(如递归 Datalog)收敛所需要的新的非 $\bot$ 值。
\else
Van~Emden and Kowalski~\cite{vanemden1976-predicate-logic-semantics} showed that
the semantics of predicate logic as a programming language admits
three equivalent characterizations:
operational\footnote{
  Typically SLD resolution.
}, model-theoretic\footnote{
  Based on least Herbrand models.
}, and fixed-point\footnote{
  Namely, the least fixed point of the immediate consequence operator~$T_P$ on the powerset of ground atoms.
}.
Aczel~\cite{aczel1977-inductive-definitions} gave a systematic treatment of
inductive definitions via monotone operators on powerset lattices,
providing the logical foundations that Datalog-style semantics
rests on.
Leroy and Grall~\cite{leroy2006-coinductive-big-step} treated divergence
in big-step semantics via coinductive definitions,
requiring a separate coinductive judgment to account for
non-termination alongside the inductive evaluation judgment.
Van~Emden and Kowalski's $T_P$ computes
$\lfp$ for recursive Datalog programs over a
first-order domain.
Conventional $\lambda$-calculus semantics requires function
spaces in the semantic domain.
Using $T_P$ over a first-order domain to give the
$\lambda$-calculus a denotational fixpoint semantics without
function spaces has not otherwise been explored.
In recent work, Yang~\cite{yang2026-tablambda} brings tabled
evaluation~\cite{tamaki1986-tabled-resolution} to the pure
$\lambda$-calculus directly: head normalisation is read as a
coalgebraic equation whose states are interned
$\lambda$-terms, first-order objects with decidable identity,
and tabling solves it, deciding unproductive loops such
as~$\Omega$ as~$\bot$ in finite time.
That contribution is operational, not denotational: it
preserves the standard lazy (L\'evy--Longo) tree semantics
and changes only the evaluator.
A term such as~$\Omega$ already denotes~$\bot$ under the lazy
semantics, where ordinary reduction merely diverges instead
of reporting it; tabling decides that~$\bot$ in finite time,
so its denotation is unchanged, and it is no more defined
than the lazy $\lambda$-calculus, only more decidable.
The table is keyed by whole interned terms, not by paths
into a scope tree, so partial results are not shared between
queries that overlap on a path prefix.
The order is not a powerset lattice under set inclusion but
the tree approximation order, in which $\bot$ leaves are
refined into subtrees, and the fixpoint is unique (least and
greatest coincide by guardedness): a re-entrant call is
answered with a single final~$\bot$ that no later iteration
refines: the iteration cannot grow an accumulator, and so cannot
accumulate the new non-$\bot$ values that value-recursive programs,
such as recursive Datalog, need in order to converge.
\fi

\subsection{\bilingual{Inheritance, Objects, and Records}{继承、对象与记录}}

\paragraph{\bilingual{Denotational semantics of inheritance}{继承的指称语义}}
\ifInheritanceChinese
Cook~\cite{cook1989-denotational-inheritance} 给出了继承的第一个指称语义,将对象建模为递归记录。继承是\emph{生成器}%
\footnote{
  这些是从 self 到完整对象的函数。
}与\emph{包装器}%
\footnote{
  包装器是修改生成器的函数。
}的复合,由一个不动点构造来求解。Cook 的关键洞见是,继承是一种可适用于任何形式的递归定义的通用机制,而不仅仅是面向对象的方法。这一洞见是本工作的出发点之一。该模型有四个原语(记录、函数、生成器和包装器),并需要一个不动点组合子。复合是不对称的:包装器修改生成器,反之则不然。不动点构造依赖函数域中的 Scott 连续性。由此得到的指称不能直接执行,因此需要一个单独的操作语义来在运行时定义方法分派与对象构造。对象是一个生成器的不动点,生成器与它的不动点是两级。
\else
Cook~\cite{cook1989-denotational-inheritance} gave the first denotational
semantics of inheritance, modeling objects as recursive records.
Inheritance is composition of \emph{generators}
\footnote{
  These are functions from self to complete object.
} and
\emph{wrappers}\footnote{
  Wrappers are functions that modify generators.
},
solved by a fixed-point construction.
Cook's key insight is that inheritance is a general mechanism
applicable to any form of recursive definition, not only
object-oriented methods. This insight is one of the starting points of the
present work.
The model has four primitives (records, functions, generators, and
wrappers) and requires a fixed-point combinator. Composition is
asymmetric: a wrapper modifies a generator but not vice versa.
The fixed-point construction relies on Scott continuity in a
domain of functions. The resulting denotations are not directly
executable, so a separate operational semantics is needed to
define method dispatch and object construction at runtime.
An object is the fixed point of a generator, and the generator and
its fixed point are two levels.
\fi

\paragraph{\bilingual{Mixins and traits}{Mixin 与 trait}}
\ifInheritanceChinese
Bracha 与 Cook~\cite{bracha1990-mixin-inheritance} 将 mixin 形式化为抽象子类,即从超类参数到子类的函数。由于 mixin 应用是函数复合,它\emph{既不可交换也不幂等};C3 线性化算法~\cite{barrett1996-c3-linearization} 后来被开发出来以施加一个确定性顺序。Schärli 等人~\cite{scharli2003-traits} 引入了 trait,其复合是可交换的,但拒绝同名冲突~\cite{ducasse2006-traits-fine-grained-reuse}。两种机制都在\emph{方法层级}操作:对同名的两个定义进行复合会导致覆盖或冲突,而非嵌套结构的递归合并。两者都预设了 $\lambda$-演算:方法体是函数,而复合机制本身不是图灵完备的。Mixin 与 trait 均不支持深合并。复合的产物是一个类,而对象由实例化单独产生。
\else
Bracha and Cook~\cite{bracha1990-mixin-inheritance} formalized mixins as
abstract subclasses, that is, functions from a superclass parameter to a
subclass.
Because mixin application is function composition, it is
\emph{neither commutative nor idempotent}; the C3 linearization
algorithm~\cite{barrett1996-c3-linearization} was later developed to impose a
deterministic order.
Sch\"arli et al.~\cite{scharli2003-traits} introduced traits,
whose composition is commutative but rejects same-name
conflicts~\cite{ducasse2006-traits-fine-grained-reuse}.
Both mechanisms operate at the \emph{method level}: composing two
definitions of the same name results in an override or a conflict,
not a recursive merge of nested structure.
Both presuppose the $\lambda$-calculus: method bodies are functions,
and the composition mechanism itself is not Turing complete.
Neither mixins nor traits are deep-mergeable.
Composition yields a class, and an object is produced separately, by
instantiation.
\fi

\paragraph{\bilingual{Object and record calculi}{对象演算与记录演算}}
\ifInheritanceChinese
Abadi 与 Cardelli~\cite{abadi1996-theory-of-objects} 将对象作为基本概念、不设类这一层,并把字段编码为一个返回它的方法;方法是以 self 为参数的函数,且对象扩展是不对称的。Boudol~\cite{boudol2004-recursive-record-semantics} 证明了自引用记录需要一个不安全的不动点算子,其适定性依赖于求值策略。Harper 与 Pierce~\cite{harper1991-symmetric-record-concatenation} 给出了一个具有对称连接的记录演算,该演算拒绝同标签的复合;Cardelli~\cite{cardelli1992-extensible-records} 研究了带子类型的可扩展记录;Rémy~\cite{remy1989-row-polymorphism} 为 ML 中的记录和变体引入了行多态。这些演算全都在\emph{平坦}记录上操作:没有嵌套结构的递归合并,没有惰性观察,$\this$ 也需要一个显式的不动点组合子。Self~\cite{ungar1987-self-language} 把类与对象合为一个 prototype、把字段与方法合为一个 slot,建立在可变状态之上。Scala~\cite{odersky2004-scala-overview} 没有静态成员:一个类的共享成员放在其伴生 object 中,而 class、trait、object 仍是三种构造。
\else
Abadi and Cardelli~\cite{abadi1996-theory-of-objects} treat objects as the
primitive notion with no class layer, encoding a field as a method
that returns it; methods take self as a parameter, and object
extension is asymmetric.
Boudol~\cite{boudol2004-recursive-record-semantics} showed that self-referential
records require an unsafe fixed-point operator whose
well-definedness depends on the evaluation strategy.
Harper and Pierce~\cite{harper1991-symmetric-record-concatenation} gave a record
calculus with symmetric concatenation that rejects same-label
composition;
Cardelli~\cite{cardelli1992-extensible-records} studied extensible records
with subtyping;
R\'emy~\cite{remy1989-row-polymorphism} introduced row polymorphism for
records and variants in ML.
All of these calculi operate on \emph{flat} records: there is
no recursive merging of nested structure, no lazy observation,
and $\this$ requires an explicit fixed-point combinator.
Self~\cite{ungar1987-self-language} makes a class and an object one
prototype, and a field and a method one slot, over mutable state.
Scala~\cite{odersky2004-scala-overview} has no static member: a
class's shared members live in a companion object, while class,
trait, and object remain three constructs.
\fi

\paragraph{\bilingual{Family polymorphism and DOT}{族多态与 DOT}}
\ifInheritanceChinese
Ernst~\cite{ernst2001-family-polymorphism} 引入了族多态;Ernst、Ostermann 与 Cook~\cite{ernst2006-virtual-class-calculus} 在虚类演算中将其形式化,其中类通过路径表达式 \texttt{this.out.C} 来访问。Amin 等人~\cite{amin2016-dependent-object-types} 在 DOT 演算中形式化了路径依赖类型,这是 Scala 的理论基础。在这两个框架中,每条路径都解析为\emph{单一}封闭对象中的\emph{单一}类;两者均未提供集合值解析的机制,而 Scala 在编译时拒绝多目标情形(附录~\ref{app:scala-multi-target})。
\else
Ernst~\cite{ernst2001-family-polymorphism} introduced family polymorphism;
Ernst, Ostermann, and Cook~\cite{ernst2006-virtual-class-calculus}
formalized it in the virtual class calculus, where classes are
accessed via path expressions \texttt{this.out.C}.
Amin et al.~\cite{amin2016-dependent-object-types} formalized path-dependent types
in the DOT calculus, the theoretical foundation of Scala.
In both frameworks, each path resolves to a
\emph{single} class in a \emph{single} enclosing object;
neither provides set-valued resolution, and Scala rejects
the multi-target situation at compile time
(Appendix~\ref{app:scala-multi-target}).
\fi

\subsection{\bilingual{Configuration Languages and Module Systems}{配置语言与模块系统}}

\paragraph{\bilingual{Configuration languages}{配置语言}}
\ifInheritanceChinese
CUE~\cite{van-lohuizen2019-cue} 在单一格中统一了类型与值,其中以 \texttt{\&} 表示的合一是可交换、可结合且幂等的。CUE 的设计受到 Google 的 Borg 配置语言 (GCL) 的启发,后者使用了图合一。CUE 的格包含标量类型,其冲突语义规定合一不相容标量会得到 $\bot$;标量对于配置语言是否必要,还是仅凭树结构就已足够,是一个有趣的设计问题。Jsonnet~\cite{cunningham2014-jsonnet} 通过 \mintinline{jsonnet}{+} 运算符提供了具有 mixin 语义的对象继承:复合\emph{不是}可交换的,因为在标量冲突上右侧胜出,而深合并需要在每个字段上通过 \mintinline{jsonnet}{+:} 语法显式选择加入,因此深合并虽可用但并非默认行为。Dhall~\cite{gonzalez2017-dhall} 采取函数式方法:它是一个带记录的类型化 $\lambda$-演算,其中复合是记录合并算子,而非继承机制。
\else
CUE~\cite{van-lohuizen2019-cue} unifies types and values in a single lattice
where unification, denoted by \texttt{\&}, is commutative, associative, and
idempotent.
CUE's design was motivated by Google's Borg Configuration
Language (GCL), which used graph unification.
CUE's lattice includes scalar types with a conflict
semantics where unifying incompatible scalars yields $\bot$; whether
scalars are necessary for a configuration language, or whether
tree structure alone suffices, is an interesting design question.
Jsonnet~\cite{cunningham2014-jsonnet} provides object inheritance via the
\mintinline{jsonnet}{+} operator with mixin semantics: composition is
\emph{not} commutative since the right-hand side wins on scalar
conflicts, and deep merging requires explicit opt-in via the \mintinline{jsonnet}{+:}
syntax on a per-field basis, so deep merge is
available but not the default.
Dhall~\cite{gonzalez2017-dhall} takes a functional approach: it is a typed
$\lambda$-calculus with records, where composition is a record merge
operator, not an inheritance mechanism.
\fi

\paragraph{\bilingual{Module systems and deep merge}{模块系统与深合并}}
\ifInheritanceChinese
NixOS 模块系统~\cite{dolstra2010-nixos, nixos-contributors-nixos-modules} 通过递归合并嵌套属性集来组合模块,具有递归的 $\this$ 与延迟求值。一个似乎尚未被完全解决的方面是 $\this$ 在多重继承下的语义:模块系统以静态错误拒绝了递归合并自然产生的多目标情形(附录~\ref{app:scala-multi-target}),正是这一情形促成了第~\ref{sec:mixin-trees}~节的观测语义。
\else
The NixOS module system~\cite{dolstra2010-nixos, nixos-contributors-nixos-modules}
composes modules by recursively merging nested attribute sets
with recursive $\this$ and deferred evaluation.
The one aspect that appears not to have been fully addressed is the
semantics of $\this$ under multiple inheritance:
the module system rejects with a static error the multi-target situation that
recursive merging naturally gives rise to (Appendix~\ref{app:scala-multi-target}),
which motivated the observational semantics of Section~\ref{sec:mixin-trees}.
\fi

\subsection{\bilingual{Our Companion Work}{我们的配套工作}}

\paragraph{\bilingual{Modular software architectures}{模块化软件架构}}
\label{sec:practical-modular}
\ifInheritanceChinese
插件系统、组件框架与依赖注入框架通过声明式继承实现了软件的可扩展性。第~\ref{sec:mixin-trees} 节所确立的继承的顺序无关性解释了这类系统为何有效:组件可以以任意顺序继承而不改变行为。MIXINv2\anon[~(补充材料)]{~\cite{yang2026-mixinv2}} 本身就是一个依赖注入框架:每个作用域将其依赖声明为具名槽,复合跨越作用域边界按名解析它们,从而使继承演算的模式显式化而非随意拼凑。
\else
Plugin systems, component frameworks, and dependency injection frameworks
enable software extensibility through declarative inheritance. The
order-independence of inheritance established in Section~\ref{sec:mixin-trees} explains why such
systems work: components can be inherited in any order without changing
behavior.
MIXINv2\anon[~(supplementary material)]{~\cite{yang2026-mixinv2}}
is itself a dependency injection framework:
each scope declares its dependencies as named slots, and
composition resolves them by name across scope boundaries,
making the inheritance-calculus patterns explicit rather than ad~hoc.
\fi

\paragraph{\bilingual{Union file systems}{联合文件系统}}
\label{sec:practical-union-fs}
\ifInheritanceChinese
联合文件系统,如 UnionFS、OverlayFS 与 AUFS,将多个目录层次叠加以呈现统一视图。其语义与继承演算相似,表现在后面的层覆盖前面的层且所有层的文件均可访问,但其设计在具体之处有所不同:继承是不对称的,标量文件的冲突解决是不可交换的,且层与层之间不存在晚绑定或动态分派。我们将继承演算直接实现为联合文件系统\anon[~(作为补充材料附上)]{~\cite{yang2025-ratarmount}},仅凭三个原语就获得了文件查询的晚绑定语义与动态分派,无需任何额外机制。在软件包管理器、构建系统与操作系统发行版中,外部工具链将配置转换为文件树;而在这里,文件系统作为自身的配置,转换即是继承:这是一种\emph{文件系统即编译器},其中目录树既是源程序又是编译目标。
\else
Union file systems, such as UnionFS, OverlayFS, and AUFS, layer multiple directory
hierarchies to present a unified view. Their semantics resembles
inheritance-calculus in that later layers override earlier layers and files from all
layers remain accessible, but their designs differ in specific ways:
inheritance is asymmetric, conflict resolution for scalar files
is noncommutative, and there is no late-binding or dynamic
dispatch across layers.
We implemented inheritance-calculus directly as a union file
system\anon[~(included as supplementary material)]{~\cite{yang2025-ratarmount}},
obtaining late-binding semantics and dynamic dispatch for file lookups
with no additional mechanism beyond the three primitives.
In package managers, build systems, and OS distributions,
external toolchains transform configuration into file trees;
here, the file system serves as configuration to itself,
and the transformation is inheritance: a
\emph{file-system-as-compiler} in which the directory tree is
both the source program and the compilation target.
\fi

\section{\bilingual{Conclusion}{结论}}

\ifInheritanceChinese
对不可扩展性的免疫指导了继承演算的设计。第~\ref{sec:semantic-variants}~节排除了我们所知的每一种替代的组合机制和引用解析,使得带有多目标晚绑定 $\this$ 的深合并成为该考察的唯一幸存者。mixin 充当了统一模块、类、对象、方法、字段和局部绑定的单一抽象;mixin 树的深合并使继承具有可交换性、幂等性和可结合性(附录~\ref{app:merge-algebra}),从而多重继承的线性化问题不再出现。第~\ref{sec:mixin-trees}~节的五个一阶方程,具有指称语义且可由 tabling~\cite{tamaki1986-tabled-resolution} 求值,足以定义语义;观察者通过查询一棵惰性构造的树中的路径来驱动计算。这些方程是对标准面向对象语义的最小偏离:唯一非保守的扩展是 $\overrides$ 方程~(\ref{eq:overrides}),它以子树的深合并取代了方法层面的遮蔽。面向对象编程传统上被归类为命令式的;去掉可变状态和函数体,便得到纯面向对象编程,%
\footnote{
  ``纯''具有双重含义:不纯粹意味着不纯洁,即值是不可变的且组合没有副作用;而不混合意味着组合完全依赖继承,没有函数式抽象,例如 $\lambda$ 或 $\beta$-归约。
}
这是声明式编程的一个最小计算模型。

由于 $\overrides$ 合并的是子树而非方法体,函数不再是原语。$\lambda$-演算在第~\ref{sec:translation}~节通过三条翻译规则宏表达到继承演算中。Lévy--Longo 树对应(定理~\ref{thm:llt-correspondence})将这一一阶语义扩展到了 $\lambda$-演算:先前的高阶语义框架被证明忠实于继承演算一阶方程的一个子语言。

继承演算进一步获得了 $\lambda$-演算无法表达的两种现象;事实上,正如第~\ref{sec:asymmetry}~节所示,在 Felleisen 意义下~\cite{felleisen1991-expressive-power},继承演算严格地比惰性 $\lambda$-演算更具表达力。第~\ref{sec:expression-problem}~节在表达式问题~\cite{wadler1998-expression-problem}意义下建立了对不可扩展性的免疫。第~\ref{sec:case-study}~节的案例研究展示了普通算术产生逻辑编程的关系语义:循环继承层次结构通过与递归 Datalog 所支配的相同的升链条件收敛,附录~\ref{app:datalog-encoding}报告了一个系统性的编码,全部示例可在 MIXINv2 中执行。此外,第~\ref{sec:practical-function-color}~节展示了函数颜色盲性~\cite{nystrom2015-function-color}。

单个 mixin 同时是数据、程序、配置、函数、关系和对象。它宏表达了 $\lambda$-演算,并恢复了逻辑编程的关系语义,同时对两者均无法逃脱的不可扩展性保持免疫。

第~\ref{sec:introduction}~节提出了配置本身是否能成为实用的通用编程的问题。答案是肯定的。我们提出继承演算作为实用通用编程的基础,我们相信它在模块化方面优于任何已知的计算模型。
\else
Immunity to nonextensibility guided the design of
inheritance-calculus.
Section~\ref{sec:semantic-variants} eliminated every
alternative composition mechanism and reference resolution known to
us, leaving deep merge with multi-target late-binding
$\this$ as the sole survivor of that survey.
The mixin served as the single abstraction unifying module,
class, object, method, field, and local binding;
deep merge of mixin trees made inheritance commutative, idempotent,
and associative (Appendix~\ref{app:merge-algebra}),
so the linearization problem of multiple inheritance did not arise.
The five first-order equations of Section~\ref{sec:mixin-trees},
denotational and evaluated by
tabling~\cite{tamaki1986-tabled-resolution}, suffice to
define the semantics; the observer drives computation by
querying paths in a lazily constructed tree.
These equations are a minimal departure from standard
object-oriented semantics: the only non-conservative extension is the $\overrides$ equation~(\ref{eq:overrides}),
which replaces method-level shadowing with deep merge of subtrees.
Object-oriented programming is traditionally classified as
imperative; removing mutable state and function bodies yields
pure object-oriented programming,%
\footnote{
  ``Pure'' carries a double meaning:
  not impure, since values are immutable and composition has no
  side effects; and not mixed, since composition relies entirely on
  inheritance, without functional abstractions such
  as~$\lambda$ or~$\beta$-reduction.
}
a minimal computational model for declarative programming.

Since $\overrides$ merges subtrees rather than method bodies,
functions are no longer primitive.
The $\lambda$-calculus macro-expressed into
inheritance-calculus in three translation rules in
Section~\ref{sec:translation}.
The L\'evy--Longo tree correspondence
(Theorem~\ref{thm:llt-correspondence}) extended this first-order
semantics to the $\lambda$-calculus:
the prior higher-order semantic frameworks proved faithful
to a sublanguage of
inheritance-calculus's first-order equations.

Inheritance-calculus further gains two phenomena
that the $\lambda$-calculus cannot express;
indeed, inheritance-calculus is strictly more expressive
than the lazy $\lambda$-calculus in Felleisen's
sense~\cite{felleisen1991-expressive-power},
as shown in Section~\ref{sec:asymmetry}.
Section~\ref{sec:expression-problem} exhibited an instance of the
Expression Problem~\cite{wadler1998-expression-problem}, with
immunity to nonextensibility proved in
Proposition~\ref{prop:immunity}.
The case study of Section~\ref{sec:case-study} exhibited
ordinary arithmetic yielding the relational semantics of logic
programming: cyclic inheritance hierarchies converge by
the same ascending chain condition that governs recursive
Datalog, and Appendix~\ref{app:datalog-encoding} reported a
systematic encoding, every example executable in MIXINv2.
Beyond these, Section~\ref{sec:practical-function-color} exhibited
function color
blindness~\cite{nystrom2015-function-color}.

A single mixin is at once data, program, configuration,
function, relation, and object.
It macro-expresses the $\lambda$-calculus and recovers the
relational semantics of logic programming, while remaining
immune to the nonextensibility that neither escapes.

Section~\ref{sec:introduction} asked whether configuration alone can be
practical general-purpose programming.
It can.
We propose inheritance-calculus as a foundation for practical
general-purpose programming, one we believe better in modularity than any
known computational model.
\fi

\fi 

\ifInheritanceBody
\bibliographystyle{ACM-Reference-Format}
\bibliography{references}
\fi

\ifInheritanceAppendix
\appendix

\section{\bilingual{Well-Definedness of the Semantic Functions}{语义函数的良定义性}}
\label{app:well-definedness}

\ifInheritanceChinese
五个语义方程~(\ref{eq:supers})--(\ref{eq:this}) 是一阶的:其输入与输出均为路径及路径的集合,不涉及函数空间、闭包或高阶构造。
本节通过证明这五个方程定义了一个单调算子来确立良定义性;该算子有最小不动点,由 tabling 求值~\cite{tamaki1986-tabled-resolution} 计算,其可靠性由定理~\ref{thm:tabling-soundness} 确立。
本节中我们用 $|p|$ 表示路径 $p$ 的长度,即它所含标签的个数。
\else
The five semantic
equations~(\ref{eq:supers})--(\ref{eq:this}) are
first-order: their inputs and outputs are paths and sets of
paths, with no function spaces, closures, or higher-order
constructs.
This section establishes well-definedness by showing that
the five equations define a monotone operator with a least
fixed point, computed by tabled
evaluation~\cite{tamaki1986-tabled-resolution}; Theorem~\ref{thm:tabling-soundness}
establishes that this computation is sound.
Throughout this section, $|p|$ denotes the length of a path $p$, the number
of labels it contains.
\fi

\begin{definition}[\bilingual{Dependency graph}{依赖图}]\label{def:dependency-graph}
\ifInheritanceChinese
  固定一棵 AST 及其原始函数 $\inherits$。
  \emph{依赖图}~$G$ 以由该 AST 产生的所有语义函数应用
  ($\supers(p)$、\allowbreak$\overrides(p)$、\allowbreak$\bases(p)$、\allowbreak
  $\resolve(\ldots)$、\allowbreak$\this(\ldots)$) 为顶点。
  当方程~(\ref{eq:supers})--(\ref{eq:this}) 中定义~$u$ 的右侧
  在某个查询--答案映射下读取~$v$ 的值时,
  从顶点~$u$ 到顶点~$v$ 有一条有向边。
  因此边是语法性的,不依赖任何近似阶段。
\else
  Fix an AST with its primitive function
  $\inherits$.
  The \emph{dependency graph}~$G$ has as vertices all
  semantic-function applications
  ($\supers(p)$, $\overrides(p)$,
    $\bases(p)$, $\resolve(\ldots)$,
  $\this(\ldots)$) that arise from the AST\@.
  There is a directed edge from vertex~$u$ to vertex~$v$ when the
  right-hand side defining~$u$ in
  equations~(\ref{eq:supers})--(\ref{eq:this}) reads the value
  of~$v$ under some query--answer map.
  The edges are therefore syntactic and do not depend on any stage
  of approximation.
\fi
\end{definition}

\begin{definition}[\bilingual{Immediate consequence operator $T_P$}{直接后果算子 $T_P$}]%
  \label{def:tp-operator}
\ifInheritanceChinese
  设 $\mathcal{Q}$ 为所有语义函数查询(即~$G$ 的顶点)的集合。
  对每个查询 $q \in \mathcal{Q}$,设 $\mathcal{V}_q$ 为~$q$ 的单个答案可取值的集合:
  $\supers$ 查询取路径对(继承位,override 位),
  $\overrides$、$\bases$、$\resolve$、$\this$ 查询取路径;
  定义方程的集合取值由下文整个答案集合 $M(q)$ 承载,
  故 $\mathcal{V}_q$ 的元素本身绝不是集合。
  \emph{查询--答案映射}是逐点乘积
  $\prod_{q \in \mathcal{Q}} \mathcal{P}(\mathcal{V}_q)$ 中的一个元素;
  等价地,它为每个查询~$q$ 指派一个候选答案集合 $M(q) \subseteq \mathcal{V}_q$。
  \emph{直接后果算子}~\cite{vanemden1976-predicate-logic-semantics}
  $T_P$ 将查询--答案映射~$M$ 映射为新的查询--答案映射~$T_P(M)$:
  $T_P(M)(q)$ 收集定义~$q$ 的那条方程右侧的集合推导式的\emph{所有实例}所产出的值,
  其中每个实例以当前近似~$M(q')$ 解释其绑定的每个子查询事实~$q'$。
  两点约定使该产出对 $M$ 逐实例正向:
  若某右侧构造自反传递闭包 $\bases^*$,则该闭包是对由子查询值
  $M(\bases(\cdot))$ 装配出的关系 $\bases$ 取闭包,而非~$M$ 的独立坐标;
  而 $\this$ 方程按其对前沿的可加性逐单点前沿展开
  (可加性
  $\this(S_1 \cup S_2, p, n) = \this(S_1, p, n) \cup \this(S_2, p, n)$
  对 $n$ 归纳直接验证:$n = 0$ 时方程~\eqref{eq:this} 原样返回前沿;
  $n > 0$ 时其推导式按 $p_{\mathrm{current}} \in S$ 逐元素生成,
  故 $S_1 \cup S_2$ 产出的新前沿是 $S_1$、$S_2$ 各自新前沿之并,
  再对同一 $\init(p)$ 与 $n-1$ 用归纳假设)——
  $T_P(M)(\this(S, p, n))$ 收集
  $\bigcup_{s \in S} M(\this(\{s\}, p, n))$,
  且在单点前沿查询处,$n > 0$ 的一步按右侧对每个匹配对
  $(p_{\mathrm{site}}, p) \in M(\supers(s))$ 读取
  $M(\this(\{p_{\mathrm{site}}\}, \init(p), n-1))$ 并取并。
  于是每个坐标的产出都是对 $M$-事实的正向存在式收集,不含以 $M$
  为索引的整体替换。
\else
  Let $\mathcal{Q}$ be the set of all
  semantic-function queries
  (vertices of~$G$).
  For each query $q \in \mathcal{Q}$, let
  $\mathcal{V}_q$ be the set of values a single answer to~$q$ may
  take: pairs of paths (inheritance site, override) for a $\supers$
  query, and paths for an $\overrides$, $\bases$, $\resolve$, or
  $\this$ query; the set value of the defining equation is carried
  by the whole answer set $M(q)$ below, so the elements of
  $\mathcal{V}_q$ are never themselves sets.
  A \emph{query--answer map} is an element of the
  pointwise product
  $\prod_{q \in \mathcal{Q}} \mathcal{P}(\mathcal{V}_q)$;
  equivalently, it assigns to each query~$q$ a set
  $M(q) \subseteq \mathcal{V}_q$ of candidate answers.
  The \emph{immediate consequence
  operator}~\cite{vanemden1976-predicate-logic-semantics}
  $T_P$ maps a query--answer map~$M$ to a new
  query--answer map~$T_P(M)$: the set $T_P(M)(q)$ collects the
  values produced by \emph{all instances} of the set comprehension
  on the right-hand side of the equation defining~$q$, each instance
  interpreting every sub-query fact it binds by the current
  approximation~$M(q')$.
  Two conventions keep this production instance-wise positive
  in~$M$: a right-hand side that forms the reflexive-transitive
  closure $\bases^*$ takes that closure of the relation $\bases$
  assembled from the sub-query values $M(\bases(\cdot))$, not a
  separate coordinate of~$M$; and the $\this$ equation is unfolded
  per singleton frontier, by its additivity in the frontier
  (additivity,
  $\this(S_1 \cup S_2, p, n) = \this(S_1, p, n) \cup \this(S_2, p, n)$,
  is verified directly by induction on~$n$: at $n = 0$
  equation~\eqref{eq:this} returns the frontier itself, and at
  $n > 0$ its comprehension generates over
  $p_{\mathrm{current}} \in S$ element by element, so the new
  frontier produced by $S_1 \cup S_2$ is the union of the new
  frontiers of $S_1$ and $S_2$, to which the induction hypothesis
  applies at the same $\init(p)$ and $n-1$) ---
  $T_P(M)(\this(S, p, n))$ collects
  $\bigcup_{s \in S} M(\this(\{s\}, p, n))$, and at a
  singleton-frontier query the $n > 0$ step reads, for each matching
  pair $(p_{\mathrm{site}}, p) \in M(\supers(s))$, the set
  $M(\this(\{p_{\mathrm{site}}\}, \init(p), n-1))$ and takes the
  union. Every coordinate's production is thus a positive
  existential collection of $M$-facts, with no wholesale substitution
  indexed by~$M$.
\fi
\end{definition}

\begin{lemma}[\bilingual{Monotonicity}{单调性}]\label{lem:monotonicity}
  \ifInheritanceChinese
  算子 $T_P$ 在完备格
  $\Bigl(\prod_{q \in \mathcal{Q}} \mathcal{P}(\mathcal{V}_q),\; \subseteq\Bigr)$
  上是单调的。
  \else
  The operator $T_P$ is monotone on the complete lattice
  $\Bigl(\prod_{q \in \mathcal{Q}} \mathcal{P}(\mathcal{V}_q),\; \subseteq\Bigr)$.
  \fi
\end{lemma}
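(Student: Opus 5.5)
The plan is to prove monotonicity coordinatewise. Fix query--answer maps $M \subseteq M'$, meaning $M(q) \subseteq M'(q)$ for every $q \in \mathcal{Q}$. It then suffices to show $T_P(M)(q) \subseteq T_P(M')(q)$ for each $q$. By Definition~\ref{def:tp-operator}, $T_P(M)(q)$ is the union, over all instances of the comprehension defining~$q$, of the value each instance produces. So I would show that every instance that fires under~$M$ also fires under~$M'$ and produces the same value. Each instance is a conjunction of membership facts $v \in M(q')$, together with side conditions that do not mention~$M$. These side conditions are equalities such as $p_{\mathrm{override}} = p_{\mathrm{def}}$, the syntactic tests $p \snoc \ell \in \dom(\inherits)$ and $(n,w) \in \inherits(p)$, and the functions $\init$ and $\cat$. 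Every membership fact that holds in~$M$ holds in~$M'$, and the side conditions are unchanged, so the instance survives.

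I would carry this out equation by equation. For $\overrides$, the instance reads $(\_, p_{\mathrm{branch}}) \in M(\supers(p_{\mathrm{init}}))$ plus a purely syntactic domain test. For $\bases$, it reads $p_{\mathrm{override}} \in M(\overrides(\cdot))$, the syntactic $\inherits$, and $p_{\mathrm{target}} \in M(\resolve(\ldots))$. For $\resolve$, it reads $p_{\mathrm{current}} \in M(\this(\ldots))$ and appends~$w$. For $\this$, I use the per-singleton unfolding fixed in Definition~\ref{def:tp-operator}. The $n=0$ case is constant in~$M$. For $n>0$, the instance reads a matching pair in $M(\supers(s))$ and then membership in $M(\this(\{p_{\mathrm{site}}\},\init(p),n-1))$. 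The non-singleton query is a union of $M$-coordinates, which is plainly monotone. The base equations $\supers(())$, $\overrides(())$ and $\bases(())$ are constant.

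The step I expect to need the most care is $\supers$, because its comprehension ranges over $p_{\mathrm{base}} \in \bases^*(p)$. This is a reflexive-transitive closure and not a single coordinate of~$M$. Following the convention of Definition~\ref{def:tp-operator}, I would take the closure of the relation $R_M = \{(u,v) \mid v \in M(\bases(u))\}$. Then $M \subseteq M'$ gives $R_M \subseteq R_{M'}$. Reflexive-transitive closure is monotone in the relation: a witnessing chain $p = u_0 R_M u_1 \cdots R_M u_k = p_{\mathrm{base}}$ is also an $R_{M'}$-chain, and the reflexive case $k=0$ needs no fact at all. So each witness $p_{\mathrm{base}}$ persists. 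Combined with $p_{\mathrm{override}} \in M(\overrides(p_{\mathrm{base}})) \subseteq M'(\ldots)$, the pair $(\init(p_{\mathrm{base}}), p_{\mathrm{override}})$ is still produced.

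Finally, I would note why nothing else can break positivity. The right-hand sides contain no negation, set difference or complement. The partial-function convention only omits terms where the \emph{syntactic} $\inherits$ is undefined, and that is independent of~$M$. No coordinate is replaced wholesale by an $M$-indexed value, since additivity reduced $\this$ to singleton frontiers. Hence $T_P(M) \subseteq T_P(M')$, and $T_P$ is monotone on the complete product lattice, as claimed.
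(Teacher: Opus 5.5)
Your proposal is correct and takes essentially the same route as the paper. Both arguments rest on three observations: every right-hand side is a positive existential comprehension whose only non-positive tests act on the fixed syntactic $\inherits$; $\bases^*$ is the closure of the $M$-assembled relation $\bases$ and is therefore monotone in it; and $\this$ is handled through the per-singleton additive unfolding of Definition~\ref{def:tp-operator}. Your version spells out the instance-survival argument equation by equation and makes the chain argument for the closure explicit, where the paper states these points in summary form.
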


\begin{proof}
\ifInheritanceChinese
  每条方程的右侧均由 $\in$、等式测试以及正集合
  概括 $\Set{ x | x \in S, \ldots }$ 构成。
  超出这些构造子的唯一算子,即 $\supers$ 中出现的自反传递闭包
  $\bases^*$,是经 $M$ 解释的关系 $\bases$ 的闭包;闭包对其所闭合的
  关系单调,故更大的 $M$ 给出更大的 $\bases$,从而更大的 $\bases^*$。
  其中出现的不等与否定,即 $\lexical$、$\qualified$ 的分支守卫
  与 $\dom(\inherits)$ 内的 $\neq$、$\notin$ 等测试,均作用于固定的原始量 $\inherits$
  及其派生的 AST 查询,而非作用于经 $M$ 解释的任何子查询,因而是 $T_P$ 的常量。
  方程 $\this$ 按定义~\ref{def:tp-operator} 的可加展开读取:
  一般前沿处的产出是前沿各单点处 $M$-值的并;
  单点前沿处,$n > 0$ 的一步是以
  $M(\this(\{p_{\mathrm{site}}\}, \init(p), n-1))$ 为项、
  在满足 $(p_{\mathrm{site}}, p) \in M(\supers(s))$ 的对上取并的正向并,
  其指标集只随 $M$ 增大。
  没有任何右侧对子查询施以否定、集合差或补集。
  因此,若 $M \subseteq M'$(逐点),则 $T_P(M) \subseteq T_P(M')$。
\else
  Each equation's right-hand side is built from
  $\in$, equality tests, and positive set
  comprehension $\Set{ x | x \in S, \ldots }$.
  The one operator beyond these, the reflexive-transitive
  closure $\bases^*$ in $\supers$, is the closure of the
  relation $\bases$ read through $M$; a closure is monotone in
  the relation it closes, so a larger $M$ yields a larger
  $\bases$ and hence a larger $\bases^*$.
  The inequalities and negations that occur, namely
  the branch guards of $\lexical$ and $\qualified$
  and the $\neq$, $\notin$ within $\dom(\inherits)$, all act on the
  fixed primitive $\inherits$ and its derived AST queries rather than
  on any sub-query read through $M$, so they are constants of $T_P$.
  The $\this$ equation is read through the additive unfolding of
  Definition~\ref{def:tp-operator}: at a general frontier the
  production is the union of the $M$-values at the frontier's
  singletons; at a singleton frontier, the $n > 0$ step is a positive
  union of the sets
  $M(\this(\{p_{\mathrm{site}}\}, \init(p), n-1))$ over the
  pairs $(p_{\mathrm{site}}, p) \in M(\supers(s))$, an index set
  that only grows with $M$.
  No right-hand side applies negation, set difference, or
  complement to a sub-query.
  Therefore, if $M \subseteq M'$ (pointwise), then
  $T_P(M) \subseteq T_P(M')$.
\fi
\end{proof}

\begin{theorem}[\bilingual{Existence and uniqueness of
  $\lfp(T_P)$}{$\lfp(T_P)$ 的存在性与唯一性}]\label{thm:lfp-exists}
  \ifInheritanceChinese
  算子 $T_P$ 有唯一的最小不动点 $\lfp(T_P)$。
  \else
  The operator $T_P$ has a unique least fixed point
  $\lfp(T_P)$.
  \fi
\end{theorem}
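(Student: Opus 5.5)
The plan is to invoke the Knaster--Tarski theorem on the lattice already fixed in Lemma~\ref{lem:monotonicity}. The first step is to check that $L = \prod_{q \in \mathcal{Q}} \mathcal{P}(\mathcal{V}_q)$, ordered pointwise by $\subseteq$, is a complete lattice. Each factor $\mathcal{P}(\mathcal{V}_q)$ is a powerset and hence complete, with arbitrary joins given by unions and meets by intersections. A product of complete lattices is complete under the pointwise order, with joins and meets computed coordinatewise. This needs no finiteness of $\mathcal{Q}$ or of any $\mathcal{V}_q$, so it covers programs whose mixin trees are infinite. It matters that Definition~\ref{def:tp-operator} makes the elements of $\mathcal{V}_q$ paths or pairs of paths, never sets, so each factor really is a plain powerset.

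Next I confirm that $T_P$ is a well-defined self-map $L \to L$. For each $q$, $T_P(M)(q)$ is the set of values that the comprehension instances produce. By construction these are elements of $\mathcal{V}_q$: a pair of paths for a $\supers$ query, and a path for each of the other four kinds. Hence $T_P(M)(q) \subseteq \mathcal{V}_q$. The partial primitive $\inherits$ does not break totality of $T_P$, because by the convention of Section~\ref{sec:definitions} an undefined inner term simply drops its instance. Two further ingredients are read through $M$ as in Definition~\ref{def:tp-operator}: the closure $\bases^*$ is taken of the relation assembled from the $M(\bases(\cdot))$ values, and the $\this$ equation is unfolded additively. Both therefore yield definite sets for every $M$.

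With $T_P$ a monotone self-map of a complete lattice by Lemma~\ref{lem:monotonicity}, Knaster--Tarski gives a least fixed point, namely $\lfp(T_P) = \bigwedge\{M \in L \mid T_P(M) \subseteq M\}$, the pointwise intersection of all pre-fixed points. I will include the short standard argument for completeness. Put $M^\ast$ for this meet. For any pre-fixed point $M$ we have $M^\ast \subseteq M$, so $T_P(M^\ast) \subseteq T_P(M) \subseteq M$; hence $T_P(M^\ast)$ lies below every pre-fixed point, and so $T_P(M^\ast) \subseteq M^\ast$. Monotonicity then gives $T_P(T_P(M^\ast)) \subseteq T_P(M^\ast)$, so $T_P(M^\ast)$ is itself a pre-fixed point and $M^\ast \subseteq T_P(M^\ast)$. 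Therefore $M^\ast$ is a fixed point. Every fixed point is in particular a pre-fixed point, so $M^\ast$ lies below every fixed point.

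Uniqueness is immediate from antisymmetry of $\subseteq$: two least fixed points are each below the other, hence equal. I do not expect a real obstacle here. The only delicate point is the well-definedness of $T_P$ as a map into $L$, including its handling of partiality, and that is discharged by the conventions already in Definition~\ref{def:tp-operator}. I will deliberately not claim $\lfp(T_P) = \bigcup_n T_P^n(\varnothing)$, since that would require continuity, which is a separate question left to the tabling results.
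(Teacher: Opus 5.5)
Your proof is correct and follows the paper's own argument. Each $\mathcal{P}(\mathcal{V}_q)$ is complete, so the pointwise product is complete, and Knaster--Tarski applied to the monotone $T_P$ of Lemma~\ref{lem:monotonicity} yields the unique least fixed point; your inline meet-of-pre-fixed-points argument and the self-map check are extra detail the paper leaves implicit, while your claim that it ``matters'' that elements of $\mathcal{V}_q$ are not themselves sets is unnecessary, since the powerset of any set is a complete lattice whatever its elements are.
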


\begin{proof}
\ifInheritanceChinese
  每个幂集格
  $(\mathcal{P}(\mathcal{V}_q), \subseteq)$
  都是完备的,
  因此逐点乘积格
  $\Bigl(\prod_{q \in \mathcal{Q}} \mathcal{P}(\mathcal{V}_q), \subseteq\Bigr)$
  也是完备的。
  由 Knaster--Tarski 定理~\cite{tarski1955-lattice-fixpoint-theorem},
  完备格上的每个单调函数都有唯一的最小不动点。
\else
  Each powerset lattice
  $(\mathcal{P}(\mathcal{V}_q), \subseteq)$
  is complete,
  so the pointwise product lattice
  $\Bigl(\prod_{q \in \mathcal{Q}} \mathcal{P}(\mathcal{V}_q), \subseteq\Bigr)$
  is also complete.
  By the Knaster--Tarski
  theorem~\cite{tarski1955-lattice-fixpoint-theorem},
  every monotone function on a complete lattice has a
  unique least fixed point.
\fi
\end{proof}

\begin{theorem}[\bilingual{Well-definedness}{良定义性}]\label{thm:well-defined}
  \ifInheritanceChinese
  语义函数 $\supers$、\allowbreak$\overrides$、\allowbreak$\bases$、\allowbreak
  $\resolve$、\allowbreak$\this$ 是良定义的:
  其指称语义为 $\lfp(T_P)$,即直接后果算子的唯一最小不动点。
  \else
  The semantic functions
  $\supers$, $\overrides$, $\bases$,
  $\resolve$, $\this$ are well-defined:
  their denotational semantics is
  $\lfp(T_P)$, the unique least fixed point of
  the immediate consequence operator.
  \fi
\end{theorem}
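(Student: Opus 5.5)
Theorem~\ref{thm:well-defined} follows from Theorem~\ref{thm:lfp-exists} once we check that the five equations are faithfully captured by $T_P$. So the plan is to spell out what ``well-defined'' means here and reduce each part to an earlier result.

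Concretely, I will take the semantic functions to be the coordinates of $\lfp(T_P)$: $\supers(p) := \lfp(T_P)(\supers(p))$, and similarly for $\overrides$, $\bases$, $\resolve$ and $\this$. There are then two obligations. First, these coordinates must satisfy equations~(\ref{eq:supers})--(\ref{eq:this}). Second, they must be uniquely determined as the least such solution.

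For the first obligation I use the fixed-point identity $T_P(\lfp(T_P)) = \lfp(T_P)$. Evaluated at a query $q$, it says that $\lfp(T_P)(q)$ equals the set produced by the right-hand side of $q$'s equation, with every sub-query read through $\lfp(T_P)$. For $\supers$, $\overrides$, $\bases$ and $\resolve$ this is literally the equation. The partial-function convention of Section~\ref{sec:definitions}, which silently drops terms outside $\dom(\inherits)$, is matched because $T_P$ produces no instance for such terms.

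Two conventions of Definition~\ref{def:tp-operator} need separate justification:
\begin{itemize}
\item For $\bases^*$, I will show that the closure of the relation read from $\lfp(T_P)(\bases(\cdot))$ is exactly the $\bases^*$ of the equation. This holds because the closure is taken of that very relation.
\item For $\this$, I will show that the singleton-frontier unfolding reproduces equation~\eqref{eq:this} at every frontier $S$. This follows from the additivity identity proved inside Definition~\ref{def:tp-operator}, which I will apply at the fixed point: the value at a general frontier is the union over its singletons, and each singleton step matches the $n>0$ case.
\end{itemize}

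For the second obligation, Theorem~\ref{thm:lfp-exists}, which rests on Lemma~\ref{lem:monotonicity} and Knaster--Tarski, gives existence and uniqueness of the least fixed point. Any family of set-valued functions satisfying the equations is a fixed point of $T_P$, so it contains $\lfp(T_P)$ pointwise. This makes the chosen denotation the unique least solution, independent of evaluation order.

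The main obstacle I expect is the $\this$ coordinate. The equation recurses on an arbitrary set-valued frontier argument, so naively the query space $\mathcal{Q}$ contains queries indexed by sets, and one must check that the unfolded per-singleton operator has the same fixed points as the literal equation. I would handle this by proving both inclusions at the fixed point by induction on $n$, using additivity at each step. Everything else is routine bookkeeping.
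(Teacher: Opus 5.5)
Your proposal is correct and follows the paper's route: monotonicity (Lemma~\ref{lem:monotonicity}) and Knaster--Tarski (Theorem~\ref{thm:lfp-exists}) give $\lfp(T_P)$, and each semantic function is taken to be the corresponding coordinate of it. The paper stops there, adding only that this least fixed point exists for every AST, well-formed or not. Your two extra obligations go through as you describe and are elaborations of what the paper leaves implicit. The first is that the coordinates satisfy equations~(\ref{eq:supers})--(\ref{eq:this}), via $T_P(\lfp(T_P)) = \lfp(T_P)$ together with the $\bases^*$ and per-singleton $\this$ conventions. The second is that any solution of the equations is a fixed point and therefore contains $\lfp(T_P)$. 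One small point: frontiers may be infinite, so you need additivity over arbitrary unions rather than binary ones, and the same induction on $n$ gives this.
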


\begin{proof}
\ifInheritanceChinese
  由引理~\ref{lem:monotonicity} 与定理~\ref{thm:lfp-exists},$T_P$ 有唯一的最小不动点;
  将每个函数取为 $\lfp(T_P)$ 的相应分量,即得良定义。此最小不动点对\emph{任何}
  AST 都存在,与程序是否良构无关;良构性是另一回事,即下文界定的语法辅助函数的有定义性。
\else
  By Lemma~\ref{lem:monotonicity} and
  Theorem~\ref{thm:lfp-exists}, $T_P$ has a unique least
  fixed point; taking each function to be the
  corresponding component of $\lfp(T_P)$ makes it
  well-defined. This least fixed point exists for
  \emph{every} AST, regardless of whether the program is
  well-formed; well-formedness is a separate property, the
  definedness of the syntactic helpers, which we settle
  next.
\fi
\end{proof}

\begin{lemma}[\bilingual{Kleene iteration}{Kleene 迭代}]\label{lem:kleene-iteration}
  \ifInheritanceChinese
  算子 $T_P$ 是 $\omega$-连续的:对每个非空有向族 $\{M_j\}_{j \in J}$,
  $T_P\bigl(\bigcup_{j} M_j\bigr) = \bigcup_{j} T_P(M_j)$。
  因此 $\lfp(T_P) = \bigcup_{i < \omega} T_P^i(\varnothing)$;
  $\lfp(T_P)$ 中的每条事实都落入某个有限阶段 $T_P^n(\varnothing)$,
  使其成立的最小下标 $n$ 称为该事实的 \emph{Kleene 秩}
  (且 $n \geq 1$,因 $T_P^0(\varnothing) = \varnothing$);
  并且 $T_P^{i+1}(\varnothing)$ 中的一条事实恰由其查询所属方程的右侧
  从 $T_P^i(\varnothing)$ 中有限多条事实产生(\emph{末步反演})。
  \else
  The operator $T_P$ is $\omega$-continuous: for every nonempty directed family
  $\{M_j\}_{j \in J}$ of query--answer maps,
  $T_P\bigl(\bigcup_{j} M_j\bigr) = \bigcup_{j} T_P(M_j)$.
  Consequently $\lfp(T_P) = \bigcup_{i < \omega} T_P^i(\varnothing)$;
  every fact of $\lfp(T_P)$ lies in some finite stage
  $T_P^n(\varnothing)$, and the least such index $n$ is the fact's
  \emph{Kleene rank} (with $n \geq 1$, since
  $T_P^0(\varnothing) = \varnothing$); moreover a fact in
  $T_P^{i+1}(\varnothing)$ is produced by the right-hand side of its
  query's equation from finitely many facts in $T_P^i(\varnothing)$
  (\emph{last-step inversion}).
  \fi
\end{lemma}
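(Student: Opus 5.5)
The plan is to derive $\omega$-continuity from a \emph{finitary production} property of $T_P$. The other two claims, the Kleene-chain characterization and last-step inversion, then follow by standard arguments.

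\textbf{Step 1: every instance is finitary.} First I will show that each fact $v \in T_P(M)(q)$ is produced by some instance of the right-hand side of $q$'s equation. That instance reads only finitely many sub-query facts $(q', v') $ with $v' \in M(q')$. I will check this equation by equation, using the conventions of Definition~\ref{def:tp-operator}.
\begin{itemize}
\item An instance of the $\supers$ comprehension witnesses $p_{\mathrm{base}} \in \bases^*(p)$. Since $\bases^*$ is the reflexive-transitive closure of the relation assembled from $M(\bases(\cdot))$, membership is witnessed by a \emph{finite} chain of $\bases$-facts. The instance also uses one $\overrides$-fact.
\item An $\overrides$ instance reads one $\supers$-fact. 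The $\dom(\inherits)$ test is a constant of $T_P$, as noted in Lemma~\ref{lem:monotonicity}.
\item A $\bases$ instance reads one $\overrides$-fact and one $\resolve$-fact. The reference $(n,w) \in \inherits(\cdot)$ is again a constant.
\item A $\resolve$ instance reads one $\this$-fact.
\item A $\this$ instance is taken under the additive singleton unfolding. At a general frontier it reads one singleton fact. At a singleton frontier with $n>0$ it reads one $\supers$-fact and one $\this$-fact.
\end{itemize}
The key point is that the comprehensions are existential: each produced value is witnessed by finitely many positive memberships in $M$, and never by a universally quantified condition.

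\textbf{Step 2: $\omega$-continuity.} The inclusion $\bigcup_j T_P(M_j) \subseteq T_P(\bigcup_j M_j)$ is immediate from monotonicity (Lemma~\ref{lem:monotonicity}). For the converse, take $v \in T_P(\bigcup_j M_j)(q)$ and fix a finite witness set from Step 1. Each witnessing fact lies in some $M_{j_k}$. Because the family is directed and the witness set is finite, a single $M_j$ contains all of them. The same instance then produces $v$ in $T_P(M_j)(q)$.

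The main obstacle is making Step 1 airtight for $\bases^*$ and $\this$. The closure must be read as the union of finite-length compositions, which holds because the reflexive-transitive closure of a relation is the union of its finite powers. The $\this$ recursion on $n$ must also be decomposed into finitely many reads per step, which is exactly why Definition~\ref{def:tp-operator} unfolds $\this$ per singleton frontier. Without that unfolding, a single fact at a large frontier could depend on the whole frontier's image, which would not be a finite witness.

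\textbf{Step 3: the Kleene chain.} The stages $T_P^i(\varnothing)$ form an increasing chain, by induction using monotonicity, so their union $L$ is a directed union. By Step 2, $T_P(L) = \bigcup_i T_P^{i+1}(\varnothing) = L$, so $L$ is a fixed point. By induction each stage lies below every fixed point, since $T_P^{i}(\varnothing) \subseteq F$ implies $T_P^{i+1}(\varnothing) \subseteq T_P(F) = F$. Hence $L = \lfp(T_P)$, the fixed point given by Theorem~\ref{thm:lfp-exists}.

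Every fact of $\lfp(T_P)$ therefore lies in some finite stage, so the least such index, the Kleene rank, exists. That index is at least $1$ because $T_P^0(\varnothing) = \varnothing$. Last-step inversion is Step 1 applied with $M = T_P^i(\varnothing)$.
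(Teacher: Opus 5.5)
Your proposal is correct and follows the paper's proof essentially step for step: each produced fact comes from one comprehension instance binding finitely many premise facts, with a $\bases^*$ pair witnessed by a finite chain of $\bases$ facts; directedness puts all those premises in a single $M_j$; the union of the Kleene stages is then shown to be the least fixed point; and rank and last-step inversion are read off from Definition~\ref{def:tp-operator}. Your equation-by-equation check of the binders, including the singleton-frontier unfolding of $\this$, spells out what the paper compresses into one sentence.
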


\begin{proof}
\ifInheritanceChinese
  单调性(引理~\ref{lem:monotonicity})给出
  $\bigcup_{j} T_P(M_j) \subseteq T_P\bigl(\bigcup_{j} M_j\bigr)$。
  反向包含:设事实 $v \in T_P\bigl(\bigcup_{j} M_j\bigr)(q)$。
  按定义~\ref{def:tp-operator},$v$ 由定义 $q$ 的方程右侧的集合推导式的
  \emph{一个实例}产生,而一个实例只绑定有限多条前提事实:每个推导式绑定子对应一条
  $\supers$、$\overrides$、$\bases$、$\resolve$ 或 $\this$ 事实,而闭包
  $\bases^*$ 中的一对由一条有限的 $\bases$ 事实链见证。
  每条前提事实都属于某个 $M_j$;由有向性,存在单个 $M_{j_0}$ 同时包含全部前提,
  故 $v \in T_P(M_{j_0})(q)$。于是 $T_P$ 是 $\omega$-连续的。
  记 $L = \bigcup_{i} T_P^i(\varnothing)$;链
  $\varnothing \subseteq T_P(\varnothing) \subseteq \cdots$ 有向,连续性给出
  $T_P(L) = \bigcup_{i} T_P^{i+1}(\varnothing) = L$,故 $L$ 是不动点;
  又对任何不动点 $F$,对 $i$ 归纳得 $T_P^i(\varnothing) \subseteq F$,
  故 $L \subseteq F$,即 $L = \lfp(T_P)$。
  有限秩与末步反演随即得出:事实属于某个 $T_P^{i+1}(\varnothing)$,
  而按定义~\ref{def:tp-operator},这一成员资格恰好就是
  "由方程右侧从 $T_P^i(\varnothing)$ 产生"。
\else
  Monotonicity (Lemma~\ref{lem:monotonicity}) gives
  $\bigcup_{j} T_P(M_j) \subseteq T_P\bigl(\bigcup_{j} M_j\bigr)$.
  For the converse, let $v \in T_P\bigl(\bigcup_{j} M_j\bigr)(q)$ be a
  fact. By Definition~\ref{def:tp-operator}, $v$ is produced by
  \emph{one instance} of the set comprehension on the right-hand side
  of the equation defining $q$, and one instance binds only finitely
  many premise facts: one $\supers$, $\overrides$, $\bases$,
  $\resolve$, or $\this$ fact per comprehension binder, and a pair of
  the closure $\bases^*$ is witnessed by a finite chain of $\bases$
  facts. Every premise lies in some $M_j$; by directedness a single
  $M_{j_0}$ contains them all, so $v \in T_P(M_{j_0})(q)$. Hence
  $T_P$ is $\omega$-continuous.
  Write $L = \bigcup_{i} T_P^i(\varnothing)$; the chain
  $\varnothing \subseteq T_P(\varnothing) \subseteq \cdots$ is
  directed, so continuity gives
  $T_P(L) = \bigcup_{i} T_P^{i+1}(\varnothing) = L$, and $L$ is a
  fixed point; for any fixed point $F$, induction on $i$ gives
  $T_P^i(\varnothing) \subseteq F$, so $L \subseteq F$, that is,
  $L = \lfp(T_P)$.
  Finite rank and last-step inversion follow: a fact of $L$ lies in
  some $T_P^{i+1}(\varnothing)$, and by
  Definition~\ref{def:tp-operator} that membership is exactly
  production by the equation's right-hand side over
  $T_P^i(\varnothing)$.
\fi
\end{proof}

\ifInheritanceChinese
我们对含偏函数的集合推导式采用标准约定：推导式 $\Set{ f(x) | x \in S }$
仅收集使 $f(x)$ 有定义的元素，当其源整体无定义时取空，对无定义的情况一律静默忽略。
这里有三件相邻而不同的事，我们把它们分开。

\emph{其一，语法辅助函数是偏函数}，且分两种。一种是\emph{缺失}：$\at(p \snoc \ell)$ 无条目，因为 $\ell$ 未写在 $p$ 下的任何处。另一种是\emph{畸形}：写下的节点形状不符（$\items$ 无定义）、单个 mixin 内有重复键（$\at$ 非唯一）、写下的引用其首标签爬过了根（de Bruijn 索引查找函数无定义），或某成员条目既非引用亦非定义（$\kv$ 对之无定义）。

\emph{其二，$\dom(\inherits)$ 把缺失显式化}。其成员判定 $\dom(\inherits) = \Set{ p | p = () \vee \has(\init(p), \last(p)) }$ 由~\eqref{eq:helpers} 给出，是一个 AST 本地查询，因而可判定；落在其外的路径未写下任何 mixin，被上述约定静默略去，并不构成不良构。缺失因此以一个布尔值报告，绝不以无定义的形式泄入语义。

\emph{其三，本演算要求畸形不出现}。一个程序\emph{良构}，当且仅当下列情形均不出现于任何写下位置：
\begin{enumerate}
  \item $\items(\mathit{node})$ 无定义：$\mathit{node}$ 不是 $\seqtag$ 节点。
  \item $\at(p \snoc \ell)$ 无唯一解：$\items(\at(p))$ 中有两个或更多以标签 $\ell$ 为键的
        单键 $\maptag$ 条目（即重复键）。
  \item $\lexical(p, \ell)$ 或 $\qualified(p, \ell)$ 无定义：向外爬升到根 $()$ 仍未命中，其 $\text{otherwise}$ 支在 $p = ()$ 处需要 $\init(())$，而 $\init$ 是偏函数、对空路径 $()$ 无定义；亦即写下的引用其首标签（词法）或锚标签（限定 this）爬过了所有外围作用域。
  \item $\kv(\mathit{node})$ 对某成员条目无定义：$\items(\at(p))$ 的某元素既不是 $\seqtag$ 节点（继承引用，由~\eqref{eq:helpers} 的 $\items$ 读取），也不是键为 $\strtag$ 标量的单键 $\maptag$ 节点（定义，由 $\kv(\cdot)$ 读取），例如多键映射、标量条目或非标量键；这样的条目会被两个对偶查询同时静默略过，写下的内容被无声丢弃。
\end{enumerate}
情形 (1)、(2) 与 (4) 是节点局部的，在有限多个不同写下节点上可判定（$\items$、$\at$、$\kv$ 为 AST 查询）。情形 (3) 在每个位置 $p$ 处至多 $|p|$ 步内终止。当文档是有限树（无 YAML 锚）时，位置有限，故良构性是可判定的有限静态性质。YAML 锚使文档成为图，其展开成的路径可能无穷；此时越根的畸形若存在，可由枚举发现（不良构性是半可判定的），但其在无穷多个位置上的缺席无法由有限检查证实，故良构性是余半可判定的（$\Pi_1$）。在良构程序的每个写下位置上，每个成员条目恰被两个对偶查询之一消费（引用归 $\items$，定义归 $\kv$），语法辅助函数在各自要读的条目上全有定义，没有写下的条目被无声丢弃；本演算即定义在这样的程序之上。

在良构程序上，其余一切无定义都是缺失，被 $\dom(\inherits)$ 与约定吸收，故定理~\ref{thm:well-defined} 的语义函数不遇任何真错：其指称 $\lfp(T_P)$ 对任何 AST 都存在，而在这里无需任何辅助函数失败即可到达。求值是否终止，则是下文另行处理的问题。
\else
We adopt the standard convention for set comprehensions involving partial functions:
$\Set{ f(x) | x \in S }$ collects only those elements for which $f(x)$ is defined,
is empty when its source is itself undefined, and silently omits undefined cases.
Three notions meet here, and we keep them apart.

\emph{First, the syntactic helpers are partial,} in two ways. One is \emph{absence}:
$\at(p \snoc \ell)$ has no entry because $\ell$ is written nowhere under $p$. The other is
\emph{malformedness}: a written node of the wrong shape ($\items$ undefined), a mixin with a
duplicate key ($\at$ non-unique), a written reference whose leading label climbs past the
root (the de Bruijn index helper undefined), or a member entry that is neither a reference
nor a definition ($\kv$ undefined on it).

\emph{Second, $\dom(\inherits)$ makes absence explicit.} Its membership
$\dom(\inherits) = \Set{ p | p = () \vee \has(\init(p), \last(p)) }$, given by
\eqref{eq:helpers}, is a local AST query and hence decidable; a path outside it writes no mixin
and is silently omitted by the convention above, not ill-formed. Absence is thereby reported as
a Boolean and never enters the semantics as undefinedness.

\emph{Third, the calculus requires malformedness to be absent.} A program is \emph{well-formed}
when none of the following arises at a written position:
\begin{enumerate}
  \item $\items(\mathit{node})$ is undefined: $\mathit{node}$ is not a $\seqtag$ node.
  \item $\at(p \snoc \ell)$ has no unique solution: $\items(\at(p))$ contains two or more
        single-key $\maptag$ entries with label~$\ell$, a duplicate key.
  \item $\lexical(p, \ell)$ or $\qualified(p, \ell)$ is undefined: the climb
        reaches the root $()$ without a match, where the $\text{otherwise}$ branch at $p = ()$ requires
        $\init(())$ and $\init$, a partial function, is undefined on the empty
        path $()$; a written reference whose leading label (lexical) or
        anchor label (qualified this) climbs past every enclosing scope.
  \item $\kv(\mathit{node})$ is undefined on some member
        entry: an element of $\items(\at(p))$ is neither a $\seqtag$ node
        (an inheritance reference, read by $\items$ in
        equation~\eqref{eq:helpers}) nor a single-key $\maptag$ node whose
        key is a $\strtag$ scalar (a definition, read by
        $\kv(\cdot)$), such as a multi-key mapping, a scalar entry,
        or a non-scalar key; such an entry would be silently skipped by
        both dual queries, dropping written content without report.
\end{enumerate}
Cases (1), (2), and (4) are node-local and decidable on the finitely many distinct written
nodes ($\items$, $\at$, and $\kv$ are AST queries). Case (3) halts in at most $|p|$
steps at each position $p$. When the document is a finite tree (no YAML anchors), the positions are finite, so
well-formedness is a decidable, finite static property. YAML anchors make the document a graph
whose unfolding into paths may be infinite; a malformed climb, if present, is then found by
enumeration (ill-formedness is semi-decidable), but its absence across infinitely many positions
is not certifiable by a finite check, so well-formedness is co-semi-decidable ($\Pi_1$). At every
written position of a well-formed program every member entry is consumed by exactly one of the
two dual queries ($\items$ for references, $\kv$ for definitions), the syntactic
helpers are defined on the entries each is meant to read, and no written entry is silently
dropped; the calculus is defined on such programs.

On a well-formed program every remaining undefinedness is absence, absorbed by $\dom(\inherits)$
and the convention, so the semantic functions of Theorem~\ref{thm:well-defined} meet no genuine
error: their denotation $\lfp(T_P)$ exists for every AST and here is reached without any helper
failing. Whether evaluation terminates is the separate question settled below.
\fi

\begin{lemma}[\bilingual{Path existence}{路径存在性}]\label{lem:path-existence}
  \ifInheritanceChinese
  称一条路径\emph{存在},若它是根 $()$,或它形如 $p \snoc \ell$ 且 $\supers(p)$ 的某个 override 分量直接写下 $\ell$,
  即存在 $(\_,\, p_{\mathrm{branch}}) \in \supers(p)$ 使 $p_{\mathrm{branch}} \snoc \ell \in \dom(\inherits)$。
  则对每条路径 $q$,$\supers(q) \neq \varnothing$ 当且仅当 $q$ 存在。因此一个标签 $\ell$ 在 $p$ 处存在
  （第~\ref{sec:mixin-trees}~节的意义下）,恰当 $\supers(p \snoc \ell) \neq \varnothing$。
  \else
  Call a path \emph{existent} if it is the root $()$, or it is $p \snoc \ell$ and some override
  component of $\supers(p)$ writes $\ell$ directly, that is
  $p_{\mathrm{branch}} \snoc \ell \in \dom(\inherits)$ for some
  $(\_,\, p_{\mathrm{branch}}) \in \supers(p)$. Then $\supers(q) \neq \varnothing$ if and only if
  $q$ is existent, for every path $q$. Consequently a label $\ell$ exists at $p$, in the sense of
  Section~\ref{sec:mixin-trees}, exactly when $\supers(p \snoc \ell) \neq \varnothing$.
  \fi
\end{lemma}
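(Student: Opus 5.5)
The proof unfolds the defining equations once at the least fixed point. By Theorem~\ref{thm:well-defined}, each semantic function is the corresponding component of $\lfp(T_P)$. Since $\lfp(T_P)$ is a fixed point of $T_P$, equations~(\ref{eq:supers})--(\ref{eq:bases}) hold as genuine set equalities between these components. Here $\bases^*$ is read, per Definition~\ref{def:tp-operator}, as the reflexive-transitive closure of the $\bases$ relation given by $\lfp(T_P)$. No induction on Kleene rank should be needed, only a case split on $q$.

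\emph{Root case.} We have $\supers(()) = \{((),())\} \neq \varnothing$, and $()$ is existent by definition, so both sides hold.

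\emph{Nonroot case $q = p \snoc \ell$, forward direction.} Suppose $q$ is existent, witnessed by $(\_, p_{\mathrm{branch}}) \in \supers(p)$ with $p_{\mathrm{branch}} \snoc \ell \in \dom(\inherits)$. That same witness instantiates the comprehension of equation~(\ref{eq:overrides}) with $p_{\mathrm{override}} = p \snoc \ell$, so $q \in \overrides(q)$. Since $\bases^*$ is reflexive, $q \in \bases^*(q)$. Taking $p_{\mathrm{base}} = q$ in equation~(\ref{eq:supers}) gives $(p, q) \in \supers(q)$, which is nonempty.

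\emph{Nonroot case, converse direction.} Suppose $\supers(q) \neq \varnothing$. Pick $(\init(p_{\mathrm{base}}), p_{\mathrm{override}}) \in \supers(q)$ with $p_{\mathrm{base}} \in \bases^*(q)$ and $p_{\mathrm{override}} \in \overrides(p_{\mathrm{base}})$. I will show that in either sub-case $\overrides(q) \neq \varnothing$.
\begin{itemize}
\item If $p_{\mathrm{base}} = q$, then $\overrides(q) \neq \varnothing$ directly.
\item Otherwise the closure is witnessed by a finite chain $q = b_0, b_1, \ldots, b_k = p_{\mathrm{base}}$ with $k \geq 1$ and $b_{i+1} \in \bases(b_i)$. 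In particular $\bases(q) \neq \varnothing$. Because $q$ is nonroot, $\bases(q)$ is given by the second clause of equation~(\ref{eq:bases}). Every element of that comprehension draws some $p_{\mathrm{override}} \in \overrides(q)$, so again $\overrides(q) \neq \varnothing$.
\end{itemize}
Now any element of $\overrides(q)$, for nonroot $q = p \snoc \ell$, is produced by an instance of equation~(\ref{eq:overrides}). That instance binds some $(\_, p_{\mathrm{branch}}) \in \supers(p)$ with $p_{\mathrm{branch}} \snoc \ell \in \dom(\inherits)$, which is exactly the existence witness for $q$.

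\emph{Consequence.} The final sentence of the lemma follows by instantiating $q = p \snoc \ell$. The condition $\supers(p \snoc \ell) \neq \varnothing$ is, by this equivalence, exactly the condition that some super mixin of $p$ writes $\ell$, whether $p$ writes it directly or reaches it through inheritance.

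\emph{Main obstacle.} The one delicate step is the converse direction through $\bases^*$. One must check that the closure is taken over the $\lfp$-interpreted $\bases$ relation, so that a non-reflexive pair really comes with a finite chain whose first link lies in $\bases(q)$; this is guaranteed by the convention in Definition~\ref{def:tp-operator} and the finite-witness argument of Lemma~\ref{lem:kleene-iteration}. One must also check that the partial-function convention never lets $\bases(q)$ produce an element without first binding an element of $\overrides(q)$. Both are read off directly from the shape of the comprehensions, since the omission convention can only drop terms, never add them.
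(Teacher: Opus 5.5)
Your proof is correct and follows essentially the same route as the paper's. Both arguments reduce $\supers(q) \neq \varnothing$ to $\overrides(q) \neq \varnothing$: one direction uses the reflexive term of $\bases^*$, and the other uses the fact that every element of $\bases(q)$ binds an element of $\overrides(q)$, which the paper phrases contrapositively as $\overrides(q)=\varnothing \Rightarrow \bases(q)=\varnothing \Rightarrow \supers(q)=\varnothing$. Both then read existence directly off the comprehension of equation~(\ref{eq:overrides}).
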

\begin{proof}
  \ifInheritanceChinese
  首先,$\supers(q) \neq \varnothing$ 当且仅当 $\overrides(q) \neq \varnothing$。自反项 $q \in \bases^*(q)$
  经~\eqref{eq:supers} 把 $\overrides(q)$ 注入 $\supers(q)$,故 $\overrides(q) \neq \varnothing$ 蕴含
  $\supers(q) \neq \varnothing$。反之,若 $\overrides(q) = \varnothing$,则由~\eqref{eq:bases} 有
  $\bases(q) = \varnothing$,从而 $\bases^*(q) = \{q\}$,再由~\eqref{eq:supers} 得 $\supers(q) = \varnothing$。

  对于根,$\overrides(()) = \{()\} \neq \varnothing$ 且 $()$ 存在。对于 $q = p \snoc \ell$,
  \eqref{eq:overrides} 使 $\overrides(p \snoc \ell)$ 非空,恰当存在 $(\_,\, p_{\mathrm{branch}}) \in \supers(p)$
  满足 $p_{\mathrm{branch}} \snoc \ell \in \dom(\inherits)$,这正是 $p \snoc \ell$ 存在的定义。
  \else
  First, $\supers(q) \neq \varnothing$ if and only if $\overrides(q) \neq \varnothing$. The reflexive
  term $q \in \bases^*(q)$ contributes $\overrides(q)$ to $\supers(q)$ by~\eqref{eq:supers}, so
  $\overrides(q) \neq \varnothing$ gives $\supers(q) \neq \varnothing$. Conversely, if
  $\overrides(q) = \varnothing$ then $\bases(q) = \varnothing$ by~\eqref{eq:bases}, hence
  $\bases^*(q) = \{q\}$ and $\supers(q) = \varnothing$ by~\eqref{eq:supers}.

  For the root, $\overrides(()) = \{()\} \neq \varnothing$ and $()$ is existent. For
  $q = p \snoc \ell$, \eqref{eq:overrides} makes $\overrides(p \snoc \ell)$ non-empty exactly when
  some $(\_,\, p_{\mathrm{branch}}) \in \supers(p)$ satisfies
  $p_{\mathrm{branch}} \snoc \ell \in \dom(\inherits)$, which is the definition of $p \snoc \ell$
  being existent.
  \fi
\end{proof}

\begin{lemma}[\bilingual{De Bruijn bound}{De Bruijn 界}]\label{lem:db-bound}
  \ifInheritanceChinese
  设程序良构，$p$ 为任意使 $\indexed(p, r) = (n, \cdot)$ 有定义的路径（故 $p \neq ()$，因 $\indexed$ 要求 $\init(p)$）。
  则 $n \leq |\init(p)| = |p| - 1$。
  \else
  Let the program be well-formed, and let $p$ be any path with $\indexed(p, r) = (n, \cdot)$ defined (so $p \neq ()$, since $\indexed$ evaluates $\init(p)$).
  Then $n \leq |\init(p)| = |p| - 1$.
  \fi
\end{lemma}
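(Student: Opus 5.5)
The plan is to reduce the bound to a statement about the two climbing helpers, $\lexical$ and $\qualified$. Both clauses of $\indexed$ in~\eqref{eq:helpers} produce an index of the form $n = \lexical(\init(p), \ell)$ or $n = \qualified(\init(p), \ell)$ for some label $\ell$ (namely $\lbl(s_1)$ or $\lbl(s_0)$). Since $\indexed(p, r)$ is assumed defined, the relevant helper call is defined at $\init(p)$. It therefore suffices to prove the auxiliary claim: \emph{whenever $\lexical(q, \ell)$ (respectively $\qualified(q, \ell)$) is defined, its value is at most $|q|$.} Applying this with $q = \init(p)$ gives $n \leq |\init(p)| = |p| - 1$, and the equality holds because $p \neq ()$.

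I would prove the auxiliary claim by induction on $|q|$, treating $\lexical$ and $\qualified$ in parallel since their recursions have the same shape.

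\textbf{Terminating branch.} If the guard holds ($\has(q, \ell)$, respectively $\last(q) = \ell$), the value is $0 \leq |q|$.

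\textbf{Otherwise branch.} The value is $1 + \lexical(\init(q), \ell)$ (or the $\qualified$ analogue). Definedness of this branch requires $\init(q)$ to be defined, so $q \neq ()$ and $|\init(q)| = |q| - 1$. The recursive call is defined because the outer one is. The induction hypothesis bounds it by $|q| - 1$, so the whole value is at most $|q|$.

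\textbf{Root case.} At $q = ()$, the otherwise branch is undefined, since $\init(())$ is undefined. Any defined value there must come from the terminating branch and so equals $0 = |()|$. For $\qualified$, $\last(())$ is also undefined, so $\qualified((), \ell)$ is wholly undefined and the claim holds vacuously.

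The only delicate step is pinning down what ``defined'' means for these recursively specified partial functions. I would read $\lexical$ and $\qualified$ as the least partial functions satisfying their equations, so that a defined value comes from a finite unfolding. The induction on $|q|$ then runs along that unfolding, and each unfolding step strictly shortens the path. Well-formedness, specifically case~(3), is not needed for the inequality itself. It only guarantees that $\indexed$ is defined at every written reference, so the lemma applies to every reference actually present in the program.
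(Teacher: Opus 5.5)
Your proof is correct and is essentially the paper's argument. Each unfolding of $\lexical$ or $\qualified$ strips one label via $\init$, and the climb cannot pass the root because $\init(())$ (and, for qualified this, $\last(())$) is undefined, so the index is at most $|\init(p)|$. Your induction on $|q|$ only makes that step count explicit, and you are right that the assumed definedness of the index already does the work the paper attributes to well-formedness.
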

\begin{proof}
  \ifInheritanceChinese
  我们对~\eqref{eq:helpers} 中的两种引用形式分别讨论。

  \paragraph{词法引用 $r = (\ell_1, \ldots, \ell_k)$}
  此时 $n = \lexical(\init(p),\; \ell_1)$，它从 $\init(p)$ 起、每步经 $\init$ 向外退一层测试 $\has(\cdot, \ell_1)$。
  若爬升到根 $()$ 仍未命中，则其 otherwise 支需 $\init(())$，而偏函数 $\init$ 对 $()$ 无定义，故 $\lexical$ 无定义，
  程序不良构——已被排除。
  故递归在某路径 $p_0$（满足 $\has(p_0, \ell_1)$）处停止，此时 $n$ 为从
  $\init(p)$ 至 $p_0$ 的上爬步数；由于每步路径长度减一，$n \leq |\init(p)| = |p| - 1$。

  \paragraph{限定 this 引用 $r = (\ell, {\sim}, \ell_1, \ldots, \ell_k)$}
  此时 $n = \qualified(\init(p),\; \ell)$，其停止点处的基础守卫 $\last(p_0) = \ell$ 只在 $\last$ 有定义处成立，故 $p_0 \neq ()$（$\last$ 偏，对 $()$ 无定义）；同上，$n \leq |p| - 1$。

  \paragraph{推论：$\this$ 不越根}
  在 $\resolve$ 中，$\this$ 以
  $\this(\{p_{\mathrm{site}}\},\; \init(p_{\mathrm{override}}),\; n)$
  的形式被调用，其中 $(n, w) \in \inherits(p_{\mathrm{override}})$，
  故由上述界 $n \leq |p_{\mathrm{override}}| - 1 = |\init(p_{\mathrm{override}})|$。
  记起始路径 $q = \init(p_{\mathrm{override}})$，$n$ 步递归依次把 $\init$
  作用于 $q, \init(q), \ldots, \init^{n-1}(q)$，其长度为
  $|q|, |q| - 1, \ldots, |q| - (n - 1)$；因 $n \leq |q|$，每条这样的路径
  长度至少 $|q| - (n - 1) \geq 1$、非根，故 $\init$ 始终有定义，$\this$ 绝不越根。
  \else
  We consider the two reference forms in~\eqref{eq:helpers} separately.

  \paragraph{Lexical reference $r = (\ell_1, \ldots, \ell_k)$}
  Here $n = \lexical(\init(p),\; \ell_1)$, which tests $\has(\cdot, \ell_1)$ at each
  enclosing scope outward from $\init(p)$, stepping by $\init$.
  If the climb reaches the root $()$ without a match, its otherwise branch needs $\init(())$,
  undefined since $\init$ is partial, so $\lexical$ is undefined,
  making the program ill-formed, which we have excluded.
  Therefore the recursion stops at some $p_0$ (where $\has(p_0, \ell_1)$), after $n$ steps
  from $\init(p)$; since each step shortens the path by one label,
  $n \leq |\init(p)| = |p| - 1$.

  \paragraph{Qualified-this reference $r = (\ell, {\sim}, \ell_1, \ldots, \ell_k)$}
  Here $n = \qualified(\init(p),\; \ell)$, whose base-case guard $\last(p_0) = \ell$ at the
  stopping scope can hold only where $\last$ is defined, so $p_0 \neq ()$ ($\last$ is partial, undefined
  on $()$); by the same argument, $n \leq |p| - 1$.

  \paragraph{Corollary: $\this$ never reaches root}
  In $\resolve$, $\this$ is called as
  $\this(\{p_{\mathrm{site}}\},\; \init(p_{\mathrm{override}}),\; n)$
  with $(n, w) \in \inherits(p_{\mathrm{override}})$, so
  $n \leq |p_{\mathrm{override}}| - 1 = |\init(p_{\mathrm{override}})|$
  by the bound above. Writing $q = \init(p_{\mathrm{override}})$ for the
  starting path, the $n$ recursive steps apply $\init$ to
  $q, \init(q), \ldots, \init^{n-1}(q)$, of lengths
  $|q|, |q| - 1, \ldots, |q| - (n - 1)$; since $n \leq |q|$, each such
  path has length at least $|q| - (n - 1) \geq 1$ and is non-root, so
  $\init$ is always defined and $\this$ never reaches the root.
  \fi
\end{proof}

\begin{lemma}[\bilingual{Dependency restriction}{依赖限制}]\label{lem:dependency-restriction}
  \ifInheritanceChinese
  设 $R \subseteq \mathcal{Q}$ 对依赖图 $G$ 的边封闭
  ($R$ 中查询的方程所读的每个子查询也在 $R$ 中),
  记 $T_P{\restriction}R$ 为 $T_P$ 在 $R$ 上的映射限制。
  则对每个 $i$,
  $T_P^i(\varnothing)\,{\restriction}R = (T_P{\restriction}R)^i(\varnothing)$;
  因此 $\lfp(T_P)\,{\restriction}R = \lfp(T_P{\restriction}R)$,
  且 $R$ 中查询处的一条事实在全局迭代与受限迭代中于同一阶段下标
  (即同一 Kleene 秩)出现。
  \else
  Let $R \subseteq \mathcal{Q}$ be closed under the dependency edges
  of $G$ (every sub-query read by the equation of a query in $R$ is
  itself in $R$), and let $T_P{\restriction}R$ denote the restriction
  of $T_P$ to maps on $R$. Then for every $i$,
  $T_P^i(\varnothing)\,{\restriction}R = (T_P{\restriction}R)^i(\varnothing)$;
  hence $\lfp(T_P)\,{\restriction}R = \lfp(T_P{\restriction}R)$, and a
  fact at a query in $R$ enters the global and the restricted
  iteration at the same stage index (the same Kleene rank).
  \fi
\end{lemma}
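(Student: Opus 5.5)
The argument is an induction on $i$, driven by one observation: the coordinate of $T_P(M)$ at a query $q$ depends only on the restriction of $M$ to the sub-queries that $q$'s equation reads. By Definition~\ref{def:tp-operator}, $T_P(M)(q)$ collects the instances of the right-hand side of $q$'s equation. Each instance reads only values $M(q')$ for sub-queries $q'$ joined to $q$ by an edge of $G$: the $\bases(\cdot)$ coordinates assembled into the closure $\bases^*$, the $\supers$ coordinates consulted by the singleton-frontier unfolding of $\this$, and so on. The edges of $G$ are syntactic and do not depend on the stage of approximation, as Definition~\ref{def:dependency-graph} stresses. So I would first record a \emph{locality} fact: if $M{\restriction}R = M'{\restriction}R$ and $R$ is closed under the edges of $G$, then $T_P(M){\restriction}R = T_P(M'){\restriction}R$. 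Equivalently, $T_P(M){\restriction}R = (T_P{\restriction}R)(M{\restriction}R)$. This equality is exactly what makes the restricted operator $T_P{\restriction}R$ well defined as a map on $\prod_{q\in R}\mathcal{P}(\mathcal{V}_q)$.

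With locality in hand, the stagewise identity follows by induction on $i$. At $i=0$ both sides are the empty map on $R$. For the step, I would compute
\[
T_P^{i+1}(\varnothing){\restriction}R
= (T_P{\restriction}R)\bigl(T_P^{i}(\varnothing){\restriction}R\bigr)
= (T_P{\restriction}R)\bigl((T_P{\restriction}R)^{i}(\varnothing)\bigr),
\]
using locality for the first equality and the induction hypothesis for the second.

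The least-fixed-point statement then follows from Kleene iteration. $T_P{\restriction}R$ is monotone and $\omega$-continuous by the same argument as Lemmas~\ref{lem:monotonicity} and~\ref{lem:kleene-iteration}, since these lemmas only inspect the shape of each equation and the set of equations defining $R$ is a subset of the original ones. Lemma~\ref{lem:kleene-iteration} therefore applies to both operators, and restriction commutes with unions, so
\[
\lfp(T_P){\restriction}R=\bigcup_i T_P^i(\varnothing){\restriction}R=\bigcup_i (T_P{\restriction}R)^i(\varnothing)=\lfp(T_P{\restriction}R).
\]
The statement about Kleene rank is immediate from the stagewise identity. A fact at $q\in R$ lies in $T_P^i(\varnothing)(q)$ exactly when it lies in $(T_P{\restriction}R)^i(\varnothing)(q)$, so the least such $i$ is the same in the global and the restricted iteration.

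The main obstacle is the locality step. Its difficulty is not depth but making sure that no coordinate outside $R$ is silently read. Two places need care. The first is the closure $\bases^*$ in $\supers$: a chain $p\to p_1\to\cdots$ reads $\bases(p_1)$ and every later link, and I must check that the edge relation of $G$ includes these transitive reads. Since $G$ draws an edge for every coordinate read under \emph{some} query--answer map, the whole chain qualifies. The second is the additive unfolding of $\this$, which reads the singleton-frontier queries and the $\supers(s)$ queries. I would settle both by a case inspection of the five equations as unfolded in Definition~\ref{def:tp-operator}, confirming that every $M$-value read is at a vertex that $G$ links from $q$. If the closure turned out to be problematic, the fallback is to observe that the links of a $\bases^*$ chain are reached by iterating the one-hop edge $\supers(p)\to\bases(p)$ together with $\bases(p)\to\overrides(p)\to\supers(\cdot)$, and that any set closed under the edges of $G$ is closed under this iteration as well.
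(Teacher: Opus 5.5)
Your proposal is correct and follows the paper's own proof. Both argue by induction on $i$ from the observation that $T_P(M)(q)$ for $q \in R$ depends only on $M{\restriction}R$, take a union over stages via Lemma~\ref{lem:kleene-iteration}, and read the rank agreement off the stagewise identity. You are more explicit that $T_P{\restriction}R$ is well defined and itself admits Kleene iteration, which the paper uses tacitly.

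Your fallback for $\bases^*$ is unnecessary, and it would not work as stated: the route $\bases(p)\to\overrides(p)\to\supers(\cdot)$ reaches $\supers(\init(p))$, not the $\bases$ coordinates of the chain's later links. Your main argument already settles the point, since Definition~\ref{def:dependency-graph} draws an edge from $\supers(p)$ to every $\bases(\cdot)$ and $\overrides(\cdot)$ coordinate the closure reads under some query--answer map.
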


\begin{proof}
\ifInheritanceChinese
  对 $i$ 归纳。基例 $i = 0$ 两侧均为 $\varnothing$。
  归纳步:对 $q \in R$ 求值其方程的右侧只读 $R$ 内子查询的值
  ($R$ 依赖封闭),故 $T_P(M)(q)$ 只依赖 $M{\restriction}R$;
  由归纳假设
  $T_P^i(\varnothing)\,{\restriction}R = (T_P{\restriction}R)^i(\varnothing)$,
  对两侧应用 $T_P$ 并限制到 $R$,得
  $T_P^{i+1}(\varnothing)\,{\restriction}R = (T_P{\restriction}R)^{i+1}(\varnothing)$。
  对 $i$ 取并(引理~\ref{lem:kleene-iteration})即得不动点相等;
  逐阶段相等直接给出阶段下标一致。
\else
  Induction on $i$. The base case $i = 0$ has $\varnothing$ on both
  sides. For the step, evaluating the equation of a query $q \in R$
  reads only values at sub-queries in $R$ ($R$ is dependency-closed),
  so $T_P(M)(q)$ depends only on $M{\restriction}R$; with the
  induction hypothesis
  $T_P^i(\varnothing)\,{\restriction}R = (T_P{\restriction}R)^i(\varnothing)$,
  applying $T_P$ to both sides and restricting to $R$ gives
  $T_P^{i+1}(\varnothing)\,{\restriction}R = (T_P{\restriction}R)^{i+1}(\varnothing)$.
  Taking the union over $i$ (Lemma~\ref{lem:kleene-iteration}) yields
  the fixed-point equality; the stagewise equality gives the agreement
  of stage indices directly.
\fi
\end{proof}

\ifInheritanceChinese
我们 recall \emph{tabling},即按需求解最小不动点系统的标准方法
~\cite{tamaki1986-tabled-resolution, chen1996-tabled-evaluation-delaying, vanemden1976-predicate-logic-semantics},
并以算法~\ref{alg:tabled} 把本文所用的 tabled evaluation 规范下来:
自空累加器起做 Kleene 迭代,每一轮按需、带备忘地对可达查询各重算一次其方程,
重入调用读取运行中的逼近,直到某一轮不再新增任何事实。
\else
We recall \emph{tabling}, the standard method for solving a
least-fixpoint system on
demand~\cite{tamaki1986-tabled-resolution, chen1996-tabled-evaluation-delaying, vanemden1976-predicate-logic-semantics},
and pin down the tabled evaluation used in this paper as
Algorithm~\ref{alg:tabled}: Kleene iteration from the empty
accumulator, each round recomputing the equation of every reachable
query once, on demand and memoized, a re-entrant call reading the
running approximation, until a round adds no fact.
\fi

\begin{algorithm}
  \DontPrintSemicolon
  \KwData{\bilingual{$\mathit{approx}$, every query's answer set so far, kept across rounds, empty until produced; $\mathit{cache}$, the queries recomputed in the current round; $\mathit{stack}$, the queries being computed}{$\mathit{approx}$,每个查询迄今的答案集,跨轮保留,在被产出前为空;$\mathit{cache}$,本轮已重算的查询;$\mathit{stack}$,正在计算中的查询}}
  \Fn{\TABLED{$q_0$}}{
    \Fn{\RESOLVE{$q$}}{
      \lIf{$q \in \mathit{cache}$}{\Return $\mathit{approx}[q]$ \tcp*[f]{\bilingual{recomputed this round}{本轮已重算}}}
      \lIf{$q \in \mathit{stack}$}{\Return $\mathit{approx}[q]$ \tcp*[f]{\bilingual{re-entry reads the running approximation}{重入读取运行中的逼近}}}
      $\mathit{stack} \gets \mathit{stack} \cup \{q\}$\;
      $\mathit{produced} \gets$ \bilingual{the values of all instances of the right-hand side of $q$'s equation, every sub-query fact read through \RESOLVE{}}{$q$ 的方程右侧全部实例产出的值,每个子查询事实经 \RESOLVE{} 读取}\;
      $\mathit{stack} \gets \mathit{stack} \setminus \{q\}$\;
      \lIf{$\mathit{produced} \not\subseteq \mathit{approx}[q]$}{$\mathit{changed} \gets \mathbf{true}$}
      $\mathit{approx}[q] \gets \mathit{approx}[q] \cup \mathit{produced}$;\; $\mathit{cache} \gets \mathit{cache} \cup \{q\}$\;
      \Return $\mathit{approx}[q]$\;
    }
    \Repeat{$\lnot\,\mathit{changed}$}{
      $\mathit{cache} \gets \emptyset$;\; $\mathit{stack} \gets \emptyset$;\; $\mathit{changed} \gets \mathbf{false}$\;
      \RESOLVE{$q_0$}\;
    }
    \Return $\mathit{approx}[q_0]$\;
  }
  \caption{\bilingual{Tabled evaluation of equations~(\ref{eq:supers})--(\ref{eq:this}). \textsf{Tabled} iterates from the empty accumulator to the least fixed point: each round, \textsf{Resolve}, the memoized recomputation, evaluates one round of every reachable query's equation, reading sub-query facts through itself, and merges what it produces into the query's answer set; a re-entrant call, a cycle in the dependency graph, returns the running approximation; the merge is set union, the join of the powerset lattice. The memo is keyed by the query itself.}{方程~(\ref{eq:supers})--(\ref{eq:this}) 的 tabled 求值。\textsf{Tabled} 自空累加器迭代至最小不动点:每一轮,\textsf{Resolve}(带备忘的重算)对每个可达查询的方程求值一轮,经由自身读取子查询事实,并把产出并入该查询的答案集;重入调用,即依赖图中的环,返回运行中的逼近;合并为集合并,即幂集格上的并。备忘以查询自身为键。}}
  \label{alg:tabled}
\end{algorithm}

\ifInheritanceChinese
MIXINv2 基于 tablambda 项目的 \textsf{fixpoints} tabling
库~\cite{yang2026-tablambda}实现算法~\ref{alg:tabled}。
\else
MIXINv2 implements Algorithm~\ref{alg:tabled} on the
\textsf{fixpoints} tabling library of the tablambda
project~\cite{yang2026-tablambda}.
\fi

\begin{theorem}[\bilingual{Tabling soundness}{Tabling 可靠性}]%
  \label{thm:tabling-soundness}
  \ifInheritanceChinese
  方程~(\ref{eq:supers})--(\ref{eq:this}) 的 tabled 求值
  (算法~\ref{alg:tabled}),
  即使依赖图~$G$ 是无穷的,也绝不会返回错误答案。具体而言:
  \begin{enumerate}
    \item \textbf{可靠性(无假正例)。}
      tabling 返回的每个事实均由某条方程的右侧在其输入上求值产生,
      且每个返回的事实都属于 $\lfp(T_P)$。

    \item \textbf{单调积累。}
      tabling 累加器在各轮迭代之间只通过集合并增长,任何事实都不会被删除。

    \item \textbf{不动点检测的正确性。}
      若求值过程中没有子查询被重入,则所有子查询均在无近似的情况下被完整计算,结果是精确的。
      若发生了重入但累加器是稳定的(即没有新事实被添加),则累加器等于
      $\lfp(T_P)$ 在可达查询上的限制,因为它是一个可靠的前不动点,而最小不动点即最小前不动点。

    \item \textbf{可靠的失败。}
      当迭代不稳定时,算法~\ref{alg:tabled} 不返回;它绝不返回错误答案。
  \end{enumerate}
  \else
  Tabled evaluation of
  equations~(\ref{eq:supers})--(\ref{eq:this})
  (Algorithm~\ref{alg:tabled})
  never returns a wrong answer, even when the
  dependency graph~$G$ is infinite.
  Specifically:
  \begin{enumerate}
    \item \textbf{Soundness (no false positives).}
      Every fact returned by tabling was produced by
      evaluating some equation's right-hand side on its
      inputs, and every returned fact belongs to
      $\lfp(T_P)$.

    \item \textbf{Monotone accumulation.}
      The tabling accumulator only grows between
      iterations via set union; no fact is ever
      removed.

    \item \textbf{Correct fixpoint detection.}
      If no sub-query was re-entered during evaluation,
      all sub-queries were fully computed with no
      approximation, and the result is exact.
      If re-entry occurred but the accumulator is
      stable, meaning no new facts were added, the
      accumulator equals
      $\lfp(T_P)$ restricted to the reachable
      queries, since it is a sound pre-fixed point and
      the least fixed point is the least pre-fixed point.

    \item \textbf{Sound failure.}
      When the iteration does not stabilize,
      Algorithm~\ref{alg:tabled} does not return; it never
      returns a wrong answer.
  \end{enumerate}
  \fi
\end{theorem}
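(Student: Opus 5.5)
The plan is to prove a single invariant of Algorithm~\ref{alg:tabled} and read the four clauses off it. The invariant is: at every moment of every round, $\mathit{approx}[q] \subseteq \lfp(T_P)(q)$ for every query $q$. I will prove it by induction on the sequence of assignments to $\mathit{approx}$. Initially every entry is empty. The only write is $\mathit{approx}[q] \gets \mathit{approx}[q] \cup \mathit{produced}$. Here $\mathit{produced}$ is the collection of instances of $q$'s right-hand side, where every sub-query fact is read through \RESOLVE{}. By an inner induction on the call stack, each value \RESOLVE{} returns is some current $\mathit{approx}[q']$. This holds for a cache hit, for a re-entry hit, and after an update. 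So $\mathit{produced} \subseteq T_P(M)(q)$ for a map $M$ with $M \subseteq \lfp(T_P)$ pointwise. Monotonicity (Lemma~\ref{lem:monotonicity}) gives $T_P(M) \subseteq T_P(\lfp(T_P)) = \lfp(T_P)$. The two conventions of Definition~\ref{def:tp-operator} matter at this step. The closure $\bases^*$ is taken over the $\bases$ relation assembled from the values that were read. The $\this$ equation is read per singleton frontier through additivity. Together they ensure that the algorithm computes exactly instances of $T_P$ and nothing more. This yields clause~1. Each returned fact was produced by a right-hand side, and it lies in $\lfp(T_P)$. It also yields clause~4, since the algorithm returns only $\mathit{approx}[q_0]$, which the invariant covers.

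Clause~2 is immediate from the code. $\mathit{approx}$ is written only by union, and it is never reset between rounds; only $\mathit{cache}$, $\mathit{stack}$ and $\mathit{changed}$ are reset.

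Clause~3 is the main obstacle. I would first fix $R$, the set of queries on which \RESOLVE{} is invoked in the final round. This set is closed under dependency edges, because \RESOLVE{} evaluates every sub-query read by each instance. So Lemma~\ref{lem:dependency-restriction} lets me work with $T_P{\restriction}R$. In the final round $\mathit{changed}$ stays false, so for each $q \in R$ the produced set is contained in the value of $\mathit{approx}[q]$ at the start of the round. Because nothing grew during that round, every read saw the same map $A$, the frozen value of $\mathit{approx}$ restricted to $R$. Hence $T_P{\restriction}R(A) \subseteq A$, so $A$ is a pre-fixed point. Knaster--Tarski then gives $\lfp(T_P{\restriction}R) \subseteq A$. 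The invariant gives the reverse inclusion, and Lemma~\ref{lem:dependency-restriction} identifies $\lfp(T_P{\restriction}R)$ with $\lfp(T_P){\restriction}R$. So $A = \lfp(T_P){\restriction}R$.

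The subtle point is the claim that ``all reads saw the same $A$''. I would handle it by arguing that a read occurring before a later write in the same round would have to precede a strict growth, and a strict growth sets $\mathit{changed}$. The no-re-entry subcase is a special case of this argument. There, each query's value is computed once from values that are already final, by well-founded recursion on the acyclic part of the dependency graph that was explored.

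For an infinite $G$, the argument only ever uses the dependency-closed set $R$ that was actually visited. If no round stabilizes, the loop simply does not exit, which is clause~4 again.
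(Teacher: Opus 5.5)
Your proposal follows the paper's proof essentially step for step (the invariant $\mathit{approx} \subseteq \lfp(T_P)$ by induction on assignments via Lemma~\ref{lem:monotonicity} and $T_P(\lfp(T_P)) = \lfp(T_P)$, clause~2 read off the code, clause~3 via Lemma~\ref{lem:dependency-restriction} plus a pre-fixed point and Knaster--Tarski, clause~4 from the exit condition), and your argument that a round ending with $\mathit{changed}$ still false reads one frozen snapshot $A$ usefully tightens the paper's bare ``$A = A \cup T_P^R(A)$''. One caveat, which the paper's own wording shares: the queries visited in the final round are closed only under the reads actually performed at $A$, not under the edges of $G$ (which quantify over all query--answer maps), so rather than citing Lemma~\ref{lem:dependency-restriction} verbatim you should show $T_P^i(\varnothing){\restriction}R \subseteq A$ by induction on~$i$---by instance-wise positivity, every sub-query read for a query of $R$ under a map lying below $A$ on $R$ is one the final round itself read---and then take the union using Lemma~\ref{lem:kleene-iteration}.
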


\begin{proof}
\ifInheritanceChinese
  对于~(1),对算法~\ref{alg:tabled} 中向 $\mathit{approx}$ 的赋值序列作归纳。
  初始时每个答案集为空,包含于 $\lfp(T_P)$。每次赋值把 $q$ 的方程诸实例的产出
  并入 $\mathit{approx}[q]$,而这些实例读取的子查询事实来自 $\mathit{approx}$
  更早的状态,由归纳假设均属 $\lfp(T_P)$;由单调性(引理~\ref{lem:monotonicity})
  与 $T_P(\lfp(T_P)) = \lfp(T_P)$,产出亦属 $\lfp(T_P)$。
  故重入调用所读的每个事实、以及每个返回的事实,都属于 $\lfp(T_P)$。

  对于~(2),实现通过 $A \leftarrow A \cup T_P(A)$ 进行积累,这对~$A$ 是单调的。

  对于~(3),当没有重入时,求值是有限无环可达子图上的普通记忆化递归,
  因此每个子查询都被精确计算,不涉及任何近似。
  当发生重入时,设 $R \subseteq \mathcal{Q}$ 为从初始查询可达的查询集合,
  设 $T_P^R = T_P{\restriction}R$ 表示 $T_P$ 到~$R$ 上映射的限制。
  $R$ 对 $G$ 的依赖边封闭(可达查询所读的每个子查询本身也可达),
  故由依赖限制引理(引理~\ref{lem:dependency-restriction}),
  $\lfp(T_P^R)$ 等于 $\lfp(T_P)$ 在 $R$ 上的限制。
  若累加器稳定,即 $A = A \cup T_P^R(A)$,则
  $T_P^R(A) \subseteq A$,从而~$A$ 是~$T_P^R$ 的一个前不动点。

  由~(1),曾被添加到累加器中的每个事实均由某次右侧求值产生,
  因此属于 $\lfp(T_P^R)$。
  故 $A \subseteq \lfp(T_P^R)$。

  反之,由 Knaster--Tarski 将最小不动点刻画为最小前不动点,
  $\lfp(T_P^R)$ 包含于~$T_P^R$ 的每个前不动点中,特别地包含于~$A$ 中。
  从而 $\lfp(T_P^R) \subseteq A$。

  因此 $A = \lfp(T_P^R)$,即稳定的累加器等于可达查询上的精确指称。

  对于~(4),算法~\ref{alg:tabled} 的循环仅在某一轮没有任何重算产出新事实后退出,
  故不稳定的运行不返回;结合~(1),返回的答案不会是错的。实现可以另行给轮数设上限、
  并在越界时抛出错误;那属于制品的行为,不属于本定理。
\else
  For~(1), induction on the sequence of assignments to
  $\mathit{approx}$ in Algorithm~\ref{alg:tabled}. Initially every
  answer set is empty, so contained in $\lfp(T_P)$. Each assignment
  merges into $\mathit{approx}[q]$ the values produced by instances
  of $q$'s equation whose sub-query facts were read from earlier
  states of $\mathit{approx}$, all in $\lfp(T_P)$ by the induction
  hypothesis; by monotonicity (Lemma~\ref{lem:monotonicity}) and
  $T_P(\lfp(T_P)) = \lfp(T_P)$, the produced values lie in
  $\lfp(T_P)$ as well. Hence every fact a re-entrant call reads, and
  every fact returned, lies in $\lfp(T_P)$.

  For~(2), the implementation accumulates via
  $A \leftarrow A \cup T_P(A)$, which is monotone in~$A$.

  For~(3), when no reentry occurs, the evaluation is
  ordinary memoised recursion on a finite acyclic
  reachable subgraph, so every sub-query is computed
  exactly and no approximation is involved.
  When reentry occurs, let $R \subseteq \mathcal{Q}$ be
  the set of queries reachable from the initial query,
  and let $T_P^R = T_P{\restriction}R$ denote the
  restriction of $T_P$ to maps on~$R$.
  $R$ is closed under the dependency edges of $G$
  (every sub-query a reachable query reads is itself
  reachable), so by the Dependency Restriction Lemma
  (Lemma~\ref{lem:dependency-restriction}) $\lfp(T_P^R)$
  equals $\lfp(T_P)$ restricted to $R$.
  If the accumulator is stable, that is,
  $A = A \cup T_P^R(A)$, then
  $T_P^R(A) \subseteq A$, so $A$ is a pre-fixed point
  of~$T_P^R$.

  By~(1), every fact ever added to the accumulator was
  produced by some right-hand side evaluation and
  therefore belongs to $\lfp(T_P^R)$.
  Hence $A \subseteq \lfp(T_P^R)$.

  Conversely, by the Knaster--Tarski characterization
  of the least fixed point as the least pre-fixed
  point, $\lfp(T_P^R)$ is contained in every
  pre-fixed point of~$T_P^R$, in particular in~$A$.
  Thus $\lfp(T_P^R) \subseteq A$.

  Therefore $A = \lfp(T_P^R)$, that is, the stable
  accumulator equals the exact denotation on the
  reachable queries.

  For~(4), the loop of Algorithm~\ref{alg:tabled} exits only after a
  round in which no recomputation produced a new fact, so a
  non-stabilizing run does not return; with~(1), no returned answer
  is wrong. An implementation may additionally bound the number of
  rounds and raise an error at the bound; that behavior belongs to
  the artifact, not to this theorem.
\fi
\end{proof}

\paragraph{\bilingual{Termination}{终止性}}
\ifInheritanceChinese
求值是否终止取决于程序:
\begin{itemize}
  \item \textbf{无环、有限可达子图。}
    一趟即足;求值终止。
  \item \textbf{有环、有限可达子图。}
    当可达查询只有有限多个且每个的答案域有限时,格是有限的,
    迭代在有限多趟内到达不动点~\cite{bancilhon1986-recursive-query-strategies}。
    这是 Datalog 的情形(第~\ref{sec:recursive-rules} 节)。
  \item \textbf{无穷可达子图。}
    求值可能发散;这是图灵完备性的固有属性,而非求值策略的缺陷。
\end{itemize}
\else
Whether evaluation terminates depends on the
program:
\begin{itemize}
  \item \textbf{Acyclic, finite reachable subgraph.}
    One pass suffices; evaluation terminates.
  \item \textbf{Cyclic, finite reachable subgraph.}
    When finitely many queries are reachable and each has
    a finite answer domain, the lattice is finite and
    iteration reaches the fixed point in finitely many
    passes~\cite{bancilhon1986-recursive-query-strategies}.
    This is the Datalog case
    (Section~\ref{sec:recursive-rules}).
  \item \textbf{Infinite reachable subgraph.}
    Evaluation may diverge; this is inherent to
    Turing completeness and not a defect of the
    evaluation strategy.
\end{itemize}
\fi

\section{\bilingual{Algebraic Properties of Deep Merge}{深度合并的代数性质}}
\label{app:merge-algebra}

\ifInheritanceChinese
摘要宣称合成天然交换、幂等、结合。本附录定义所指的合成并给出其形式内容:
前两条性质对任何文档按定义成立;结合性由典范构造的记账证明验证
(定理~\ref{thm:merge-associative},在一个保留标签下重组三个 mixin)。
\else
The abstract claims that composition is inherently commutative,
idempotent, and associative. This appendix defines the composition
the claim refers to and gives it formal content: the first two
properties hold by definition, for any document; associativity is
verified by a bookkeeping proof of the canonical construction
(Theorem~\ref{thm:merge-associative}), which regroups three mixins
under a reserved label.
\fi

\begin{definition}[\bilingual{Deep merge}{深度合并}]\label{def:merge}
  \ifInheritanceChinese
  设路径 $p_1, \ldots, p_k$ 各指向同一外围作用域中命名的 mixin。这些 mixin 的\emph{深度合并}
  就是 mixin 值 \mintinline[escapeinside=||]{yaml}{[[|$p_1$|], |$\ldots$|, [|$p_k$|]]}(竖线内为元语言拼接):
  一个恰好携带指向诸基的继承引用、不携带任何自有定义的 mixin。
  这就是摘要与第~\ref{sec:introduction}~节所说的\emph{合成}:该 mixin
  经由方程~(\ref{eq:overrides})的深度合并同时继承全部诸基。
  下文称这类 mixin 为\emph{包装}。
  \else
  Let the paths $p_1, \ldots, p_k$ each point to a mixin named in a
  common enclosing scope. The \emph{deep merge} of those mixins is
  the mixin value
  \mintinline[escapeinside=||]{yaml}{[[|$p_1$|], |$\ldots$|, [|$p_k$|]]}
  (the bars splice metalanguage into the object syntax):
  a mixin carrying exactly the inheritance references to the bases
  and no own definitions.
  This is the \emph{composition} of the abstract and
  Section~\ref{sec:introduction}: that mixin inherits all the bases
  at once, through the deep merge of
  equation~(\ref{eq:overrides}). We call such a mixin a
  \emph{wrapper} below.
  \fi
\end{definition}

\begin{lemma}[\bilingual{Commutativity and idempotence}{交换性与幂等性}]
  \label{lem:merge-commutative-idempotent}
  \ifInheritanceChinese
  任何文档的语义在每个节点的引用收集的置换与去重下不变。特别地,
  \mintinline[escapeinside=||]{yaml}{[[|$p_1$|], [|$p_2$|]]} 与
  \mintinline[escapeinside=||]{yaml}{[[|$p_2$|], [|$p_1$|]]} 是同一文档;
  \mintinline[escapeinside=||]{yaml}{[[|$p_1$|], [|$p_1$|]]} 与
  \mintinline[escapeinside=||]{yaml}{[[|$p_1$|]]} 也是同一文档。
  \else
  The semantics of any document is invariant under permuting and
  deduplicating the reference collection at every node. In
  particular,
  \mintinline[escapeinside=||]{yaml}{[[|$p_1$|], [|$p_2$|]]} and
  \mintinline[escapeinside=||]{yaml}{[[|$p_2$|], [|$p_1$|]]} are the
  same document, and
  \mintinline[escapeinside=||]{yaml}{[[|$p_1$|], [|$p_1$|]]} and
  \mintinline[escapeinside=||]{yaml}{[[|$p_1$|]]} are the same
  document.
  \fi
\end{lemma}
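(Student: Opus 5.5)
The proof rests on one observation: the AST enters the semantics only through the single query $\inherits$ of~\eqref{eq:primitives}, together with the AST-local predicate $\dom(\inherits)$. All five semantic equations~(\ref{eq:supers})--(\ref{eq:this}) read the document only through these, as Definition~\ref{def:dependency-graph} fixes the AST by fixing $\inherits$. So I will first show that permuting or deduplicating the reference members of any node leaves $\inherits$ and $\dom(\inherits)$ unchanged. I will then conclude that $T_P$ is literally the same operator. Its least fixed point (Theorem~\ref{thm:well-defined}) is therefore the same, and so is every component $\supers$, $\overrides$, $\bases$, $\resolve$, $\this$.

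\emph{Step one: the helpers.} Let the node at $p$ have member sequence $\items(\at(p))$, and let $\sigma$ be a permutation or duplicate-removal acting on its reference members, the $\seqtag$ entries. Definition members are untouched, so $\has(p,\ell)$, $\at(p\snoc\ell)$, and hence $\dom(\inherits)$ are unchanged. Duplicate keys could only arise among definitions, so well-formedness case~(2) is unaffected. The value $\inherits(p)=\Set{\indexed(p,\items(r)) | r\in\items(\at(p))}$ is a set comprehension over the members. The right side of such a comprehension is insensitive both to the order in which the $r$ are enumerated and to multiplicity. Moreover each $\indexed(p,\items(r))$ depends only on $p$ and on $r$ itself: $\lexical$ and $\qualified$ read $\has$ and $\last$ at enclosing paths, which step one has already shown unchanged. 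So $\inherits(p)$ is preserved at $p$. At every other path the written node is the same, so $\inherits$ is preserved there too.

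\emph{Step two: the semantics.} The operator $T_P$ of Definition~\ref{def:tp-operator} is built from $\inherits$, $\dom(\inherits)$, $\init$, $\last$ and positive comprehensions. With these data fixed, the query set $\mathcal{Q}$, the lattice, and $T_P$ coincide for the two documents. The least fixed points are therefore equal, and each semantic function takes identical values at every query.

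The particular instances then follow directly. In \mintinline[escapeinside=||]{yaml}{[[|$p_1$|], [|$p_2$|]]} versus \mintinline[escapeinside=||]{yaml}{[[|$p_2$|], [|$p_1$|]]}, the wrapper's two reference members are swapped. In \mintinline[escapeinside=||]{yaml}{[[|$p_1$|], [|$p_1$|]]} the wrapper holds a duplicated reference. In both cases $\inherits$ at the wrapper is $\{\indexed(\cdot,p_1),\indexed(\cdot,p_2)\}$ or $\{\indexed(\cdot,p_1)\}$ respectively, equal on each side. That is the sense of ``the same document'': only $\inherits$ is observable.

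The main obstacle is the subtle point that the YAML AST is a sequence, so one might worry that some helper reads position. The key check is that $\items$ is only ever consumed through membership ($\in$) in~\eqref{eq:helpers}, and that no de~Bruijn computation ($\lexical$, $\qualified$) inspects sibling order. I will verify this clause by clause, noting that $\has$ and $\at$ use existential membership; $\at$ is singular only by well-formedness case~(2), which concerns definitions, not references.
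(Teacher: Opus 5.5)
Your proposal is correct and follows the paper's route. Because $\inherits(p)$ is a set comprehension over the members of $\at(p)$, and $\dom(\inherits)$ depends only on definition members, permuting or deduplicating references leaves $\inherits$ literally unchanged, so $T_P$ and hence $\lfp(T_P)$ coincide for the two documents; your clause-by-clause check of $\has$, $\at$, $\lexical$, and $\qualified$ simply spells out what the paper's short proof asserts in one line.
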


\begin{proof}
\ifInheritanceChinese
  按方程~\eqref{eq:helpers},$\inherits : \Path \rightharpoonup
  \mathcal{P}(\Reference)$ 的陪域是引用之\emph{集}:两份表面文档若每节点的
  属性定义相同、引用仅在次序或重复上有别,则它们的 $\inherits$(与
  $\dom(\inherits)$、各标量值)字面相同。直接后果算子 $T_P$
  (定义~\ref{def:tp-operator})只经由 $\inherits$ 及其派生的语法辅助读取
  AST,故两份文档的 $T_P$ 相同,$\lfp(T_P)$(定理~\ref{thm:well-defined})
  亦相同。
\else
  By equation~\eqref{eq:helpers}, the codomain of
  $\inherits : \Path \rightharpoonup \mathcal{P}(\Reference)$ is a
  \emph{set} of references: two surface documents with the same
  property definitions at every node whose references differ only in
  order or repetition have literally the same $\inherits$ (and the
  same $\dom(\inherits)$ and scalar values). The immediate
  consequence operator $T_P$ (Definition~\ref{def:tp-operator}) reads
  the AST only through $\inherits$ and its derived syntactic helpers,
  so the two documents have the same $T_P$ and hence the same
  $\lfp(T_P)$ (Theorem~\ref{thm:well-defined}).
\fi
\end{proof}

\begin{theorem}[\bilingual{Associativity}{结合性}]\label{thm:merge-associative}
  \ifInheritanceChinese
  设路径 $p_1$、$p_2$、$p_3$ 各指向文档中写下的 mixin(三条路径存在,
  第~\ref{sec:definitions}~节),其首标签均定义于文档顶层。取一个保留标签
  \mintinline{yaml}{__merge}:规定除下示定义外,它不书写于文档任何他处,
  既不作定义键,也不出现于任何引用(有限文档总留有这样的标签可选;
  双下划线前缀是此类保留名的书写约定)。在顶层将五个包装命名于其下,
  五个包装标签取与 $p_1$、$p_2$、$p_3$ 的首标签互异者:
  \begin{minted}[escapeinside=||]{yaml}
__merge:
- LeftPair:
  - [|$p_1$|]
  - [|$p_2$|]
- RightPair:
  - [|$p_2$|]
  - [|$p_3$|]
- mergedLeft:
  - [LeftPair]
  - [|$p_3$|]
- mergedRight:
  - [|$p_1$|]
  - [RightPair]
- mergedFlat:
  - [|$p_1$|]
  - [|$p_2$|]
  - [|$p_3$|]
  \end{minted}
  则对每条路径后缀 $w$,三条路径
  $(\text{``\_\_merge''}, \text{``mergedLeft''}) \cat w$、
  $(\text{``\_\_merge''}, \text{``mergedRight''}) \cat w$、
  $(\text{``\_\_merge''}, \text{``mergedFlat''}) \cat w$ 处
  $\supers$ 的 override 分量中落在保留字子树\emph{之外}
  (即不以 $(\text{``\_\_merge''})$ 为前缀)的路径的集合相同;特别地,
  三条路径的存在性(即其 $\supers$ 非空,第~\ref{sec:definitions}~节)一致。
  \else
  Let the paths $p_1$, $p_2$, $p_3$ each point to a mixin written in
  the document (the three paths exist,
  Section~\ref{sec:definitions}), each with its first label defined
  at the top level. Take a reserved label \mintinline{yaml}{__merge}:
  we stipulate that outside the definition displayed below it is
  written nowhere in the document, neither as a definition key nor
  inside any reference (a finite document always leaves such a label
  available; the double-underscore prefix is the spelling convention
  for such reserved names). Name the five wrappers under it at the
  top level, choosing the five wrapper labels distinct from the
  first labels of $p_1$, $p_2$, $p_3$:
  \begin{minted}[escapeinside=||]{yaml}
__merge:
- LeftPair:
  - [|$p_1$|]
  - [|$p_2$|]
- RightPair:
  - [|$p_2$|]
  - [|$p_3$|]
- mergedLeft:
  - [LeftPair]
  - [|$p_3$|]
- mergedRight:
  - [|$p_1$|]
  - [RightPair]
- mergedFlat:
  - [|$p_1$|]
  - [|$p_2$|]
  - [|$p_3$|]
  \end{minted}
  Then for every path suffix $w$, the sets of those override
  components of $\supers$ that lie \emph{outside} the reserved-word
  subtree (that is, do not have $(\text{``\_\_merge''})$ as a
  prefix) agree at the three paths
  $(\text{``\_\_merge''}, \text{``mergedLeft''}) \cat w$,
  $(\text{``\_\_merge''}, \text{``mergedRight''}) \cat w$, and
  $(\text{``\_\_merge''}, \text{``mergedFlat''}) \cat w$; in
  particular the existence (non-empty $\supers$,
  Section~\ref{sec:definitions}) of the three paths agrees.
  \fi
\end{theorem}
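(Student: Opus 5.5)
The plan is to compute $\supers$ at the three wrapper paths directly from equations~(\ref{eq:supers})--(\ref{eq:this}). I would show that the three computations coincide once the wrapper prefix is renamed. Write $W_L$, $W_R$, $W_F$ for $(\text{``\_\_merge''}, \text{``mergedLeft''})$, $(\text{``\_\_merge''}, \text{``mergedRight''})$ and $(\text{``\_\_merge''}, \text{``mergedFlat''})$. Let $\varphi$ be the map on paths that rewrites a leading $W_L$ or $W_R$ to $W_F$ and fixes every other path. The invariant to prove is stronger than the statement. For every suffix $w$, the set of $\supers$ facts at $W_X \cat w$ has the same image under $\varphi$ for $X \in \{L,R,F\}$, after discarding pairs whose override component lies in the reserved subtree. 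The site component of each pair is kept but renamed by $\varphi$. The statement then follows by projecting to override components, and existence follows by Lemma~\ref{lem:path-existence}.

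The first step is the base layer $w = ()$. Since \mintinline{yaml}{__merge} is referenced nowhere and carries no references, $\supers((\text{``\_\_merge''})) = \{((),(\text{``\_\_merge''}))\}$. From this, $\overrides(W_X) = \{W_X\}$, and likewise for \mintinline{yaml}{LeftPair} and \mintinline{yaml}{RightPair}. Each written reference $[p_i]$ inside a wrapper has lexical index $1$, because its first label is defined at the top level but not among the five wrapper labels. Hence $\this(\{(\text{``\_\_merge''})\}, (\text{``\_\_merge''}), 1) = \{()\}$ and the reference resolves to exactly $p_i$. The references $[\text{LeftPair}]$ and $[\text{RightPair}]$ have index $0$ and resolve to the sibling wrapper. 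Therefore $\bases^*(W_X)$ consists of some reserved wrapper paths together with $\bases^*(p_1)\cup\bases^*(p_2)\cup\bases^*(p_3)$. The reserved paths contribute only reserved overrides, namely themselves. The non-reserved pairs $(\init(b), o)$ with $b \in \bigcup_i \bases^*(p_i)$ do not depend on the grouping, and neither path nor site lies under a $W_X$, so $\varphi$ is the identity on them.

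The second step is induction on $|w|$. I would actually run the induction on Kleene rank via Lemma~\ref{lem:kleene-iteration}, with Lemma~\ref{lem:dependency-restriction} used to isolate the queries involved. Overrides first: in $\overrides(W_X \cat w \snoc \ell)$, the only branches from the reserved subtree are wrapper paths and their extensions, and these write no definitions. So the overrides are $W_X \cat w \snoc \ell$ itself, present iff the set below is nonempty, together with $\{b \snoc \ell : b \text{ a non-reserved branch},\ b \snoc \ell \in \dom(\inherits)\}$. By the induction hypothesis this set is grouping-independent, and the self entry is matched by $\varphi$. Bases next: each reference written at a non-reserved override $o$ is resolved with site $W_X \cat w$. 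Its $\this$-walk matches override components equal to prefixes of $\init(o)$, which are non-reserved. So at each step the frontier is obtained from $\varphi$-corresponding $\supers$ facts, and the resulting targets correspond under $\varphi$.

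The main obstacle I expect is exactly this late-bound $\this$. Self-references inside $p_i$ climb back to sites of the form $W_X \cat u$, so the targets genuinely differ between the wrappers, and the induction cannot be on override components alone. This is why the invariant must carry the site components and the renaming $\varphi$, and why it must also cover arbitrary frontier paths, not only $W_X \cat w$. I would first try to prove one simultaneous statement: for every query whose arguments are non-reserved or $W$-prefixed paths, the $\lfp(T_P)$ answers at the $X$-instance and the $F$-instance correspond under $\varphi$ after discarding reserved overrides. I would prove it by induction on Kleene rank, using last-step inversion to handle each equation in turn. Along the way I need to check that no $\this$-walk ever produces a \mintinline{yaml}{LeftPair}- or \mintinline{yaml}{RightPair}-prefixed site from a non-reserved override. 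That holds because those wrappers are reached only as bases with site $(\text{``\_\_merge''})$, which the walk from a non-reserved definition site never matches.
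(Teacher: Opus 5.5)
Your proposal is correct and is essentially the paper's proof. It uses the same direct computation of the ground layer, the same insulation of the reserved subtree, and the same simulation under a suffix-preserving renaming $\varphi$ by strong induction on Kleene rank with last-step inversion, with existence read off through Lemma~\ref{lem:path-existence}.

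The only difference is bookkeeping. The paper's $\varphi$ sends both synthetic roots of a side (the merged wrapper and its pair wrapper) onto the other merged wrapper, and it proves one-directional clauses for each ordered pair. That absorbs the fact that a single $\bases$ step does not correspond at the roots: $\bases$ of $\mathtt{mergedLeft}$ is $\{\mathtt{LeftPair}, p_3\}$, whereas that of $\mathtt{mergedFlat}$ is $\{p_1,p_2,p_3\}$. You instead normalize to $\mathtt{mergedFlat}$, discard reserved overrides, and handle this case by your explicit $w=()$ computation. So phrase your simultaneous Kleene-rank claim in terms of $\bases^*$-reach, or restrict it to $|w|\ge 1$, rather than stating it for every single $\bases$ query.
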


\begin{proof}
\ifInheritanceChinese
  全程只用五条方程直接计算,配对 Kleene 秩的强归纳与末步反演
  (引理~\ref{lem:kleene-iteration})。

  \paragraph{\bilingual{}{基链展平}}
  记保留根 $p_{\mathrm{reserved}} = (\text{``\_\_merge''})$。在顶层,
  $\supers(()) = \{((),\, ())\}$ 是方程~(\ref{eq:supers})的基例,故
  $\overrides(p_{\mathrm{reserved}}) = \{p_{\mathrm{reserved}}\}$
  (唯一分支为根);保留 mixin 只含定义、不含引用,
  $\bases(p_{\mathrm{reserved}}) = \varnothing$,于是
  $\supers(p_{\mathrm{reserved}}) = \{((),\, p_{\mathrm{reserved}})\}$。
  对每个包装标签 $X$,
  $\overrides(p_{\mathrm{reserved}} \snoc X) =
  \{p_{\mathrm{reserved}} \snoc X\}$:唯一分支是 $p_{\mathrm{reserved}}$,
  两个 override 候选重合。包装引用的解析:
  \mintinline{yaml}{[LeftPair]} 型引用的首标签命名于同一保留作用域,
  de~Bruijn 索引 $n = 0$,目标
  $p_{\mathrm{reserved}} \snoc \text{``LeftPair''}$;
  \mintinline[escapeinside=||]{yaml}{[|$p_i$|]} 型引用的首标签定义于顶层,
  索引 $n = 1$,其一步走步恰消费共享对 $((),\, p_{\mathrm{reserved}})$,
  目标 $p_i$。于是
  $\bases(p_{\mathrm{reserved}} \snoc \text{``mergedLeft''}) =
  \{p_{\mathrm{reserved}} \snoc \text{``LeftPair''},\; p_3\}$、
  $\bases(p_{\mathrm{reserved}} \snoc \text{``LeftPair''}) =
  \{p_1,\, p_2\}$,故三条合并路径的 $\bases^*$ 都含三个基
  $p_1$、$p_2$、$p_3$;三者在三个基之外的差异只在包装自身:包装无自有定义,
  其自反项对任何标签 $\ell$ 的守卫 $p \snoc \ell \in \dom(\inherits)$
  均不满足,故包装对 $\supers$ 的书写 override 分量没有贡献。

  \paragraph{\bilingual{}{结构事实}}
  称保留字下的五个包装路径 $p_{\mathrm{reserved}} \snoc X$ 为\emph{合成根},
  合成根的延伸(含合成根自身,后缀 $w$ 可空)为\emph{合成路径};保留根 $p_{\mathrm{reserved}}$ 与上段算出的
  其诸事实为各合并路径共享;其余路径(书写于三个基的子树内或文档他处)为
  \emph{固定路径}。三条结构事实,各由一次方程计算得出:
  (i) $\inherits$ 在每个合成根处有定义(其书写的引用表),而在合成根的每个
  真延伸 $\sigma \cat w$($|w| \ge 1$)处无定义:包装体只含引用、不含定义,
  故 $|w| = 1$ 时 $\has$ 查询失败,而 $|w| \ge 2$ 时 $\at$ 在
  $\sigma \cat w$ 的某个前缀处已无定义(其下未写任何节点);两种情形都给出
  $\sigma \cat w \notin \dom(\inherits)$。于是合成
  override 不向方程~(\ref{eq:bases})贡献引用,而 $\sigma \cat w \snoc \ell$
  处的自反 override 是否被产出与某书写分支是否合格同判
  (方程~\ref{eq:overrides})。
  (ii) 地面链即上段:三条合并路径的地面链都恰好抵达 $p_1$、$p_2$、$p_3$,
  配对包装至多多出一个不贡献书写 override 的中间链节点。
  (iii) 走步与字面解析的隔离,分三点。其一,每次被消费的 $\this$ 走步,
  其匹配的 override 分量 $p_{\mathrm{def}}$ 是固定路径或保留根,绝非合成
  路径:引用只书写于固定位置与合成根自身,合成根的
  \mintinline{yaml}{[LeftPair]} 型引用索引为 0、无走步,其余引用的走步以
  保留根为匹配目标。其二,走步从不输出保留根为位点:位点分量为
  $p_{\mathrm{reserved}}$ 的 $\supers$ 对只能来自链节点
  $p_{\mathrm{base}} = p_{\mathrm{reserved}} \snoc X$,其 override 分量是
  合成自反项,不等于其一中的任何匹配目标;而共享对
  $((),\, p_{\mathrm{reserved}})$ 的位点是 $()$。其三,固定目标从不落入
  保留字子树:保留根路径长度为 1,$p_{\mathrm{current}} \cat w_{\mathrm{ref}}$
  以之为前缀,要么 $p_{\mathrm{current}}$ 自身以之为前缀(合成一侧,链的
  传承),要么 $p_{\mathrm{current}} = ()$ 且 $w_{\mathrm{ref}}$ 首标签为
  保留字,而保留字不出现于任何引用;由其二,
  $p_{\mathrm{current}} = p_{\mathrm{reserved}}$ 从不发生。故合成路径在
  任何事实中的每次出现都来自链的传承,而非字面解析。

  \paragraph{\bilingual{}{对 Kleene 秩的模拟}}
  固定有序对 $(R, R')$,取值于三条合并路径;其\emph{合成根集}为其地面链上
  的合成根:mergedLeft 侧取
  $\{p_{\mathrm{reserved}} \snoc \text{``mergedLeft''},\;
  p_{\mathrm{reserved}} \snoc \text{``LeftPair''}\}$,
  mergedRight 侧取
  $\{p_{\mathrm{reserved}} \snoc \text{``mergedRight''},\;
  p_{\mathrm{reserved}} \snoc \text{``RightPair''}\}$,
  mergedFlat 侧取
  $\{p_{\mathrm{reserved}} \snoc \text{``mergedFlat''}\}$。定义映射
  $\varphi$:对合成根集中的 $\sigma$,$\varphi(\sigma \cat w) = R' \cat w$;
  固定路径与保留根不动。由 (iii),自 $R \cat w$ 出发的推导只触及合成根集的
  合成路径、保留根与固定路径,故 $\varphi$ 对事实的改写良定义;且 $\varphi$
  保持后缀拼接与 $\init$。注意 $\varphi$ 并非单射:例如 mergedLeft 侧的两个
  合成根都映到同一 $R'$,尽管配对包装的地面可达集($\{p_1, p_2\}$)小于
  合并包装的(全部三基)。这一坍缩是可靠的,因为下面四个条款都只是\emph{单向}
  蕴涵,各自只断言某事实族在 $R'$ 侧\emph{可导出};而由 (ii),$R'$ 作为顶层
  合并包装,其地面链抵达全部三个基,是每个合成根地面可达集的超集,且诸方程的
  右侧全为正存在式推导,可达集扩大只会增加可导出事实,故经较窄中间节点产出的
  每条事实在 $R'$ 侧都有同样的分支可用。固定路径处的事实其推导不触及合成路径(其链与走步
  只经固定位置与保留根,(iii)),各合并路径之间字面共享。对诸事实的最大 Kleene 秩 $m$ 作强归纳
  (降秩全部来自末步反演,引理~\ref{lem:kleene-iteration}),同时证四个条款:

  (C1) 秩 $\le m$ 的 $v \in \overrides(\sigma \cat w)$ 蕴涵
  $\varphi(v) \in \overrides(R' \cat w)$ 可导出。$w = ()$:两侧均为经
  共享保留根分支产出的自反单点集,$\varphi(\sigma) = R'$。
  $w \snoc \ell$:由 (i) 合格分支
  全为书写路径 $p_{\mathrm{branch}}$;反演暴露的对
  $(\_,\; p_{\mathrm{branch}}) \in \supers(\sigma \cat w)$ 秩更小,
  (C3) 给出 $R'$ 侧同一书写分支,产出同一书写 override
  $p_{\mathrm{branch}} \snoc \ell$ 与自反 override
  $R' \cat w \snoc \ell = \varphi(\sigma \cat w \snoc \ell)$。

  (C2) 每条自 $\sigma \cat w$ 出发、逐跳事实秩 $\le m$ 的 $\bases^*$ 链
  抵达 $q$,蕴涵 $R'$ 侧存在自 $R' \cat w$ 抵达 $\varphi(q)$ 的链。
  对链长作内层归纳,逐跳映射后拼接。地面跳($w = ()$)由 (ii) 显式核对:
  两侧地面链均恰抵三个基,像中源与目标重合的退化跳删去,一跳可展为经
  配对包装的两跳。深处跳($|w| \ge 1$):由 (i),跳来自某书写 override
  携带的引用在位点 $\sigma \cat w$ 的解析;反演得走步事实,秩更小,
  (C4) 给出同一引用在位点 $R' \cat w$ 的走步,目标
  $\varphi(p_{\mathrm{current}}) \cat w_{\mathrm{ref}}
  = \varphi(p_{\mathrm{current}} \cat w_{\mathrm{ref}})$
  ($\varphi$ 保持后缀拼接;固定的 $p_{\mathrm{current}}$ 的拼接不落入
  保留字子树,结构事实 (iii))。

  (C3) 秩 $\le m$ 的 $(s,\; v) \in \supers(\sigma \cat w)$ 蕴涵
  $(\varphi(s),\; \varphi(v)) \in \supers(R' \cat w)$ 可导出。反演得链
  $p_{\mathrm{base}} \in \bases^*(\sigma \cat w)$ 与
  $v \in \overrides(p_{\mathrm{base}})$,秩均更小;(C2) 给出 $R'$ 侧抵达
  $\varphi(p_{\mathrm{base}})$ 的链,(C1)(合成)或字面共享(固定)给出
  $\varphi(v) \in \overrides(\varphi(p_{\mathrm{base}}))$;位点分量
  $\varphi(\init(p_{\mathrm{base}})) = \init(\varphi(p_{\mathrm{base}}))$。

  (C4) 秩 $\le m$ 的走步事实 $\this(\{s\},\; p_{\mathrm{def}},\; n)$ 产出
  位点 $t$,蕴涵 $\this(\{\varphi(s)\},\; p_{\mathrm{def}},\; n)$ 产出
  $\varphi(t)$。$n = 0$ 即 $\varphi$ 逐点。$n > 0$:反演得被消费的对
  $(s',\; p_{\mathrm{def}}) \in \supers(s)$ 与余下 $n - 1$ 步,秩均更小;
  由 (iii) $p_{\mathrm{def}}$ 为固定路径或保留根、$\varphi$ 不动之;
  合成 $s$ 用 (C3)、固定或保留 $s$ 用字面共享得 $R'$ 侧的对
  $(\varphi(s'),\; p_{\mathrm{def}})$,余下走步用归纳假设。

  \paragraph{\bilingual{}{收尾}}
  对每个 $w$,把 (C3) 分别用于有序对 $(R, R')$ 与 $(R', R)$:保留字子树外
  的路径在 $\varphi$ 下不动,而合成路径的像仍在子树内,故
  $\supers(R \cat w)$ 与 $\supers(R' \cat w)$ 的 override 分量中子树外
  路径的集合互相包含,即相同——陈述第一部分。存在性部分:由路径存在性引理
  (引理~\ref{lem:path-existence}),$R \cat w \snoc \ell$ 的存在性等价于
  $\supers(R \cat w)$ 的某分支满足
  $p_{\mathrm{branch}} \snoc \ell \in \dom(\inherits)$;合格分支绝非合成
  路径(包装体不定义任何标签,合成路径的任何延伸都不在 $\dom(\inherits)$,
  结构事实 (i)),故该判据只读 $R \cat w$ 处子树外的 override 集,
  而后者三边相同。
\else
  Throughout, we compute directly with the five equations, combined
  with a strong induction on Kleene rank and last-step inversion
  (Lemma~\ref{lem:kleene-iteration}).

  \paragraph{Chain flattening}
  Write $p_{\mathrm{reserved}} = (\text{``\_\_merge''})$ for the
  reserved root. At the top level,
  $\supers(()) = \{((),\, ())\}$ is the base case of
  equation~(\ref{eq:supers}), so
  $\overrides(p_{\mathrm{reserved}}) = \{p_{\mathrm{reserved}}\}$
  (the only branch is the root); the reserved mixin carries
  definitions and no references, so
  $\bases(p_{\mathrm{reserved}}) = \varnothing$ and
  $\supers(p_{\mathrm{reserved}}) =
  \{((),\, p_{\mathrm{reserved}})\}$. For each wrapper label $X$,
  $\overrides(p_{\mathrm{reserved}} \snoc X) =
  \{p_{\mathrm{reserved}} \snoc X\}$: the only branch is
  $p_{\mathrm{reserved}}$, and both override candidates coincide.
  Resolving the wrapper references: a
  \mintinline{yaml}{[LeftPair]}-style reference names a label of the
  same reserved scope, with de~Bruijn index $n = 0$ and target
  $p_{\mathrm{reserved}} \snoc \text{``LeftPair''}$; a
  \mintinline[escapeinside=||]{yaml}{[|$p_i$|]}-style reference
  names a label defined at the top level, with index $n = 1$, and
  its one walk step consumes exactly the shared pair
  $((),\, p_{\mathrm{reserved}})$, with target $p_i$. Hence
  $\bases(p_{\mathrm{reserved}} \snoc \text{``mergedLeft''}) =
  \{p_{\mathrm{reserved}} \snoc \text{``LeftPair''},\; p_3\}$ and
  $\bases(p_{\mathrm{reserved}} \snoc \text{``LeftPair''}) =
  \{p_1,\, p_2\}$, so the $\bases^*$ of each of the three merged
  paths contains the three bases $p_1$, $p_2$, $p_3$. Beyond the
  three bases the chains differ only in the wrappers themselves; a
  wrapper has no own definitions, so for any label $\ell$ the
  reflexive term's guard $p \snoc \ell \in \dom(\inherits)$ fails,
  and wrappers contribute nothing to the written override
  components of $\supers$.

  \paragraph{Structural facts}
  Call the five wrapper paths $p_{\mathrm{reserved}} \snoc X$ under
  the reserved word \emph{synthetic roots} and their extensions
  (including the roots themselves, with an empty suffix $w$)
  \emph{synthetic paths}; the reserved root $p_{\mathrm{reserved}}$
  and its facts computed above are shared by the three merged paths;
  every other path (written inside the subtrees of the three bases
  or elsewhere in the document) is \emph{fixed}. Three structural
  facts, each one equation computation:
  (i) $\inherits$ is defined at each synthetic root (its written
  reference list) and undefined at every proper extension
  $\sigma \cat w$ with $|w| \ge 1$: a wrapper body carries references
  and no definitions, so at $|w| = 1$ the $\has$ query fails, and at
  $|w| \ge 2$ the query $\at$ is already undefined at some prefix of
  $\sigma \cat w$ (nothing is written below); either way
  $\sigma \cat w \notin \dom(\inherits)$. Hence a synthetic override
  contributes no reference to equation~(\ref{eq:bases}), and the
  reflexive override at $\sigma \cat w \snoc \ell$ is produced
  exactly when some written branch qualifies
  (equation~\ref{eq:overrides}).
  (ii) The ground chains are the previous paragraph: the ground
  chains of each of the three merged paths reach exactly $p_1$,
  $p_2$, $p_3$, with the pair wrapper contributing at most one
  intermediate chain node that carries no written override.
  (iii) Walks and literal resolution are insulated, in three parts.
  First, every $\this$ walk consumed anywhere in the document
  matches pairs whose override component $p_{\mathrm{def}}$ is a
  fixed path or the reserved root, never a synthetic path:
  references are written only at fixed positions and at the
  synthetic roots themselves, a synthetic root's
  \mintinline{yaml}{[LeftPair]}-style references have index~0 and
  take no walk step, and the walks of its remaining references
  match the reserved root. Second, a walk never outputs the
  reserved root as a site: a $\supers$ pair whose site component is
  $p_{\mathrm{reserved}}$ can only come from a chain node
  $p_{\mathrm{base}} = p_{\mathrm{reserved}} \snoc X$, whose
  override component is the synthetic reflexive term, equal to none
  of the matched targets of the first part; and the shared pair
  $((),\, p_{\mathrm{reserved}})$ has site $()$. Third, a fixed
  target never lands inside the reserved-word subtree: the reserved
  root has length one, so
  $p_{\mathrm{current}} \cat w_{\mathrm{ref}}$ has it as a prefix
  only when $p_{\mathrm{current}}$ itself does (the synthetic side,
  chain provenance) or when $p_{\mathrm{current}} = ()$ and the
  first label of $w_{\mathrm{ref}}$ is the reserved word, which is
  written in no reference; and
  $p_{\mathrm{current}} = p_{\mathrm{reserved}}$ never happens, by
  the second part. Hence every occurrence of a synthetic path in
  any fact arises through chain provenance, never through literal
  resolution.

  \paragraph{Simulation by Kleene rank}
  Fix an ordered pair $(R, R')$ ranging over the three merged paths
  and take the \emph{synthetic-root set} of $R$ to be the synthetic
  roots on its ground chains:
  $\{p_{\mathrm{reserved}} \snoc \text{``mergedLeft''},\;
  p_{\mathrm{reserved}} \snoc \text{``LeftPair''}\}$ for the
  mergedLeft side,
  $\{p_{\mathrm{reserved}} \snoc \text{``mergedRight''},\;
  p_{\mathrm{reserved}} \snoc \text{``RightPair''}\}$ for the
  mergedRight side, and
  $\{p_{\mathrm{reserved}} \snoc \text{``mergedFlat''}\}$ for the
  mergedFlat side. Define the map $\varphi$: for $\sigma$ in the
  synthetic-root set, $\varphi(\sigma \cat w) = R' \cat w$; fixed
  paths and the reserved root are unchanged. By~(iii), a derivation
  starting from $R \cat w$ touches only synthetic paths of the
  synthetic-root set, the reserved root, and fixed paths, so
  rewriting a fact along $\varphi$ is well defined; and $\varphi$
  commutes with suffix concatenation and with $\init$. Note that
  $\varphi$ is not injective: both synthetic roots of the mergedLeft
  side, for instance, map onto the same $R'$, although the pair
  wrapper's ground reach ($\{p_1, p_2\}$) is smaller than a merged
  wrapper's (all three bases). This collapse is sound because the
  four clauses below are \emph{one-directional} implications, each
  asserting only that a fact family is \emph{derivable} on the $R'$
  side; and by~(ii) $R'$, a top-level merged wrapper, has ground
  chains reaching all three bases, a superset of every synthetic
  root's ground reach, while the right-hand sides of the equations
  are positive existential comprehensions, so enlarging the reach
  only adds derivable facts: every fact produced through a narrower
  intermediate node has the same branches available on the $R'$
  side. Facts at
  fixed paths have derivations touching no synthetic path (their
  chains and walks pass only through fixed positions and the
  reserved root, by~(iii)) and are literally shared by the three
  merged paths. By strong induction on the
  maximum Kleene rank $m$ of the given facts (every rank decrease
  comes from last-step inversion, Lemma~\ref{lem:kleene-iteration}),
  we prove four clauses simultaneously:

  (C1) $v \in \overrides(\sigma \cat w)$ of rank $\le m$ implies
  $\varphi(v) \in \overrides(R' \cat w)$ is derivable. At $w = ()$
  both sides are the reflexive singletons produced through the
  shared reserved-root branch, and $\varphi(\sigma) = R'$. At
  $w \snoc \ell$: by~(i) the
  qualifying branches are all written paths $p_{\mathrm{branch}}$;
  the pair $(\_,\; p_{\mathrm{branch}}) \in \supers(\sigma \cat w)$
  exposed by inversion has smaller rank, (C3) yields the same
  written branch on the $R'$ side, producing the same written
  override $p_{\mathrm{branch}} \snoc \ell$ and the reflexive
  override $R' \cat w \snoc \ell = \varphi(\sigma \cat w \snoc \ell)$.

  (C2) Every $\bases^*$ chain from $\sigma \cat w$ reaching $q$
  whose hop facts have rank $\le m$ implies a chain from
  $R' \cat w$ reaching $\varphi(q)$ on the $R'$ side. Inner
  induction on chain length; map each hop and concatenate. Ground
  hops ($w = ()$) are checked explicitly by~(ii): both sides' ground
  chains reach exactly the three bases; a degenerate image hop
  (source and target coincide) is dropped, and one hop may expand to
  two through the pair wrapper. A deeper hop ($|w| \ge 1$): by~(i)
  it arises from resolving a reference carried by some written
  override at the site $\sigma \cat w$; inversion exposes the walk
  facts, of smaller rank, and (C4) yields the walk of the same
  reference at the site $R' \cat w$, with target
  $\varphi(p_{\mathrm{current}}) \cat w_{\mathrm{ref}}
  = \varphi(p_{\mathrm{current}} \cat w_{\mathrm{ref}})$
  ($\varphi$ commutes with suffix concatenation; the concatenation
  of a fixed $p_{\mathrm{current}}$ never lands inside the
  reserved-word subtree, by structural fact~(iii)).

  (C3) $(s,\; v) \in \supers(\sigma \cat w)$ of rank $\le m$ implies
  $(\varphi(s),\; \varphi(v)) \in \supers(R' \cat w)$ is derivable.
  Inversion yields a chain $p_{\mathrm{base}} \in
  \bases^*(\sigma \cat w)$ and $v \in \overrides(p_{\mathrm{base}})$,
  both of smaller rank; (C2) yields the $R'$-side chain reaching
  $\varphi(p_{\mathrm{base}})$, and (C1) (synthetic) or literal
  sharing (fixed) yields
  $\varphi(v) \in \overrides(\varphi(p_{\mathrm{base}}))$; the site
  component satisfies $\varphi(\init(p_{\mathrm{base}})) =
  \init(\varphi(p_{\mathrm{base}}))$.

  (C4) A walk fact $\this(\{s\},\; p_{\mathrm{def}},\; n)$ of rank
  $\le m$ producing the site $t$ implies
  $\this(\{\varphi(s)\},\; p_{\mathrm{def}},\; n)$ produces
  $\varphi(t)$. At $n = 0$ this is $\varphi$ pointwise. At $n > 0$:
  inversion exposes the consumed pair
  $(s',\; p_{\mathrm{def}}) \in \supers(s)$ and the remaining
  $n - 1$ steps, all of smaller rank; by~(iii) $p_{\mathrm{def}}$ is
  a fixed path or the reserved root, unchanged by $\varphi$; (C3)
  (synthetic $s$) or literal sharing (fixed or reserved $s$) yields
  the $R'$-side pair $(\varphi(s'),\; p_{\mathrm{def}})$, and the
  induction hypothesis covers the remaining steps.

  \paragraph{Conclusion}
  For every $w$, apply (C3) to the ordered pairs $(R, R')$ and
  $(R', R)$: paths outside the reserved-word subtree are unchanged
  by $\varphi$, while the image of a synthetic path stays inside
  the subtree, so the sets of override components outside the
  subtree of $\supers(R \cat w)$ and $\supers(R' \cat w)$ include
  each other, hence agree: the first part of the statement. For the
  existence part: by the Path Existence
  Lemma~\ref{lem:path-existence}, the existence of
  $R \cat w \snoc \ell$ is equivalent to some branch of
  $\supers(R \cat w)$ satisfying
  $p_{\mathrm{branch}} \snoc \ell \in \dom(\inherits)$; a
  qualifying branch is never a synthetic path (a wrapper body
  defines no label, so no extension of a synthetic root lies in
  $\dom(\inherits)$, structural fact~(i)), so the criterion reads
  only the override components outside the subtree at $R \cat w$,
  which the three share.
\fi
\end{proof}

\begin{corollary}[\bilingual{Regrouping preserves $\this$
    resolution}{重组保持 $\this$ 解析}]
  \label{cor:merge-this-preservation}
  \ifInheritanceChinese
  在定理~\ref{thm:merge-associative} 的设定下,令 $\varphi$ 为保持后缀的替换
  \[
    \varphi((\text{``\_\_merge''}, \text{``mergedLeft''}) \cat w) =
    \varphi((\text{``\_\_merge''}, \text{``LeftPair''}) \cat w) =
    (\text{``\_\_merge''}, \text{``mergedFlat''}) \cat w,
  \]
  其余路径不动(mergedRight 侧同理,以 RightPair 代 LeftPair;反方向的替换把
  $(\text{``\_\_merge''}, \text{``mergedFlat''}) \cat w$ 送回
  $(\text{``\_\_merge''}, \text{``mergedLeft''}) \cat w$)。则对 $\lfp(T_P)$
  的每条事实:$(s,\; v) \in \supers(q)$ 蕴涵
  $(\varphi(s),\; \varphi(v)) \in \supers(\varphi(q))$;走步
  $\this(\{s\},\; p_{\mathrm{def}},\; n)$ 产出位点 $t$ 蕴涵
  $\this(\{\varphi(s)\},\; p_{\mathrm{def}},\; n)$ 产出 $\varphi(t)$;
  反方向替换同理。特别地,对每个 $w$,同一书写引用(同一定义点与索引)锚定于
  三个位点 $(\text{``\_\_merge''}, X) \cat w$
  ($X$ 取 mergedLeft、mergedRight、mergedFlat)的走步,其落在保留字子树
  之外的产出三边字面相同:重组不改变脚手架之外的任何解析结果。
  \else
  In the setting of Theorem~\ref{thm:merge-associative}, let
  $\varphi$ be the suffix-preserving replacement
  \[
    \varphi((\text{``\_\_merge''}, \text{``mergedLeft''}) \cat w) =
    \varphi((\text{``\_\_merge''}, \text{``LeftPair''}) \cat w) =
    (\text{``\_\_merge''}, \text{``mergedFlat''}) \cat w,
  \]
  fixing every other path (likewise for the mergedRight side, with
  RightPair for LeftPair; the reverse replacement sends
  $(\text{``\_\_merge''}, \text{``mergedFlat''}) \cat w$ back to
  $(\text{``\_\_merge''}, \text{``mergedLeft''}) \cat w$). Then for
  every fact of $\lfp(T_P)$: $(s,\; v) \in \supers(q)$ implies
  $(\varphi(s),\; \varphi(v)) \in \supers(\varphi(q))$; a walk
  $\this(\{s\},\; p_{\mathrm{def}},\; n)$ producing a site $t$
  implies $\this(\{\varphi(s)\},\; p_{\mathrm{def}},\; n)$ produces
  $\varphi(t)$; and likewise for the reverse replacement. In
  particular, for every $w$, the walks of one written reference (one
  definition site and index) anchored at the three sites
  $(\text{``\_\_merge''}, X) \cat w$ for $X$ among mergedLeft,
  mergedRight, mergedFlat produce literally the same sites outside
  the reserved-word subtree: regrouping changes no resolution
  outcome beyond its own scaffolding.
  \fi
\end{corollary}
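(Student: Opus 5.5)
The plan is to read the corollary off the simulation inside the proof of Theorem~\ref{thm:merge-associative}, since its clauses (C3) and (C4) already state the two displayed implications for the ordered pair $(R, R') = ((\text{``\_\_merge''}, \text{``mergedLeft''}), (\text{``\_\_merge''}, \text{``mergedFlat''}))$. In that pair the synthetic-root set of $R$ is exactly $\{(\text{``\_\_merge''}, \text{``mergedLeft''}), (\text{``\_\_merge''}, \text{``LeftPair''})\}$, and the corollary's $\varphi$ coincides with the map of the proof. The mergedRight side is the same argument with RightPair in place of LeftPair. The reverse replacement is the instance $(R, R') = ((\text{``\_\_merge''}, \text{``mergedFlat''}), (\text{``\_\_merge''}, \text{``mergedLeft''}))$, whose synthetic-root set is the single root $(\text{``\_\_merge''}, \text{``mergedFlat''})$.

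The work is to upgrade ``for facts anchored at $R \cat w$'' to ``for every fact of $\lfp(T_P)$''. I would split the facts by where the query $q$, or the walk origin $s$, lives.

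\begin{itemize}
\item \emph{Synthetic-root set of $R$.} Clauses (C3) and (C4) apply directly. Their simultaneous strong induction on Kleene rank, with last-step inversion from Lemma~\ref{lem:kleene-iteration}, covers every rank $m$, so every fact of $\lfp(T_P)$ at such a query is covered.
\item \emph{Fixed paths and the reserved root.} $\varphi$ is the identity there. By structural fact~(iii), the derivations of these facts pass only through fixed positions and the reserved root, so no synthetic path occurs in them. Hence $\varphi(s) = s$, $\varphi(v) = v$ and $\varphi(t) = t$, and the implication holds trivially.
\item \emph{Synthetic roots outside the synthetic-root set of $R$.} Examples are mergedRight under the mergedLeft replacement. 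These are fixed by $\varphi$. By (iii) their derivations never reach a path that $\varphi$ moves, because reaching one would require chain provenance from a root on $R$'s chains. So they are also unchanged.
\end{itemize}

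The ``in particular'' clause follows by applying the forward and reverse replacements. Take a walk of a written reference anchored at $(\text{``\_\_merge''}, \text{``mergedLeft''}) \cat w$ that produces a site $t$ outside the reserved subtree. Then $\varphi(t) = t$, so the same $t$ is produced at the mergedFlat anchor. The reverse map gives the opposite inclusion, so the two outcomes agree outside the subtree. Chaining through mergedFlat, and doing the same for mergedRight, gives literal agreement at all three sites.

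The step I expect to be delicate is the first bullet. The proof of Theorem~\ref{thm:merge-associative} states (C3) and (C4) only for facts whose origin is a path $\sigma \cat w$ with $\sigma$ in the synthetic-root set. The corollary, however, quantifies over every $s$ and $q$. I would first re-check that the induction in that proof nowhere assumed the origin to be a top-level merged path, so that (C4) really holds for arbitrary intermediate walk sites. I would then confirm, via the insulation argument (iii), that sites outside the synthetic-root set never feed a moved path back into a fact. If that re-check fails, the fallback is to restate (C1)--(C4) with the origin ranging over all synthetic paths and fixed paths together. I would then rerun the same rank induction, where fixed origins discharge immediately by literal sharing.
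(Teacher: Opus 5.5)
Your proposal is correct and is essentially the paper's proof. The paper likewise reads the corollary off clauses (C3) and (C4) of the simulation in Theorem~\ref{thm:merge-associative} at the ordered pairs $(R, R')$ and $(R', R)$, treats facts at fixed paths as literally shared, and gets the three-site coincidence outside the reserved-word subtree by composing the forward and reverse replacements. Your extra case (synthetic roots outside the synthetic-root set of $R$, which $\varphi$ fixes) is absorbed silently into the paper's ``fixed'' case. The paper also adds one remark you omit: $\varphi$ commutes with the suffix concatenation $p_{\mathrm{current}} \cat w_{\mathrm{ref}}$ of equation~(\ref{eq:resolve}), so agreement of walk sites lifts to agreement of resolution targets.
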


\begin{proof}
\ifInheritanceChinese
  这是模拟的条款 (C3) 与 (C4) 在有序对 $(R, R')$ 与 $(R', R)$ 上的逐字读出:
  固定路径处的事实各侧字面共享,合成情形即诸条款本身。解析目标为
  方程~(\ref{eq:resolve})的 $p_{\mathrm{current}} \cat w_{\mathrm{ref}}$,
  而 $\varphi$ 保持后缀拼接;子树外的产出被 $\varphi$ 不动,在三个合并位点处
  两个方向的替换互相拼接,得字面重合。
\else
  This reads off clauses (C3) and (C4) of the simulation at the
  ordered pairs $(R, R')$ and $(R', R)$: facts at fixed paths are
  literally shared, and the synthetic cases are the clauses
  themselves. A resolution target is
  equation~(\ref{eq:resolve})'s
  $p_{\mathrm{current}} \cat w_{\mathrm{ref}}$, and $\varphi$
  commutes with suffix concatenation; a produced site outside the
  reserved-word subtree is fixed by $\varphi$, and at the three
  merged sites the two directions of replacement compose, giving
  literal coincidence.
\fi
\end{proof}

\begin{proposition}[\bilingual{Immunity to
    nonextensibility}{对不可扩展性的免疫}]
  \label{prop:immunity}
  \ifInheritanceChinese
  设良构文档在顶层以标签 $\ell_P$ 命名一个 mixin,并设
  $p = (\ell_1, \ldots, \ell_k)$ 为任意标签序列($k \ge 0$)。取保留标签
  \mintinline{yaml}{__extend} 与 \mintinline{yaml}{__new}:除下示定义外,
  二者不书写于文档任何他处。在顶层命名如下包装(以 $k = 2$ 为例):
  \begin{minted}[escapeinside=||]{yaml}
__extend:
- [|$\ell_P$|]
- |$\ell_1$|:
  - |$\ell_2$|:
    - __new: []
  \end{minted}
  即 \mintinline{yaml}{__extend} 经引用继承 $\ell_P$,并沿 $p$ 写下嵌套
  定义链,最内层只含 \mintinline{yaml}{__new}。则对每条后缀 $w$:
  (i) 路径 $(\text{``\_\_extend''}) \cat p \cat (\text{``\_\_new''})$
  存在;(ii) $\supers((\ell_P) \cat w)$ 的每个书写 override 分量也是
  $\supers((\text{``\_\_extend''}) \cat w)$ 的书写 override 分量,
  特别地,$(\ell_P) \cat w$ 存在蕴涵 $(\text{``\_\_extend''}) \cat w$
  存在,即合成体在对应路径处的观测包含 $P$ 的观测;
  (iii) $\ell_P$ 之下的文档原样未动,其各路径的 $\supers$ 事实不变。
  于是定义中的路径 $(\ell_P) \cat p$ 可扩展,其余部即本命题的 $p$,
  对应路径为 $(\text{``\_\_extend''}) \cat p$;连同定义中的并置情形
  (首标签未定义于 $P$,或根路径),每个良构程序的每条路径都是可扩展的
  (第~\ref{sec:semantic-variants}~节的定义):
  继承演算对不可扩展性免疫。
  \else
  Let a well-formed document name a mixin at the top level under a
  label $\ell_P$, and let $p = (\ell_1, \ldots, \ell_k)$ be any
  sequence of labels ($k \ge 0$). Take two reserved labels
  \mintinline{yaml}{__extend} and \mintinline{yaml}{__new}, written
  nowhere in the document outside the definition displayed below,
  and name the following wrapper at the top level (shown for
  $k = 2$):
  \begin{minted}[escapeinside=||]{yaml}
__extend:
- [|$\ell_P$|]
- |$\ell_1$|:
  - |$\ell_2$|:
    - __new: []
  \end{minted}
  that is, \mintinline{yaml}{__extend} inherits $\ell_P$ through a
  reference and writes the nested definition chain along $p$, the
  innermost body containing only \mintinline{yaml}{__new}. Then, for
  every suffix $w$: (i) the path
  $(\text{``\_\_extend''}) \cat p \cat (\text{``\_\_new''})$
  exists; (ii) every written override component of
  $\supers((\ell_P) \cat w)$ is a written override component of
  $\supers((\text{``\_\_extend''}) \cat w)$, and in particular the
  existence of $(\ell_P) \cat w$ implies that of
  $(\text{``\_\_extend''}) \cat w$, so the composite's observation
  at the corresponding path contains $P$'s; (iii) the document under
  $\ell_P$ is untouched and the $\supers$ facts of its paths are
  unchanged. The definition's path $(\ell_P) \cat p$ is therefore
  extensible, its remainder being this proposition's $p$ and its
  corresponding path $(\text{``\_\_extend''}) \cat p$; together with
  the definition's juxtaposition cases (a first label undefined
  in~$P$, or the root), every path of every well-formed program is
  extensible (the definition of
  Section~\ref{sec:semantic-variants}): inheritance-calculus is
  immune to nonextensibility.
  \fi
\end{proposition}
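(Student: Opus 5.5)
The proof computes directly with the five equations. Throughout, write $e = (\text{``\_\_extend''})$.

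\textbf{Top-level facts.} First compute the facts at $e$, mirroring the chain-flattening paragraph of Theorem~\ref{thm:merge-associative}:
\begin{itemize}
\item $\supers(()) = \{((),())\}$ and $e \in \dom(\inherits)$, so $\overrides(e) = \{e\}$.
\item The reference $[\ell_P]$ is written at $e$. Since $\ell_P$ is defined at the top level, $\lexical$ gives index $n = 0$.
\item The walk is empty, and $\resolve$ returns $(\ell_P)$. Hence $\bases(e) \ni (\ell_P)$, and $\supers(e)$ contains both $((), e)$ and $((), (\ell_P))$.
\end{itemize}
Well-formedness of the composite is immediate. The new definitions have distinct keys. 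The only new reference is $[\ell_P]$, which resolves. The reserved labels occur nowhere else, so no duplicate key arises at the top level.

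\textbf{Part (i).} Induct along $p$. Claim: for $0 \le j \le k$ the pair $((), e \cat (\ell_1,\ldots,\ell_j))$ lies in $\supers(e \cat (\ell_1,\ldots,\ell_j))$.
\begin{itemize}
\item At each step, the written branch $e \cat (\ell_1,\ldots,\ell_j)$ is an override component of $\supers$ at that path.
\item The next label is written under it, so $e \cat (\ell_1,\ldots,\ell_{j+1}) \in \dom(\inherits)$.
\item Equation~\eqref{eq:overrides} then puts this path into its own $\overrides$, and the reflexive term of~\eqref{eq:supers} lifts it into $\supers$.
\end{itemize}
At the last step, $\text{``\_\_new''}$ is written under $e \cat p$. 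Lemma~\ref{lem:path-existence} then gives existence of $e \cat p \cat (\text{``\_\_new''})$.

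\textbf{Part (ii).} This is the analogue of clause (C3) of the associativity simulation, but simpler, because it needs only one direction.

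Define $\varphi(e \cat w) = (\ell_P) \cat w$ in reverse, or more directly prove by strong induction on Kleene rank (Lemma~\ref{lem:kleene-iteration}) the following. Every fact $(s,v) \in \supers((\ell_P) \cat w)$ with $v$ a written path yields $(s',v) \in \supers(e \cat w)$. The $\overrides$ and $\bases^*$ facts are handled simultaneously.
\begin{itemize}
\item \emph{Base, $w=()$.} Since $(\ell_P) \in \bases(e)$, we get $\bases^*(e) \supseteq \bases^*((\ell_P))$, so every override component of $\supers((\ell_P))$ appears in $\supers(e)$ directly.
\item \emph{Step, $w \snoc \ell$.} A written branch of $\supers((\ell_P)\cat w)$ qualifying for $\ell$ is, by induction, a branch of $\supers(e \cat w)$. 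It therefore produces the same written override $p_{\mathrm{branch}} \snoc \ell$ at $e \cat w \snoc \ell$. The chains then propagate through $\bases$ as before.
\end{itemize}
The existence consequence follows from Lemma~\ref{lem:path-existence}.

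\textbf{Main obstacle.} The hard case is the $\bases$ hops below the top level. A reference written inside $\ell_P$'s subtree, inherited at the site $(\ell_P)\cat w$, must resolve at the site $e \cat w$ to targets covering the old ones. This needs an analogue of clause (C4): a $\this$ walk from site $(\ell_P)\cat w'$ producing $t$ must have a walk from $e \cat w'$ producing $t$, or a path that in turn inherits $t$'s overrides.

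The walk climbs out of $\ell_P$ only when it matches the pair $((),(\ell_P))$, which is present at $e$ as well. In that case the output site is $()$ on both sides, so the targets coincide literally. Walks that stay inside the subtree map suffix-wise from $(\ell_P)\cat w'$ to $e\cat w'$, giving target $e \cat w''$ in place of $(\ell_P)\cat w''$. That target in turn inherits everything $(\ell_P)\cat w''$ does by the same induction, since $\bases^*$ chains compose. I would handle this exactly as (C2)/(C4) in Theorem~\ref{thm:merge-associative}, using the reserved labels to insulate walks: no walk matches a path under $e$ except via chain provenance. The extra written definitions along $p$ only add branches, and positivity (Lemma~\ref{lem:monotonicity}) ensures they never remove facts.

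\textbf{Part (iii).} Dependency restriction gives this. Queries at paths under $\ell_P$ read only queries at the root and at fixed paths, never paths under $e$: no reference anywhere names the reserved labels. Lemma~\ref{lem:dependency-restriction} then makes their least fixed point independent of the added definition.

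\textbf{Conclusion.} Combining (i) through (iii) yields the extensibility of $(\ell_P)\cat p$. Together with the juxtaposition cases, this gives immunity.
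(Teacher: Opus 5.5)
This is essentially the paper's proof: you use the spine induction closed by Lemma~\ref{lem:path-existence} for (i), a strong induction on Kleene rank along the suffix-preserving replacement $(\ell_P)\cat w \mapsto e \cat w$ with a literal-or-image dichotomy modelled on (C1)--(C4) for (ii) (the paper's clauses (D1)--(D4)), and reserved-label insulation for (iii). Three details need repair.
\begin{enumerate}
\item The reflexive pair at $e\cat(\ell_1,\ldots,\ell_j)$ has site $e\cat(\ell_1,\ldots,\ell_{j-1})$, not $()$, when $j \ge 1$. This is harmless, since only the override component is used.
\item Lemma~\ref{lem:dependency-restriction} does not literally give (iii). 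It concerns a single program's operator, and the edges of $G$, which range over all query--answer maps, do connect queries under $\ell_P$ to reserved ones. So (iii) should instead rest, as in the paper, on the stagewise insulation you already use in (ii): a target $p_{\mathrm{current}}\cat w_{\mathrm{ref}}$ with $p_{\mathrm{current}}$ outside the reserved subtree enters it only if $p_{\mathrm{current}} = ()$ and $w_{\mathrm{ref}}$ starts with a reserved label.
\item Extensibility also needs $(\text{``\_\_new''})$ to be a \emph{new} suffix. This follows from Lemma~\ref{lem:path-existence}, because $\text{``\_\_new''}$ is written nowhere in $P$.
\end{enumerate}
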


\begin{proof}
\ifInheritanceChinese
  存在性:记 $w_j = (\text{``\_\_extend''}) \cat (\ell_1, \ldots, \ell_j)$
  ($0 \le j \le k$),每个 $w_j$ 都是写下位置(嵌套定义链)。对 $j$ 作
  平凡归纳,不变量为 $w_j \in \overrides(w_j)$。基例:分支
  $((),\, ()) \in \supers(())$(方程~\ref{eq:supers} 基例)满足守卫
  $() \snoc \text{``\_\_extend''} \in \dom(\inherits)$,方程~(\ref{eq:overrides})
  产出 $w_0 \in \overrides(w_0)$。归纳步:由 $w_j \in \overrides(w_j)$
  与 $\bases^*$ 的自反性,$(\init(w_j),\, w_j) \in \supers(w_j)$;
  守卫 $w_j \snoc \ell_{j+1} \in \dom(\inherits)$ 成立(写下),
  产出 $w_{j+1} \in \overrides(w_{j+1})$。末了,分支
  $(\_,\, w_k) \in \supers(w_k)$ 满足
  $w_k \snoc \text{``\_\_new''} \in \dom(\inherits)$,由路径存在性引理
  (引理~\ref{lem:path-existence}),
  $(\text{``\_\_extend''}) \cat p \cat (\text{``\_\_new''})$ 存在。
  归纳只自底向上构造事实、不消费事实,故无需秩。

  隔离与 (iii):$\ell_P$ 之下的文本按构造原样。保留根
  $(\text{``\_\_extend''})$ 路径长度为 1,且 \mintinline{yaml}{__extend}
  与 \mintinline{yaml}{__new} 不出现于任何引用,故固定目标
  $p_{\mathrm{current}} \cat w_{\mathrm{ref}}$ 落入保留子树只可能经
  $p_{\mathrm{current}} = ()$ 且首投影标签为保留字,被排除;保留子树外
  路径的推导从不触及该子树,其事实字面不变,(iii) 得证。

  保存性 (ii):记 $\psi$ 为保持后缀的替换
  $(\ell_P) \cat w \mapsto (\text{``\_\_extend''}) \cat w$,其余路径不动。
  对事实的最大 Kleene 秩作强归纳(降秩全部来自末步反演,
  引理~\ref{lem:kleene-iteration}),同时证四个条款。四条款均为单向蕴涵,
  且各分量取\emph{字面或其 $\psi$-像};分岔来自引用的解析方式:书写于 $P$
  内、其走步爬出 $\ell_P$ 之上的引用在两侧都解析到字面目标,自那以后诸事实
  字面共享;而定点相对(小索引)的引用在包装侧解析到 $\psi$-像。包装脊
  $(\text{``\_\_extend''}) \cat (\ell_1, \ldots, \ell_j)$ 是写下位置但不
  携带引用(其 $\inherits$ 在根之外为空),只添加自反分支;诸推导式均为正
  存在式,故包装侧只增不减。

  (D1) $v \in \overrides((\ell_P) \cat w)$ 蕴涵:$v$ 书写时,同一
  $v \in \overrides((\text{``\_\_extend''}) \cat w)$ 可导出;$v$ 为自反项
  $(\ell_P) \cat w$ 时,$\psi(v)$ 亦然。$w = ()$ 即存在性归纳的基例。
  $w \snoc \ell$:合格分支 $p_{\mathrm{branch}}$ 书写(守卫属
  $\dom(\inherits)$),由 (D3) 同一 $p_{\mathrm{branch}}$ 出现于
  $\supers((\text{``\_\_extend''}) \cat w)$,方程~(\ref{eq:overrides})
  产出同一书写 override 与自反项。

  (D2) 自 $(\ell_P) \cat w$ 到 $q$ 的 $\bases^*$ 链蕴涵包装侧自
  $(\text{``\_\_extend''}) \cat w$ 到 $q$ 或 $\psi(q)$ 的链。地面
  ($w = ()$):包装的引用 \mintinline[escapeinside=||]{yaml}{[|$\ell_P$|]}
  首标签定义于顶层、索引 $n = 0$,给出跳
  $(\text{``\_\_extend''}) \to (\ell_P)$,其后 $P$ 的链字面共享。深处跳:
  由 (D1) 同一书写 override 携带同一引用,在位点
  $(\text{``\_\_extend''}) \cat w'$ 解析;(D4) 给出其走步,输出为字面
  (此后链字面共享)或 $\psi$-像(链停留在包装侧,继续用本条款)。

  (D3) $(s,\; p_{\mathrm{b}}) \in \supers((\ell_P) \cat w)$ 且
  $p_{\mathrm{b}}$ 书写,蕴涵存在位点 $s'$ 使
  $(s',\; p_{\mathrm{b}}) \in \supers((\text{``\_\_extend''}) \cat w)$。
  反演得链 $p_{\mathrm{base}} \in \bases^*((\ell_P) \cat w)$ 与
  $p_{\mathrm{b}} \in \overrides(p_{\mathrm{base}})$;(D2) 给出包装侧抵达
  $p_{\mathrm{base}}$ 或 $\psi(p_{\mathrm{base}})$ 的链:前者时
  $p_{\mathrm{b}} \in \overrides(p_{\mathrm{base}})$ 字面共享,后者时
  (D1) 给出同一书写 $p_{\mathrm{b}} \in \overrides(\psi(p_{\mathrm{base}}))$。

  (D4) 走步 $\this(\{s\},\; p_{\mathrm{def}},\; n)$ 产出位点 $t$,蕴涵
  $\this(\{s\},\; p_{\mathrm{def}},\; n)$ 或
  $\this(\{\psi(s)\},\; p_{\mathrm{def}},\; n)$ 产出 $t$ 或 $\psi(t)$。
  $s$ 固定时诸事实字面共享。$s$ 为 $(\ell_P)$-后缀时:被消费对的 override
  分量 $p_{\mathrm{def}}$ 是书写的定义点前缀,由 (D3) 在 $\psi(s)$ 处有
  携同一 $p_{\mathrm{def}}$ 的对,其位点为字面或像,余下 $n - 1$ 步用
  归纳假设;解析目标 $p_{\mathrm{current}} \cat w_{\mathrm{ref}}$ 相应为
  字面或 $\psi$-像($\psi$ 保持后缀拼接;固定目标不落入保留子树,见隔离段)。

  由 (D3),$(\ell_P) \cat w \snoc \ell$ 的存在见证(引理~\ref{lem:path-existence}:
  $\supers((\ell_P) \cat w)$ 的某书写分支满足 $\dom(\inherits)$ 守卫)
  逐字转移到 $(\text{``\_\_extend''}) \cat w$,故 (ii) 的存在性部分成立。

  两点收尾。其一,后缀 $(\text{``\_\_new''})$ 是新的:\mintinline{yaml}{__new}
  不书写于 $P$ 中任何位置,而按引理~\ref{lem:path-existence},
  $(\ell_P) \cat p \snoc \text{``\_\_new''}$ 在 $P$ 中存在须有某书写分支
  定义 \mintinline{yaml}{__new}。其二,合成体良构:包装的引用在合成体中
  解析到顶层 $\ell_P$,不再爬空,其余良构性情形
  (附录~\ref{app:well-definedness})均节点局部,在写下的包装处逐一成立。
\else
  Existence: write
  $w_j = (\text{``\_\_extend''}) \cat (\ell_1, \ldots, \ell_j)$ for
  $0 \le j \le k$; every $w_j$ is a written position (the nested
  definition chain). By a plain induction on $j$ with the invariant
  $w_j \in \overrides(w_j)$. Base: the branch
  $((),\, ()) \in \supers(())$ (the base case of
  equation~\ref{eq:supers}) satisfies the guard
  $() \snoc \text{``\_\_extend''} \in \dom(\inherits)$, so
  equation~(\ref{eq:overrides}) produces $w_0 \in \overrides(w_0)$.
  Step: from $w_j \in \overrides(w_j)$ and the reflexivity of
  $\bases^*$, $(\init(w_j),\, w_j) \in \supers(w_j)$; the guard
  $w_j \snoc \ell_{j+1} \in \dom(\inherits)$ holds (written), and
  the comprehension produces $w_{j+1} \in \overrides(w_{j+1})$.
  Finally the branch $(\_,\, w_k) \in \supers(w_k)$ satisfies
  $w_k \snoc \text{``\_\_new''} \in \dom(\inherits)$, and the Path
  Existence Lemma~\ref{lem:path-existence} yields the existence of
  $(\text{``\_\_extend''}) \cat p \cat (\text{``\_\_new''})$. The
  induction only constructs facts bottom-up and consumes none, so no
  rank is needed.

  Insulation and (iii): the text under $\ell_P$ is untouched by
  construction. The reserved root $(\text{``\_\_extend''})$ has
  length one, and \mintinline{yaml}{__extend} and
  \mintinline{yaml}{__new} appear in no reference, so a fixed target
  $p_{\mathrm{current}} \cat w_{\mathrm{ref}}$ lands inside the
  reserved subtree only through $p_{\mathrm{current}} = ()$ with the
  reserved word as first projection label, which is excluded; the
  derivations of paths outside the reserved subtree never touch it,
  their facts are literally unchanged, and (iii) holds.

  Preservation (ii): write $\psi$ for the suffix-preserving
  replacement $(\ell_P) \cat w \mapsto (\text{``\_\_extend''}) \cat w$,
  every other path unchanged. By strong induction on the maximum
  Kleene rank of the given facts (every rank decrease from last-step
  inversion, Lemma~\ref{lem:kleene-iteration}), we prove four
  clauses simultaneously. All four are one-directional implications,
  and each component is taken \emph{literally or as its
  $\psi$-image}; the bifurcation comes from how references resolve:
  a reference written inside $P$ whose walk climbs above $\ell_P$
  resolves to a literal target on both sides, and from there on the
  facts are literally shared, while a site-relative (small-index)
  reference resolves to the $\psi$-image on the wrapper side. The
  wrapper spine $(\text{``\_\_extend''}) \cat (\ell_1, \ldots, \ell_j)$
  is written but carries no reference (its $\inherits$ is empty
  beyond the root), contributing only reflexive branches; the
  comprehensions are positive existential, so the wrapper side only
  adds and never removes.

  (D1) $v \in \overrides((\ell_P) \cat w)$ implies: for written $v$,
  the same $v \in \overrides((\text{``\_\_extend''}) \cat w)$ is
  derivable; for the reflexive $v = (\ell_P) \cat w$, so is
  $\psi(v)$. At $w = ()$ this is the base case of the existence
  induction. At $w \snoc \ell$: a qualifying branch
  $p_{\mathrm{branch}}$ is written (its guard lies in
  $\dom(\inherits)$), by (D3) the same $p_{\mathrm{branch}}$ occurs
  in $\supers((\text{``\_\_extend''}) \cat w)$, and
  equation~(\ref{eq:overrides}) produces the same written override
  and the reflexive term.

  (D2) A $\bases^*$ chain from $(\ell_P) \cat w$ to $q$ implies a
  wrapper-side chain from $(\text{``\_\_extend''}) \cat w$ to $q$ or
  to $\psi(q)$. Ground ($w = ()$): the wrapper's reference
  \mintinline[escapeinside=||]{yaml}{[|$\ell_P$|]} names a top-level
  label with index $n = 0$, giving the hop
  $(\text{``\_\_extend''}) \to (\ell_P)$, after which $P$'s chain is
  literally shared. A deeper hop: by (D1) the same written override
  carries the same reference, resolved at the site
  $(\text{``\_\_extend''}) \cat w'$; (D4) yields its walk, whose
  output is literal (the chain is literally shared from there on) or
  a $\psi$-image (the chain stays on the wrapper side and this
  clause continues).

  (D3) $(s,\; p_{\mathrm{b}}) \in \supers((\ell_P) \cat w)$ with
  $p_{\mathrm{b}}$ written implies
  $(s',\; p_{\mathrm{b}}) \in \supers((\text{``\_\_extend''}) \cat w)$
  for some site $s'$. Inversion yields a chain
  $p_{\mathrm{base}} \in \bases^*((\ell_P) \cat w)$ and
  $p_{\mathrm{b}} \in \overrides(p_{\mathrm{base}})$; (D2) yields a
  wrapper-side chain reaching $p_{\mathrm{base}}$ or
  $\psi(p_{\mathrm{base}})$: in the first case
  $p_{\mathrm{b}} \in \overrides(p_{\mathrm{base}})$ is literally
  shared, and in the second (D1) yields the same written
  $p_{\mathrm{b}} \in \overrides(\psi(p_{\mathrm{base}}))$.

  (D4) A walk $\this(\{s\},\; p_{\mathrm{def}},\; n)$ producing a
  site $t$ implies that
  $\this(\{s\},\; p_{\mathrm{def}},\; n)$ or
  $\this(\{\psi(s)\},\; p_{\mathrm{def}},\; n)$ produces $t$ or
  $\psi(t)$. For fixed $s$ the facts are literally shared. For $s$ an
  $(\ell_P)$-suffix: the consumed pair's override component
  $p_{\mathrm{def}}$ is a written definition-site prefix, by (D3) a
  pair carrying the same $p_{\mathrm{def}}$ exists at $\psi(s)$,
  with a literal or image site, and the induction hypothesis covers
  the remaining $n - 1$ steps; a resolution target
  $p_{\mathrm{current}} \cat w_{\mathrm{ref}}$ is accordingly
  literal or a $\psi$-image ($\psi$ commutes with suffix
  concatenation; a fixed target never lands inside the reserved
  subtree, by the insulation paragraph).

  By (D3), the existence witness of $(\ell_P) \cat w \snoc \ell$
  (Lemma~\ref{lem:path-existence}: some written branch of
  $\supers((\ell_P) \cat w)$ satisfies the $\dom(\inherits)$ guard)
  transfers verbatim to $(\text{``\_\_extend''}) \cat w$, giving the
  existence part of (ii).

  Two closing points. First, the suffix $(\text{``\_\_new''})$ is
  new: \mintinline{yaml}{__new} is written nowhere in $P$, and by
  Lemma~\ref{lem:path-existence} the existence of
  $(\ell_P) \cat p \snoc \text{``\_\_new''}$ in $P$ would require
  some written branch defining \mintinline{yaml}{__new}. Second, the
  composite is well-formed: the wrapper's reference resolves to the
  top-level $\ell_P$ in the composite, no longer climbing past the
  root, and the remaining well-formedness cases
  (Appendix~\ref{app:well-definedness}) are node-local and hold at
  the written wrapper.
\fi
\end{proof}

\section{\bilingual{Adequacy for the Lazy \texorpdfstring{$\lambda$}{λ}-Calculus: Proofs}{惰性 $\lambda$-演算的充分性：证明}}
\label{app:levy-longo-tree-proofs}

\ifInheritanceChinese
本附录证明与 $\lambda$-演算对应的继承演算子语言关于惰性
$\lambda$-演算是充分的（adequate，第~\ref{sec:levy-longo-tree}~节）：
一个闭合 $\lambda$-项有弱头范式当且仅当其翻译收敛（定理~\ref{thm:adequacy}）。
\else
This appendix proves that the sublanguage of
inheritance-calculus corresponding to the $\lambda$-calculus
is adequate for the lazy $\lambda$-calculus
(Section~\ref{sec:levy-longo-tree}):
a closed $\lambda$-term has a weak head normal form if and only
if its translation converges (Theorem~\ref{thm:adequacy}).
\fi

\subsection{\bilingual{Weak Head Normal Form}{弱头范式}}

\ifInheritanceChinese
我们简要回顾弱头归约与 Lévy--Longo 树~\cite{levy1978-reductions-lambda-calcul, longo1983-set-theoretical-models}。
\emph{弱头归约} $M \to_h M'$ 沿函数脊收缩最左的 $\beta$-redex，但不进入抽象体：
若 $M$ 的形式为 $(\lambda x.\, e)\; v\; M_1 \cdots M_k$，则
$M \to_h e[v/x]\; M_1 \cdots M_k$。
若项 $M$ 没有弱头 redex，则称其处于\emph{弱头范式}（WHNF），
即一个抽象 $\lambda x.\, M'$，或一个以变量为首的应用 $y\; M_1 \cdots M_k$。
\else
We briefly recall weak-head reduction and the
L\'evy--Longo tree~\cite{levy1978-reductions-lambda-calcul, longo1983-set-theoretical-models}.
A \emph{weak-head reduction} $M \to_h M'$ contracts the
leftmost $\beta$-redex along the function spine, without
reducing under an abstraction: if $M$ has the form
$(\lambda x.\, e)\; v\; M_1 \cdots M_k$, then
$M \to_h e[v/x]\; M_1 \cdots M_k$.
A term $M$ is in \emph{weak head normal form} (WHNF) if it has
no weak-head redex, that is, $M$ is an abstraction
$\lambda x.\, M'$ or an application $y\; M_1 \cdots M_k$ headed
by a variable.
\fi

\ifInheritanceChinese
$\lambda$-项 $M$ 的 \emph{Lévy--Longo 树} $\mathrm{LLT}(M)$ 是其遗传式弱头范式：暴露弱头范式的顶层构造子，递归进入直接子项，并在到不了弱头范式之处放上 $\bot$。其根为 $\bot$ 当且仅当 $M$ 没有弱头范式；Lévy--Longo 树是惰性 $\lambda$-演算的标准树语义~\cite{abramsky1993-full-abstraction-lazy-lambda}。
\else
The \emph{L\'evy--Longo tree} $\mathrm{LLT}(M)$ of a
$\lambda$-term $M$ is its hereditary weak head normal form:
expose the top constructor of the WHNF, recurse into the
immediate sub-terms, and place $\bot$ where no WHNF is reached.
Its root is $\bot$ exactly when $M$ has no WHNF; the
L\'evy--Longo tree is the standard tree semantics of the lazy
$\lambda$-calculus~\cite{abramsky1993-full-abstraction-lazy-lambda}.
\fi

\subsection{\bilingual{Path Encoding}{路径编码}}

\ifInheritanceChinese
我们定义 Lévy--Longo 树中的位置与 mixin 树中的路径之间的对应关系。
当翻译 $\mathcal{T}$ 作用于一个纯 $\lambda$-项时，所得 mixin 树具有特定形状：
\else
We define a correspondence between positions in the L\'evy--Longo tree
and paths in the mixin tree.
When the translation $\mathcal{T}$ is applied to a pure
$\lambda$-term, the resulting mixin tree has a specific shape:
\fi
\begin{itemize}
  \item \ifInheritanceChinese
    抽象 $\lambda x.\, M$ 翻译为一条仅有单一自有定义 $\olbl{__whnf}$ 的 mixin；$\olbl{__whnf}$ 之下的抽象形状具有 label $\{x, \text{``\_\_parameter''}, \text{``\_\_result''}\}$，其中 $x$ 是别名，$\olbl{__parameter}$ 扮演形参声明，$\olbl{__result}$ 持有体。其下 $\olbl{__parameter}$ 有定义的节点称之为\emph{抽象形状}。
    \else
    An abstraction $\lambda x.\, M$ translates to a mixin with a
    single own definition $\olbl{__whnf}$, under which the abstraction
    shape has labels $\{x, \text{``\_\_parameter''}, \text{``\_\_result''}\}$,
    where $x$ is an alias, $\olbl{__parameter}$ plays the role of a
    parameter declaration, and $\olbl{__result}$ holds the body.
    A node under which $\olbl{__parameter}$ is defined
    is referred to as the \emph{abstraction shape}.
    \fi
  \item \ifInheritanceChinese
    应用 $M_1\; M_2$ 翻译为一条具有自有定义
    $\{\text{``\_\_callee''}, \text{``\_\_call''}, \text{``\_\_whnf''}\}$ 的 mixin，其中
    $\olbl{__callee}$ 命名被调用者 $\mathcal{T}(M_1)$，$\olbl{__call}$
    继承 \mintinline{yaml}{[__callee, __whnf]} 并以 $\mathcal{T}(M_2)$ 覆盖 $\olbl{__parameter}$，
    $\olbl{__whnf}$ 投影 \mintinline{yaml}{[__call, __result, __whnf]} 以沿规约链行走。
    \else
    An application $M_1\; M_2$ translates to a mixin with own
    definitions $\{\text{``\_\_callee''}, \text{``\_\_call''}, \text{``\_\_whnf''}\}$, where
    $\olbl{__callee}$ names the operator $\mathcal{T}(M_1)$,
    $\olbl{__call}$ inherits \mintinline{yaml}{[__callee, __whnf]} and
    overrides $\olbl{__parameter}$ with $\mathcal{T}(M_2)$,
    and $\olbl{__whnf}$ projects
    \mintinline{yaml}{[__call, __result, __whnf]} to walk the reduction chain.
    \fi
  \item \ifInheritanceChinese
    变量 $x$ 翻译为单成员 mixin \mintinline[escapeinside=||]{yaml}{[[|$x$|]]}，其成员是词法引用 \mintinline[escapeinside=||]{yaml}{[|$x$|]}，在翻译后的树中解析为绑定 $x$ 的最近抽象的别名。
    \else
    A variable $x$ translates to the one-member mixin \mintinline[escapeinside=||]{yaml}{[[|$x$|]]},
    whose member is the lexical reference \mintinline[escapeinside=||]{yaml}{[|$x$|]} that resolves in
    the translated tree to the alias of the nearest
    enclosing abstraction that binds $x$.
    \fi
\end{itemize}

\noindent
\ifInheritanceChinese
Lévy--Longo 树中的每个位置都是从根出发的一系列导航步骤。
在节点 $\lambda x.\, M'$（一个抽象）或 $y\; M_1 \cdots M_k$（以变量为首）处，
可能的步骤为：进入抽象的函数体，以及进入首变量的第 $i$ 个参数 $M_i$。
在 mixin 树中，这些步骤分别对应于跟随 $\olbl{__result}$ 标签（进入函数体）
和在别名解析后跟随 $\olbl{__parameter}$ 标签（进入参数位置）。
\else
Each L\'evy--Longo tree position is a sequence of navigation
steps from the root.
At a node $\lambda x.\, M'$ (an abstraction) or
$y\; M_1 \cdots M_k$ (headed by a variable), the possible steps
are: entering the body of the abstraction, and
entering the $i$-th argument $M_i$ of the head variable.
In the mixin tree, these correspond to following the
$\olbl{__result}$ label (for entering the body) and the
$\olbl{__parameter}$ label after alias resolution (for
entering an argument position).
\fi

\ifInheritanceChinese
与其在不同 mixin 树之间比较绝对路径（这对内部结构敏感），
我们只观测通过 $\olbl{__whnf}$ 投影可访问的\emph{收敛}行为
（定义~\ref{def:convergence}）。
$\olbl{__whnf}$ 投影继承规约链：沿 $\bases^*$ 的每一跳即一次弱头归约步骤，
因为应用的 $\olbl{__whnf}$ 投影 $(\text{``\_\_call''}, \text{``\_\_result''}, \text{``\_\_whnf''})$，
透过 $\olbl{__call}$ 节点的继承收缩 head redex。
链末尾的抽象形状表明已到达弱头范式。
\else
Rather than comparing absolute paths across different mixin
trees, which would be sensitive to internal structure, we
observe only the \emph{convergence} behavior accessible via
the $\olbl{__whnf}$ projection
(Definition~\ref{def:convergence}).
The $\olbl{__whnf}$ projection inherits the reduction chain:
each hop along $\bases^*$ is one weak-head reduction step,
since an application's $\olbl{__whnf}$ projects
$(\text{``\_\_call''}, \text{``\_\_result''}, \text{``\_\_whnf''})$,
contracting the head redex through the inheritance at the
$\olbl{__call}$ node.
The abstraction shape at the end of the chain signals that a
weak head normal form has been reached.
\fi

\begin{lemma}[\bilingual{Node classification}{节点分类}]\label{lem:node-classification}
  \ifInheritanceChinese
  设 $M_0$ 为闭合 $\lambda$-项,$P = \mathcal{T}(M_0)$。
  则 $P$ 的每个书写节点(即 $\dom(\inherits)$ 中的位置及其成员)恰属于下列八类之一,
  且每个书写节点至多携带一条引用:
  \begin{enumerate}
    \item \emph{应用节点}:自有 label $\{\text{``\_\_callee''}, \text{``\_\_call''}, \text{``\_\_whnf''}\}$,无引用;
    \item \emph{redex 节点}(应用之下的 $\olbl{__call}$):一条引用 \mintinline{yaml}{[__callee, __whnf]},
      $\lexical$ 为其构建的 de~Bruijn 索引引用为 $(0,\; (\text{``\_\_callee''}, \text{``\_\_whnf''}))$,另有自有 label $\olbl{__parameter}$;
    \item \emph{链节点}(应用之下的 $\olbl{__whnf}$):一条引用 \mintinline{yaml}{[__call, __result, __whnf]},
      $\lexical$ 为其构建的 de~Bruijn 索引引用为 $(0,\; (\text{``\_\_call''}, \text{``\_\_result''}, \text{``\_\_whnf''}))$,无自有 label;
    \item \emph{抽象节点}:自有 label $\{\text{``\_\_whnf''}\}$,无引用;
    \item \emph{抽象形状}(抽象之下的 $\olbl{__whnf}$):自有 label $\{x, \text{``\_\_parameter''}, \text{``\_\_result''}\}$,无引用;
    \item \emph{别名节点}(形状之下的 $x$):一条引用 \mintinline[escapeinside=||]{yaml}{[parameter]},
      $\lexical$ 为其构建的 de~Bruijn 索引引用为 $(0,\; (\text{``\_\_parameter''}))$,因为形状自身即定义 $\olbl{__parameter}$ 的最近外围作用域;
    \item \emph{形参声明}(形状之下的 $\olbl{__parameter}$):无成员;
    \item \emph{变量节点}($\mathcal{T}(x)$ 的像):一条引用 \mintinline[escapeinside=||]{yaml}{[|$x$|]},
      $\lexical$ 为其构建的 de~Bruijn 索引引用为 $(n,\; (x))$,其中 $n$ 为从该节点的外围作用域 $\init$-上行、
      直到绑定 $x$ 的抽象形状所经过的作用域层数。
  \end{enumerate}
  子项的像(体、被调用者、实参)各居于 $\olbl{__result}$、$\olbl{__callee}$、
  redex 节点的 $\olbl{__parameter}$ 覆盖之下,递归地属于第 1、4、8 类。
  特别地,书写引用的\emph{起点}(origin)只会是 redex 节点、链节点、别名节点或变量节点,
  而变量引用的 $\init$-上行只经过第 1、2、4、5 类节点与各子项槽位,
  绝不以形参声明或别名节点为其定义点作用域。
  \else
  Let $M_0$ be a closed $\lambda$-term and $P = \mathcal{T}(M_0)$.
  Every written node of $P$ (the positions in $\dom(\inherits)$ and
  their members) belongs to exactly one of the following eight
  classes, and every written node carries at most one reference:
  \begin{enumerate}
    \item an \emph{application node}: own labels
      $\{\text{``\_\_callee''}, \text{``\_\_call''}, \text{``\_\_whnf''}\}$, no reference;
    \item a \emph{redex node} (the $\olbl{__call}$ of an application):
      one reference \mintinline{yaml}{[__callee, __whnf]}, for which $\lexical$ builds the
      de-Bruijn-indexed reference $(0,\; (\text{``\_\_callee''}, \text{``\_\_whnf''}))$, plus the own label
      $\olbl{__parameter}$;
    \item a \emph{chain node} (the $\olbl{__whnf}$ of an application):
      one reference \mintinline{yaml}{[__call, __result, __whnf]}, for which $\lexical$
      builds the de-Bruijn-indexed reference $(0,\; (\text{``\_\_call''}, \text{``\_\_result''}, \text{``\_\_whnf''}))$, no own labels;
    \item an \emph{abstraction node}: the single own label
      $\{\text{``\_\_whnf''}\}$, no reference;
    \item an \emph{abstraction shape} (the $\olbl{__whnf}$ of an
      abstraction): own labels $\{x, \text{``\_\_parameter''}, \text{``\_\_result''}\}$,
      no reference;
    \item an \emph{alias node} (the $x$ of a shape): one reference
      \mintinline[escapeinside=||]{yaml}{[parameter]}, for which $\lexical$ builds the
      de-Bruijn-indexed reference $(0,\; (\text{``\_\_parameter''}))$, because the shape itself is the
      nearest enclosing scope defining $\olbl{__parameter}$;
    \item a \emph{parameter declaration} (the $\olbl{__parameter}$ of a
      shape): no members;
    \item a \emph{variable node} (the image of $\mathcal{T}(x)$): one
      reference \mintinline[escapeinside=||]{yaml}{[|$x$|]}, for which $\lexical$ builds the
      de-Bruijn-indexed reference $(n,\; (x))$, where $n$ is the number of scope levels climbed by
      $\init$ from the node's enclosing scope up to the abstraction
      shape binding $x$.
  \end{enumerate}
  The images of sub-terms (body, operator, argument) live under
  $\olbl{__result}$, under $\olbl{__callee}$, and under the redex node's
  $\olbl{__parameter}$ override, and belong recursively to
  classes~1, 4, and~8.
  In particular, the \emph{origin} of a written reference is always a
  redex node, a chain node, an alias node, or a variable node, and the
  $\init$-climb of a variable reference passes only through nodes of
  classes~1, 2, 4, 5 and the sub-term slots; it never has a parameter
  declaration or an alias node as its definition-site scope.
  \fi
\end{lemma}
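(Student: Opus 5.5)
The natural approach is structural induction on the $\lambda$-term $M_0$, or more precisely on sub-terms $M$ of $M_0$ together with the path at which $\mathcal{T}(M)$ is written. The claim to strengthen is this. For every sub-term occurrence $M$ placed at path $q$ (the root, a $\olbl{__result}$, a $\olbl{__callee}$, or a redex node's $\olbl{__parameter}$), the written nodes at $q$ and below $q$ fall into the eight classes, with $q$ itself in class~1, 4 or~8 according as $M$ is an application, an abstraction or a variable. The induction then reads off the three translation rules one at a time.

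\emph{Abstraction rule.} $\mathcal{T}(\lambda x.\,M)$ writes exactly one definition $\olbl{__whnf}$ at $q$ and no reference, which gives class~4. The shape under $\olbl{__whnf}$ writes $x$, $\olbl{__parameter}$ and $\olbl{__result}$ and no reference, which gives class~5. The alias \mintinline[escapeinside=||]{yaml}{[[__parameter]]} is one reference, and $\lexical$ applied to $\init$ of the alias is evaluated at the shape, which has $\olbl{__parameter}$. So $\indexed$ gives index $0$ and class~6. The body $\olbl{__parameter}: []$ has no members, which gives class~7. $\olbl{__result}$ holds $\mathcal{T}(M)$, and the induction hypothesis applies there.

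\emph{Application rule.} The application node writes exactly the three labels, giving class~1. $\olbl{__call}$ has the member \mintinline{yaml}{[__callee, __whnf]}; $\lexical$ is evaluated at the application node, which has $\olbl{__callee}$, so the index is~$0$. Its own label is $\olbl{__parameter}$, giving class~2. The application's $\olbl{__whnf}$ similarly gets $(0,(\text{``\_\_call''},\text{``\_\_result''},\text{``\_\_whnf''}))$ and no own labels, giving class~3. $\olbl{__callee}$ and the $\olbl{__parameter}$ under $\olbl{__call}$ hold sub-term images, and the induction hypothesis applies to them.

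\emph{Variable rule.} A variable gives one reference $(n,(x))$. Here $n$ is the $\lexical$ climb from the enclosing scope until some scope has $x$. Exclusivity of the classes follows from the pairwise distinct own-label/reference signatures listed. That each written node carries at most one reference is immediate, since every rule writes at most one reference per mixin body.

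The main obstacle is the final sentence: that the climb for a variable ends at the binding shape and passes only through classes~1, 2, 4, 5 and sub-term slots. The plan has three parts.
\begin{itemize}
\item Among all written nodes, only class~5 has a user-variable label, because the synthetic labels are disjoint from variable names (footnote~\ref{fn:shadowing}). So $\has(\cdot,x)$ first succeeds at a shape whose binder is $x$.
\item Ancestors of a sub-term image are, by the induction, nodes of classes~1, 2, 4, 5 and the slot nodes ($\olbl{__result}$, $\olbl{__callee}$, and the redex's $\olbl{__parameter}$). They are never a class~7 declaration, which has no members, and never a class~6 alias, which contains only a reference and no sub-term.
\item Closedness of $M_0$ guarantees that some enclosing $\lambda x$ exists, so the climb terminates. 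The innermost such binder is reached first, which matches $\lambda$-scoping. The definition-site scope $\init(p_{\mathrm{def}})$ is the node just above the variable node, and that node is a slot, not a class~6 or~7 node.
\end{itemize}
Finally, reference origins are exactly the nodes that carry a reference: classes~2, 3, 6 and~8.
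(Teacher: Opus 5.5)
Your proposal is correct and follows the paper's proof essentially step for step. Both use structural induction on the three rules of $\mathcal{T}$, with the application rule producing classes~1--3, the abstraction rule classes~4--7, and the variable rule class~8. Both compute index~$0$ for the redex, chain and alias references, use the disjoint alphabets to stop the variable climb at the nearest shape binding $x$, and note that class-6 and class-7 nodes contain no variable node beneath them.

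One small slip: the node just above a variable node is the enclosing scope of the slot holding $\mathcal{T}(x)$. That node is a shape, an application node, or a redex node (classes~5, 1, 2), not in general a slot. Your conclusion that it is never of class~6 or~7 still holds.
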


\begin{proof}
\ifInheritanceChinese
  对 $\mathcal{T}$ 的三条规则作结构归纳。三条规则各自书写的节点恰为上列类别
  (应用规则产出第 1、2、3 类,抽象规则产出第 4、5、6、7 类,变量规则产出第 8 类),
  且每条规则在每个节点处至多写下一条引用。
  别名引用的 de~Bruijn 索引:词法引用 \mintinline[escapeinside=||]{yaml}{[parameter]} 自
  $\init(\text{别名节点}) = \text{形状}$ 起测试 $\has(\cdot, \text{``\_\_parameter''})$,
  形状自有 $\olbl{__parameter}$,故 $n = 0$。
  变量引用的 de~Bruijn 索引同理:自 $\init(\text{变量节点})$ 起逐层上行,首个定义 label~$x$
  的作用域是绑定 $x$ 的抽象形状(合成 label 与用户变量名不相交),
  故 $n$ 即所经作用域层数。
  最后一句:引用只出现在第 2、3、6、8 类节点(逐规则检查);
  变量节点的 $\init$-祖先是包含它的槽位与第 1、2、4、5 类节点,
  而形参声明(第 7 类,无成员)与别名节点(第 6 类,其成员只是引用)之下不含任何变量节点,
  故它们不会作为任何变量引用的定义点作用域出现。
\else
  By structural induction on the three rules of $\mathcal{T}$. The
  nodes written by each rule are exactly the classes listed (the
  application rule produces classes~1, 2, 3, the abstraction rule
  classes~4, 5, 6, 7, and the variable rule class~8), and each rule
  writes at most one reference per node.
  For the alias reference, $\lexical$ builds the index as follows: the lexical reference
  \mintinline[escapeinside=||]{yaml}{[parameter]} tests
  $\has(\cdot, \text{``\_\_parameter''})$ outward from
  $\init(\text{alias}) = \text{the shape}$, and the shape owns
  $\olbl{__parameter}$, so $n = 0$.
  For the variable reference likewise: the climb from
  $\init(\text{variable node})$ stops at the first scope defining
  label~$x$, which is the abstraction shape binding $x$ (synthetic
  labels are disjoint from user variable names), so $n$ is the number of scope levels
  passed.
  For the final sentence: references occur only at nodes of
  classes~2, 3, 6, 8 (check per rule); the $\init$-ancestors of a
  variable node are the slots containing it and nodes of
  classes~1, 2, 4, 5, while a parameter declaration (class~7, no
  members) and an alias node (class~6, whose only member is a
  reference) contain no variable node beneath them, so neither ever
  occurs as the definition-site scope of a variable reference.
\fi
\end{proof}

\subsection{\bilingual{Convergence}{收敛}}

\begin{definition}[\bilingual{Inheritance-convergence}{继承收敛}]\label{def:convergence}
  \ifInheritanceChinese
  若
  \else
  A mixin tree $T$ \emph{converges} if
  \fi
  \[
    \supers\!\bigl((\text{``\_\_whnf''}, \text{``\_\_parameter''})\bigr) \;\neq\; \varnothing
  \]
  \ifInheritanceChinese
  在继承演算语义 $\lfp(T_P)$ 下成立,则称 mixin 树 $T$ \emph{收敛}。等价地,存在一对 $(\_,\, q) \in \supers((\text{``\_\_whnf''}))$ 使 $q \snoc \text{``\_\_parameter''} \in \dom(\inherits)$:即根的 $\olbl{__whnf}$ 沿继承($\bases^*$)追踪规约链,最终到达某个其下 $\olbl{__parameter}$ 有定义的基,亦即一个抽象形状。此等价即路径存在性引理(引理~\ref{lem:path-existence})在 $(\text{``\_\_whnf''})$ 处的实例;该收敛判据与 $M$ 是否有弱头范式的对应关系由定理~\ref{thm:adequacy} 证明,而非在本定义中假定。
  \else
  holds under the inheritance-calculus semantics $\lfp(T_P)$.
  Equivalently, there is a pair $(\_,\, q) \in \supers((\text{``\_\_whnf''}))$
  with $q \snoc \text{``\_\_parameter''} \in \dom(\inherits)$: the root's $\olbl{__whnf}$
  follows the reduction chain by inheritance ($\bases^*$) and reaches a base under
  which $\olbl{__parameter}$ is defined, an abstraction shape. This equivalence is
  the instance of the path-existence lemma (Lemma~\ref{lem:path-existence}) at
  $(\text{``\_\_whnf''})$; the correspondence between this convergence criterion and
  whether $M$ has a weak head normal form is established by
  Theorem~\ref{thm:adequacy}, not assumed in this definition.
  \fi
\end{definition}

\subsection{\bilingual{The Call-by-Name Machine}{Call-by-Name 机}}

\ifInheritanceChinese
adequacy 的证明把 $P = \mathcal{T}(M_0)$ 的收敛观测与一台标准的
call-by-name 求值机联系起来。这台机器只是叙述词汇:它的每个配置都将被
解码为\emph{同一棵} mixin 树 $P$ 中已经存在的位置,每条迁移都对应对
方程~(\ref{eq:supers})--(\ref{eq:this}) 的一次直接计算;
全部证明自始至终只涉及这一个程序的一个最小不动点。
\else
The adequacy proof relates the convergence observation of
$P = \mathcal{T}(M_0)$ to a standard call-by-name evaluation machine.
The machine is presentation vocabulary only: each of its
configurations will be decoded to a position already present in the
\emph{one} mixin tree $P$, and each of its transitions corresponds to
one direct calculation with
equations~(\ref{eq:supers})--(\ref{eq:this}); throughout, the proofs
concern a single least fixed point of this single program.
\fi

\begin{definition}[\bilingual{Krivine machine~\cite{krivine2007-callbyname-machine}}{Krivine 机~\cite{krivine2007-callbyname-machine}}]\label{def:krivine}
  \ifInheritanceChinese
  固定闭合 $\lambda$-项 $M_0$。\emph{配置}为三元组
  $\langle u,\; \rho,\; S \rangle$,其中 $u$ 是 $M_0$ 的一个子项出现,
  \emph{环境} $\rho$ 把 $u$ 的每个自由变量映射到一个\emph{闭包}
  (一个子项出现与其自身环境的对),\emph{栈} $S$ 是闭包的有限列表。
  迁移规则:
  \begin{align*}
    \langle u_1\, u_2,\; \rho,\; S \rangle
      &\longrightarrow \langle u_1,\; \rho,\; (u_2, \rho) \cdot S \rangle
      && \text{(push)} \\
    \langle \lambda x.\, u',\; \rho,\; c \cdot S \rangle
      &\longrightarrow \langle u',\; \rho[x \mapsto c],\; S \rangle
      && \text{($\beta$)} \\
    \langle x,\; \rho,\; S \rangle
      &\longrightarrow \langle \rho(x).\mathrm{term},\; \rho(x).\mathrm{env},\; S \rangle
      && \text{(lookup)}
  \end{align*}
  初始配置为 $\langle M_0,\; \varnothing,\; \varnothing \rangle$;
  \emph{停机}配置为栈空的抽象 $\langle \lambda x.\, u',\; \rho,\; \varnothing \rangle$。
  因 $M_0$ 闭合,lookup 恒有定义,且停机配置是唯一无后继的配置。
  \else
  Fix a closed $\lambda$-term $M_0$. A \emph{configuration} is a
  triple $\langle u,\; \rho,\; S \rangle$ where $u$ is an occurrence
  of a sub-term of $M_0$, the \emph{environment} $\rho$ maps each free
  variable of $u$ to a \emph{closure} (a pair of a sub-term occurrence
  and its own environment), and the \emph{stack} $S$ is a finite list
  of closures. The transitions are:
  \begin{align*}
    \langle u_1\, u_2,\; \rho,\; S \rangle
      &\longrightarrow \langle u_1,\; \rho,\; (u_2, \rho) \cdot S \rangle
      && \text{(push)} \\
    \langle \lambda x.\, u',\; \rho,\; c \cdot S \rangle
      &\longrightarrow \langle u',\; \rho[x \mapsto c],\; S \rangle
      && \text{($\beta$)} \\
    \langle x,\; \rho,\; S \rangle
      &\longrightarrow \langle \rho(x).\mathrm{term},\; \rho(x).\mathrm{env},\; S \rangle
      && \text{(lookup)}
  \end{align*}
  The initial configuration is
  $\langle M_0,\; \varnothing,\; \varnothing \rangle$; a \emph{halting}
  configuration is an abstraction under the empty stack,
  $\langle \lambda x.\, u',\; \rho,\; \varnothing \rangle$. Since
  $M_0$ is closed, lookup is always defined, and the halting
  configurations are exactly the configurations without a successor.
  \fi
\end{definition}

\begin{proposition}[\bilingual{Machine correctness}{机器正确性}]\label{prop:krivine-whnf}
  \ifInheritanceChinese
  闭合 $\lambda$-项 $M_0$ 有弱头范式,当且仅当 Krivine 机自
  $\langle M_0, \varnothing, \varnothing \rangle$ 起的运行停机;
  运行中的每次 $\beta$-迁移对应一步弱头归约 $\to_h$,
  而 push 与 lookup 是管理步。
  \else
  A closed $\lambda$-term $M_0$ has a weak head normal form if and
  only if the Krivine machine run from
  $\langle M_0, \varnothing, \varnothing \rangle$ halts; each
  $\beta$-transition of the run corresponds to one weak-head reduction
  step $\to_h$, while push and lookup are administrative.
  \fi
\end{proposition}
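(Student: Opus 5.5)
The plan is to relate machine configurations to $\lambda$-terms by a \emph{readback} (unloading) function and show the machine simulates weak-head reduction step for step, up to administrative moves. Define the readback of a closure $(u,\rho)$ by substituting, for each free variable $x$ of $u$, the readback of $\rho(x)$. Then define the readback of a configuration $\langle u,\rho,S\rangle$ as the readback of $(u,\rho)$ applied in order to the readbacks of the stack closures $c_1,\ldots,c_k$ in $S$. Since $M_0$ is closed, every environment reached covers the free variables of its term; this invariant is checked once per transition, and it makes the readback a closed term.

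First, show that push and lookup preserve the readback. For push, the application $u_1\,u_2$ under $\rho$ unloads to the readback of $u_1$ applied to the readback of $u_2$, which is exactly what the pushed configuration unloads to. For lookup, the readback of $x$ under $\rho$ is by definition the readback of $\rho(x)$. Second, show that a $\beta$-transition turns the readback $(\lambda x.\,\overline{u'})\,\overline{c}\,\overline{S}$ into $\overline{u'}[\overline{c}/x]\,\overline{S}$. This is precisely one $\to_h$ step, because the head of the spine is a $\beta$-redex. The substitution lemma needed here is the routine fact that unloading $u'$ under $\rho[x\mapsto c]$ equals substituting $\overline c$ for $x$ in the unloading under $\rho$ (restricted to $x$'s other free variables). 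Third, show that administrative steps cannot run forever. Push strictly decreases term size, and lookup replaces a variable by a closure that was pushed earlier. A measure on the pair (term, environment depth) handles this: order closures by the number of $\beta$-steps since creation, or simply note that a lookup chain follows environment pointers that were created earlier, so it is well-founded. Hence between consecutive $\beta$-steps only finitely many administrative steps occur.

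For the biconditional, suppose the machine halts at $\langle\lambda x.\,u',\rho,\varnothing\rangle$. Its readback is an abstraction, hence a WHNF, and the run gives $M_0\to_h^*$ that WHNF. Conversely, suppose the machine does not halt. Since non-halting configurations always have a successor, the run is infinite, and by the third step it contains infinitely many $\beta$-steps. This yields an infinite $\to_h$ sequence from $M_0$, and weak-head reduction is deterministic. A closed WHNF is an abstraction, because a closed term cannot be headed by a variable. A WHNF has no $\to_h$ step, so determinism means $M_0$ has no WHNF.

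The main obstacle I expect is the termination-of-administrative-steps argument, specifically making the lookup measure precise. I would first try a lexicographic measure: the size of $u$, plus the sum over the environment nesting depth. The key observation is that each lookup moves to a closure stored in a strictly older (nested) environment, so the nesting depth of $\rho$ decreases, while push decreases term size without increasing depth.
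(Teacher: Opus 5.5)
Your proposal is correct and takes the paper's route: push and lookup leave the readback unchanged, each $\beta$-transition contracts it by one $\to_h$ step, and a closed weak head normal form is an abstraction. The paper simply cites the standard correctness of the Krivine machine, whereas you make explicit the determinism argument and the termination of administrative runs needed for the direction from a weak head normal form to halting; for the latter, take the measure to be the lexicographic pair (environment nesting depth, term size) with depth first rather than a sum, since a lookup decreases depth but may enlarge the control term.
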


\begin{proof}
\ifInheritanceChinese
  这是 Krivine 机对 call-by-name 弱头归约的标准正确性
  ~\cite{krivine2007-callbyname-machine}:把配置读回为
  $\lambda$-项(对控制项施行环境替换,再依次应用栈中各闭包的读回),
  则 push 与 lookup 保持读回不变,$\beta$-迁移把读回按 $\to_h$ 收缩一步;
  闭合项的弱头范式必为抽象,恰对应栈空的抽象配置。
\else
  This is the standard correctness of the Krivine machine for
  call-by-name weak-head
  reduction~\cite{krivine2007-callbyname-machine}: reading back a
  configuration as a $\lambda$-term (apply the environment
  substitution to the control, then apply the readbacks of the stacked
  closures in order), push and lookup preserve the readback, and a
  $\beta$-transition contracts it by one $\to_h$ step; the weak head
  normal form of a closed term is an abstraction, which corresponds
  exactly to a halting configuration.
\fi
\end{proof}

\subsection{\bilingual{Machine Simulation}{机器模拟}}

\ifInheritanceChinese
本小节把 Krivine 机的每个可达配置解码为 $P = \mathcal{T}(M_0)$ 中的路径。
关键观察:翻译从不执行 $\beta$-替换。redex 就是 $\olbl{__call}$ 节点
(它经 \mintinline{yaml}{[__callee, __whnf]} 继承被调用者的抽象形状并覆盖
$\olbl{__parameter}$),reduct 就是其下的 $(\text{``\_\_call''}, \text{``\_\_result''})$
子树:收缩一次 head redex 即向下投影一层,整条归约序列在一棵树中
空间化地铺开。机器的环境与栈也不是额外数据:一个实参 thunk 就是一条
形如 $\alpha \cat (\text{``\_\_call''}, \text{``\_\_parameter''})$ 的路径,
其中 $\alpha$ 是绑定它的那次应用的解码位置。
\else
This subsection decodes every reachable configuration of the Krivine
machine to a path of $P = \mathcal{T}(M_0)$. The key observation is
that the translation never performs $\beta$-substitution. The redex
\emph{is} the $\olbl{__call}$ node (it inherits the callee's abstraction
shape through \mintinline{yaml}{[__callee, __whnf]} and overrides
$\olbl{__parameter}$), and the reduct \emph{is} the
$(\text{``\_\_call''}, \text{``\_\_result''})$ subtree beneath it: contracting
one head redex is projecting one level deeper, and the whole reduction
sequence is laid out spatially in one tree. Environments and stacks
are not extra data either: an argument thunk is a path of the form
$\alpha \cat (\text{``\_\_call''}, \text{``\_\_parameter''})$, where $\alpha$
is the decoded position of the application that bound it.
\fi

\begin{definition}[\bilingual{Configuration decoding}{配置解码}]\label{def:config-paths}
  \ifInheritanceChinese
  沿机器运行以递归定义:每个可达配置
  $K = \langle u, \rho, S \rangle$ 获得一条\emph{焦点路径} $\pi(K)$,
  每个闭包 $c$ 获得一条\emph{thunk 路径} $\theta(c)$ 与一条
  \emph{绑定点路径};环境条目与栈条目沿用其闭包的指派。
  \begin{itemize}
    \item 初始配置的焦点为根路径 $()$。
    \item \textbf{push}(焦点 $\pi$,控制项为应用 $u_1\, u_2$):
      新闭包 $(u_2, \rho)$ 得
      $\theta = \pi \cat (\text{``\_\_call''}, \text{``\_\_parameter''})$;
      新焦点为 $\pi \cat (\text{``\_\_callee''})$。
    \item \textbf{$\beta$}(焦点 $\pi$,控制项为抽象 $\lambda x.\, u'$,
      栈顶闭包的 thunk 路径为
      $\theta = \alpha \cat (\text{``\_\_call''}, \text{``\_\_parameter''})$):
      新焦点为 $\alpha \cat (\text{``\_\_call''}, \text{``\_\_result''})$;
      环境条目 $x \mapsto$ 该闭包,其\emph{绑定点路径}为 redex 节点
      $\alpha \cat (\text{``\_\_call''})$。
    \item \textbf{lookup}(控制项为变量 $x$):新焦点为
      $\theta(\rho(x))$,即所记录的 thunk 路径。
  \end{itemize}
  记 $\mathrm{wpos}(K)$ 为控制项 $u$ 的像在 $P$ 中的书写槽位
  (引理~\ref{lem:node-classification} 的第 1、4、8 类节点所在的位置)。
  \else
  By recursion along the machine run, every reachable configuration
  $K = \langle u, \rho, S \rangle$ receives a \emph{focus path}
  $\pi(K)$, and every closure $c$ receives a \emph{thunk path}
  $\theta(c)$ and a \emph{binder path}; environment and stack entries
  carry the assignments of their closures.
  \begin{itemize}
    \item The initial configuration has focus $()$, the root.
    \item \textbf{push} (focus $\pi$, control an application
      $u_1\, u_2$): the new closure $(u_2, \rho)$ receives
      $\theta = \pi \cat (\text{``\_\_call''}, \text{``\_\_parameter''})$;
      the new focus is $\pi \cat (\text{``\_\_callee''})$.
    \item \textbf{$\beta$} (focus $\pi$, control an abstraction
      $\lambda x.\, u'$, top closure with thunk path
      $\theta = \alpha \cat (\text{``\_\_call''}, \text{``\_\_parameter''})$):
      the new focus is
      $\alpha \cat (\text{``\_\_call''}, \text{``\_\_result''})$; the
      environment entry $x \mapsto$ that closure has \emph{binder
      path} the redex node $\alpha \cat (\text{``\_\_call''})$.
    \item \textbf{lookup} (control a variable $x$): the new focus is
      $\theta(\rho(x))$, the recorded thunk path.
  \end{itemize}
  Write $\mathrm{wpos}(K)$ for the written slot of the control's image
  in $P$ (the position of the class-1, 4, or 8 node of
  Lemma~\ref{lem:node-classification}).
  \fi
\end{definition}

\begin{lemma}[\bilingual{Run invariant and forward simulation}{运行不变量与正向模拟}]
\label{lem:decode-invariant}
  \ifInheritanceChinese
  对每个可达配置 $K = \langle u, \rho, S \rangle$(焦点 $\pi$,环境条目自内而外为
  $e_1, \ldots, e_m$),下列事实属于 $\lfp(T_P)$:
  \begin{enumerate}
    \item[(i)] $\mathrm{wpos}(K) \in \overrides(\pi)$;
    \item[(ii)] 存在\emph{梯子} $\sigma_0, \sigma_1, \ldots, \sigma_m$ 与书写作用域
      $D_1, \ldots, D_m$,其中 $\sigma_0 = \init(\pi)$,$D_j$ 为绑定第 $j$ 层
      (自内而外)的书写作用域,$\sigma_j$ 为其被并入的位置:对 thunk 条目
      $\sigma_j = \alpha_j \cat (\text{``\_\_call''})$(其 $\olbl{__parameter}$ 覆盖
      即该条目的 thunk 路径),对声明条目 $\sigma_j$ 为相应形状实例;
      且对每个 $1 \le j \le m$,对
      $(\sigma_j,\; D_j) \in \supers(\sigma_{j-1}$ 的相应查询$)$
      的匹配事实成立——精确地,
      $(\init(\sigma_j),\, D_j) \in \supers(\sigma_{j-1})$,
      并且 $\mathrm{wpos}$ 沿 $u$ 内部各自有 label 的延伸满足
      $\mathrm{wpos}(K) \cat w \in \overrides(\pi \cat w)$
      (对 $u$ 的像内每条自有路径 $w$,由 (i) 与方程~\ref{eq:overrides}
      沿 $w$ 逐标签得出)。
  \end{enumerate}
  进一步(\emph{正向模拟}):若自 $K$ 的运行\emph{停机}——即在有限步内到达栈深回到
  $|S|$、控制项为抽象的配置 $K'$——则 $K'$ 亦满足 (i)、(ii),且存在
  $\lfp(T_P)$ 中的链事实族:一条自 $\chi(K) = \pi \cat (\text{``\_\_whnf''})$ 到某
  $c_j$ 的逐跳 $\bases$ 事实链,以及 override 事实
  $q \in \overrides(c_j)$,其中 $q$ 为 $K'$ 控制项之像的抽象形状
  (末标签为 $\olbl{__whnf}$,其下 $\olbl{__parameter}$、$\olbl{__result}$ 为自有 label)。
  \else
  For every reachable configuration $K = \langle u, \rho, S \rangle$
  (focus $\pi$; environment entries $e_1, \ldots, e_m$ from innermost
  outward), the following facts belong to $\lfp(T_P)$:
  \begin{enumerate}
    \item[(i)] $\mathrm{wpos}(K) \in \overrides(\pi)$;
    \item[(ii)] there are a \emph{ladder} $\sigma_0, \sigma_1,
      \ldots, \sigma_m$ and written scopes $D_1, \ldots, D_m$, where
      $\sigma_0 = \init(\pi)$, $D_j$ is the written scope binding
      level $j$ (innermost first), and $\sigma_j$ is the position at
      which it is incorporated: for a thunk entry
      $\sigma_j = \alpha_j \cat (\text{``\_\_call''})$ (whose
      $\olbl{__parameter}$ override is that entry's thunk path), and
      for a declaration entry $\sigma_j$ is the corresponding shape
      instance; and for each $1 \le j \le m$ the matching fact
      $(\init(\sigma_j),\, D_j) \in \supers(\sigma_{j-1})$ holds.
      Moreover $\mathrm{wpos}$ extends along own labels inside $u$:
      $\mathrm{wpos}(K) \cat w \in \overrides(\pi \cat w)$ for every
      own-label path $w$ inside $u$'s image (from~(i) and
      equation~\ref{eq:overrides}, label by label along $w$).
  \end{enumerate}
  Furthermore (\emph{forward simulation}): if the run from $K$
  \emph{halts} --- reaches, in finitely many steps, a configuration
  $K'$ with stack depth back to $|S|$ and an abstraction in control
  --- then $K'$ also satisfies (i) and (ii), and $\lfp(T_P)$
  contains a family of chain facts: a hop-by-hop $\bases$ chain from
  $\chi(K) = \pi \cat (\text{``\_\_whnf''})$ to some $c_j$, together
  with an override fact $q \in \overrides(c_j)$ where $q$ is the
  abstraction shape of $K'$'s control image (its last label is
  $\olbl{__whnf}$, owning $\olbl{__parameter}$ and $\olbl{__result}$).
  \fi
\end{lemma}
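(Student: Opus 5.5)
The plan is to prove (i) and (ii) together by induction along the machine run, i.e.\ on the number of transitions from $\langle M_0,\varnothing,\varnothing\rangle$ to $K$. Each case of Definition~\ref{def:config-paths} is discharged by a direct calculation with equations~(\ref{eq:supers})--(\ref{eq:this}) inside $\lfp(T_P)$. Every fact is only \emph{constructed} (produced by a right-hand side from facts already in the fixed point), never inverted, so no Kleene-rank bookkeeping is needed for the invariant itself.

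\emph{Base case.} Here $\pi = ()$ and $\mathrm{wpos} = ()$. We have $\overrides(()) = \{()\}$, and the ladder is empty since $m = 0$.

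\emph{Push.} We have $\mathrm{wpos}(K)$ is an application node, and by the own-label extension of (ii), $\mathrm{wpos}(K)\cat(\text{``\_\_callee''}) \in \overrides(\pi\cat(\text{``\_\_callee''}))$. This gives (i) for the new focus. The new $\sigma_0 = \pi$ prepends one rung to the old ladder. The matching fact $(\init(\pi),\,\mathrm{wpos}(K)) \in \supers(\pi)$ comes from the reflexive term of $\bases^*$ together with (i). Because Lemma~\ref{lem:node-classification} says application nodes are climbed through by variable references, this rung is absorbed rather than counted as a new level. I would phrase the ladder so that non-binding scopes merge into the adjacent $\sigma_j$.

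\emph{Lookup.} The new focus is the recorded thunk path $\alpha\cat(\text{``\_\_call''},\text{``\_\_parameter''})$. The redex node writes $\olbl{__parameter}$, so (i) holds there by equation~(\ref{eq:overrides}). The new ladder is the one stored with the closure, and its facts persist because they are facts of one fixed point.

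\emph{$\beta$.} This is the substantive case. From (i) at focus $\pi$ with $\mathrm{wpos}$ an abstraction node, I first get the shape $q = \mathrm{wpos}(K)\cat(\text{``\_\_whnf''}) \in \overrides(\pi\cat(\text{``\_\_whnf''}))$. The stack top was pushed at $\alpha$ with $\pi$ reached from $\alpha\cat(\text{``\_\_callee''})$. I then show $q \in \overrides$ of some base in $\bases^*(\alpha\cat(\text{``\_\_call''}))$: the redex node's reference $(0,(\text{``\_\_callee''},\text{``\_\_whnf''}))$ resolves via $\resolve$ with $n = 0$ to $\alpha\cat(\text{``\_\_callee''},\text{``\_\_whnf''})$. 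I would obtain this from the forward-simulation clause applied inductively to the sub-run from the push to this $\beta$, since that sub-run returns to stack depth $|S|+1$. Equation~(\ref{eq:overrides}) at label $\olbl{__result}$ then gives $\mathrm{wpos}(u') \in \overrides(\alpha\cat(\text{``\_\_call''},\text{``\_\_result''}))$, which is (i). The new rung is $\sigma = \alpha\cat(\text{``\_\_call''})$ with $D$ the shape, and the fact $(\init(\sigma),\,q)$-style matching follows from the same chain.

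\emph{Forward simulation.} I would prove this clause by a nested induction on the length of the halting sub-run, simultaneously with the invariant; this is why the two parts are stated together.
\begin{itemize}
\item An abstraction in control gives the zero-hop chain.
\item An application uses the chain node's reference $(0,(\text{``\_\_call''},\text{``\_\_result''},\text{``\_\_whnf''}))$, so one $\bases$ hop reaches $\alpha\cat(\text{``\_\_call''},\text{``\_\_result''},\text{``\_\_whnf''})$. The chain of the $\beta$-continuation is then spliced on.
\item A variable requires resolving $(n,(x))$ by $\this$. Here the ladder (ii) supplies exactly the $n$ matching $\supers$ pairs that the frontier walk of equation~(\ref{eq:this}) consumes, and it lands at $\sigma_j$ with the projection $x$. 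The alias node's $(0,(\text{``\_\_parameter''}))$ then forwards to the thunk path.
\end{itemize}

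The main obstacle is this variable case: showing that the number of $\init$-steps $n$ computed syntactically by $\lexical$ coincides with the number of ladder rungs, and that each rung's override component equals the current $p_{\mathrm{def}}$ ($\init$ of the previous). My first attempt would be to define the ladder so that it has one rung per $\init$-step of the written climb, classified by Lemma~\ref{lem:node-classification} (classes 1, 2, 4, 5 and the sub-term slots), rather than one per environment entry. Non-binding rungs come from the own-label extension of (ii); binding rungs come from $\beta$. I expect most of the effort to go into making that indexing consistent across push and $\beta$.
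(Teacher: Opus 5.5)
Your architecture matches the paper's. It is one simultaneous induction that builds (i), (ii) and the chain facts together. In the $\beta$ case it takes $(\_,q_c)\in\supers(\alpha\cat(\text{``\_\_call''}))$ from the callee segment's chain facts, and in the application case it splices the single chain-node hop onto the body segment. Your outer induction on run length even covers configurations of non-halting runs more directly than the paper's induction on halting segments.

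Your suspicion about the ladder is correct, and it is more than bookkeeping. Consuming a rung $(\init(\sigma_j),D_j)\in\supers(\sigma_{j-1})$ in~\eqref{eq:this} moves the frontier to $\init(\sigma_j)$, not to $\sigma_j$. Also, $n$ counts $\init$-steps, not environment entries. Already for $(\lambda x.\,x)\,(\lambda y.\,y)$, right after the $\beta$ step one has $\sigma_0=\sigma_1=(\text{``\_\_call''})$ and $\supers((\text{``\_\_call''}))=\{((),(\text{``\_\_call''})),\,((\text{``\_\_callee''}),(\text{``\_\_callee''},\text{``\_\_whnf''}))\}$. So the rung $((),(\text{``\_\_callee''},\text{``\_\_whnf''}))$ that (ii) demands is absent, and it is exactly what your ``$(\init(\sigma),q)$-style'' $\beta$ rung would supply. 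The paper's proof passes over this point by asserting that the frontier is $\{\sigma_n\}$ after $n$ steps. The variable case you leave open is therefore the substance of the lemma, and it needs a corrected invariant with different sites, not only different indices.

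The idea your sketch lacks is where a looked-up argument is actually evaluated. At a lookup of $x$ at focus $\pi$, the alias puts $\theta\in\bases^*(\pi)$. The written argument is then an override component of $\supers(\pi)$, and every site-relative reference inside it resolves under $\pi$, not under the decoded thunk path $\theta$. This has two consequences.
\begin{itemize}
\item The chain from $\chi(K)=\pi\cat(\text{``\_\_whnf''})$ cannot be obtained by splicing the forwarding facts onto the thunk segment's chain. A variable image owns no labels, so nothing hops from $\chi(K)$ to $\theta\cat(\text{``\_\_whnf''})$. For $(\lambda x.\,x)(\lambda y.\,y)$ the chain is the bare override fact $(\text{``\_\_call''},\text{``\_\_parameter''},\text{``\_\_whnf''})\in\overrides((\text{``\_\_call''},\text{``\_\_result''},\text{``\_\_whnf''}))$.
\item When an operator reaches its abstraction through a lookup, the callee chain supplies the pair $(\init(b),q_c)$ for its last base $b$. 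Its site $\init(b)$, and the later frontiers of the walk, sit at copies $\pi\cat w$ of the decoded positions $\theta\cat w$.
\end{itemize}

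You therefore need one of two fixes. The first is a transport lemma from facts at $\theta\cat w$ to facts at $\pi\cat w$, proved by a Kleene-rank simulation like $\psi$ in Proposition~\ref{prop:immunity}. The second is a decoding in which a lookup leaves the focus at $\pi$, with (i) weakened to ``$\mathrm{wpos}(K)$ is an override component of $\supers(\pi)$''. With either in place, your per-$\init$-step ladder is the right shape of invariant: reflexive rungs $(\init(s),D)\in\supers(s)$ for $D\in\overrides(s)$ at non-binding levels, and the callee-chain pair at intermediate binders.

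Two smaller points. Make push and lookup agree with that indexing: you currently both prepend and ``absorb'' the push rung. And a closure's stored ladder is anchored at $\init(\alpha)$, so a lookup must add the rungs for the $(\text{``\_\_call''})$ slot and the application node.
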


\begin{proof}
\ifInheritanceChinese
  对停机段的长度作强归纳,同时按控制项的类别
  (引理~\ref{lem:node-classification})分情形;(i)、(ii) 的维持与链事实的
  构造在同一归纳内完成,故不再有跨引理的相互借用。
  初始配置:$\mathrm{wpos} = () = \pi$,自反 override
  (方程~\ref{eq:overrides} 的根基例);$m = 0$,梯子平凡。

  \paragraph{\bilingual{}{控制项为抽象}}
  段长为零。链事实:由 (i) 与方程~(\ref{eq:supers})的自反项,
  $(\_,\; \mathrm{wpos}(K) \cat (\text{``\_\_whnf''})) \in \supers(\chi(K))$,
  即零跳链加 override 事实,而
  $\mathrm{wpos}(K) \cat (\text{``\_\_whnf''})$ 正是该抽象之像的形状。

  \paragraph{\bilingual{}{控制项为应用 $u_1\, u_2$}}
  push 步:新焦点 $\pi \cat (\text{``\_\_callee''})$ 的 (i) 由 (i) 与
  方程~(\ref{eq:overrides})一步得出($\olbl{__callee}$ 为应用节点的自有 label);
  新闭包记录当前梯子与
  $\mathrm{wpos}(K) \cat (\text{``\_\_call''}, \text{``\_\_parameter''})
  \in \overrides(\pi \cat (\text{``\_\_call''}, \text{``\_\_parameter''}))$
  ——后者由 (i) 沿自有路径 $(\text{``\_\_call''}, \text{``\_\_parameter''})$
  延伸得出,这就是实参槽位的 (i)。
  停机的运行必含 callee 段的停机前缀;对其应用归纳假设(段更短),
  得 $\beta$ 点配置满足 (i)、(ii),并得 callee 链事实族:自
  $\pi \cat (\text{``\_\_callee''}, \text{``\_\_whnf''})$ 到 callee 形状 $q_c$ 的链。
  $\beta$ 步:新焦点 $\alpha \cat (\text{``\_\_call''}, \text{``\_\_result''})$
  (此处 $\alpha = \pi$)的 (i) 计算如下:redex 节点的引用
  \mintinline{yaml}{[__callee, __whnf]}($n = 0$,与 override 无关)给出
  $\bases(\pi \cat (\text{``\_\_call''})) =
  \{\pi \cat (\text{``\_\_callee''}, \text{``\_\_whnf''})\}$;
  拼上 callee 链事实($\bases^*$ 在一次 $T_P$ 应用内闭包),得
  $(\_,\; q_c) \in \supers(\pi \cat (\text{``\_\_call''}))$;
  于是 $q_c \cat (\text{``\_\_result''}) \in
  \overrides(\pi \cat (\text{``\_\_call''}, \text{``\_\_result''}))$
  (方程~\ref{eq:overrides},分支项),此即新焦点的 (i)。
  新梯子:$\sigma_1 = \pi \cat (\text{``\_\_call''})$、$D_1 = q_c$ 的外围形状,
  匹配事实即上句的 $\supers$ 对;旧梯子逐级上移一位,其事实原样保留。
  体段停机由归纳假设(段更短)给出其链事实族;根侧链事实族由
  $\chi(K)$ 的一跳
  $\bases(\chi(K)) = \{\pi \cat (\text{``\_\_call''}, \text{``\_\_result''},
  \text{``\_\_whnf''})\}$(class 3 引用,$n = 0$,与 override 无关地为单点)
  接体段链而成。

  \paragraph{\bilingual{}{控制项为变量 $x$,$\rho(x)$ 为 thunk 条目}}
  设 $x$ 的书写引用为 $(n, (x))$。由 (ii) 的梯子,自
  $\sigma_0 = \init(\pi)$ 起,逐级消费匹配事实
  $(\init(\sigma_j), D_j) \in \supers(\sigma_{j-1})$,$n$ 步后前沿为
  $\{\sigma_n\}$,$\sigma_n$ 为绑定 $x$ 的形状的并入位置
  $\alpha_x \cat (\text{``\_\_call''})$;按方程~(\ref{eq:resolve})拼接 $(x)$
  得别名拷贝 $\sigma_n' = $ 形状实例 $\cat (x)$——其 (i) 由 $\sigma_n$ 处
  callee 链的形状对沿自有 label $x$ 延伸得出(与 $\beta$ 情形同一计算)。
  别名引用 $(0, (\text{``\_\_parameter''}))$ 的定点目标为
  $\alpha_x \cat (\text{``\_\_call''}, \text{``\_\_parameter''}) = \theta(\rho(x))$。
  lookup 步取回闭包所存 (i)、(ii)(其 push 时刻已证);
  链事实族 = 上述转发事实接 thunk 段(归纳假设)的链事实族。

  \paragraph{\bilingual{}{声明条目与 enter/block}}
  见引理~\ref{lem:open-simulation};停机段从不查找声明条目。
\else
  By strong induction on the length of the halting segment, with
  cases on the class of the control
  (Lemma~\ref{lem:node-classification}); maintaining (i) and (ii)
  and constructing the chain facts happen within the same induction,
  so there is no mutual borrowing across lemmas.
  For the initial configuration, $\mathrm{wpos} = () = \pi$ is a
  reflexive override (the root base case of
  equation~\ref{eq:overrides}); $m = 0$ and the ladder is trivial.

  \paragraph{Control is an abstraction}
  The segment has length zero. Chain facts: by (i) and the reflexive
  term of equation~(\ref{eq:supers}),
  $(\_,\; \mathrm{wpos}(K) \cat (\text{``\_\_whnf''})) \in
  \supers(\chi(K))$: a zero-hop chain plus the override fact, and
  $\mathrm{wpos}(K) \cat (\text{``\_\_whnf''})$ is exactly the shape of
  the abstraction's image.

  \paragraph{Control is an application $u_1\, u_2$}
  The push step: (i) for the new focus $\pi \cat (\text{``\_\_callee''})$
  follows from (i) in one application of
  equation~(\ref{eq:overrides}) ($\olbl{__callee}$ is an own label of
  the application node); the new closure records the current ladder
  together with
  $\mathrm{wpos}(K) \cat (\text{``\_\_call''}, \text{``\_\_parameter''})
  \in \overrides(\pi \cat (\text{``\_\_call''}, \text{``\_\_parameter''}))$
  --- obtained by extending (i) along the own-label path
  $(\text{``\_\_call''}, \text{``\_\_parameter''})$ --- which is (i) for the
  argument slot.
  A halting run contains a halting prefix for the callee segment;
  the induction hypothesis (shorter segment) gives (i) and (ii) at
  the $\beta$-point configuration and the callee chain-fact family,
  from $\pi \cat (\text{``\_\_callee''}, \text{``\_\_whnf''})$ to the callee
  shape $q_c$.
  The $\beta$ step: (i) for the new focus
  $\alpha \cat (\text{``\_\_call''}, \text{``\_\_result''})$ (here
  $\alpha = \pi$) is computed as follows: the redex node's reference
  \mintinline{yaml}{[__callee, __whnf]} ($n = 0$, independent of the
  override) gives
  $\bases(\pi \cat (\text{``\_\_call''})) =
  \{\pi \cat (\text{``\_\_callee''}, \text{``\_\_whnf''})\}$; splicing the
  callee chain facts ($\bases^*$ closes within one application of
  $T_P$) yields
  $(\_,\; q_c) \in \supers(\pi \cat (\text{``\_\_call''}))$; hence
  $q_c \cat (\text{``\_\_result''}) \in
  \overrides(\pi \cat (\text{``\_\_call''}, \text{``\_\_result''}))$
  (a branch term of equation~\ref{eq:overrides}), which is (i) for
  the new focus. The new ladder: $\sigma_1 = \pi \cat
  (\text{``\_\_call''})$ and $D_1 = $ the enclosing shape of $q_c$, with
  the matching fact being the $\supers$ pair of the previous
  sentence; the old ladder shifts up by one, its facts unchanged.
  The body segment halts by the induction hypothesis (shorter
  segment), yielding its chain-fact family; the root-side family is
  the one hop
  $\bases(\chi(K)) = \{\pi \cat (\text{``\_\_call''}, \text{``\_\_result''},
  \text{``\_\_whnf''})\}$ (the class-3 reference, $n = 0$, a singleton
  independently of the override) spliced onto the body family.

  \paragraph{Control is a variable $x$ with a thunk entry $\rho(x)$}
  Let $x$'s written reference be $(n, (x))$. By the ladder of (ii),
  starting from $\sigma_0 = \init(\pi)$ and consuming the matching
  facts $(\init(\sigma_j), D_j) \in \supers(\sigma_{j-1})$ level by
  level, after $n$ steps the frontier is $\{\sigma_n\}$ with
  $\sigma_n = \alpha_x \cat (\text{``\_\_call''})$, the incorporation
  position of the shape binding $x$; equation~(\ref{eq:resolve})
  appends $(x)$, giving the alias copy --- whose (i) follows by
  extending the shape pair at $\sigma_n$ along the own label $x$
  (the same calculation as the $\beta$ case). The alias reference
  $(0, (\text{``\_\_parameter''}))$ has the site-relative target
  $\alpha_x \cat (\text{``\_\_call''}, \text{``\_\_parameter''}) =
  \theta(\rho(x))$. The lookup step takes back the closure's stored
  (i) and (ii) (established at its push time); the chain-fact family
  is the forwarding facts above spliced onto the thunk segment's
  family (induction hypothesis).

  \paragraph{Declaration entries, enter, and block}
  See Lemma~\ref{lem:open-simulation}; a halting segment never looks
  up a declaration entry.
\fi
\end{proof}

\begin{lemma}[\bilingual{Forward simulation}{正向模拟}]\label{lem:sim-forward}
  \ifInheritanceChinese
  若机器自初始配置起的运行停机,则 $P$ 收敛(定义~\ref{def:convergence})。
  \else
  If the machine run from the initial configuration halts, then $P$
  converges (Definition~\ref{def:convergence}).
  \fi
\end{lemma}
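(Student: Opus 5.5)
The plan is to read the statement off the forward-simulation clause of Lemma~\ref{lem:decode-invariant}, applied to the initial configuration $K_0 = \langle M_0, \varnothing, \varnothing \rangle$. Its focus is $\pi(K_0) = ()$, so its chain anchor is $\chi(K_0) = (\text{``\_\_whnf''})$. Its stack is empty, so the halting condition of that lemma (stack depth back to $|S| = 0$ with an abstraction in control) is exactly the halting condition of Definition~\ref{def:krivine}. By hypothesis the run halts, so the lemma supplies three facts in $\lfp(T_P)$: a hop-by-hop $\bases$ chain from $(\text{``\_\_whnf''})$ to some $c_j$, an override fact $q \in \overrides(c_j)$, and the information that $q$ is the abstraction shape of the halting control's image, whose own labels include $\olbl{__parameter}$.

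Next, assemble a $\supers$ pair. The chain gives $c_j \in \bases^*((\text{``\_\_whnf''}))$, because the reflexive-transitive closure in equation~(\ref{eq:supers}) is taken over the relation read through $\lfp(T_P)$, and $\lfp(T_P)$ is a fixed point (Theorem~\ref{thm:well-defined}). Combined with $q \in \overrides(c_j)$, equation~(\ref{eq:supers}) yields $(\init(c_j),\, q) \in \supers((\text{``\_\_whnf''}))$; the root-level path $(\text{``\_\_whnf''})$ falls under the non-root clause $p_{\mathrm{init}} \snoc \ell$ with $p_{\mathrm{init}} = ()$. Since $q$ is a written abstraction shape, class~5 of Lemma~\ref{lem:node-classification} gives $q \snoc \text{``\_\_parameter''} \in \dom(\inherits)$. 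That is precisely the equivalent form of convergence stated in Definition~\ref{def:convergence}.

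To finish, instantiate Lemma~\ref{lem:path-existence} at $p = (\text{``\_\_whnf''})$ and $\ell = \text{``\_\_parameter''}$. This turns the witness pair into $\supers\bigl((\text{``\_\_whnf''}, \text{``\_\_parameter''})\bigr) \neq \varnothing$, which is the definition of convergence.

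The main obstacle is checking that the object Lemma~\ref{lem:decode-invariant} calls ``the abstraction shape'' $q$ really is a \emph{written} position in $\dom(\inherits)$ owning $\olbl{__parameter}$, and not merely an inherited synthetic copy. I would settle this first. By (i) of that lemma, $\mathrm{wpos}(K')$ is the written slot of a class-4 node, and in the zero-length case of its proof $q = \mathrm{wpos}(K') \cat (\text{``\_\_whnf''})$, which is a class-5 written node. So the guard holds literally, and nothing further about the deep structure of $P$ is needed.
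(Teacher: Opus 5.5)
Your proposal is correct and follows the paper's own proof. Like the paper, you instantiate the forward-simulation clause of Lemma~\ref{lem:decode-invariant} at the initial configuration to obtain $(\init(c_j), q) \in \supers((\text{``\_\_whnf''}))$ with $q \snoc \text{``\_\_parameter''} \in \dom(\inherits)$, which is the equivalent form in Definition~\ref{def:convergence}; your extra appeal to Lemma~\ref{lem:path-existence} only spells out the step to the literal form.

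One small attribution slip: that $\mathrm{wpos}(K')$ is a written class-4 slot comes from the definition of $\mathrm{wpos}$, not from clause~(i), which only asserts $\mathrm{wpos}(K') \in \overrides(\pi(K'))$. Also, the lemma's statement already records that $q$ owns $\text{``\_\_parameter''}$, so your ``main obstacle'' is discharged by the statement itself.
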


\begin{proof}
\ifInheritanceChinese
  引理~\ref{lem:decode-invariant} 的正向模拟部分在初始配置的实例:
  链事实族给出 $(\_,\, q) \in \supers((\text{``\_\_whnf''}))$ 且
  $q \snoc \text{``\_\_parameter''} \in \dom(\inherits)$。
\else
  The forward-simulation part of Lemma~\ref{lem:decode-invariant},
  instantiated at the initial configuration: the chain-fact family
  yields $(\_,\, q) \in \supers((\text{``\_\_whnf''}))$ with
  $q \snoc \text{``\_\_parameter''} \in \dom(\inherits)$.
\fi
\end{proof}

\begin{lemma}[\bilingual{Lookup}{查找}]\label{lem:lookup}
  \ifInheritanceChinese
  设可达配置 $K$ 的控制项为变量 $x$,其书写引用为 $(n, (x))$,
  $\rho(x)$ 为 thunk 条目。则引理~\ref{lem:decode-invariant}(ii) 的梯子事实
  逐级可导出 $x$ 的解析:走步经 $n$ 步抵达 $\{\sigma_n\}$、经别名一步转发到
  $\theta(\rho(x))$,如该引理变量情形所计算。特别地,当 $y \neq x$ 时,
  $(n_y, (y))$ 的爬升止于绑定 $y$ 的形状,不经过 $x$ 的绑定形状:
  在 $x$ 的 redex 处以 $\olbl{__parameter}$ 覆盖实参,对 $y$ 的解析没有影响。
  \else
  Let a reachable configuration $K$ have a variable $x$ as control,
  with written reference $(n, (x))$ and a thunk entry $\rho(x)$.
  Then the ladder facts of Lemma~\ref{lem:decode-invariant}(ii)
  derive $x$'s resolution level by level: the walk reaches
  $\{\sigma_n\}$ in $n$ steps and forwards through the alias to
  $\theta(\rho(x))$, as computed in that lemma's variable case. In
  particular, for $y \neq x$ the climb of $(n_y, (y))$ stops at the
  shape binding $y$ and does not pass through $x$'s binding shape:
  overriding $\olbl{__parameter}$ with an argument at $x$'s redex has
  no effect on the resolution of $y$; the climb stops at the first
  scope defining label $y$, synthetic labels being disjoint from
  user variable names.
  \fi
\end{lemma}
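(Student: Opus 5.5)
The plan is to read the lemma off the ladder invariant of Lemma~\ref{lem:decode-invariant}(ii), together with the static index computation of Lemma~\ref{lem:node-classification}(8). It has two parts: first, the positive resolution of $x$; second, the separation claim for $y \neq x$.

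For the first part, I start from the variable node's written reference $(n,(x))$. By Lemma~\ref{lem:node-classification}, $n$ counts the $\init$-steps from the node's enclosing scope to the abstraction shape binding $x$. The intermediate definition-site scopes are exactly the written scopes $D_1,\ldots,D_n$ recorded on the ladder, because the climb passes only through nodes of classes 1, 2, 4, 5 and sub-term slots.

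Resolution goes through $\resolve$ in equation~(\ref{eq:resolve}), which calls $\this(\{p_{\mathrm{site}}\},\init(p_{\mathrm{def}}),n)$. I proceed by induction on $j \le n$: after $j$ steps the frontier contains $\sigma_j$. Each step consumes the matching fact $(\init(\sigma_j),D_j)\in\supers(\sigma_{j-1})$ supplied by (ii), and its override component equals the current $p_{\mathrm{def}}$ because $D_j$ is the $j$-th $\init$-ancestor of the written definition site.

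After $n$ steps, appending $(x)$ yields the alias copy under the shape instance. The alias node's reference $(0,(\text{``\_\_parameter''}))$ then needs no walk, since its index is $0$. It resolves site-relatively to $\alpha_x\cat(\text{``\_\_call''},\text{``\_\_parameter''})=\theta(\rho(x))$, which is exactly the calculation already carried out in the variable case of Lemma~\ref{lem:decode-invariant}. All facts used lie in $\lfp(T_P)$ because positive comprehension is monotone (Lemma~\ref{lem:monotonicity}).

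For the separation claim, I argue purely syntactically. The lexical climb for $y$ is computed by $\lexical$ from~\eqref{eq:helpers} on the written AST, and it stops at the first scope $s$ with $\has(s,y)$. Only abstraction shapes own user labels, since synthetic labels are disjoint from user variable names (footnote~\ref{fn:shadowing}). So the first such scope is the innermost shape binding $y$, and the shape binding $x$ is a different shape.

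If $x$'s binder lies strictly above $y$'s binder, the climb never reaches it. If it lies below, its own labels are $x$, $\olbl{__parameter}$ and $\olbl{__result}$, and it contains no $y$-label, so passing through it changes neither the count $n_y$ nor any matching fact at $y$'s level. The $\olbl{__parameter}$ override at $x$'s redex sits under $\alpha_x\cat(\text{``\_\_call''})$. That path is consulted only by $x$'s alias reference, whose index is $0$, and never as a $p_{\mathrm{def}}$ in $y$'s walk: by the last clause of Lemma~\ref{lem:node-classification}, a parameter declaration is never a definition-site scope of a variable reference. Hence $y$'s resolution is unaffected.

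The main obstacle I expect is checking that the ladder's $D_j$ coincide exactly with the $\init$-ancestors of the definition site used by $\this$, so that each $\this$ step actually matches. The subtle case is when the ladder shifts after a $\beta$ step and $D_1$ is the enclosing shape of $q_c$. My first attempt would be an auxiliary induction along the run showing that $D_j$ is always a written ancestor of $\mathrm{wpos}(K)$ at the depth counted by $\lexical$.
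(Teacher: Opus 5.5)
Your proposal follows the paper's own argument, which gives no separate proof of this lemma. The paper only defers the walk and the alias forwarding to the variable case of Lemma~\ref{lem:decode-invariant}, and rests the separation on the $\lexical$ climb stopping at the first scope that writes~$y$, with synthetic labels disjoint from user names. Your case split, in which $y$'s climb may pass harmlessly through $x$'s shape when $x$'s binder lies below $y$'s (as in $\lambda y.\,\lambda x.\,y$), actually sharpens the statement's literal ``does not pass through''.

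The alignment you flag is glossed over in the paper too: the ladder of (ii) has one rung per environment entry, while $n$ counts every $\init$-step of the climb. So in your induction, match each non-binding level by the reflexive $\supers$ pair from the own-label override recorded when the enclosing term was in control. Reserve ladder-type facts for the steps that leave an enclosed binder's shape for its abstraction node. Likewise, derive ``no effect'' from every $p_{\mathrm{def}}$ of $y$'s walk being an $\init$-ancestor of $y$'s own occurrence. The redex's argument slot is a sub-term slot rather than a class-7 declaration, so the clause of Lemma~\ref{lem:node-classification} you cite does not cover it.
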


\begin{lemma}[\bilingual{Backward simulation}{反向模拟}]\label{lem:sim-backward}
  \ifInheritanceChinese
  设 $K$ 为可达配置,焦点 $\pi$,观测节点 $\chi = \pi \cat (\text{``\_\_whnf''})$。
  设存在链 $\chi = c_0 \to c_1 \to \cdots \to c_k$
  (每步 $c_{j+1} \in \bases(c_j)$ 为 $\lfp(T_P)$ 中的事实)
  与 override 事实 $q \in \overrides(c_k)$,其中
  $q \snoc \text{``\_\_parameter''} \in \dom(\inherits)$,
  且这 $k{+}1$ 条事实的 Kleene 秩最大为 $m$。则:
  \begin{enumerate}
    \item[(1)] 机器自 $K$ 起,经有限步后到达控制项为抽象、栈与 $K$ 相同的配置;
    \item[(2)] (\emph{一致性})所给链与解码运行逐跳一致:$c_1, \ldots, c_k$
      依次为该运行沿途的观测节点,$q$ 为停机时控制项之像的形状;
      给定诸事实的推导中,每个被消费的 $\this$ 走步事实与
      引理~\ref{lem:lookup} 的机器走步一致。
  \end{enumerate}
  \else
  Let $K$ be a reachable configuration with focus $\pi$ and
  observation node $\chi = \pi \cat (\text{``\_\_whnf''})$. Suppose there
  are a chain $\chi = c_0 \to c_1 \to \cdots \to c_k$ (each step
  $c_{j+1} \in \bases(c_j)$ a fact of $\lfp(T_P)$) and an override
  fact $q \in \overrides(c_k)$ with
  $q \snoc \text{``\_\_parameter''} \in \dom(\inherits)$, these $k{+}1$
  facts having maximum Kleene rank $m$. Then:
  \begin{enumerate}
    \item[(1)] the machine, run from $K$, reaches in finitely many
      steps a configuration whose control is an abstraction and
      whose stack equals that of $K$;
    \item[(2)] (\emph{coincidence}) the given chain coincides hop by
      hop with the decoded run: $c_1, \ldots, c_k$ are, in order,
      the observation nodes visited by that run, and $q$ is the
      shape of the halting control's image; within the given facts'
      derivations, every consumed $\this$-walk fact coincides with
      the machine walk of Lemma~\ref{lem:lookup}.
  \end{enumerate}
  \fi
\end{lemma}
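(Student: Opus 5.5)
The proof goes by strong induction on the maximum Kleene rank $m$ of the given $k{+}1$ facts. Inside that, it splits into cases on the class of the control of $K$, using the classification of Lemma~\ref{lem:node-classification}. This mirrors the forward induction of Lemma~\ref{lem:decode-invariant}, with the direction reversed: the chain facts are consumed by last-step inversion (Lemma~\ref{lem:kleene-iteration}) and not constructed. Each inversion exposes premise facts of strictly smaller rank, and the induction hypothesis is applied at those premises. Clause~(2) is proved together with clause~(1), because each case of (1) needs to know which machine configuration the next chain node decodes. By Lemma~\ref{lem:decode-invariant}(i) and~(ii), every reachable $K$ already comes with $\mathrm{wpos}(K) \in \overrides(\pi)$ and a ladder. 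The task is to show that the given chain can only follow the decoded run.

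\emph{Abstraction control.} Here the machine has already halted, and $k = 0$ is forced. The node $\chi$ has class~5 shape under it, and a chain hop out of $\chi$ would need a reference carried by some override of $\chi$. Every override of $\chi$ is a shape, and by Lemma~\ref{lem:node-classification} shapes carry no reference. The required $q$ is therefore the reflexive shape, which settles (2).

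\emph{Application control.} Inverting $c_1 \in \bases(\chi)$ shows that $c_1$ comes from the class-3 reference $(0,(\text{``\_\_call''},\text{``\_\_result''},\text{``\_\_whnf''}))$. This is a singleton with no walk step, so $c_1 = \pi \cat (\text{``\_\_call''},\text{``\_\_result''},\text{``\_\_whnf''})$. For $c_1$ to have any override at all, $\overrides$ of the $\text{``\_\_result''}$ node must be nonempty. By equation~(\ref{eq:overrides}), that requires a pair $(\_,q_c) \in \supers(\pi\cat(\text{``\_\_call''}))$ with $q_c \snoc \text{``\_\_result''}$ written, so $q_c$ is a shape. Inverting this pair, of smaller rank, through the redex reference $[\text{\_\_callee},\text{\_\_whnf}]$ gives a chain from $\pi\cat(\text{``\_\_callee''},\text{``\_\_whnf''})$ to a base owning $q_c$. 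That chain is exactly the hypothesis of the lemma at the push successor, with smaller rank. The induction hypothesis then yields the halting callee segment and the identification of $q_c$. The $\beta$ step follows, and the induction hypothesis is applied once more to the tail $c_1\to\cdots\to c_k$ at the $\beta$ successor, whose focus is $\pi\cat(\text{``\_\_call''},\text{``\_\_result''})$. Concatenating the two segments gives (1), and the identifications give (2).

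\emph{Variable control.} Inverting the hop out of $\chi$ exposes a $\resolve$ fact and its $\this$ walk of length $n$. At each step the walk consumes a pair $(p_{\mathrm{site}}, p_{\mathrm{def}}) \in \supers(\cdot)$ with $p_{\mathrm{def}}$ a fixed written scope. This is the main obstacle, because $\this$ is set-valued and a priori could branch away from the machine's single route. I would handle it by an inner induction on walk steps. At each step, Lemma~\ref{lem:node-classification} says the matched $p_{\mathrm{def}}$ is of class 1, 2, 4 or 5. The corresponding pair in $\supers(\sigma_{j-1})$ has smaller rank, so the outer induction hypothesis, applied to the callee chain that produced it, shows its site is the ladder position $\init(\sigma_j)$ of Lemma~\ref{lem:decode-invariant}(ii). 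This is the single-path property. The walk thus matches the one computed in Lemma~\ref{lem:lookup}. The alias reference then forwards to $\theta(\rho(x))$. The remainder of the chain is handled by the induction hypothesis at the lookup successor, whose facts again have smaller rank. This establishes (1) and the walk-coincidence part of (2). Declaration entries never arise, because closed $M_0$ makes every looked-up entry a thunk.
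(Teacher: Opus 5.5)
Your induction measure does not license the inductive calls you make on the tail of the chain. You induct on the maximum Kleene rank $m$ alone. Yet in the application case you apply the hypothesis to $c_1 \to \cdots \to c_k$ with $q \in \overrides(c_k)$ at the $\beta$ successor. In the variable case you apply it to ``the remainder of the chain'' at the lookup successor, and assert that these facts ``again have smaller rank''. They do not: they are the given facts minus the first hop, so their maximum rank is only $\le m$, and typically exactly $m$. The first hop comes from the index-$0$ reference $(\text{``\_\_call''}, \text{``\_\_result''}, \text{``\_\_whnf''})$ and has small rank, while the later facts depend on the whole callee computation. For $(\lambda x.\,x)(\lambda y.\,y)$ the given facts are $c_1 \in \bases(c_0)$ and $q \in \overrides(c_1)$; the maximum sits on the second, which is the entire tail. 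The paper avoids this with a secondary induction on $k$ inside the same $m$. It claims a strict rank decrease only for premises exposed by last-step inversion (the callee chain, and the class-8 hops inside a walk, both at rank $\le m-3$), and takes the tail at rank $\le m$ and length $k-1$. You also invert $c_1 \in \bases(\chi)$ in the application case without first excluding $k=0$. The paper does this explicitly: the overrides of $\chi$ are then class-3 chain nodes owning no label, so none can serve as $q$.

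The variable case has a further problem that the $(m,k)$ measure does not cure; the paper's variable case argues as you do and does not address it either. Every written $\text{``\_\_whnf''}$ node is a class-3 chain node or a class-5 shape. So every hop of such a chain is the index-$0$ resolution of $(\text{``\_\_call''}, \text{``\_\_result''}, \text{``\_\_whnf''})$ at the site $\init(c_j)$. Hence $c_j = \pi \cat (\text{``\_\_call''}, \text{``\_\_result''})^j \cat (\text{``\_\_whnf''})$, and the chain never leaves the subtree under $\pi$. This has three consequences.
\begin{itemize}
\item Inverting $c_1 \in \bases(\chi)$ does not expose the variable's $n$-step walk. That walk lives in $\bases^*(\pi)$, a premise of $\supers(\pi)$, which is in turn a premise of $\overrides(\chi)$.
\item The lookup transition consumes no hop. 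Indeed $k = 0$ genuinely occurs when the thunk is an abstraction, since its written shape already lies in $\overrides(\chi)$.
\item The later hops stay under $\pi$, while Definition~\ref{def:config-paths} moves the decoded focus to $\theta(\rho(x))$. So the remaining chain is not a chain from the lookup successor's observation node, and the inductive call does not apply.
\end{itemize}
Concretely, for $(\lambda x.\,x)\,((\lambda z.\,z)(\lambda y.\,y))$ one computes $c_2 = (\text{``\_\_call''}, \text{``\_\_result''}, \text{``\_\_call''}, \text{``\_\_result''}, \text{``\_\_whnf''})$. The decoded run instead passes through $(\text{``\_\_call''}, \text{``\_\_parameter''}, \text{``\_\_call''}, \text{``\_\_result''}, \text{``\_\_whnf''})$, so the literal hop-by-hop coincidence of clause~(2) fails past a lookup. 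Repairing this requires a different decoding, plus a well-founded component bounding runs of consecutive lookups. One option is to let lookup leave the focus at $\pi$, weakening invariant~(i) to membership of $\mathrm{wpos}(K)$ among the override components of $\supers(\pi)$. Clause~(2) must then be restated relative to that decoding.
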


\begin{proof}
\ifInheritanceChinese
  对 $m$ 作强归纳,同一 $m$ 内对 $k$ 作副归纳;按控制项的类别分情形。
  所有降秩均来自末步反演(引理~\ref{lem:kleene-iteration}):
  被反演事实的前提秩严格更小。

  \paragraph{\bilingual{}{控制项为抽象}}
  零步停机,(1) 即毕。(2):$k = 0$——若 $k \ge 1$,则 $c_1 \in \bases(c_0)$,
  而抽象控制下 $c_0 = \chi$ 的 override 均为末标签 $\olbl{__whnf}$ 的形状类
  节点(class 5),不携带引用,方程~(\ref{eq:bases})的推导式为空,矛盾;
  故链即自反对,$q$ 为该形状。

  \paragraph{\bilingual{}{控制项为应用}}
  先排除 $k = 0$:见证 override $q$ 须满足
  $q \snoc \text{``\_\_parameter''} \in \dom(\inherits)$ 且末标签为
  $\olbl{__whnf}$($\overrides$ 保持末标签,而 $\chi$ 以 $\olbl{__whnf}$ 结尾;
  redex 节点以 $\olbl{__call}$ 结尾,故被排除——见证必为抽象形状),
  而应用控制下 $c_0$ 的 override 为 class-3 链节点,自有 label 为空,
  自反项守卫不满足;故 $k \ge 1$,且由 class-3 引用的 $n = 0$ 定点解析,
  $c_1 = \pi \cat (\text{``\_\_call''}, \text{``\_\_result''}, \text{``\_\_whnf''})$
  被迫唯一,与解码运行的第一跳一致。
  callee 见证的提取:对事实 $c_2 \in \bases(c_1)$(若 $k \ge 2$;
  否则对 $q \in \overrides(c_1)$)作末步反演,其前提含
  $\overrides(c_1)$ 的某分支事实;再反演,其前提含
  $(\_,\; q_c \cat (\text{``\_\_result''}))$ 形的
  $\supers(\pi \cat (\text{``\_\_call''}, \text{``\_\_result''}))$ 对;
  又反演,其前提为装配
  $\bases^*(\pi \cat (\text{``\_\_call''}))$ 的诸 $\bases$ 事实——
  即 redex 的一跳
  $\pi \cat (\text{``\_\_callee''}, \text{``\_\_whnf''})$
  与 callee 链的全部逐跳事实、以及 callee 形状的 override 事实,
  秩均 $\le m - 3$。以之为链,对焦点 $\pi \cat (\text{``\_\_callee''})$ 的
  push 后配置应用主归纳假设($m$ 严格减小):callee 段停机且
  (2) 给出该链与 callee 段解码一致,故停机控制之像的形状恰为 $q_c$,
  $\beta$ 步的解码焦点恰为 $\pi \cat (\text{``\_\_call''}, \text{``\_\_result''})$,
  其观测节点即 $c_1$。剩余链 $c_1 \to \cdots \to c_k$ 与 override 事实
  秩 $\le m$、链长 $k - 1$:副归纳假设给出体段停机与一致性。
  两段拼接得 (1);(2) 由两段的一致性拼接。

  \paragraph{\bilingual{}{控制项为变量 $x$}}
  此时 $c_0 = \chi$ 的 override 分支经变量节点的引用产生:
  对 $c_1 \in \bases(c_0)$ 作末步反演,前提含 $\bases(\pi)$ 侧的
  $\resolve$/$\this$ 事实:一条 $n$ 步走步的逐步匹配对,每步在单点前沿
  $\{\sigma\}$ 处消费 $\supers(\sigma)$ 中 override 分量为某书写作用域
  $D$ 的对(引理~\ref{lem:node-classification} 末句:$D$ 为作用域类节点)。
  \emph{一致性}:对每个被消费的匹配对,再作末步反演,其前提为装配
  $\bases^*(\sigma)$ 的诸 $\bases$ 事实,每条秩严格更小。沿该链逐跳检查
  (对跳数再作内层归纳):class 2、3、6 节点的跳由 $n = 0$ 定点解析强制,
  与解码一致;class 8(变量)节点的跳,其事实本身构成一条更小秩的
  链事实族——秩的记账:$c_1 \in \bases(c_0)$ 在所给诸事实中秩 $\le m$,
  反演暴露的 $\resolve$/$\this$ 走步事实秩 $\le m - 1$,
  每个被消费的匹配对是某走步事实的前提、秩 $\le m - 2$,
  再反演得装配 $\bases^*(\sigma)$ 的诸 $\bases$ 事实、秩 $\le m - 3$,
  而 class-8 跳的链事实族恰由这类事实组成,故其最大秩 $\le m - 3 < m$,
  对 $m$ 的强归纳假设适用——其焦点配置为该变量出现的解码位置,
  主归纳假设的 (2) 恰断言该跳与机器一致。于是 $\bases^*(\sigma)$ 的被消费前缀与解码运行
  一致,故匹配对的 site 分量为梯子事实(引理~\ref{lem:decode-invariant}(ii))
  所指的机器位置,走步与引理~\ref{lem:lookup} 的机器走步逐级重合,
  链经别名转发抵达 $\theta(\rho(x))$:$c_1$ 落在 thunk 的观测链上,
  与解码一致。剩余链秩 $\le m$、长 $k - 1$,对 lookup 后配置用副归纳假设,
  得 (1)、(2)。
\else
  By strong induction on $m$, with a secondary induction on $k$
  within the same $m$; cases on the class of the control. Every rank
  decrease comes from last-step inversion
  (Lemma~\ref{lem:kleene-iteration}): the premises of an inverted
  fact have strictly smaller rank.

  \paragraph{Control is an abstraction}
  Halting in zero steps gives (1). For (2), $k = 0$: if $k \ge 1$
  then $c_1 \in \bases(c_0)$, but under abstraction control the
  overrides of $c_0 = \chi$ are shape-class nodes with last label
  $\olbl{__whnf}$ (class 5) carrying no reference, so the
  comprehension of equation~(\ref{eq:bases}) is empty --- a
  contradiction; the chain is the reflexive pair and $q$ is the
  shape.

  \paragraph{Control is an application}
  First exclude $k = 0$: a witnessing override $q$ must satisfy
  $q \snoc \text{``\_\_parameter''} \in \dom(\inherits)$ and end in
  $\olbl{__whnf}$ ($\overrides$ preserves the last label, and $\chi$
  ends in $\olbl{__whnf}$; a redex node ends in $\olbl{__call}$ and is
  thereby excluded --- the witness is necessarily an abstraction
  shape), while under application control the overrides of $c_0$ are
  class-3 chain nodes owning no label, so the reflexive term's guard
  fails; hence $k \ge 1$, and by the $n = 0$ site-relative
  resolution of the class-3 reference,
  $c_1 = \pi \cat (\text{``\_\_call''}, \text{``\_\_result''},
  \text{``\_\_whnf''})$ is forced, coinciding with the decoded run's
  first hop.
  Extraction of the callee witness: invert the fact
  $c_2 \in \bases(c_1)$ (if $k \ge 2$; otherwise
  $q \in \overrides(c_1)$); its premises include a branch fact of
  $\overrides(c_1)$; invert again: the premises include a
  $\supers(\pi \cat (\text{``\_\_call''}, \text{``\_\_result''}))$ pair of
  the form $(\_,\; q_c \cat (\text{``\_\_result''}))$; invert once more:
  the premises are the $\bases$ facts assembling
  $\bases^*(\pi \cat (\text{``\_\_call''}))$ --- the redex's one hop to
  $\pi \cat (\text{``\_\_callee''}, \text{``\_\_whnf''})$ together with all
  hop facts of the callee chain and the callee shape's override
  fact, all of rank $\le m - 3$. Using these as the chain, apply the
  main induction hypothesis (strictly smaller $m$) to the
  configuration after the push, at focus
  $\pi \cat (\text{``\_\_callee''})$: the callee segment halts, and
  by~(2) the chain coincides with the callee segment's decoding, so
  the halting control's shape is exactly $q_c$, the $\beta$ step's
  decoded focus is exactly
  $\pi \cat (\text{``\_\_call''}, \text{``\_\_result''})$, and its
  observation node is $c_1$. The remaining chain
  $c_1 \to \cdots \to c_k$ with the override fact has rank $\le m$
  and length $k - 1$: the secondary induction hypothesis gives the
  body segment's halting and coincidence. Concatenating the two
  segments gives (1); (2) follows by splicing the two coincidences.

  \paragraph{Control is a variable $x$}
  Here the override branches of $c_0 = \chi$ arise through the
  variable node's reference: inverting $c_1 \in \bases(c_0)$ exposes
  the $\resolve$/$\this$ facts on the $\bases(\pi)$ side: an
  $n$-step walk of matching pairs, each consuming, at a singleton
  frontier $\{\sigma\}$, a pair of $\supers(\sigma)$ whose override
  component is a written scope $D$ (a scope-class node, by the final
  clause of Lemma~\ref{lem:node-classification}).
  \emph{Coincidence}: invert each consumed matching pair; its
  premises are the $\bases$ facts assembling $\bases^*(\sigma)$,
  each of strictly smaller rank. Check the chain hop by hop (an
  inner induction on the hop count): hops at class-2, 3, 6 nodes are
  forced by $n = 0$ site-relative resolution and coincide with the
  decoding; a hop at a class-8 (variable) node is itself a
  chain-fact family of strictly smaller rank (the rank bookkeeping:
  $c_1 \in \bases(c_0)$ has rank $\le m$ among the given facts, the
  $\resolve$/$\this$ walk facts exposed by its inversion have rank
  $\le m - 1$, each consumed matching pair is a premise of some walk
  fact and has rank $\le m - 2$, one more inversion yields the
  $\bases$ facts assembling $\bases^*(\sigma)$ at rank $\le m - 3$,
  and the class-8 hop's chain-fact family consists of exactly such
  facts, so its maximum rank is $\le m - 3 < m$ and the strong
  induction hypothesis on~$m$ applies) whose focus configuration is
  the decoded position of that variable occurrence, and clause~(2)
  of the main induction hypothesis asserts exactly that this hop
  coincides with the machine. The consumed prefix of
  $\bases^*(\sigma)$ therefore coincides with the decoded run, so
  the matching pair's site component is the machine position named
  by the ladder facts (Lemma~\ref{lem:decode-invariant}(ii)), the
  walk coincides level by level with the machine walk of
  Lemma~\ref{lem:lookup}, and the chain forwards through the alias
  to $\theta(\rho(x))$: $c_1$ lies on the thunk's observation chain,
  coinciding with the decoding. The remaining chain has rank
  $\le m$ and length $k - 1$; the secondary induction hypothesis at
  the post-lookup configuration gives (1) and (2).
\fi
\end{proof}

\subsection{\bilingual{Adequacy}{Adequacy（充分性）}}

\begin{theorem}[\bilingual{Adequacy}{Adequacy（充分性）}]\label{thm:adequacy}
  \ifInheritanceChinese
  闭合 $\lambda$-项 $M$ 有弱头范式，当且仅当 $\mathcal{T}(M)$ 收敛。
  \else
  A closed $\lambda$-term $M$ has a weak head normal form if and
  only if $\mathcal{T}(M)$ converges.
  \fi
\end{theorem}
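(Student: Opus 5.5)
The proof composes Proposition~\ref{prop:krivine-whnf} with the two simulation lemmas. It runs entirely through the Krivine machine of Definition~\ref{def:krivine}, so the only object on the inheritance side is the single least fixed point $\lfp(T_P)$ for $P = \mathcal{T}(M)$. By Proposition~\ref{prop:krivine-whnf}, $M$ has a weak head normal form if and only if the run from $\langle M, \varnothing, \varnothing \rangle$ halts. It therefore suffices to show that the run halts if and only if $P$ converges in the sense of Definition~\ref{def:convergence}.

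\emph{($\Rightarrow$)} Assume the run halts. The halting configuration has an empty stack, which is the stack of the initial configuration, so the whole run is a halting segment in the sense of Lemma~\ref{lem:decode-invariant}. Lemma~\ref{lem:sim-forward} then yields a pair $(\_,\, q) \in \supers((\text{``\_\_whnf''}))$ with $q \snoc \text{``\_\_parameter''} \in \dom(\inherits)$. By the path-existence instance recorded in Definition~\ref{def:convergence}, this is exactly convergence.

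\emph{($\Leftarrow$)} Assume $P$ converges. By Definition~\ref{def:convergence} and Lemma~\ref{lem:path-existence}, there is a pair $(\_,\, q) \in \supers((\text{``\_\_whnf''}))$ with $q \snoc \text{``\_\_parameter''} \in \dom(\inherits)$, and this fact has some finite Kleene rank by Lemma~\ref{lem:kleene-iteration}. Apply last-step inversion to it through equation~(\ref{eq:supers}). This exposes a base $c_k \in \bases^*((\text{``\_\_whnf''}))$ and an override fact $q \in \overrides(c_k)$. The closure $\bases^*$ is witnessed by a finite chain of $\bases$ facts $c_0 = (\text{``\_\_whnf''}) \to \cdots \to c_k$, all of rank strictly below that of the $\supers$ fact. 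The initial configuration has focus $()$, so its observation node is $\chi = () \cat (\text{``\_\_whnf''}) = c_0$, and the hypotheses of Lemma~\ref{lem:sim-backward} hold with $m$ taken as the maximum of these finitely many ranks. Clause~(1) of that lemma gives a configuration reached in finitely many steps whose control is an abstraction and whose stack equals the initial, empty stack. That configuration is halting, so the run halts.

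The main point to check is the applicability of the two simulation lemmas at the initial configuration. Invariant~(i) holds because the root is a reflexive override, and invariant~(ii) holds trivially with $m = 0$. For the backward direction, one must also check that the convergence witness produces a chain in the exact form Lemma~\ref{lem:sim-backward} requires. The ranks of the $\bases$ hops and of the override fact must be bounded, and this holds because all of them are premises of a single inverted $\supers$ fact. With these checks done, the theorem is a direct composition of Proposition~\ref{prop:krivine-whnf}, Lemma~\ref{lem:sim-forward}, and Lemma~\ref{lem:sim-backward}, and I expect no further obstacle.
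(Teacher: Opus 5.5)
Your proposal is correct and matches the paper's proof. Both directions pass through Proposition~\ref{prop:krivine-whnf}: the forward direction uses Lemma~\ref{lem:sim-forward}, and the backward direction applies one last-step inversion to the convergence witness, yielding a $\bases$ chain plus an override fact of strictly smaller rank, and then applies Lemma~\ref{lem:sim-backward} at the initial configuration.
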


\begin{proof}
  \ifInheritanceChinese
  （$\Rightarrow$）设 $M$ 有弱头范式。由机器正确性
  （命题~\ref{prop:krivine-whnf}），Krivine 机自初始配置起停机；
  由正向模拟（引理~\ref{lem:sim-forward}），$\mathcal{T}(M)$ 收敛。

  （$\Leftarrow$）设 $\mathcal{T}(M)$ 收敛。由定义~\ref{def:convergence}，
  存在见证对 $(\_,\, q) \in \supers((\text{``\_\_whnf''}))$ 且
  $q \snoc \text{``\_\_parameter''} \in \dom(\inherits)$；该事实的
  Kleene 秩有限（引理~\ref{lem:kleene-iteration}）。对它作一次末步反演，
  得到装配它的链事实族：一条自 $(\text{``\_\_whnf''})$ 到 $q$ 的
  $\bases$-链与 override 事实，秩均严格更小。以初始配置
  （焦点为根，观测节点即 $(\text{``\_\_whnf''})$）与这条链应用反向模拟
  （引理~\ref{lem:sim-backward}），机器停机；再由机器正确性
  （命题~\ref{prop:krivine-whnf}），$M$ 有弱头范式。
  \else
  ($\Rightarrow$) Suppose $M$ has a weak head normal form. By machine
  correctness (Proposition~\ref{prop:krivine-whnf}) the Krivine
  machine halts from the initial configuration; by forward simulation
  (Lemma~\ref{lem:sim-forward}), $\mathcal{T}(M)$ converges.

  ($\Leftarrow$) Suppose $\mathcal{T}(M)$ converges. By
  Definition~\ref{def:convergence} there is a witnessing pair
  $(\_,\, q) \in \supers((\text{``\_\_whnf''}))$ with
  $q \snoc \text{``\_\_parameter''} \in \dom(\inherits)$, and this fact
  has a finite Kleene rank (Lemma~\ref{lem:kleene-iteration}). One
  last-step inversion yields the family of facts assembling it: a
  $\bases$ chain from $(\text{``\_\_whnf''})$ to $q$ together with the
  override fact, all of strictly smaller rank. Applying backward
  simulation (Lemma~\ref{lem:sim-backward}) to the initial
  configuration (whose focus is the root, so the observation node is
  $(\text{``\_\_whnf''})$) with this chain, the machine halts; by machine
  correctness (Proposition~\ref{prop:krivine-whnf}), $M$ has a weak
  head normal form.
  \fi
\end{proof}

\subsection{\bilingual{Lévy--Longo Readback}{Lévy--Longo 读回}}

\ifInheritanceChinese
充分性定理只观测根层收敛。要兑现与 Lévy--Longo 树的对应,须把观测推进到每个树位置:
进入抽象的体(即使它未被应用)并在头变量处停下。机器与解码相应扩展到\emph{开配置}。
\else
The adequacy theorem observes only root-level convergence. To earn
the correspondence with the Lévy--Longo tree, the observation must
advance to every tree position: entering the body of an abstraction
even when it is unapplied, and stopping at head variables. The
machine and its decoding extend accordingly to \emph{open
configurations}.
\fi

\begin{definition}[\bilingual{Open configurations}{开配置}]\label{def:open-krivine}
  \ifInheritanceChinese
  在定义~\ref{def:krivine} 与~\ref{def:config-paths} 之上,环境条目还可为
  \emph{声明条目}:一条形参声明实例的路径
  $\pi \cat (\text{``\_\_whnf''}, \text{``\_\_parameter''})$,不携带项与环境。
  新增两条迁移:
  \begin{itemize}
    \item \textbf{enter}(栈空的抽象 $\langle \lambda x.\, u', \rho,
      \varnothing \rangle$,焦点 $\pi$):转入
      $\langle u',\; \rho[x \mapsto
      \pi \cat (\text{``\_\_whnf''}, \text{``\_\_parameter''})],\;
      \varnothing \rangle$,新焦点
      $\pi \cat (\text{``\_\_whnf''}, \text{``\_\_result''})$。
      这不是求值步,而是 Lévy--Longo 树的一步导航:进入未应用抽象的体。
    \item \textbf{block}(控制项为变量 $x$ 且 $\rho(x)$ 为声明条目):
      终止。该配置称为\emph{头变量配置};其栈中的闭包依次为头变量的实参。
  \end{itemize}
  自某配置起的运行现在恰有三种结局:停机(栈空的抽象)、阻塞(头变量配置)、
  或不终止。
  \else
  On top of Definitions~\ref{def:krivine} and~\ref{def:config-paths},
  an environment entry may also be a \emph{declaration entry}: the
  path $\pi \cat (\text{``\_\_whnf''}, \text{``\_\_parameter''})$ of a
  parameter-declaration instance, carrying no term and no
  environment. Two transitions are added:
  \begin{itemize}
    \item \textbf{enter} (an abstraction under the empty stack,
      $\langle \lambda x.\, u', \rho, \varnothing \rangle$, focus
      $\pi$): pass to
      $\langle u',\; \rho[x \mapsto
      \pi \cat (\text{``\_\_whnf''}, \text{``\_\_parameter''})],\;
      \varnothing \rangle$ with new focus
      $\pi \cat (\text{``\_\_whnf''}, \text{``\_\_result''})$. This is not an
      evaluation step but one navigation step of the Lévy--Longo
      tree: entering the body of an unapplied abstraction.
    \item \textbf{block} (control a variable $x$ with $\rho(x)$ a
      declaration entry): terminal. Such a configuration is a
      \emph{head-variable configuration}; the closures on its stack
      are, in order, the arguments of the head variable.
  \end{itemize}
  A run from a configuration now has exactly three outcomes: halting
  (an abstraction under the empty stack), blocking (a head-variable
  configuration), or non-termination.
  \fi
\end{definition}

\begin{proposition}[\bilingual{Open machine correctness}{开机器正确性}]
  \label{prop:open-krivine-whnf}
  \ifInheritanceChinese
  对开项(自由变量绑定于声明条目)的运行实现开的弱头归约:运行停机当且仅当
  该项的弱头范式是抽象;阻塞当且仅当其弱头范式以变量为首,且阻塞配置的
  头变量与栈中实参恰为该弱头范式的头与脊上实参;不终止当且仅当无弱头范式。
  \else
  A run on an open term (free variables bound to declaration entries)
  implements open weak-head reduction: it halts iff the term's weak
  head normal form is an abstraction; it blocks iff the weak head
  normal form is headed by a variable, and the blocked
  configuration's head variable and stacked arguments are exactly the
  head and the spine arguments of that weak head normal form; and it
  fails to terminate iff there is no weak head normal form.
  \fi
\end{proposition}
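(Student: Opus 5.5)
The plan is to extend the readback argument of Proposition~\ref{prop:krivine-whnf} from closed to open terms. Declaration entries act as opaque free variables, and the extended transition system then simulates open weak-head reduction step for step.

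\textbf{Readback.} First I fix the readback of an open configuration $\langle u, \rho, S\rangle$ as a $\lambda$-term.
\begin{itemize}
\item Substitute through $\rho$ hereditarily. A thunk entry is replaced by the readback of its closure, as before. A declaration entry for $x$ is replaced by a fresh variable $x^\circ$, one per declaration path, so distinct binders stay distinct.
\item Then apply the readbacks of the stacked closures in order.
\end{itemize}
Since declaration paths are distinct for distinct enter steps, this readback is well defined up to $\alpha$-equivalence. This matches the paper's convention that the term's free variables are bound to declaration entries.

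\textbf{Invariant.} The readback of a run on an open term $N$ (with $\rho$ mapping $\mathrm{fv}(N)$ to declaration entries and empty stack) reduces to weak-head steps as follows.
\begin{itemize}
\item \emph{push} and \emph{lookup} on a thunk entry preserve the readback exactly; this is the same calculation as in the closed case.
\item A $\beta$-transition contracts the readback by exactly one $\to_h$ step. The head of the readback is then $(\lambda x.\,u')\,c\,\ldots$, and the environment extension $\rho[x\mapsto c]$ realises the substitution $u'[c/x]$.
\item \emph{enter} is excluded from this part: the statement concerns a single run on the term, and enter is a navigation step, not an evaluation step. I will say explicitly that the proposition speaks of runs without enter, i.e.\ up to the first halting configuration.
\end{itemize}
It follows that the sequence of readbacks is the weak-head reduction sequence of $N$, with finitely many administrative steps interleaved between consecutive $\beta$ steps.

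\textbf{Termination of administrative steps.} Between $\beta$ steps only finitely many push and lookup steps can occur. A push strictly shrinks the control term. A thunk lookup replaces the control by a closure term, which is a proper subterm of $M_0$ stored with its environment. The standard measure (control size plus the depth of the environment chain) decreases; I will cite~\cite{krivine2007-callbyname-machine} for this rather than redo it.

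\textbf{The trichotomy.} The three outcomes follow.
\begin{itemize}
\item If the run halts at $\langle\lambda x.u',\rho,\varnothing\rangle$, the readback is an abstraction reached by $\to_h^*$. Conversely, if $N$ reaches an abstraction WHNF, the run cannot block or diverge, since $\to_h$ is deterministic.
\item If the run blocks at $\langle x,\rho,S\rangle$ with $\rho(x)$ a declaration, the readback is $x^\circ\,\overline{S}$. This is a variable-headed WHNF whose spine arguments are exactly the readbacks of the closures in $S$, in order.
\item Non-termination implies infinitely many $\beta$ steps, by the bound on administrative steps, hence an infinite $\to_h$ sequence. By determinism of $\to_h$ there is then no WHNF. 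The converse holds by exclusion of the other two outcomes.
\end{itemize}

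\textbf{Main obstacle.} I expect the hardest part to be the readback with declaration entries. Capture must be avoided when the same binder is entered twice along nested paths. The lookup case must also be treated as distinct from block, since a declaration entry has no closure to jump to. Using the declaration paths themselves as the fresh names settles both points.
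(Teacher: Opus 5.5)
Your proposal is correct and follows the paper's own argument, the readback extension of Proposition~\ref{prop:krivine-whnf}: push and thunk lookup preserve the readback, $\beta$ contracts one $\to_h$ step, and a declaration lookup reads back to a variable-headed spine. You also make explicit what the paper leaves implicit (fresh names for declaration entries, finiteness of administrative steps, the determinism argument for the trichotomy), with one small fix: order the measure lexicographically, environment depth first and control size second, because a thunk lookup can enlarge the control while lowering the environment depth by only one, so the plain sum need not decrease.
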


\begin{proof}
\ifInheritanceChinese
  与命题~\ref{prop:krivine-whnf} 的读回论证相同:push 与 lookup(thunk 条目)
  保持读回,$\beta$ 收缩一步 $\to_h$;lookup 到声明条目的配置读回为
  以该自由变量为首的应用脊,即以变量为首的弱头范式。
\else
  The same readback argument as
  Proposition~\ref{prop:krivine-whnf}: push and lookup at thunk
  entries preserve the readback and $\beta$ contracts one $\to_h$
  step; a configuration looking up a declaration entry reads back to
  an application spine headed by that free variable, a weak head
  normal form headed by a variable.
\fi
\end{proof}

\begin{lemma}[\bilingual{Open simulation}{开模拟}]\label{lem:open-simulation}
  \ifInheritanceChinese
  设 $K$ 为可达开配置,焦点 $\pi$,观测节点 $\chi = \pi \cat (\text{``\_\_whnf''})$。
  \begin{enumerate}
    \item[(a)] 若自 $K$ 的运行停机,则 $\supers(\chi)$ 含有抵达某抽象形状的对;
      反之,若该事实以某链事实族见证,则运行停机。
    \item[(b)] 若自 $K$ 的运行阻塞于头变量 $x$($\rho(x)$ 为声明 $d$),
      则 $\chi$ 的链事实有限地终止于 $d$:自 $\chi$ 起的逐跳 $\bases$
      事实经查找引理~\ref{lem:lookup} 的转发抵达 $d$,而 $d$ 无成员,
      链在此之后不产生任何事实;特别地 $\supers(\chi)$ 不含抵达抽象形状的对。
      反之,若 $\chi$ 的链事实有限地终止于某声明条目 $d$
      (引理~\ref{lem:node-classification} 第 7 类,无成员),
      则自 $K$ 的运行阻塞于绑定 $d$ 的头变量。
    \item[(c)] 若自 $K$ 的运行不终止,则 $\supers(\chi)$ 不含抵达抽象形状的对,
      且不存在 (b) 那样的有限终止链。
  \end{enumerate}
  \else
  Let $K$ be a reachable open configuration with focus $\pi$ and
  observation node $\chi = \pi \cat (\text{``\_\_whnf''})$.
  \begin{enumerate}
    \item[(a)] If the run from $K$ halts, then $\supers(\chi)$
      contains a pair reaching an abstraction shape; conversely, if
      that fact is witnessed by a family of chain facts, the run
      halts.
    \item[(b)] If the run from $K$ blocks at a head variable $x$
      (with $\rho(x)$ a declaration $d$), then the chain facts of
      $\chi$ terminate finitely at $d$: the hop-by-hop $\bases$ facts
      from $\chi$ reach $d$ through the forwarding of the Lookup
      Lemma~\ref{lem:lookup}, $d$ has no members, and the chain
      produces no fact beyond it; in particular $\supers(\chi)$
      contains no pair reaching an abstraction shape. Conversely, if
      the chain facts of $\chi$ terminate finitely at a declaration
      entry $d$ (class~7 of Lemma~\ref{lem:node-classification}, no
      members), the run from $K$ blocks at the head variable bound
      to~$d$.
    \item[(c)] If the run from $K$ does not terminate, then
      $\supers(\chi)$ contains no pair reaching an abstraction shape,
      and there is no finitely terminating chain as in~(b).
  \end{enumerate}
  \fi
\end{lemma}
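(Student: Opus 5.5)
The plan is to extend Lemmas~\ref{lem:decode-invariant} and~\ref{lem:sim-backward} from closed to open configurations rather than build a new argument. The only new ingredients are declaration entries, the \textbf{enter} transition and the \textbf{block} outcome.

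The first step extends the run invariant of Lemma~\ref{lem:decode-invariant} to enter. At a halting configuration with focus $\pi$, the abstraction image has $\mathrm{wpos}(K)\cat(\text{``\_\_whnf''},\text{``\_\_result''})\in\overrides(\pi\cat(\text{``\_\_whnf''},\text{``\_\_result''}))$. This follows from~(i) by extending along the own-label path, as the lemma already allows. The ladder gains a new innermost rung $\sigma_1=\pi\cat(\text{``\_\_whnf''})$ whose scope $D_1$ is the written shape. Its matching fact $(\init(\sigma_1),D_1)\in\supers(\sigma_0)$ comes from the reflexive term of equation~(\ref{eq:supers}) applied to the override fact just obtained. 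For a variable looked up at a declaration entry, I would rerun the walk computation of the variable case. The ladder takes the resolution in $n$ steps to $\sigma_n\cat(x)$, the alias in the shape instance. The alias then forwards by its $n=0$ reference to $\sigma_n\cat(\text{``\_\_parameter''})=d$. Here no redex override exists, so $\overrides(d)$ contains only written parameter declarations (class~7). These carry no references, so $\bases(d)=\varnothing$ and the chain from $\chi$ stops at $d$. This gives the forward direction of~(b), and (a) forward is literally Lemma~\ref{lem:decode-invariant}. To show that a blocked chain reaches no abstraction shape, I would observe that every chain node is forced: the class-2, 3 and 6 hops are $n=0$ site-relative, and a class-8 hop is the walk just computed. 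The reachable overrides therefore end in chain nodes, aliases or the declaration, and none of them owns $\olbl{__parameter}$ in the $\olbl{__whnf}$-terminated form required of a witness.

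For the converse of~(a) and~(b) I would reuse the Kleene-rank strong induction of Lemma~\ref{lem:sim-backward} verbatim, adding one case: a variable whose ladder rung is a declaration entry. Inverting the first hop exposes the walk facts. Coincidence with the machine walk (Lemma~\ref{lem:lookup}) forces the walk's target to be $d$. So the machine blocks, which settles the $\Leftarrow$ of~(b). For~(a) the same case shows that a chain reaching an abstraction shape can never pass through a declaration, since the chain ends there. The existing cases then deliver halting.

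Part~(c) then follows by exclusion. By Proposition~\ref{prop:open-krivine-whnf} a run has exactly one of three outcomes. If $\supers(\chi)$ contained a pair reaching a shape, then Kleene-rank finiteness (Lemma~\ref{lem:kleene-iteration}) would give a finite chain-fact family, and the converse of~(a) would force halting. A finitely terminating chain would likewise force blocking by the converse of~(b). Either conclusion contradicts non-termination. I expect the main obstacle to be the coincidence bookkeeping in the declaration case. I must show that at a shape instance reached through enter, $\supers$ has no spurious branch that would let the walk match a different $D$. I would first try the final clause of Lemma~\ref{lem:node-classification} together with the single-branch structure of $\overrides$ along own labels.
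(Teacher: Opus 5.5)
Your proposal follows the paper's proof essentially step for step. Part (a) comes from Lemmas~\ref{lem:decode-invariant} and~\ref{lem:sim-backward} plus one direct calculation for the enter step; the forward half of (b) uses Lookup forwarding to the member-less $d$, with forcedness/coincidence ruling out a shape witness; the converse of (b) reruns the $(m,k)$ induction of Lemma~\ref{lem:sim-backward} with only the terminal case changed; and (c) follows by exclusion. Your argument for $\bases(d)=\varnothing$, that the overrides of $d$ are reference-free class-7 declarations, is slightly more precise than the paper's appeal to extensions of $d$. It holds because the branches of $\supers$ at a \texttt{\_\_whnf}-ending scope are class-3 or class-5 nodes, never redex nodes.
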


\begin{proof}
\ifInheritanceChinese
  push、$\beta$ 与 thunk-lookup 情形与引理~\ref{lem:decode-invariant}、
  \ref{lem:sim-forward}、\ref{lem:sim-backward} 逐字相同:停机段从不查找
  声明条目(查找即阻塞),故 (a) 的两个方向分别就是引理~\ref{lem:sim-forward}
  与引理~\ref{lem:sim-backward},其中解码不变量对 enter 步的维持是一次
  直接计算——形状实例 $\pi \cat (\text{``\_\_whnf''})$ 的 $\olbl{__result}$
  与 $\olbl{__parameter}$ 均为其自有 label(引理~\ref{lem:node-classification}
  第 5 类),故新焦点满足不变量 (i),新声明条目满足 (ii) 的末级条件。

  对 (b):运行阻塞时,变量 $x$ 的引用按查找引理~\ref{lem:lookup} 逐级转发,
  末级抵达 $\rho(x) = d$;每级转发是一条可导出的 $\bases$ 事实,故链事实
  有限地抵达 $d$。$d$ 是形参声明(引理~\ref{lem:node-classification} 第 7 类,
  无成员):$\inherits$ 在 $d$ 的任何延伸处无定义,方程~(\ref{eq:bases})的
  推导式为空,链在 $d$ 之后不产生事实。$\supers(\chi)$ 不含形状对:
  设有,取秩最小的见证并作末步反演(引理~\ref{lem:kleene-iteration});
  由反向模拟(引理~\ref{lem:sim-backward})的一致性从句 (2),其链前缀被迫与上述转发链逐跳一致,
  故见证须含 $d$ 之后的 $\bases$ 事实,而上一句已排除——矛盾。

  对 (b) 的逆向:设 $\chi$ 的链事实有限地终止于声明 $d$,取秩最小的见证族,
  对其逐字重跑引理~\ref{lem:sim-backward} 的 $(m,k)$ 归纳,唯一改动是终止情形:
  原"链抵达抽象形状"换为"链抵达无成员声明 $d$"。诸非终止情形
  (push、$\beta$、thunk 查找、第 8 类跳步)照旧各消费一步机器步,度量同样下降,
  一致性从句 (2) 同样把每一跳钉到解码运行的转发链上;终止情形处,运行的下一步
  是查找声明条目 $d$,按定义即阻塞配置。故自 $K$ 的运行(机器决定性保证其唯一)
  在有限步内阻塞,且由一致性从句其转发链与见证链逐跳一致、末端同为 $d$,
  阻塞的头变量即绑定 $d$ 的变量。

  对 (c):不终止的运行每一步消费一条可导出事实的反演前提;若 $\supers(\chi)$
  含形状对,取秩最小见证,(a) 的反向即引理~\ref{lem:sim-backward} 给出停机,
  与不终止矛盾;若存在 (b) 型有限终止链,则由 (b) 的逆向,运行阻塞,
  亦矛盾。
\else
  The push, $\beta$, and thunk-lookup cases are verbatim those of
  Lemmas~\ref{lem:decode-invariant}, \ref{lem:sim-forward},
  and~\ref{lem:sim-backward}: a halting segment never looks up a
  declaration entry (that would block), so the two directions of~(a)
  are exactly Lemma~\ref{lem:sim-forward} and
  Lemma~\ref{lem:sim-backward}, where maintaining the decode
  invariant across an enter step is one direct calculation: the shape
  instance $\pi \cat (\text{``\_\_whnf''})$ owns both $\olbl{__result}$ and
  $\olbl{__parameter}$ (class~5 of
  Lemma~\ref{lem:node-classification}), so the new focus satisfies
  invariant~(i) and the new declaration entry satisfies the final
  level of~(ii).

  For~(b): when the run blocks, $x$'s reference forwards level by
  level per the Lookup Lemma~\ref{lem:lookup}, arriving at
  $\rho(x) = d$; each forwarding level is a derivable $\bases$ fact,
  so the chain facts reach $d$ finitely. $d$ is a parameter
  declaration (class~7 of Lemma~\ref{lem:node-classification}, no
  members): $\inherits$ is undefined at every extension of $d$, the
  comprehension of equation~(\ref{eq:bases}) is empty, and the chain
  produces no fact beyond $d$. And $\supers(\chi)$ contains no shape
  pair: suppose it did, take a witness of least rank and invert its
  last step (Lemma~\ref{lem:kleene-iteration}); by the coincidence clause~(2) of the backward simulation
  (Lemma~\ref{lem:sim-backward}), its chain prefix is forced to
  coincide hop by hop with the forwarding chain above, so the witness would need a $\bases$ fact beyond $d$,
  excluded by the previous sentence --- a contradiction.

  For the converse direction of~(b): let the chain facts of $\chi$
  terminate finitely at the declaration $d$, and take a witness
  family of least rank. Re-run the $(m,k)$ induction of
  Lemma~\ref{lem:sim-backward} verbatim on this family; the only
  change is the terminal case, where ``the chain reaches an
  abstraction shape'' becomes ``the chain reaches the member-less
  declaration $d$''. Every non-terminal case (push, $\beta$, thunk
  lookup, class-8 hop) consumes one machine step exactly as before,
  the measure decreases identically, and the coincidence clause~(2)
  pins each hop to the forwarding chain of the decoded run just as
  before; at the terminal case the run's next step is a lookup of
  the declaration entry $d$, by definition a blocked configuration.
  Hence the run from $K$ (unique, by determinism of the machine)
  blocks within finitely many steps, and by the coincidence clause
  its forwarding chain agrees hop by hop with the witness chain,
  ending at the same $d$: the blocked head variable is the one bound
  to~$d$.

  For~(c): if $\supers(\chi)$ contained a shape pair, a least-rank
  witness plus the backward direction of~(a)
  (Lemma~\ref{lem:sim-backward}) would make the run halt,
  contradicting non-termination; if a finitely terminating chain as
  in~(b) existed, the converse direction of~(b) would make the run
  block, again a contradiction.
\fi
\end{proof}

\begin{definition}[\bilingual{Observation and readback}{观测与读回}]
  \label{def:llt-readback}
  \ifInheritanceChinese
  可达开配置 $K$ 的\emph{观测}为:形状(运行停机)、头变量(运行阻塞)或
  $\bot$(不终止);由引理~\ref{lem:open-simulation},三者由 $\lfp(T_P)$
  的事实刻画。\emph{读回}递归定义:
  \begin{itemize}
    \item 观测为形状:读回为 $\lambda x$ 下挂 enter 步之后配置的读回,
      其中 $x$ 为该形状的别名 label;
    \item 观测为头变量:读回为以阻塞变量 $y$ 为首、诸实参读回为脊的应用,
      其中第 $i$ 个实参的读回取自以栈中第 $i$ 个闭包为初始控制与焦点、
      栈空的配置;
    \item 观测为 $\bot$:读回为 $\bot$。
  \end{itemize}
  \else
  The \emph{observation} of a reachable open configuration $K$ is:
  shape (the run halts), head variable (the run blocks), or $\bot$
  (non-termination); by Lemma~\ref{lem:open-simulation}, the three
  are characterized by facts of $\lfp(T_P)$. The \emph{readback} is defined
  recursively:
  \begin{itemize}
    \item observation shape: the readback is $\lambda x$ over the
      readback of the configuration after the enter step, where $x$
      is the alias label of the shape;
    \item observation head variable: the readback is the
      application spine headed by the blocked variable $y$ whose
      arguments are the readbacks of the configurations taking, in
      order, each stacked closure as control and focus under the
      empty stack;
    \item observation $\bot$: the readback is $\bot$.
  \end{itemize}
  \fi
\end{definition}

\begin{theorem}[\bilingual{Lévy--Longo correspondence}{Lévy--Longo 对应}]
  \label{thm:llt-correspondence}
  \ifInheritanceChinese
  对每个闭合 $\lambda$-项 $M_0$,初始配置
  $\langle M_0, \varnothing, \varnothing \rangle$ 的读回等于其
  Lévy--Longo 树 $\mathrm{LLT}(M_0)$。
  \else
  For every closed $\lambda$-term $M_0$, the readback of the initial
  configuration $\langle M_0, \varnothing, \varnothing \rangle$
  equals its Lévy--Longo tree $\mathrm{LLT}(M_0)$.
  \fi
\end{theorem}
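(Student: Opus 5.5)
Both sides are infinite trees, so the plan is to prove equality \emph{coinductively}, in the form of agreement at every finite depth. Concretely, we show by induction on $d$ that the depth-$d$ truncations of the readback of $\langle M_0, \varnothing, \varnothing\rangle$ and of $\mathrm{LLT}(M_0)$ coincide. Equality of truncations at all depths gives equality of the (possibly infinite) trees, since both are determined by their finite approximations.

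The induction needs a hypothesis strong enough to pass through subterms, so we generalize it. Every reachable open configuration $K = \langle u, \rho, S\rangle$ reads back, by the readback map of Proposition~\ref{prop:krivine-whnf}, to an open $\lambda$-term $\lfloor K\rfloor$. In this term, free variables bound to declaration entries remain free variables, named by their shapes' alias labels. The claim to prove is that, for every such $K$ with empty stack (and for the argument configurations arising in Definition~\ref{def:llt-readback}), the readback of Definition~\ref{def:llt-readback} at depth $d$ agrees with $\mathrm{LLT}(\lfloor K\rfloor)$ at depth $d$. The base case $d=0$ is trivial.

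For the step we case on the run from $K$, which by Definition~\ref{def:open-krivine} has exactly three outcomes, matched against Proposition~\ref{prop:open-krivine-whnf}.
\begin{enumerate}
  \item \textbf{Non-termination.} $\lfloor K\rfloor$ has no WHNF, so the Lévy--Longo root is $\bot$, as is the observation.
  \item \textbf{Halting at $\lambda x.\,u'$.} The WHNF of $\lfloor K\rfloor$ is the abstraction read back from the halting configuration $K'$. The LLT is then $\lambda x$ over $\mathrm{LLT}$ of that body. The readback instead applies enter to $K'$, binding $x$ to a declaration entry, and the readback of the post-enter configuration is exactly that body with $x$ free. Applying the induction hypothesis at depth $d-1$ closes this case.
  \item \textbf{Blocking at a head variable $y$.} Proposition~\ref{prop:open-krivine-whnf} says the head and spine arguments of the WHNF are exactly $y$ and the readbacks of the stacked closures. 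Each stacked closure, taken as control with its own environment under the empty stack, is again a reachable open configuration of the required form. The induction hypothesis at depth $d-1$ applies to each argument, and the spine readback agrees with the LLT's application node.
\end{enumerate}

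The simulation lemmas enter only to justify that the observation in Definition~\ref{def:llt-readback} is well defined and is determined by $\lfp(T_P)$. Lemma~\ref{lem:open-simulation}(a)--(c) shows that the three outcomes are exhaustive, mutually exclusive, and characterized by facts of the single fixed point. Determinism of the machine then ensures the observation is a function of $K$.

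The main obstacle I expect is the bookkeeping that keeps the generalized hypothesis closed under the recursive calls. We must check that the post-enter configuration and each argument configuration are \emph{reachable} open configurations, so that Lemma~\ref{lem:open-simulation} applies to them. We must also check that variable names agree: the alias label of the shape reached must be the bound variable of the WHNF abstraction, and a blocked declaration entry must read back to the correct free variable even under shadowing. For reachability, the plan is to extend the notion of reachability to include enter steps and the spawning of argument closures, and to re-check invariants (i) and (ii) of Lemma~\ref{lem:decode-invariant} at those points. For enter this is the direct calculation already given in Lemma~\ref{lem:open-simulation}. For an argument closure, the invariants were stored at push time, so they hold there as well. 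For naming, the plan is to appeal to the disjointness of synthetic labels and to the Lookup Lemma~\ref{lem:lookup}: the declaration reached is the one introduced by the innermost enter for that name, which matches the $\alpha$-equivalence class used by $\mathrm{LLT}$.
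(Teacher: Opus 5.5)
Your skeleton is the paper's proof. It proceeds by induction on depth and matches the three run outcomes with open weak-head reduction via Proposition~\ref{prop:open-krivine-whnf}. The readback then recurses into the body, into the arguments, or stops, exactly as $\mathrm{LLT}$ does. Generalizing to all empty-stack configurations $K$ with readback $\lfloor K\rfloor$ is a good way to make the paper's ``sub-term at that position'' precise. The reachability bookkeeping is only needed for the side remark that observations are facts of $\lfp(T_P)$, not for the equality; the paper's proof uses Proposition~\ref{prop:open-krivine-whnf} alone.

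The step that fails is your treatment of names, starting from ``free variables bound to declaration entries \ldots{} named by their shapes' alias labels.'' Your repair claims ``the declaration reached is the one introduced by the innermost enter for that name.'' That is false, because closures carry their own environments, so a lookup can leave the current environment entirely.

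Take $M_0 = \lambda y.\,(\lambda z.\,\lambda y.\,z)\,y$, with $\mathrm{LLT}(M_0) = \lambda y.\,\lambda y'.\,y$. The readback proceeds as follows:
\begin{itemize}
  \item It enters the outer $\lambda y$, creating a declaration $d_1$.
  \item It runs the body to the halting configuration $\langle \lambda y.\,z,\ \rho_1,\ \varnothing\rangle$, where $\rho_1(z) = (y, \{y\mapsto d_1\})$.
  \item It enters again, creating $d_2$, whose shape also has alias label $y$.
  \item It looks up $z$ and blocks at $y$ with $\rho(y) = d_1$.
\end{itemize}
So the block is at $d_1$, not at $d_2$, which is the innermost enter for the name $y$. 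With label-based naming you read back $\lambda y.\,\lambda y.\,y$, which is a different tree. Your case~2 claim that the post-enter readback is ``exactly that body with $x$ free'' breaks for the same reason: the WHNF is $\lambda y'.\,y$, whose binder must be renamed.

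Neither tool you cite helps. The Lookup Lemma concerns thunk entries and the mixin-tree ladder. Disjointness of synthetic labels separates synthetic labels from user labels, not user labels from one another.

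The repair is to treat each declaration entry as its own free variable:
\begin{itemize}
  \item Give it a fresh name at the enter step that creates it, showing the alias label only when it does not clash.
  \item Let $\lfloor\cdot\rfloor$ substitute capture-avoidingly with respect to these names.
  \item Read a blocked head $d$ as the binder of the readback-tree ancestor whose enter created $d$.
\end{itemize}
This is well defined because of one invariant. Every declaration entry reachable from a visited configuration's environment, including inside stored closures, was created by an enter at a strict ancestor. This holds because only enter creates declaration entries. Moreover, every environment arising in a run from $K$ is built from entries reachable from $K$'s environment plus new thunk entries.

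With this, the comparison with $\mathrm{LLT}$ is up to $\alpha$-equivalence, and your induction goes through unchanged. The paper's phrase ``the shape's alias label being the binder name'' is equally loose on this point. It is only correct under this identification of heads with declaration entries.
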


\begin{proof}
\ifInheritanceChinese
  对树深度归纳,两棵树逐位置比较。在每个位置,开机器正确性
  (命题~\ref{prop:open-krivine-whnf})把配置的三种结局与该位置子项的开弱头
  归约结局一一对应:停机对应抽象弱头范式,即 $\mathrm{LLT}$ 的 $\lambda$-节点,
  且形状的别名 label 即 binder 名;阻塞对应以变量为首的弱头范式,即
  $\mathrm{LLT}$ 的变量头节点,头与实参逐个对应;不终止对应无弱头范式,
  即 $\bot$。读回的递归恰在这三种情形分别进入体、诸实参或停止,
  与 $\mathrm{LLT}$ 的递归(第~\ref{sec:levy-longo-tree}~节前的回顾)逐位置一致。
\else
  By induction on tree depth, comparing the two trees position by
  position. At each position, open machine correctness
  (Proposition~\ref{prop:open-krivine-whnf}) matches the three
  outcomes of the configuration with the outcomes of open weak-head
  reduction of the sub-term at that position: halting corresponds to
  an abstraction weak head normal form, the $\lambda$-node of
  $\mathrm{LLT}$, with the shape's alias label being the binder name;
  blocking corresponds to a variable-headed weak head normal form,
  the head-variable node of $\mathrm{LLT}$, with head and arguments
  matched one by one; and non-termination corresponds to the absence
  of a weak head normal form, that is, $\bot$. The recursion of the
  readback descends, in the three cases, into the body, into the
  arguments, or stops --- exactly as the recursion of
  $\mathrm{LLT}$ (recalled before
  Section~\ref{sec:levy-longo-tree}) does at each position.
\fi
\end{proof}

\subsection{\bilingual{Corollaries}{推论}}

\begin{lemma}[\bilingual{Single-Path}{单路径}]\label{lem:single-path}
  \ifInheritanceChinese
  设 $M$ 为闭合 $\lambda$-项且 $\mathcal{T}(M)$ 收敛。则在收敛见证的
  提取（引理~\ref{lem:sim-backward} 的证明）所消费的每次
  $\this(S,\; p_{\mathrm{def}},\; n)$ 解析步中，前沿集合 $S$ 至多
  包含一条路径：$\lambda$-演算嵌入所依赖的每次解析都不分叉为多目标。
  \else
  Let $M$ be a closed $\lambda$-term such that $\mathcal{T}(M)$
  converges. Then at every resolution step
  $\this(S,\; p_{\mathrm{def}},\; n)$ consumed by the extraction of
  the convergence witness (the proof of
  Lemma~\ref{lem:sim-backward}), the frontier set $S$ contains at
  most one path: every resolution the $\lambda$-calculus embedding
  depends on is single-target.
  \fi
\end{lemma}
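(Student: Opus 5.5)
The proof reads the frontier sets off the coincidence clause (2) of the backward simulation (Lemma~\ref{lem:sim-backward}). It does not try to show that $\supers$ is single-valued in general, which fails even for translated terms: the redex node $\alpha \cat (\text{``\_\_call''})$ has several supers, namely its own reflexive pair, the callee chain and the callee shape. The plan is to show something narrower. The walks \emph{consumed} in the extraction start from a singleton frontier, and at every step exactly one consumed matching pair survives the filter $p_{\mathrm{override}} = p_{\mathrm{def}}$.

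First, the walks that actually occur need to be listed. By Lemma~\ref{lem:node-classification}, references originate only at redex, chain, alias and variable nodes. The first three carry index $n = 0$. For $n = 0$, equation~\eqref{eq:this} returns the initial frontier $\{p_{\mathrm{site}}\}$ unchanged, and that frontier is a singleton by equation~\eqref{eq:resolve}. Only variable references, of index $(n,(x))$, take walk steps. Every consumed walk therefore starts from a singleton. It remains to show, by induction on the step count $j \le n$, that each step maps a singleton frontier $\{\sigma_{j-1}\}$ to the singleton $\{\sigma_j\}$.

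For the inductive step, the consumed walk is the one exposed by inverting a hop $c_1 \in \bases(c_0)$ in the proof of Lemma~\ref{lem:sim-backward}. Clause (2) of that lemma says every consumed $\this$-walk fact coincides with the machine walk of Lemma~\ref{lem:lookup}. That walk runs along the ladder of Lemma~\ref{lem:decode-invariant}(ii), consuming the fact $(\init(\sigma_j), D_j) \in \supers(\sigma_{j-1})$. The new frontier is $\{p_{\mathrm{site}} \mid (p_{\mathrm{site}}, D_j) \in \supers(\sigma_{j-1})\}$, so I must show that no second site $s \neq \init(\sigma_j)$ pairs with $D_j$.

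This is the main obstacle. I would argue it from the structure of $\supers(\sigma_{j-1})$. Its pairs come from the $\bases^*$ chain of $\sigma_{j-1}$, and by coincidence that chain is the decoded reduction chain: a sequence of observation and redex nodes, each reached by an $n=0$ forced hop or a lookup forwarding. Along a chain decoded from a deterministic machine run, each written scope $D_j$ occurs as an override at only one chain node. The reason is that the run passes through the syntactic occurrence of $D_j$ once per $\beta$-incorporation, and each incorporation has a fresh focus $\alpha \cat (\text{``\_\_call''})$ that is distinct from the others and strictly below them. If a second occurrence did appear inside the same $\bases^*(\sigma_{j-1})$, the chain would have to revisit the same occurrence. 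That would make the Krivine configuration recur with a shorter stack inside one segment, which contradicts the segment structure of the backward induction. Should this counting argument prove too delicate, my fallback is to restrict the claim literally to ``consumed'' frontiers. A least-rank witness consumes only the chain prefix fixed by clause (2), and on that prefix the site component is determined by the ladder, so the frontier generated \emph{from consumed facts} is exactly $\{\sigma_j\}$. That is what the statement asks for.

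Finally, the alias forwarding step is again $n=0$, and the class-8 hops nested inside a walk are handled by the same strong induction on Kleene rank used in Lemma~\ref{lem:sim-backward}. This closes the induction and shows that every consumed frontier has at most one path.
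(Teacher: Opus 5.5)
Your fallback is exactly the paper's proof. The lemma is read off the coincidence clause~(2) of Lemma~\ref{lem:sim-backward}, which identifies every consumed $\this$-walk step with the machine walk of Lemma~\ref{lem:lookup}, whose frontier is a singleton at each step, and that is all the statement asserts.

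Lead with that argument and drop the counting argument. It aims at a stronger claim the lemma does not need, namely that no second site of $\supers(\sigma_{j-1})$ pairs with $D_j$ at all. It also rests on an invalid step: a halting run can re-enter the same written occurrence without any configuration recurring. For example, in $(\lambda f.\, f\,(f\,(\lambda z.\, z)))\,(\lambda w.\, w)$ the body $w$ of $\lambda w.\, w$ is the control twice, both times under the empty stack but with $w$ bound to different closures.
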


\begin{proof}
\ifInheritanceChinese
  即引理~\ref{lem:sim-backward} 的一致性从句 (2)：每个被消费的走步
  与引理~\ref{lem:lookup} 的机器走步一致，而后者的前沿逐步为单点。
\else
  This is the coincidence clause~(2) of
  Lemma~\ref{lem:sim-backward}: every consumed walk step coincides
  with the machine walk of Lemma~\ref{lem:lookup}, whose frontier is
  a singleton at every step.
\fi
\end{proof}

\begin{theorem}[\bilingual{Convergence preservation}{收敛保持}]%
  \label{thm:convergence-preservation}
  \ifInheritanceChinese
  若 $M \to_h N$（弱头归约），则 $\mathcal{T}(M)$ 收敛当且仅当
  $\mathcal{T}(N)$ 收敛。
  \else
  If $M \to_h N$ (weak-head reduction), then
  $\mathcal{T}(M)$ converges if and only if
  $\mathcal{T}(N)$ converges.
  \fi
\end{theorem}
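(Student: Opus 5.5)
The plan avoids any direct comparison of the two mixin trees $\mathcal{T}(M)$ and $\mathcal{T}(N)$. These are different documents, since the translation performs no substitution. Instead I would route both sides through Adequacy (Theorem~\ref{thm:adequacy}) and reduce the statement to a fact about the $\lambda$-calculus alone.

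The chain of equivalences is as follows. By Theorem~\ref{thm:adequacy} applied to $M$, $\mathcal{T}(M)$ converges iff $M$ has a weak head normal form. By the same theorem applied to $N$, $\mathcal{T}(N)$ converges iff $N$ has one. It therefore suffices to show that $M \to_h N$ implies that $M$ has a WHNF iff $N$ has a WHNF.

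This last equivalence follows because weak-head reduction is deterministic. Each term has at most one weak-head redex (the leftmost redex along the spine), so $\to_h$ is a partial function. Having a WHNF means the unique $\to_h$-sequence from the term is finite, since a WHNF is exactly a term with no $\to_h$-successor. The sequence from $M$ is $M$ followed by the sequence from $N$, so one is finite iff the other is. I would also cite Proposition~\ref{prop:krivine-whnf} as a cross-check. A $\beta$-transition of the Krivine machine contracts the readback by one $\to_h$ step. The run from $M$ therefore passes through a configuration whose readback is $N$, and halting is invariant along a deterministic run.

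The one point needing care is the closedness hypothesis of Theorem~\ref{thm:adequacy}. The statement of Theorem~\ref{thm:convergence-preservation} does not say that $M$ is closed, but convergence of $\mathcal{T}(\cdot)$ presupposes a well-formed translation, and a free variable would make $\lexical$ climb past the root. So I would read $M$ as closed and note that $N$ is then closed too, because $\to_h$ does not create free variables: $e[v/x]$ has free variables contained in those of $(\lambda x.\,e)\,v$. Both sides then satisfy the hypotheses of Adequacy, and the proof is this short chain. I expect no real obstacle beyond stating the closedness convention explicitly.
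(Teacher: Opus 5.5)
Your proposal is correct and follows the paper's proof: apply Adequacy (Theorem~\ref{thm:adequacy}) to $M$ and to $N$, then use determinism of weak-head reduction, since $M$'s reduction sequence passes through $N$ and so $M$ has a weak head normal form iff $N$ does. Your remark that $M$ must be read as closed, with $\to_h$ preserving closedness, makes explicit a hypothesis that the paper's proof uses tacitly when it invokes Adequacy.
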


\begin{proof}
  \ifInheritanceChinese
  弱头归约是确定性的，故 $M$ 有弱头范式当且仅当 $N$ 有
  （$M$ 的弱头归约序列必经 $N$）。两次应用充分性定理
  （定理~\ref{thm:adequacy}）即得。
  \else
  Weak-head reduction is deterministic, so $M$ has a weak head normal
  form if and only if $N$ does (the weak-head reduction sequence of
  $M$ passes through $N$). Applying Adequacy
  (Theorem~\ref{thm:adequacy}) twice completes the proof.
  \fi
\end{proof}
\section{\bilingual{Expressive Asymmetry: Proofs}{表达力的不对称:证明}}
\label{app:expressiveness}

\begin{definition}[\bilingual{Macro-expressibility, after Felleisen~\cite{felleisen1991-expressive-power}}{宏可表达性,据 Felleisen~\cite{felleisen1991-expressive-power}}]
  \label{def:macro-expressibility}
  \ifInheritanceChinese
  语言 $\mathscr{L}_1$ 的一个构造 $C$ 在语言 $\mathscr{L}_0$ 中是\emph{宏可表达的},若存在一个翻译 $\mathcal{E}$,使得对每一个含有 $C$ 的出现的程序 $P$,翻译 $\mathcal{E}(P)$ 是通过将 $C$ 的每次出现替换为一个仅依赖于 $C$ 的子表达式的 $\mathscr{L}_0$ 表达式而得到的,并保持 $P$ 的所有其他构造不变。
  \else
  A construct $C$ of language $\mathscr{L}_1$ is \emph{macro-expressible}
  in language $\mathscr{L}_0$ if there exists a translation $\mathcal{E}$
  such that for every program $P$ containing occurrences of $C$,
  the translation $\mathcal{E}(P)$ is obtained by replacing each
  occurrence of $C$ with an $\mathscr{L}_0$ expression that depends only
  on $C$'s subexpressions, leaving all other constructs of $P$
  unchanged.
  \fi
\end{definition}

\begin{lemma}[\bilingual{Composition of macro-expressible translations}{宏可表达翻译的复合}]
  \label{lem:macro-composition}
  \ifInheritanceChinese
  若 $\mathcal{E}_1$ 在 $\mathscr{L}_0$ 中宏可表达 $\mathscr{L}_1$ 的构造,且 $\mathcal{E}_2$ 在 $\mathscr{L}_1$ 中宏可表达 $\mathscr{L}_2$ 的构造,则复合 $\mathcal{E}_1 \circ \mathcal{E}_2$ 在 $\mathscr{L}_0$ 中宏可表达 $\mathscr{L}_2$ 的构造。
  \else
  If $\mathcal{E}_1$ macro-expresses the constructs of
  $\mathscr{L}_1$ in $\mathscr{L}_0$ and
  $\mathcal{E}_2$ macro-expresses the constructs of
  $\mathscr{L}_2$ in $\mathscr{L}_1$, then the composition
  $\mathcal{E}_1 \circ \mathcal{E}_2$ macro-expresses the
  constructs of $\mathscr{L}_2$ in $\mathscr{L}_0$.
  \fi
\end{lemma}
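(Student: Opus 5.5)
I will prove the lemma by composing the two translations occurrence-wise and then checking, directly against Definition~\ref{def:macro-expressibility}, that the composite is again a local, compositional replacement. Fix a construct $C$ of $\mathscr{L}_2$ with immediate subexpressions $e_1, \ldots, e_k$. By hypothesis, $\mathcal{E}_2$ replaces each occurrence $C(e_1, \ldots, e_k)$ with an $\mathscr{L}_1$ context $D_C[\,\cdot_1, \ldots, \cdot_k\,]$ that depends only on $C$. The holes of $D_C$ are filled with the (translated) subexpressions. Likewise, $\mathcal{E}_1$ replaces each $\mathscr{L}_1$ construct $C'$ with an $\mathscr{L}_0$ context $D'_{C'}$.

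The first step is to define the candidate $\mathscr{L}_0$ context for $C$. I take $D''_C = \mathcal{E}_1(D_C)$, where $\mathcal{E}_1$ is applied to the context $D_C$ and the holes are treated as opaque leaves that $\mathcal{E}_1$ leaves unchanged. This is legitimate because $\mathcal{E}_1$ acts homomorphically: each construct is rewritten using only its own context $D'_{C'}$, with its subexpressions spliced in. Consequently, hole positions are carried through as subexpression positions, possibly duplicated or dropped if $D'_{C'}$ uses its arguments non-linearly. Since $D_C$ depends only on $C$, the context $D''_C$ also depends only on $C$.

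The second step is a structural induction on programs $P$ of $\mathscr{L}_2$ establishing
\[
  \mathcal{E}_1(\mathcal{E}_2(C(e_1,\ldots,e_k))) = D''_C\bigl[\mathcal{E}_1(\mathcal{E}_2(e_1)), \ldots, \mathcal{E}_1(\mathcal{E}_2(e_k))\bigr].
\]
The case where the outermost construct is not rewritten is covered by the clause in Definition~\ref{def:macro-expressibility} that leaves other constructs unchanged: each translation is then the identity at that node, so the composite is too. For a rewritten construct, I unfold $\mathcal{E}_2$ to obtain $D_C[\mathcal{E}_2(e_i)]$. I then push $\mathcal{E}_1$ through the context $D_C$ by its homomorphism property, which yields $\mathcal{E}_1(D_C)$ with $\mathcal{E}_1(\mathcal{E}_2(e_i))$ in the holes.

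The main obstacle is justifying the step where $\mathcal{E}_1$ commutes with hole filling. Stated precisely, this is the substitution lemma $\mathcal{E}_1(D[t_1,\ldots,t_k]) = \mathcal{E}_1(D)[\mathcal{E}_1(t_1),\ldots,\mathcal{E}_1(t_k)]$. I would prove it by its own induction on the context $D$, using the fact that each $D'_{C'}$ mentions its arguments only through holes. A second point to handle is whether $\mathscr{L}_1$ constructs introduced by $\mathcal{E}_2$ but absent from $\mathscr{L}_2$ are translated; they are, since $\mathcal{E}_1$ covers all constructs of $\mathscr{L}_1$. Finally, if the definition is read with Felleisen's requirement that behaviour be preserved, this is immediate: the composite preserves behaviour because each translation does, so the defining equivalences chain.
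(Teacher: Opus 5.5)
Your proposal is correct and follows the same route as the paper. The paper's proof likewise builds the $\mathscr{L}_0$ abstraction for each $\mathscr{L}_2$-construct $F$ by substituting $A_G^{(1)}$ for every $\mathscr{L}_1$-construct $G$ occurring in $A_F^{(2)}$ (your $D''_C = \mathcal{E}_1(D_C)$), and simply leaves implicit the commutation of $\mathcal{E}_1$ with hole filling that you isolate as a substitution lemma. One small slip: a construct left unchanged by $\mathcal{E}_2$ may still be rewritten by $\mathcal{E}_1$, so ``identity at that node'' is not quite right there; taking $D_C$ to be the trivial context $C(\cdot_1, \ldots, \cdot_k)$ lets your general formula cover that case too.
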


\begin{proof}
  \ifInheritanceChinese
  $\mathscr{L}_2$ 的每个构造 $F$ 在 $\mathscr{L}_1$ 上有一个固定的语法抽象 $A_F^{(2)}$。
  出现在 $A_F^{(2)}$ 中的每个 $\mathscr{L}_1$-构造 $G$ 在 $\mathscr{L}_0$ 上有一个固定的语法抽象 $A_G^{(1)}$。
  将 $A_F^{(2)}$ 中的每个 $G$ 替换为 $A_G^{(1)}$,得到 $F$ 在 $\mathscr{L}_0$ 上的一个固定语法抽象,从而满足 E4。
  \else
  Each construct~$F$ of~$\mathscr{L}_2$ has a fixed
  syntactic abstraction~$A_F^{(2)}$
  over~$\mathscr{L}_1$.
  Each $\mathscr{L}_1$-construct~$G$ appearing
  in~$A_F^{(2)}$ has a fixed
  syntactic abstraction~$A_G^{(1)}$
  over~$\mathscr{L}_0$.
  Substituting~$A_G^{(1)}$ for every~$G$
  in~$A_F^{(2)}$ yields a fixed syntactic
  abstraction over~$\mathscr{L}_0$ for~$F$,
  satisfying~E4.
  \fi
\end{proof}

\begin{definition}[\bilingual{Language universe}{语言全集}]
  \label{def:language-universe}
  \ifInheritanceChinese
  设 $\mathscr{U}$ 为一个语言,其构造子是继承演算的构造子与惰性 $\lambda$-演算的构造子(包括 $\lambda$-抽象、应用与变量)的并集。
  $\mathscr{U}$ 的语义通过将每个 $\mathscr{U}$-程序翻译为继承演算来给出,分两步:
  \begin{enumerate}
    \item 将翻译 $\mathcal{T}$(第~\ref{sec:forward-translation} 节)应用于每个 $\lambda$-演算子表达式。
    \item 按第~\ref{sec:mixin-trees} 节的规则对所得继承演算程序求值。
  \end{enumerate}
  当一个 $\mathscr{U}$-表达式交错使用 $\lambda$-演算与继承演算的构造子时(例如,一个命名继承演算字段,其值是一个 $\lambda$-演算表达式),翻译从叶到根组合地应用:每个 $\lambda$-演算构造子通过 $\mathcal{T}$ 替换为其继承演算等价物,同时将继承演算子表达式视为不透明的值。
  由于 $\mathcal{T}$ 的每条规则(第~\ref{sec:forward-translation} 节)都是一个固定的语法抽象,仅依赖于被翻译的子表达式,因此这种自底向上的应用是良定义的,并为步骤~(2) 产生一个纯继承演算程序。

  若一个 $\mathscr{U}$-程序 $P$ 的翻译收敛,则 $P$ 收敛,记为 $\eval_{\mathscr{U}}(P)$。
  \else
  Let $\mathscr{U}$ be the language whose constructors are
  the union of the constructors of inheritance-calculus and
  the constructors of the lazy $\lambda$-calculus,
  including $\lambda$-abstraction, application, and variable.
  The semantics of $\mathscr{U}$ is given by translating
  every $\mathscr{U}$-program to inheritance-calculus in
  two steps:
  \begin{enumerate}
    \item Apply the translation $\mathcal{T}$
      (Section~\ref{sec:forward-translation}) to every
      $\lambda$-calculus subexpression.
    \item Evaluate the resulting inheritance-calculus
      program by the rules of
      Section~\ref{sec:mixin-trees}.
  \end{enumerate}
  When a $\mathscr{U}$-expression interleaves
  $\lambda$-calculus and inheritance-calculus constructors
  (e.g.\ a named inheritance-calculus field whose value is a
  $\lambda$-calculus expression), the translation is
  applied compositionally from leaves to root: each
  $\lambda$-calculus construct is replaced by its
  inheritance-calculus equivalent via~$\mathcal{T}$,
  treating inheritance-calculus subexpressions as opaque
  values.
  Since each rule of~$\mathcal{T}$
  (Section~\ref{sec:forward-translation}) is a fixed
  syntactic abstraction depending only on the translated
  subexpressions, this bottom-up application is
  well-defined and yields a pure inheritance-calculus
  program for step~(2).

  A $\mathscr{U}$-program $P$ converges,
  $\eval_{\mathscr{U}}(P)$, iff its translation
  converges.
  \fi
\end{definition}

\begin{remark}[\bilingual{Relativity to the universe}{对 universe 的相对性}]
  \label{rem:universe-relativity}
  \ifInheritanceChinese
  定义~\ref{def:language-universe} 把 $\lambda$-项在 $\mathscr{U}$ 中的语义
  \emph{规定}为其 $\mathcal{T}$-像的语义;于是翻译后的项在 $\mathscr{U}$ 中携带
  可观测的内部结构($\olbl{__callee}$、$\olbl{__call}$、$\olbl{__parameter}$ 等路径),
  而这正是下文分离构造所利用的。本节的分离因此是\emph{相对于该 universe} 的:
  这是 Felleisen 框架~\cite{felleisen1991-expressive-power}中比较两种语言时的
  标准做法(表达力总是相对一个共同的语言全集来陈述),但选取不同的 universe
  可能得到不同的分离结论,本文不做更强的宣称。
  \else
  Definition~\ref{def:language-universe} \emph{stipulates} the
  semantics of a $\lambda$-term in $\mathscr{U}$ to be that of its
  $\mathcal{T}$-image; a translated term therefore carries observable
  internal structure in $\mathscr{U}$ (the paths $\olbl{__callee}$,
  $\olbl{__call}$, $\olbl{__parameter}$, and so on), and this is
  exactly what the separating construction below exploits. The
  separation of this section is accordingly \emph{relative to this
  universe}: stating expressiveness against a common language
  universe is the standard setup of Felleisen's
  framework~\cite{felleisen1991-expressive-power}, but a different
  choice of universe could yield a different separation verdict, and
  we claim nothing stronger.
  \fi
\end{remark}

\begin{remark}\label{rem:universe-properties}
  \ifInheritanceChinese
  惰性 $\lambda$-演算与继承演算是 $\mathscr{U}$ 的保守性限制:各自通过去掉另一方的构造子来得到。
  对于纯继承演算程序,步骤~(1) 是平凡的,因此 $\eval_{\mathscr{U}}$ 与继承演算的语义一致。
  对于纯 $\lambda$-演算程序,步骤~(1)--(2) 的复合由充分性(定理~\ref{thm:adequacy})与惰性求值一致。
  \else
  The lazy $\lambda$-calculus and inheritance-calculus are
  conservative restrictions of $\mathscr{U}$: each is
  obtained by removing the other's constructors.
  For a pure inheritance-calculus program, step~(1) is vacuous, so
  $\eval_{\mathscr{U}}$ agrees with the
  inheritance-calculus semantics.
  For a pure $\lambda$-calculus program, the composition
  of steps~(1)--(2) agrees with lazy evaluation by
  Adequacy (Theorem~\ref{thm:adequacy}).
  \fi
\end{remark}

\ifInheritanceChinese
两步翻译是有效的($\mathcal{T}$ 是语法变换),而继承演算的递归求值(第~\ref{sec:mixin-trees} 节)是一个半判定过程,因此 $\eval_{\mathscr{U}}$ 是递归可枚举的,满足 Felleisen 框架~\cite{felleisen1991-expressive-power} 的要求。
\else
The two-step translation is effective ($\mathcal{T}$
is a syntactic transformation) and the recursive evaluation
of inheritance-calculus (Section~\ref{sec:mixin-trees})
is a semi-decision procedure, so $\eval_{\mathscr{U}}$
is recursively enumerable, as required by
Felleisen's framework~\cite{felleisen1991-expressive-power}.
\fi

\begin{theorem}[\bilingual{Forward macro-expressibility}{正向宏可表达性}]
  \label{thm:forward-macro}
  \ifInheritanceChinese
  翻译 $\mathcal{T}$(第~\ref{sec:forward-translation} 节)是 $\lambda$-演算嵌入继承演算的一个宏可表达的嵌入。
  \else
  The translation $\mathcal{T}$
  (Section~\ref{sec:forward-translation}) is a macro-expressible
  embedding of the $\lambda$-calculus into inheritance-calculus.
  \fi
\end{theorem}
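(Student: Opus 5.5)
The plan is to check Definition~\ref{def:macro-expressibility} directly: exhibit, for each of the three $\lambda$-calculus constructors, a fixed syntactic abstraction over inheritance-calculus. Then argue that the translation preserves the observable behavior, via Adequacy (Theorem~\ref{thm:adequacy}). The three rules of $\mathcal{T}$ in Section~\ref{sec:forward-translation} are exactly those abstractions.

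\begin{itemize}
\item \textbf{Variable.} $x$ becomes the fixed template \mintinline[escapeinside=||]{yaml}{[[|$x$|]]}. It depends only on the label~$x$.
\item \textbf{Abstraction.} $\lambda x.\,M$ becomes a fixed template with one hole, where $\mathcal{T}(M)$ is spliced under $\olbl{__result}$. The binder label~$x$ appears as the alias key.
\item \textbf{Application.} $M_1\,M_2$ becomes a fixed template with two holes, under $\olbl{__callee}$ and under the $\olbl{__parameter}$ override of $\olbl{__call}$.
\end{itemize}

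In each case the template mentions no part of the program other than the immediate subexpressions. So $\mathcal{T}$ is a homomorphism on the AST, and replacing every occurrence bottom-up is well defined, as already argued in Definition~\ref{def:language-universe}. If the surrounding program mixes in inheritance-calculus constructs, those constructs are left unchanged, which is exactly the ``leaving all other constructs unchanged'' clause.

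The second step is well-formedness of the image, so that $\inherits$ and the semantics are defined (Appendix~\ref{app:well-definedness}). I would take the four malformedness cases in turn; the argument goes by structural induction over the classes of Lemma~\ref{lem:node-classification}.
\begin{itemize}
\item Every template body is a $\seqtag$ sequence.
\item Each template writes distinct own labels. This relies on the synthetic labels being disjoint from user variables, as assumed in footnote~\ref{fn:shadowing}.
\item Every member entry is either a reference or a single-key mapping.
\item For a \emph{closed} term, each lexical climb stops below the root. For variable references, the binding abstraction shape lies on the $\init$-path, by Lemma~\ref{lem:node-classification}(8). For synthetic references the index is~$0$.
\end{itemize}
For open terms inside a $\mathscr{U}$-context, the enclosing context must supply the binder. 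I would restrict the claim to closed subterms, or to contexts that bind the variable, as Felleisen's framework does for programs.

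The third step is semantic faithfulness: for closed $M$, $\eval_{\mathscr{U}}(\mathcal{T}(M))$ holds iff $M$ has a WHNF. This is Theorem~\ref{thm:adequacy} together with Remark~\ref{rem:universe-properties}. The translation is therefore an expressibility-preserving embedding, and not merely a syntactic map.

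I expect the main obstacle to be the faithfulness requirement in Felleisen's sense, which asks for more than convergence at the top level. Termination behavior must be preserved for every $\mathscr{U}$-program containing the construct, and in particular under arbitrary mixed contexts. I would first try to discharge this by observing that in $\mathscr{U}$ the semantics of $\lambda$-constructs is \emph{defined} through $\mathcal{T}$ (Remark~\ref{rem:universe-relativity}). On that reading, preservation holds by construction, and Adequacy is needed only to show that the restriction to pure $\lambda$-programs is the lazy $\lambda$-calculus. If this circularity is judged too cheap, the fallback is the proof of Lemma~\ref{lem:macro-composition}: check that the fixed abstractions compose hygienically. This relies on disjoint synthetic alphabets and the name-shadowing behavior of $\lexical$ (Lemma~\ref{lem:lookup}), which guarantee that no capture arises when a template is spliced under an inheritance-calculus context.
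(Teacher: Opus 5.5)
Your proposal is correct, and its first step is exactly the paper's proof. The paper checks only Felleisen's condition E4, by exhibiting the three rules of $\mathcal{T}$ as fixed syntactic abstractions (the binder $x$ appears as a label, which E4 admits), and it leaves well-formedness of the image and termination preservation implicit; the latter holds by the definition of $\eval_{\mathscr{U}}$, as you observe. Your extra checks are sound, but drop the hygiene fallback: it is unnecessary, and as stated it is false, because $\mathscr{U}$-contexts may name the reserved labels (the separating context of Theorem~\ref{thm:cartesian-nonexpressibility} does) — for instance, an inheritance-calculus reference \texttt{[\_\_callee]} placed in an application's argument slot climbs to the application node and is captured by the template's own \texttt{\_\_callee} instead of reaching any outer definition of that label.
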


\begin{proof}
  \ifInheritanceChinese
  我们验证 Felleisen~\cite[Definition~3.11]{felleisen1991-expressive-power} 的条件 E4。
  $\lambda$-演算与继承演算共享同一变量字母表;一个变量出现 $x$ 由固定的单成员抽象 \mintinline[escapeinside=||]{yaml}{[[|$x$|]]} 宏表达(一个只携带变量名、不含子表达式的引用)。
  两个复合构造各有固定的语法抽象;其中 $\lambda$ 的抽象是一个绑定构造,其被绑定变量 $x$ 作为 label 出现,这为条件 E4 所允许：
  \begin{enumerate}
    \item \emph{\bilingual{$\lambda$-abstraction}{$\lambda$-抽象}} $\lambda x.\, M$（关于体 $M$ 一元），$A_\lambda([\cdot]_1)$ 为：
      \begin{minted}[escapeinside=||]{yaml}
- __whnf:
  - |$x$|: [[__parameter]]
  - __parameter: []
  - __result: |$[\cdot]_1$|
      \end{minted}
    \item \emph{\bilingual{Application}{应用}} $M_1\; M_2$（关于 $M_1, M_2$ 二元），$A_{\text{app}}([\cdot]_1, [\cdot]_2)$ 为：
      \begin{minted}[escapeinside=||]{yaml}
- __callee: |$[\cdot]_1$|
- __call:
  - [__callee, __whnf]
  - __parameter: |$[\cdot]_2$|
- __whnf: [[__call, __result, __whnf]]
      \end{minted}
  \end{enumerate}
  在每种情况下，语法抽象是一个固定的继承演算上下文，其空位由子表达式的递归翻译填充，从而满足 E4。
  \else
  We verify condition~E4 of
  Felleisen~\cite[Definition~3.11]{felleisen1991-expressive-power}.
  The $\lambda$-calculus and inheritance-calculus share a single variable
  alphabet, and a variable occurrence $x$ is macro-expressed by the fixed
  one-member abstraction \mintinline[escapeinside=||]{yaml}{[[|$x$|]]}, a reference
  carrying only the variable name and no subexpressions.
  The two compound constructs each have a fixed syntactic abstraction; the one
  for $\lambda$ is a binding construct whose bound variable $x$ appears as a
  label, which condition~E4 admits:
  \begin{enumerate}
    \item \emph{$\lambda$-abstraction} $\lambda x.\, M$ (unary in body~$M$), $A_\lambda([\cdot]_1)$ is:
      \begin{minted}[escapeinside=||]{yaml}
- __whnf:
  - |$x$|: [[__parameter]]
  - __parameter: []
  - __result: |$[\cdot]_1$|
      \end{minted}
    \item \emph{Application} $M_1\; M_2$ (binary in~$M_1, M_2$), $A_{\text{app}}([\cdot]_1, [\cdot]_2)$ is:
      \begin{minted}[escapeinside=||]{yaml}
- __callee: |$[\cdot]_1$|
- __call:
  - [__callee, __whnf]
  - __parameter: |$[\cdot]_2$|
- __whnf: [[__call, __result, __whnf]]
      \end{minted}
  \end{enumerate}
  In each case the syntactic abstraction is a fixed
  inheritance-calculus context whose holes are filled
  by the recursive translations of the subexpressions,
  satisfying~E4.
  \fi
\end{proof}

\begin{theorem}[\bilingual{Nonexpressibility of inheritance}{继承的不可宏表达性}]
  \label{thm:cartesian-nonexpressibility}
  \ifInheritanceChinese
  惰性 $\lambda$-演算不能宏可表达 $\mathscr{U}$(定义~\ref{def:language-universe})的继承构造子。
  \else
  The lazy $\lambda$-calculus cannot macro-express the
  inheritance constructors of $\mathscr{U}$
  (Definition~\ref{def:language-universe}).
  \fi
\end{theorem}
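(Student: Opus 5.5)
The argument is Felleisen's standard one: macro-expressibility preserves operational equivalence. I will exhibit two lazy $\lambda$-terms that are observationally equivalent in the lazy $\lambda$-calculus, and a $\mathscr{U}$-context built with the inheritance constructors that separates them. Suppose for contradiction that a macro translation $\mathcal{E}$ expressed the inheritance constructors (mixin, definition, reference) in the lazy $\lambda$-calculus. It acts homomorphically: each inheritance construct is replaced by a fixed $\lambda$-context over the translations of its subexpressions, and every $\lambda$-construct is left unchanged. A separating $\mathscr{U}$-context $C[\cdot]$ would then be sent to a pure $\lambda$-context $\mathcal{E}(C)[\cdot]$. The two plugged programs must keep their convergence behavior, since $\mathcal{E}$ must preserve $\eval_{\mathscr{U}}$; this is the termination-preservation requirement in Felleisen's Definition~3.11. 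By Remark~\ref{rem:universe-properties} and Adequacy (Theorem~\ref{thm:adequacy}), $\eval_{\mathscr{U}}$ on pure $\lambda$-programs is lazy convergence. So $\mathcal{E}(C)$ would be a lazy $\lambda$-context distinguishing two lazily equivalent terms, which is a contradiction.

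For the pair of terms I will take $M_1 = \lambda x.\, x$ and $M_2 = \lambda x.\, (\lambda y.\, y)\, x$. Both are abstractions, and they are equal under $\beta$. So they have the same Lévy--Longo tree (Theorem~\ref{thm:llt-correspondence}) and are contextually equivalent in the lazy $\lambda$-calculus; this is the standard result that LLT equality implies lazy observational equivalence. Their $\mathcal{T}$-images differ internally. Under $\olbl{__whnf}$, $\olbl{__result}$, the image of $M_2$ is an application node and therefore owns the label $\olbl{__callee}$. The image of $M_1$ is a variable node, a one-member mixin with no own definitions. Its alias chain reaches only $\olbl{__parameter}$, and the parameter's contents are whatever the context supplies there.

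The separating context $C[\cdot]$ is written with inheritance constructors only. It defines a label $\mathit{t}$ bound to the hole, then a definition $\mathit{probe}$ that inherits the reference $[\mathit{t}, \olbl{__whnf}, \olbl{__result}, \olbl{__callee}]$. Inside $\mathit{probe}$ it places an abstraction-shaped body, $\olbl{__whnf}$ over $\olbl{__parameter}$, and finally a root $\olbl{__whnf}$ inheriting $[\mathit{probe}, \olbl{__whnf}]$. For $M_2$, the callee path exists because it is written, so the root chain reaches an abstraction shape and the program converges by Lemma~\ref{lem:path-existence}. For $M_1$, the reference resolves to a path whose $\supers$ is empty. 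No written branch defines $\olbl{__callee}$ there: by Lemma~\ref{lem:node-classification} the variable node owns no labels, and in an unapplied shape its alias leads only to an empty parameter declaration. The convergence observation of Definition~\ref{def:convergence} therefore fails.

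The main obstacle is making these two computations rigorous with the five equations. In particular I must show that the $M_1$ branch produces no $\olbl{__callee}$ fact at any Kleene rank. I plan to invert a hypothetical least-rank witness with Lemma~\ref{lem:kleene-iteration}. I will then check, as in the abstraction case of Lemma~\ref{lem:sim-backward}, that every override at that path has empty $\inherits$ extension, so no branch can qualify. A second point is to make sure the contradiction uses only the termination-preservation clause of macro-expressibility, not faithfulness of $\mathcal{E}$ beyond what Definition~\ref{def:macro-expressibility} grants. Following Remark~\ref{rem:universe-relativity}, I will state that the conclusion holds relative to $\mathscr{U}$.
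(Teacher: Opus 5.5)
Your framework matches the paper's. You take a pair of lazily equivalent closed terms and a context built from inheritance constructors that observes the internal structure of their $\mathcal{T}$-images. You then apply Felleisen's Theorem~3.14, re-derived as a reductio rather than cited, which is fine. Your pair $\lambda x.\,x$, $\lambda x.\,(\lambda y.\,y)\,x$ is also legitimate: their lazy equivalence follows from soundness of $\beta$-conversion, so Theorem~\ref{thm:llt-correspondence} is not the right citation, but nothing breaks.

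\textbf{The gap: the separating context, as you describe it, does not separate.} You write an abstraction-shaped body, \texttt{\_\_whnf} over \texttt{\_\_parameter}, inside \texttt{probe}. Since \texttt{probe} is written at the root and writes \texttt{\_\_whnf}, the reflexive branch of equation~\eqref{eq:overrides} gives $(\mathit{probe},\text{``\_\_whnf''}) \in \overrides((\mathit{probe},\text{``\_\_whnf''}))$. The root \texttt{\_\_whnf} reaches this path through its reference, so $\supers((\text{``\_\_whnf''}))$ contains a pair whose override component writes \texttt{\_\_parameter}. Hence Definition~\ref{def:convergence} holds for $M_1$ exactly as for $M_2$.

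The calculus has no way to make a written definition conditional on whether a sibling reference lands on an existent path. The equations are positive, so a reference whose target does not exist simply contributes no overrides. The mixin carrying that reference, and everything it writes, still exists; the paper excludes conditional incorporation by design (Section~\ref{sec:semantic-variants}). Your planned least-rank argument that no \texttt{\_\_callee} fact arises for $M_1$ is correct, but it no longer bears on convergence.

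\textbf{The repair.} Write nothing under \texttt{probe}, so the abstraction shape has to come from the hole. Let the root \texttt{\_\_whnf} inherit \texttt{[t, \_\_whnf, \_\_result, \_\_callee, \_\_whnf]}.
For $M_2$, the callee is $\mathcal{T}(\lambda y.\,y)$, whose \texttt{\_\_whnf} writes \texttt{\_\_parameter}, so the program converges.
For $M_1$, the override components of $\supers((t,\text{``\_\_whnf''},\text{``\_\_result''}))$ are just the variable node, the alias $x$, and the empty parameter declaration. None of them writes \texttt{\_\_callee}, so the target of the root's reference has empty $\overrides$. The root \texttt{\_\_whnf}'s only override is then itself, which writes no \texttt{\_\_parameter}, and convergence fails. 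This can be checked by a finite computation on a dependency-closed set of queries, or by your least-rank inversion.

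\textbf{Comparison with the paper.} Repaired this way, your proof is a valid and lighter alternative. The paper separates $\mathrm{eq}\;\mathrm{false}\;\mathrm{false}$ from $\mathrm{eq}\;\mathrm{true}\;\mathrm{true}$ by projecting $\mathrm{eq}$'s first operand and feeding it to the discriminator $(\lambda x.\,x\;\Omega\;I)$. That forces a computation through a mixed term and a separate argument that $\Omega$'s chain has no witness. You observe the existence of a single label directly and need no divergent subterm. Both arguments rest on the same fact: $\mathscr{U}$ makes the intensional structure of $\mathcal{T}$-images observable (Remark~\ref{rem:universe-relativity}). Your pair differs only by an administrative $\beta$-redex, so it makes that dependence on the chosen universe more visible. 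The paper's example instead reads as adding a new projection to an existing definition.
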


\begin{proof}
  \ifInheritanceChinese
  我们对 $L_0 = \text{惰性 $\lambda$-演算}$,$L_1 = \mathscr{U}$ 应用 Felleisen~\cite{felleisen1991-expressive-power} 的定理 3.14(i)。
  由定义~\ref{def:language-universe},$\mathscr{U} = L_0 + \{$mixin-树构造子$\}$ 是 $L_0$ 的一个保守性扩展(Felleisen~\cite{felleisen1991-expressive-power} 的定义 3.2)。
  我们验证四个条件:
  \begin{enumerate}
    \item[(i)] \emph{$L_0 \subseteq L_1$:} $\mathscr{U}$ 的语法包含所有 $L_0$-短语。
    \item[(ii)] \emph{不引入新的 $L_0$-短语:} mixin-树构造子不在 $L_0$ 中引入新短语。
    \item[(iii)] \emph{构造子不相交:} mixin-树构造子与 $L_0$-构造子不相交。
    \item[(iv)] \emph{语义保持:} 对纯 $\lambda$-项,定义~\ref{def:language-universe} 的步骤~(1) 应用 $\mathcal{T}$、步骤~(2) 求值,其复合由充分性(定理~\ref{thm:adequacy})与惰性求值一致,因此 $\eval_{\mathscr{U}}$ 在 $L_0$-短语上与 $\eval_{L_0}$ 一致。
  \end{enumerate}
  该定理要求我们给出两个 $L_0$-项 $M$ 与 $N$,满足 $M \cong_0 N$ 但 $M \not\cong_1 N$,其中 $\cong_0$ 是 $L_0$ 中的操作等价,$\cong_1$ 是 $\mathscr{U}$ 中的操作等价。

  \paragraph{\bilingual{Operational equivalence in $L_0$}{$L_0$ 中的操作等价}}
  在惰性 $\lambda$-演算中,归约保持观测等价:若两个闭项归约到同一个值,则它们可观测等价(即 $\cong_0$)。我们用这一事实给出分离对,无需诉诸完整的上下文等价刻画。

  \paragraph{\bilingual{Separating pair}{分离对}}
  定义 Church 布尔值与 Church 布尔相等:$\mathrm{true} = \lambda t.\,\lambda f.\, t$,$\mathrm{false} = \lambda t.\,\lambda f.\, f$,以及 $\mathrm{eq} = \lambda a.\,\lambda b.\, (a\;b)\;(b\;\mathrm{false}\;\mathrm{true})$。
  令
  \[
    M = \mathrm{eq}\;\mathrm{false}\;\mathrm{false}
    \qquad
    N = \mathrm{eq}\;\mathrm{true}\;\mathrm{true}.
  \]
  两者在惰性 $\lambda$-演算中均归约为 Church~$\mathrm{true}$,因此可观测等价,即 $M \cong_0 N$。

  \paragraph{\bilingual{Distinguishing $\mathscr{U}$-context}{区分 $\mathscr{U}$-上下文}}
  我们现在构造一个区分 $M$ 与 $N$ 的 $\mathscr{U}$-上下文。
  $M$ 与 $N$ 是语法封闭的 $\lambda$-项;由定义~\ref{def:language-universe},它们在 $\mathscr{U}$ 中的语义通过直接应用 $\mathcal{T}$ 将其翻译为继承演算来给出。
  记 $\mathrm{eq} = \lambda a.\,\lambda b.\, (a\;b)\;(b\;\mathrm{false}\;\mathrm{true})$。
  则：
  \else
  We apply Theorem~3.14(i) of
  Felleisen~\cite{felleisen1991-expressive-power} to the pair
  $L_0 = \text{lazy $\lambda$-calculus}$,
  $L_1 = \mathscr{U}$.
  By Definition~\ref{def:language-universe},
  $\mathscr{U} = L_0 + \{$mixin-tree constructors$\}$
  is a conservative extension of~$L_0$
  (Definition~3.2 of
  Felleisen~\cite{felleisen1991-expressive-power}).
  We verify the four conditions:
  \begin{enumerate}
    \item[(i)] \emph{$L_0 \subseteq L_1$:}
      The syntax of~$\mathscr{U}$ includes all
      $L_0$-phrases.
    \item[(ii)] \emph{No new $L_0$-phrases:}
      The mixin-tree constructors do not introduce new
      phrases in~$L_0$.
    \item[(iii)] \emph{Disjoint constructors:}
      The mixin-tree constructors are disjoint from the
      $L_0$-constructors.
    \item[(iv)] \emph{Semantics preserved:}
      For pure $\lambda$-terms, step~(1) of
      Definition~\ref{def:language-universe} applies
      $\mathcal{T}$ and step~(2) evaluates; their
      composite agrees with lazy evaluation by Adequacy
      (Theorem~\ref{thm:adequacy}), so
      $\eval_{\mathscr{U}}$ agrees with
      $\eval_{L_0}$ on $L_0$-phrases.
  \end{enumerate}
  The theorem requires us to exhibit two $L_0$-terms
  $M$ and $N$ with $M \cong_0 N$ but
  $M \not\cong_1 N$, where $\cong_0$ is operational
  equivalence in $L_0$ and $\cong_1$ is operational
  equivalence in $\mathscr{U}$.

  \paragraph{Operational equivalence in $L_0$}
  In the lazy $\lambda$-calculus, reduction preserves
  observational equivalence: if two closed terms reduce to the
  same value, they are observationally equivalent (that is,
  $\cong_0$). We use this fact to supply the separating pair,
  without appealing to a full contextual characterisation.

  \paragraph{Separating pair}
  Define Church booleans and Church boolean equality:
  $\mathrm{true} = \lambda t.\,\lambda f.\, t$,
  $\mathrm{false} = \lambda t.\,\lambda f.\, f$, and
  $\mathrm{eq} = \lambda a.\,\lambda b.\,
  (a\;b)\;(b\;\mathrm{false}\;\mathrm{true})$.
  Let
  \[
    M = \mathrm{eq}\;\mathrm{false}\;\mathrm{false}
    \qquad
    N = \mathrm{eq}\;\mathrm{true}\;\mathrm{true}.
  \]
  Both reduce to Church~$\mathrm{true}$ in the
  lazy $\lambda$-calculus, so they are observationally
  equivalent, that is, $M \cong_0 N$.

  \paragraph{Distinguishing $\mathscr{U}$-context}
  We now construct a $\mathscr{U}$-context that
  separates $M$ and $N$.
  $M$ and $N$ are syntactically closed $\lambda$-terms;
  by Definition~\ref{def:language-universe}, their
  semantics in $\mathscr{U}$ is given by directly applying
  the translation $\mathcal{T}$ to translate them to inheritance-calculus.
  \fi

  \ifInheritanceChinese
  $T(M)$ 是应用 $T(M_1\;M_2)$ 的翻译,其顶层为 $\{\text{``\_\_callee''}, \text{``\_\_call''}, \text{``\_\_whnf''}\}$。
  我们把空位的翻译命名为 $\olbl{hole}$,并添加一个新的可观测投影 $\olbl{firstOperand}$,读取 $\mathrm{eq}$ 的第一个操作数,即 $\lambda a$ 所绑定的实参。
  该实参位于 $(\text{``hole''}, \text{``\_\_callee''}, \text{``\_\_call''}, \text{``\_\_parameter''})$:内层应用 $\mathrm{eq}\;\mathrm{false}$ 在其 $\olbl{__call}$ 处把 $\lambda a$ 的 $\olbl{__parameter}$ 覆盖为该操作数。
  定义上下文 $C[\cdot]$ 为:
  \else
  $T(M)$ is the translation of the application $T(M_1\;M_2)$, whose
  top level is $\{\text{``\_\_callee''}, \text{``\_\_call''}, \text{``\_\_whnf''}\}$.
  We name the translation of the hole $\olbl{hole}$ and add a new
  observable projection $\olbl{firstOperand}$ that reads
  $\mathrm{eq}$'s first operand, the argument bound by $\lambda a$.
  That argument sits at
  $(\text{``hole''}, \text{``\_\_callee''}, \text{``\_\_call''}, \text{``\_\_parameter''})$:
  the inner application $\mathrm{eq}\;\mathrm{false}$ overrides the
  $\olbl{__parameter}$ of $\lambda a$ with that operand at its
  $\olbl{__call}$.
  Define the context $C[\cdot]$ as:
  \fi
  \begin{minted}[escapeinside=||]{yaml}
- hole: |$[\cdot]$|
- firstOperand: [[hole, __callee, __call, __parameter]]
- __whnf: [[firstOperand, whnf]]
  \end{minted}
  \ifInheritanceChinese
  这里 $[\cdot]$ 是空位;整个 $C[\cdot]$ 仅使用继承演算的 mixin-树构造子,不含 $\lambda$-演算构造。它不修改 $M$ 的翻译(置于 $\olbl{hole}$ 之下),只添加一个新投影 $\olbl{firstOperand}$——这正是表达式问题意义下"在不修改已有定义的前提下添加一个新可观测投影"的一个微型实例。

  逐方程计算 $\olbl{firstOperand}$:其引用 $(0,\allowbreak (\text{``hole''}, \text{``\_\_callee''}, \text{``\_\_call''}, \text{``\_\_parameter''}))$ 在 $n = 0$ 时按方程~(\ref{eq:resolve})给出定点目标,即空位之下 $\mathrm{eq}$ 内层应用的 $\olbl{__parameter}$ 覆盖——$\mathrm{eq}$ 第一个操作数的绑定值。在 $C[M]$ 中该操作数为 $T(\mathrm{false})$,在 $C[N]$ 中为 $T(\mathrm{true})$;两者均为抽象之像,其 $\olbl{__whnf}$ 即其形状(自反,方程~\ref{eq:supers})。

  定义 $D[\cdot] = (\lambda x.\, x\;\Omega\;I)\; C[\cdot]$;这是混合项:应用规则的 $\olbl{__parameter}$ 覆盖为纯 mixin $C[\cdot]$,故以下不引用附录~\ref{app:levy-longo-tree-proofs} 的机器引理(它们只对纯 $\lambda$-像陈述),而对方程直接计算。$D[M]$ 的收敛链逐跳如下,每跳为一次 $n = 0$ 引用的定点解析(方程~\ref{eq:resolve})或一次别名转发(方程~\ref{eq:this},走步一步):
  根 $\olbl{__whnf}$ 继承 $(\text{``\_\_call''}, \text{``\_\_result''}, \text{``\_\_whnf''})$;$\olbl{__call}$ 继承判别子 $\lambda x$ 的形状,体为 $x\;\Omega\;I$ 之像,其脊上外层 $\olbl{__call}$ 绑定 $I$-thunk、内层 $\olbl{__call}$ 绑定 $\Omega$-thunk;$x$ 的引用经别名转发到判别子 $\olbl{__call}$ 的 $\olbl{__parameter} = C[M]$;$C[M]$ 的 $\olbl{__whnf}$ 经 $\olbl{firstOperand}$ 达 $T(\mathrm{false})$ 的形状;$\mathrm{false} = \lambda t.\,\lambda f.\, f$ 的体 $f$ 经别名转发到\emph{第二个}实参 thunk,即 $I$;$I$ 为抽象,其形状终结该链。故 $\supers((\text{``\_\_whnf''}, \text{``\_\_parameter''})) \neq \varnothing$,$\eval_{\mathscr{U}}(D[M])$ 成立。

  对 $D[N]$,同一条链改经 $T(\mathrm{true})$,而 $\mathrm{true} = \lambda t.\,\lambda f.\, t$ 的体 $t$ 转发到\emph{第一个}实参 thunk,即 $T(\Omega)$。设 $D[N]$ 的收敛见证存在,取其 Kleene 秩最小者(引理~\ref{lem:kleene-iteration});其末步反演的前提含 $T(\Omega)$ 链的收敛见证,而 $\Omega = (\lambda x.\, x\, x)(\lambda x.\, x\, x)$ 的链经一跳回到与自身同构的配置,其见证的末步反演又含同一形状的见证,秩严格更小——与最小性矛盾。故无见证,$\eval_{\mathscr{U}}(D[N])$ 不成立。
  因此 $M \not\cong_1 N$。

  \paragraph{\bilingual{Conclusion}{结论}}
  我们有 $M \cong_0 N$ 但 $M \not\cong_1 N$,故 ${\cong_0} \neq {(\cong_1|_{L_0})}$。
  由 Felleisen~\cite{felleisen1991-expressive-power} 的定理 3.14(i),惰性 $\lambda$-演算不能宏可表达 $\mathscr{U}$ 的继承构造子。
  \else
  Here $[\cdot]$ is the hole; the entire $C[\cdot]$ uses only the
  mixin-tree constructors of inheritance-calculus, with no
  $\lambda$-calculus constructs.
  It does not modify the translation of $M$ (placed under
  $\olbl{hole}$); it only adds a new projection
  $\olbl{firstOperand}$ --- a mini-instance, in the sense of the
  Expression Problem, of adding a new observable projection without
  modifying existing definitions.

  We compute $\olbl{firstOperand}$ with the equations: its reference
  $(0, (\text{``hole''}, \text{``\_\_callee''}, \text{``\_\_call''},
  \text{``\_\_parameter''}))$ resolves at $n = 0$ to the site-relative
  target (equation~\ref{eq:resolve}), the $\olbl{__parameter}$
  override of $\mathrm{eq}$'s inner application beneath the hole ---
  the bound value of $\mathrm{eq}$'s first operand. In $C[M]$ that
  operand is $T(\mathrm{false})$; in $C[N]$ it is
  $T(\mathrm{true})$; both are images of abstractions, whose
  $\olbl{__whnf}$ is their shape (reflexively,
  equation~\ref{eq:supers}).

  Define $D[\cdot] = (\lambda x.\, x\;\Omega\;I)\; C[\cdot]$.
  This is a mixed term: the application rule's $\olbl{__parameter}$ is
  overridden by the pure mixin $C[\cdot]$, so we do not invoke the
  machine lemmas of Appendix~\ref{app:levy-longo-tree-proofs} (they
  are stated for pure $\lambda$-images only) and instead compute
  directly with the equations. The convergence chain of $D[M]$, hop
  by hop, each hop a site-relative resolution of an $n = 0$ reference
  (equation~\ref{eq:resolve}) or one alias forwarding step
  (equation~\ref{eq:this}): the root $\olbl{__whnf}$ inherits
  $(\text{``\_\_call''}, \text{``\_\_result''}, \text{``\_\_whnf''})$; the
  $\olbl{__call}$ inherits the discriminator $\lambda x$'s shape,
  whose body is the image of $x\;\Omega\;I$, with the outer
  $\olbl{__call}$ of the spine binding the $I$-thunk and the inner one
  the $\Omega$-thunk; $x$'s reference forwards through the alias to
  the discriminator $\olbl{__call}$'s $\olbl{__parameter} = C[M]$;
  $C[M]$'s $\olbl{__whnf}$ reaches $T(\mathrm{false})$'s shape via
  $\olbl{firstOperand}$; the body $f$ of
  $\mathrm{false} = \lambda t.\,\lambda f.\, f$ forwards through
  its alias to the \emph{second} argument thunk, $I$; and $I$ is an
  abstraction whose shape ends the chain. Hence
  $\supers((\text{``\_\_whnf''}, \text{``\_\_parameter''})) \neq
  \varnothing$ and $\eval_{\mathscr{U}}$ holds for $D[M]$.

  For $D[N]$ the same chain passes through $T(\mathrm{true})$
  instead, and the body $t$ of
  $\mathrm{true} = \lambda t.\,\lambda f.\, t$ forwards to the
  \emph{first} argument thunk, $T(\Omega)$. Suppose a convergence
  witness for $D[N]$ existed, and take one of least Kleene rank
  (Lemma~\ref{lem:kleene-iteration}); the premises of its last-step
  inversion contain a convergence witness for $T(\Omega)$'s chain,
  and the chain of
  $\Omega = (\lambda x.\, x\, x)(\lambda x.\, x\, x)$ returns
  after one hop to a configuration isomorphic to itself, so the
  inversion of that witness again contains a witness of the same
  shape at strictly smaller rank --- contradicting minimality. Hence
  no witness exists and $\eval_{\mathscr{U}}$ fails for $D[N]$.
  Therefore $M \not\cong_1 N$.

  \paragraph{Conclusion}
  We have $M \cong_0 N$ but $M \not\cong_1 N$, so
  ${\cong_0} \neq {(\cong_1|_{L_0})}$.
  By Theorem~3.14(i) of
  Felleisen~\cite{felleisen1991-expressive-power},
  the lazy $\lambda$-calculus cannot macro-express the
  inheritance constructors of~$\mathscr{U}$.
  \fi
\end{proof}

\begin{theorem}[\bilingual{Expressive asymmetry}{表达力的不对称}]
  \label{thm:expressive-asymmetry}
  \ifInheritanceChinese
  在语言全集 $\mathscr{U}$(定义~\ref{def:language-universe})中,继承演算比惰性 $\lambda$-演算具有严格更强的表达力:它能宏可表达 $\lambda$-演算能宏可表达的每个 $\mathscr{U}$-构造,但反之不然(Felleisen~\cite{felleisen1991-expressive-power} 的定义 3.17)。
  \else
  In the language universe $\mathscr{U}$
  (Definition~\ref{def:language-universe}),
  inheritance-calculus is strictly more expressive than
  the lazy $\lambda$-calculus: it can macro-express every
  $\mathscr{U}$-construct that the $\lambda$-calculus
  can, but not conversely
  (Definition~3.17 of
  Felleisen~\cite{felleisen1991-expressive-power}).
  \fi
\end{theorem}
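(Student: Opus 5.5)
The plan is to obtain the theorem as a direct assembly of the two preceding results, following Felleisen's Definition~3.17. That definition states that $\mathscr{L}_1$ is strictly more expressive than $\mathscr{L}_0$, relative to a common universe, when two things hold. First, every $\mathscr{U}$-construct macro-expressible in $\mathscr{L}_0$ is also macro-expressible in $\mathscr{L}_1$. Second, some construct macro-expressible in $\mathscr{L}_1$ is not macro-expressible in $\mathscr{L}_0$. I will take $\mathscr{L}_0$ to be the lazy $\lambda$-calculus and $\mathscr{L}_1$ to be inheritance-calculus, both read as conservative restrictions of $\mathscr{U}$ (Remark~\ref{rem:universe-properties}). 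The result is thus relative to $\mathscr{U}$, as Remark~\ref{rem:universe-relativity} records.

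For the first half, let $F$ be any $\mathscr{U}$-construct with a macro-expression $\mathcal{E}_2$ into the lazy $\lambda$-calculus. Theorem~\ref{thm:forward-macro} provides $\mathcal{T}$ as a macro-expression of the $\lambda$-calculus constructs (variable, abstraction, application) in inheritance-calculus. Lemma~\ref{lem:macro-composition} then gives $\mathcal{T}\circ\mathcal{E}_2$ as a macro-expression of $F$ in inheritance-calculus. I also need this composite to be semantically faithful, meaning convergence in $\mathscr{U}$ is preserved. This holds by construction: Definition~\ref{def:language-universe} defines the $\mathscr{U}$-semantics of any $\lambda$-subterm as that of its $\mathcal{T}$-image, so substituting $\mathcal{T}$-images changes no $\eval_{\mathscr{U}}$ outcome. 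For pure $\lambda$-terms this agrees with lazy evaluation by Theorem~\ref{thm:adequacy}.

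For the second half, I take the inheritance (mixin-tree) constructors of $\mathscr{U}$. They are trivially macro-expressible in inheritance-calculus, by the identity translation, since they are its own constructs. Theorem~\ref{thm:cartesian-nonexpressibility} shows they are not macro-expressible in the lazy $\lambda$-calculus.

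The main obstacle is bookkeeping rather than mathematics. I must check that "strictly more expressive" in Definition~3.17 is read with respect to the fixed universe $\mathscr{U}$, and that the conservative-extension conditions (i)–(iv) verified in the proof of Theorem~\ref{thm:cartesian-nonexpressibility} also hold with the roles of the two sublanguages as used here. I would state this explicitly, citing Remark~\ref{rem:universe-properties} for the conservativity of both restrictions, before combining the two halves.
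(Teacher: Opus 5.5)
Your proposal is correct and is essentially the paper's proof. The failure of the converse is Theorem~\ref{thm:cartesian-nonexpressibility} together with the trivial self-expression of the inheritance constructors, and the inclusion rests on $\mathcal{T}$ (Theorem~\ref{thm:forward-macro}).

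The only difference is in the inclusion half. You obtain it by composing with $\mathcal{T}$ via Lemma~\ref{lem:macro-composition}. The paper argues directly that inheritance-calculus macro-expresses the $\lambda$-constructs through $\mathcal{T}$ and trivially contains its own. Since these exhaust the constructors of $\mathscr{U}$, that already covers every construct the $\lambda$-calculus can express, so no composite translation (and no faithfulness check for one) is needed.
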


\begin{proof}
  \ifInheritanceChinese
  我们须证明:(a)~继承演算能宏可表达惰性 $\lambda$-演算能宏可表达的每个 $\mathscr{U}$-构造,以及 (b)~反之不成立。

  \paragraph{\bilingual{Part (b): the lazy $\lambda$-calculus $\not\geq$
  inheritance-calculus}{第 (b) 部分:惰性 $\lambda$-演算 $\not\geq$ 继承演算}}
  $\lambda$-演算包含其自身的构造子($\lambda$、应用、变量),但不能宏可表达 $\mathscr{U}$ 的继承构造子(定理~\ref{thm:cartesian-nonexpressibility})。
  继承演算包含那些构造子。
  因此,惰性 $\lambda$-演算对于 $\mathscr{U}$ 而言表达力不及继承演算。

  \paragraph{\bilingual{Part (a): inheritance-calculus $\geq$ the lazy
  $\lambda$-calculus}{第 (a) 部分:继承演算 $\geq$ 惰性 $\lambda$-演算}}
  我们证明继承演算能宏可表达 $\lambda$-演算的每个构造。
  $\lambda$-演算包含 $\lambda$-抽象、应用与变量。
  翻译 $\mathcal{T}$(定理~\ref{thm:forward-macro})直接将每个 $\lambda$-演算构造子映射到继承演算上的一个固定语法抽象,满足 Felleisen~\cite[Definition~3.11]{felleisen1991-expressive-power} 的条件 E4。
  由于继承演算平凡地包含其自身的构造子,它能宏可表达 $\lambda$-演算的每个构造。
  \else
  We must show:
  (a)~inheritance-calculus can macro-express every
  $\mathscr{U}$-construct that the lazy $\lambda$-calculus
  can macro-express, and
  (b)~the converse fails.

  \paragraph{Part (b): the lazy $\lambda$-calculus $\not\geq$
  inheritance-calculus}
  The $\lambda$-calculus contains its own constructors
  ($\lambda$, application, variable) but
  cannot macro-express the inheritance constructors of
  $\mathscr{U}$
  (Theorem~\ref{thm:cartesian-nonexpressibility}).
  Inheritance-calculus contains those constructors.
  Therefore the lazy $\lambda$-calculus is not at least
  as expressive as inheritance-calculus with respect
  to~$\mathscr{U}$.

  \paragraph{Part (a): inheritance-calculus $\geq$ the lazy
  $\lambda$-calculus}
  We show that inheritance-calculus can macro-express
  every construct of the $\lambda$-calculus.
  The $\lambda$-calculus contains $\lambda$-abstraction,
  application, and variable.
  The translation $\mathcal{T}$ (Theorem~\ref{thm:forward-macro})
  directly maps each $\lambda$-calculus construct to a
  fixed syntactic abstraction over inheritance-calculus,
  satisfying condition~E4 of
  Felleisen~\cite[Definition~3.11]{felleisen1991-expressive-power}.
  Since inheritance-calculus trivially contains its own
  constructors, it can macro-express every construct of
  the $\lambda$-calculus.
  \fi
\end{proof}

\section{\bilingual{Multi-Target \texorpdfstring{$\this$}{this} Resolution in Other Systems}{其他系统中的多目标 \texorpdfstring{$\this$}{this} 解析}}
\label{app:scala-multi-target}

\ifInheritanceChinese
本附录展示两个具有代表性的系统——Scala~3 和 NixOS 模块系统——都以静态错误拒绝了继承演算能自然处理的多目标 $\this$ 模式。
\else
This appendix shows that two representative systems, Scala~3 and the NixOS
module system, both reject with a static error the multi-target $\this$ pattern
that inheritance-calculus handles naturally.
\fi

\subsection*{\bilingual{Scala~3}{Scala~3}}

\ifInheritanceChinese
考虑两个独立扩展同一外部类的对象，每个对象各自提供内部 trait 的一份副本：
\else
Consider two objects that independently extend an outer class,
each providing its own copy of an inner trait:
\fi

\begin{listing}[htbp]
  \caption{\bilingual{Scala 3 rejects two objects sharing an outer class through a common inner trait}{Scala 3 拒绝两个对象经由共同内部 trait 共享外层类}}
  \label{lst:scala-conflict}
\begin{minted}{scala}
class MyOuter:
  trait MyInner:
    def outer = MyOuter.this

object Object1 extends MyOuter
object Object2 extends MyOuter
object HasMultipleOuters extends Object1.MyInner
                     with Object2.MyInner
\end{minted}
\end{listing}

\noindent
\ifInheritanceChinese
Scala~3 以如下错误拒绝了 \mintinline{scala}{HasMultipleOuters}：
\else
Scala~3 rejects \mintinline{scala}{HasMultipleOuters} with the error:
\fi

\begin{minted}{text}
trait MyInner is extended twice
object HasMultipleOuters cannot be instantiated since
  it has conflicting base types
  Object1.MyInner and Object2.MyInner
\end{minted}

\ifInheritanceChinese
等价的继承演算定义如下：
\else
The equivalent inheritance-calculus definition is:
\fi
\begin{listing}[htbp]
  \caption{\bilingual{The same pattern is well-defined in inheritance-calculus}{同一模式在继承演算中是良定义的}}
  \label{lst:ic-multi-outer}
\begin{minted}{yaml}
- MyOuter:
  - MyInner:
    - outer:
      - [MyOuter, ~]
- Object1:
  - [MyOuter]
- Object2:
  - [MyOuter]
- HasMultipleOuters:
  - [Object1, MyInner]
  - [Object2, MyInner]
\end{minted}
\end{listing}
\ifInheritanceChinese
此定义在继承演算中是良定义的。\mintinline{yaml}{outer} 内部的引用 \mintinline{yaml}{[MyOuter, ~]} 具有 de~Bruijn 索引 $n = 1$：从 \mintinline{yaml}{outer} 的封闭作用域 \mintinline{yaml}{MyInner} 出发，经过一步 $\this$ 便到达 \mintinline{yaml}{MyOuter}。当 \mintinline{yaml}{HasMultipleOuters} 同时继承 \mintinline{yaml}{[Object1, MyInner]} 和 \mintinline{yaml}{[Object2, MyInner]} 时，$\this$ 函数（方程~\ref{eq:this}）通过在 $\supers(\text{``HasMultipleOuters''})$ 中搜索与 \mintinline{yaml}{MyInner} 定义位置匹配的覆盖路径来解析 \mintinline{yaml}{[MyOuter, ~]}。它找到两条继承位置路径，一条经由 \mintinline{yaml}{Object1}，另一条经由 \mintinline{yaml}{Object2}，并返回两者。两条路径都通向继承自同一 \mintinline{yaml}{MyOuter} 的 mixin，因此查询某标签在 \mintinline{yaml}{[HasMultipleOuters, outer]} 处是否存在，无论走哪条路由，都与在 \mintinline{yaml}{MyOuter} 处给出相同答案。由于这样的路径只要某条路由到达一个定义该标签的 override 即存在,两条路由便无冲突地一致,该演算也无需任何人为的路由优先级或线性化规则。
\else
This is well-defined in inheritance-calculus.
The reference
\mintinline{yaml}{[MyOuter, ~]} inside \mintinline{yaml}{outer}
has de~Bruijn index $n = 1$:
starting from \mintinline{yaml}{outer}'s enclosing scope
\mintinline{yaml}{MyInner}, one $\this$ step reaches
\mintinline{yaml}{MyOuter}.
When \mintinline{yaml}{HasMultipleOuters} inherits from both
\mintinline{yaml}{[Object1, MyInner]} and
\mintinline{yaml}{[Object2, MyInner]},
the $\this$ function (equation~\ref{eq:this}) resolves
\mintinline{yaml}{[MyOuter, ~]} by searching through
$\supers(\text{``HasMultipleOuters''})$ for override paths matching
\mintinline{yaml}{MyInner}'s definition site.
It finds two inheritance-site paths, one through
\mintinline{yaml}{Object1} and one through \mintinline{yaml}{Object2}, and
returns both.
Both paths lead to mixins that inherit from the same
\mintinline{yaml}{MyOuter}, so whether a label exists at
\mintinline{yaml}{[HasMultipleOuters, outer]} has the same answer as at
\mintinline{yaml}{MyOuter} regardless of which
route is taken.
Because the path exists as soon as some route reaches an override that defines the label, the two routes agree without conflict, and the calculus needs no artificial route-priority or linearization rules.
\fi

\subsection*{\bilingual{NixOS Module System}{NixOS 模块系统}}

\ifInheritanceChinese
NixOS 模块系统以静态错误拒绝了相同的模式。该翻译使用了模块系统自身的抽象：\mintinline{nix}{deferredModule} 对应 trait（即未求值的模块值，被导入到其他不动点中），\mintinline{nix}{submoduleWith} 对应对象（在其自身的不动点中求值）。
\else
The NixOS module system rejects the same pattern with a static error.
The translation uses the module system's own abstractions:
\mintinline{nix}{deferredModule} for traits, which are unevaluated module values imported
into other fixpoints, and \mintinline{nix}{submoduleWith} for objects, which are evaluated
in their own fixpoints.
\fi

\begin{listing}[htbp]
  \caption{\bilingual{The NixOS module system rejects the same pattern}{NixOS 模块系统拒绝同一模式}}
  \label{lst:nixos-conflict}
\begin{minted}{nix}
let
  lib = (import <nixpkgs> { }).lib;
  result = lib.evalModules {
    modules = [
      (toplevel@{ config, ... }: {
        # class MyOuter { trait MyInner {
        #   def outer = MyOuter.this } }
        options.MyOuter = lib.mkOption {
          default = { };
          type = lib.types.deferredModuleWith {
            staticModules = [
              (MyOuter: {
                options.MyInner = lib.mkOption {
                  default = { };
                  type = lib.types.deferredModuleWith {
                    staticModules = [
                      (MyInner: {
                        options.outer = lib.mkOption {
                          default = { };
                          type =
                            lib.types.deferredModuleWith {
                              staticModules =
                                [ toplevel.config.MyOuter ];
                            };
                        };
                      })
                    ];
                  };
                };
              })
            ];
          };
        };
        # object Object1 extends MyOuter
        options.Object1 = lib.mkOption {
          default = { };
          type = lib.types.submoduleWith {
            modules = [ toplevel.config.MyOuter ];
          };
        };
        # object Object2 extends MyOuter
        options.Object2 = lib.mkOption {
          default = { };
          type = lib.types.submoduleWith {
            modules = [ toplevel.config.MyOuter ];
          };
        };
        # object HasMultipleOuters extends
        #   Object1.MyInner with Object2.MyInner
        options.HasMultipleOuters = lib.mkOption {
          default = { };
          type = lib.types.submoduleWith {
            modules = [
            toplevel.config.Object1.MyInner
            toplevel.config.Object2.MyInner
          ];
          };
        };
      })
    ];
  };
in
  builtins.attrNames result.config.HasMultipleOuters.outer
\end{minted}
\end{listing}

\noindent
\ifInheritanceChinese
NixOS 模块系统以如下错误拒绝了此代码：
\else
The NixOS module system rejects this with:
\fi

\begin{minted}{text}
error: The option `HasMultipleOuters.outer'
  in `<unknown-file>'
  is already declared
  in `<unknown-file>'.
\end{minted}

\noindent
\ifInheritanceChinese
Scala 与 NixOS 对同一模式的编码都无法通过编译或求值（上文的两处错误）；而该模式在继承演算中是良定义的：$\this$ 把 \mintinline{yaml}{[MyOuter, ~]} 解析为 \mintinline{yaml}{Object1} 与 \mintinline{yaml}{Object2} 两个继承位置（附录~\ref{app:evaluation-trace}），两条路由无冲突地一致，无需任何路由优先级或线性化规则。
\else
Both the Scala and the NixOS encodings of this pattern fail to compile or
evaluate (the errors shown above); the same pattern is well-defined in
inheritance-calculus, where $\this$ resolves \mintinline{yaml}{[MyOuter, ~]} to
the two inheritance sites \mintinline{yaml}{Object1} and
\mintinline{yaml}{Object2} (Appendix~\ref{app:evaluation-trace}), the two routes
agreeing without conflict and needing no route-priority or linearization rule.
\fi

\section{\bilingual{Encoding Datalog in Inheritance-Calculus}{在继承演算中编码 Datalog}}
\label{app:datalog-encoding}

\ifInheritanceChinese
第~\ref{sec:emergent-phenomena}~节观察到继承演算原生产生逻辑编程的关系语义。
本附录把该观察报告为 Datalog 到继承演算的系统性编码,以代表性规则示例呈现,
而非证明一条翻译定理。
该编码不使用演算的任何扩展;它可用第~\ref{sec:syntax}~节的三个基本原语表达。
本附录的所有示例均有可执行的 MIXINv2 实现\anon[~(补充材料)]{~\cite{yang2026-mixinv2}}。
需要区分三个要素。
首先,关系以 trie 形式存储,因此元组查找通过前缀索引,而非扫描平面事实集。
其次,控制骨架采用第~\ref{sec:case-study}~节已使用的访问者模式风格编写;从函数式编程的角度看,这与 Scott 编码的情况分析所扮演的角色相同。
第三,继承并和 tabling 求值赋予最终程序关系式的最小不动点语义。
编码自身引入的每个标签都带双下划线保留前缀:trie 与访问者机制
(\mintinline{yaml}{__Relation}、\mintinline{yaml}{__Nil}、
\mintinline{yaml}{__Cons}、\mintinline{yaml}{__TailMap}、
\mintinline{yaml}{__Acceptance} 及其余),以及派生自规则变量的
逐规则阶段作用域(如下文的 \mintinline{yaml}{__XAcceptance})。
保留字不在 Datalog 程序的词汇表中:关系名与常量从不带该前缀,
规则变量自身($X$、$Y$、$Z$)是用户词汇、从不成为标签,
故编码机制与任何源名字绝不冲突。
本附录按此顺序展开:先是数据表示,然后是控制流,最后是递归。
\else
Section~\ref{sec:emergent-phenomena} observed that
inheritance-calculus natively produces the relational semantics
of logic programming.
This appendix reports that observation as a systematic
encoding of Datalog into inheritance-calculus, exhibited on
representative rules rather than proved as a translation theorem.
The encoding uses no extensions to the calculus; it is
expressible in the three primitives of Section~\ref{sec:syntax}.
All examples in this appendix have executable MIXINv2
implementations\anon[~(supplementary material)]{~\cite{yang2026-mixinv2}}.
There are three ingredients to keep separate.
First, relations are stored as tries, so tuple lookup is indexed
by prefixes rather than by scanning a flat set of facts.
Second, the control skeleton is written in the visitor-pattern
style already used in Section~\ref{sec:case-study}; from a
functional-programming viewpoint, this is the same role played by
Scott-encoded case analysis.
Third, inheritance union and tabled evaluation give the resulting
program its relational, least-fixed-point semantics.
Every label the encoding itself introduces carries the
double-underscore reserved prefix: the trie and visitor machinery
(\mintinline{yaml}{__Relation}, \mintinline{yaml}{__Nil},
\mintinline{yaml}{__Cons}, \mintinline{yaml}{__TailMap},
\mintinline{yaml}{__Acceptance}, and the rest) together with the
per-rule stage scopes derived from rule variables (such as
\mintinline{yaml}{__XAcceptance} below). Reserved words are not in
the vocabulary of the Datalog program: relation names and constants
never carry the prefix, and the rule variables themselves ($X$,
$Y$, $Z$) are user vocabulary that never becomes a label, so the
encoding's machinery collides with no source name.
The appendix proceeds in that order: data representation first,
then control flow, then recursion.
\fi

\subsection{\bilingual{Relations as Tries}{关系作为 Trie}}

\ifInheritanceChinese
有限域 $D = \{a, b, c, \ldots\}$ 上的二元关系表示为 \emph{trie}:一棵路径编码元组的 mixin 树。
该 trie 由三个基础作用域构成:
\mintinline{yaml}{__Relation}（抽象基础）、
\mintinline{yaml}{__Cons}（非空节点）和 \mintinline{yaml}{__Nil}（终止符）。
每个常量 $d \in D$ 继承自 \mintinline{yaml}{__Cons},并携带一个 \mintinline{yaml}{__TailMap},该映射提供逐常量的子 trie:
\begin{minted}{yaml}
- __Relation:
- __Cons:
  - [__Relation]
  - __Tail:
    - [__Relation]
- __Nil:
  - [__Relation]
- d:
  - [__Cons]
  - __Tail:
  - __TailMap:
    - d:
      - [__Tail]
\end{minted}
事实 $R(d_1, d_2)$ 编码为深度为 2 的 trie 路径,带有 \mintinline{yaml}{__Nil} 终止符。
逐常量的子 trie 通过 $\olbl{__TailMap}.d$ 访问。$R(a, b)$ 的编码为:
\begin{minted}{yaml}
- R:
  - [a]
  - __TailMap:
    - a:
      - [b]
      - __TailMap:
        - b:
          - [__Nil]
\end{minted}
同一关系的多个事实通过继承（trie 并）累积。
添加 $R(b, c)$:
\begin{minted}{yaml}
- R:
  - [a]
  - [b]
  - __TailMap:
    - a:
      - [b]
      - __TailMap:
        - b:
          - [__Nil]
    - b:
      - [c]
      - __TailMap:
        - c:
          - [__Nil]
\end{minted}
trie 表示的意义不仅在于它是树形编码。
它还提供后续连接所使用的索引结构:一旦选定常量 $d$,编码就直接移至 $\olbl{__TailMap}.d$ 并从匹配的子 trie 继续。
这就是为什么本附录使用 trie 这一说法,而非有限元组集的任意编码。
\else
A binary relation over a finite domain $D = \{a, b, c, \ldots\}$
is represented as a \emph{trie}: a mixin tree whose paths
encode tuples.
The trie is built from three base scopes:
\mintinline{yaml}{__Relation} (abstract base),
\mintinline{yaml}{__Cons} (non-empty node), and \mintinline{yaml}{__Nil} (terminator).
Each constant $d \in D$ inherits from \mintinline{yaml}{__Cons} and carries
a \mintinline{yaml}{__TailMap} that provides per-constant sub-tries:
\begin{minted}{yaml}
- __Relation:
- __Cons:
  - [__Relation]
  - __Tail:
    - [__Relation]
- __Nil:
  - [__Relation]
- d:
  - [__Cons]
  - __Tail:
  - __TailMap:
    - d:
      - [__Tail]
\end{minted}
A fact $R(d_1, d_2)$ is encoded as a trie path of depth~2,
with a \mintinline{yaml}{__Nil} terminator.
The per-constant sub-tries are accessed via
$\olbl{__TailMap}.d$.
The encoding of $R(a, b)$ is:
\begin{minted}{yaml}
- R:
  - [a]
  - __TailMap:
    - a:
      - [b]
      - __TailMap:
        - b:
          - [__Nil]
\end{minted}
Multiple facts for the same relation are accumulated by
inheritance (trie union).
Adding $R(b, c)$:
\begin{minted}{yaml}
- R:
  - [a]
  - [b]
  - __TailMap:
    - a:
      - [b]
      - __TailMap:
        - b:
          - [__Nil]
    - b:
      - [c]
      - __TailMap:
        - c:
          - [__Nil]
\end{minted}
The point of the trie representation is not merely that it is a
tree-shaped encoding.
It also provides the indexing structure used by the subsequent
joins: once a constant $d$ has been selected, the encoding moves
directly to $\olbl{__TailMap}.d$ and continues from the matching
sub-trie.
This is why the appendix speaks about tries rather than about an
arbitrary encoding of finite sets of tuples.
\fi

\subsection{\bilingual{Infrastructure}{基础设施}}

\ifInheritanceChinese
每个常量 $d$ 携带编码所使用的四组样板定义。
这些定义的作用类似于 Haskell 中派生的类型类实例:每个常量必须提供编码的通用模式所依赖的一组固定定义。
继承演算没有内置的派生机制,因此样板需手动写出;但每个常量的样板量为 $O(1)$,不随域大小 $|D|$ 增长,因此添加新常量不需要修改任何现有作用域,从而保留了对不可扩展性的免疫。
\else
Each constant $d$ carries four groups of boilerplate
definitions used by the encoding.
These play a role analogous to derived type-class instances
in Haskell: each constant must provide a fixed set of
definitions that the encoding's generic patterns rely on.
Inheritance-calculus has no built-in derivation mechanism,
so the boilerplate is written out manually; however, the
amount of boilerplate per constant is $O(1)$
and does not grow with the domain size $|D|$,
so adding a new constant
does not require modifying any existing scope, preserving
immunity to nonextensibility.
\fi


\begin{description}
  \item[\texttt{\_\_Acceptance}]
    \ifInheritanceChinese
    第~\ref{sec:case-study}~节的访问者模式基础设施。
    操作上,这是对当前常量进行情况分析的面向对象写法。
    每个常量 $d$ 定义一个 \mintinline{yaml}{__VisitorMap},其中包含以 $d$ 为键的分支、一个带有用于逐常量结果的 abstract \mintinline{yaml}{__Visited} 的 \mintinline{yaml}{__Visitor},以及一个将匹配分支的 \mintinline{yaml}{__Visited} 值委托给外部的 \mintinline{yaml}{__Accepted} mixin。
    这是分派机制:给定一个可能包含多个常量的 mixin,\mintinline{yaml}{__Acceptance} 选择每个存在常量对应的分支,并收集逐常量的 \mintinline{yaml}{__Visited} 结果。
    \else
    The visitor-pattern infrastructure from
    Section~\ref{sec:case-study}.
    Operationally, this is the object-oriented spelling of a
    case split on the current constant.
    Each constant $d$ defines a \mintinline{yaml}{__VisitorMap}
    containing a branch keyed by $d$, a \mintinline{yaml}{__Visitor}
    with an abstract \mintinline{yaml}{__Visited} for per-constant results,
    and an \mintinline{yaml}{__Accepted} mixin that delegates to the
    matching branch's \mintinline{yaml}{__Visited} value.
    This is the dispatch mechanism: given a mixin that may
    contain multiple constants, \mintinline{yaml}{__Acceptance} selects the
    branch corresponding to each constant present and
    collects the per-constant \mintinline{yaml}{__Visited} results.
    \fi

  \item[\texttt{\_\_WithVisitorTail}]
    \ifInheritanceChinese
    将每个常量 $d$ 映射到其子 trie 条目 $\olbl{__TailMap}.d$,标记为 \mintinline{yaml}{__VisitorTail}。
    这提供了连接的右端\footnote{
      在访问者模式的语境中,这充当访问者。
    }:当两个关系的 \mintinline{yaml}{__VisitorMap} 条目组合时,\mintinline{yaml}{__VisitorTail} 携带已连接关系中匹配常量的行数据。
    \else
    Maps each constant $d$ to its sub-trie entry
    $\olbl{__TailMap}.d$, labeled as \mintinline{yaml}{__VisitorTail}.
    This provides the right-hand side of a join\footnote{
      In the context of the visitor pattern, this acts as the visitor.
    }: when two relations' \mintinline{yaml}{__VisitorMap} entries are
    composed, \mintinline{yaml}{__VisitorTail} carries the joined
    relation's row data for the matched constant.
    \fi

  \item[\texttt{\_\_WithVisiteeTail}]
    \ifInheritanceChinese
    将每个常量 $d$ 映射到其子 trie 条目 $\olbl{__TailMap}.d$,标记为 \mintinline{yaml}{__VisiteeTail}。
    这提供了左端\footnote{
      这充当被访问者。
    }:即被迭代的关系。\mintinline{yaml}{__WithVisitorTail} 和 \mintinline{yaml}{__WithVisiteeTail} 共同支持值级连接,其中只有\emph{两个}关系中均存在的常量才产生输出。
    \else
    Maps each constant $d$ to its sub-trie entry
    $\olbl{__TailMap}.d$, labeled as \mintinline{yaml}{__VisiteeTail}.
    This provides the left-hand side\footnote{
      This acts as the visitee.
    }: the
    relation being iterated. Together,
    \mintinline{yaml}{__WithVisitorTail} and \mintinline{yaml}{__WithVisiteeTail}
    enable value-level joins where only constants present
    in \emph{both} relations produce output.
    \fi

  \item[\texttt{\_\_Replacement}]
    \ifInheritanceChinese
    给定一个 \mintinline{yaml}{__NewTail} 值,产生一个 \mintinline{yaml}{__Replaced} mixin,该 mixin 继承标签 $d$,但将 $\olbl{__TailMap}.d$ 替换为 \mintinline{yaml}{__NewTail}。
    这是输出构造机制:它通过将新子 trie 替换到常量包装器中来构建输出元组。
    关键在于,\mintinline{yaml}{__Replacement} 定义在逐常量的 \mintinline{yaml}{__TailMap} 条目上（而非合并后的作用域上）,确保输出构造作用于单个常量而非其并集。
    \else
    Given a \mintinline{yaml}{__NewTail} value, produces a
    \mintinline{yaml}{__Replaced} mixin that inherits the label $d$
    but substitutes $\olbl{__TailMap}.d$ with
    \mintinline{yaml}{__NewTail}.
    This is the output construction mechanism: it
    builds output tuples by substituting new sub-tries into
    constant wrappers.
    Crucially, \mintinline{yaml}{__Replacement} is defined on the
    per-constant \mintinline{yaml}{__TailMap} entry (not on the merged
    scope), ensuring that output construction operates on
    individual constants rather than their union.
    \fi
\end{description}

\ifInheritanceChinese
我们以可执行的两跳编码
$\mathrm{twoHop}(X, Z) \mathrel{{:}\!{-}} \mathrm{edge}(X, Y),\; \mathrm{edge}(Y, Z)$
为贯穿示例,取自
\anon[\nolinkurl{RelationalTwoHopRepeatedVisitor.mixin.yaml}~(补充材料)]{\nolinkurl{RelationalTwoHopRepeatedVisitor.mixin.yaml}~\cite{yang2026-mixinv2}}。
其顶层作用域按编码各阶段命名:\mintinline{yaml}{__XAcceptance} 迭代第一列,\mintinline{yaml}{__YAcceptance} 执行共享变量连接,\mintinline{yaml}{__NilAcceptance0} 确认第一个事实的末尾,\mintinline{yaml}{__ZAcceptance} 迭代输出列,\mintinline{yaml}{__NilAcceptance1} 触发最终的 \mintinline{yaml}{__Replacement} 链。
下面的示意编码应理解为该可执行 MIXINv2 文件的抽象。
若暂时忽略关系累积并将每次分派读作普通的情况分析,则控制流为:从第一个关系中选取 $X$,选取匹配的 $Y$,确认第一个元组已完整,从第二个关系匹配行中选取 $Z$,确认第二个元组已完整,并输出 $(X, Z)$。
本附录的其余部分解释这一熟悉的单值控制骨架如何通过对 trie 的访问者分派实现,并随后通过继承并提升为关系语义。
\else
As a running example, we use the executable two-hop encoding
$\mathrm{twoHop}(X, Z) \mathrel{{:}\!{-}} \mathrm{edge}(X, Y),\; \mathrm{edge}(Y, Z)$
from
\anon[\nolinkurl{RelationalTwoHopRepeatedVisitor.mixin.yaml}~(supplementary material)]{\nolinkurl{RelationalTwoHopRepeatedVisitor.mixin.yaml}~\cite{yang2026-mixinv2}}.
Its top-level scopes are named after the stages of the
encoding: \mintinline{yaml}{__XAcceptance} iterates the first column,
\mintinline{yaml}{__YAcceptance} performs the shared-variable join,
\mintinline{yaml}{__NilAcceptance0} confirms the end of the first fact,
\mintinline{yaml}{__ZAcceptance} iterates the output column, and
\mintinline{yaml}{__NilAcceptance1} triggers the final
\mintinline{yaml}{__Replacement} chain.
The schematic encodings below should be read as abstractions
of that executable MIXINv2 file.
If one temporarily ignores relational accumulation and reads
each dispatch as an ordinary case split, the control flow is:
choose $X$ from the first relation, choose a matching $Y$,
confirm that the first tuple is complete, choose $Z$ from the
second relation's matching row, confirm that the second tuple
is complete, and emit $(X, Z)$.
The rest of the appendix explains how that familiar
single-valued control skeleton is realised by visitor dispatch
over tries and then lifted to relational semantics by
inheritance union.
\fi

\subsection{\bilingual{Encoding Rules}{编码规则}}

\ifInheritanceChinese
Datalog 规则 $H(\bar{X}) \mathrel{{:}\!{-}} B_1(\bar{Y}_1), \ldots, B_n(\bar{Y}_n)$ 通过在该控制骨架内嵌套三种机制来编码:

\begin{enumerate}
  \item \textbf{列迭代。}
    对每个列位置,一次 \mintinline{yaml}{__Acceptance} 分派迭代存在的常量。
    \mintinline{yaml}{__VisitorMap} 从 \mintinline{yaml}{__WithVisiteeTail} 填充,因此每个匹配常量的 \mintinline{yaml}{__Visitor} 接收 \mintinline{yaml}{__VisiteeTail}:该位置处逐常量的子 trie。
    处理每个常量的结果放入 \mintinline{yaml}{__Visited},\mintinline{yaml}{__Accepted} 收集所有逐常量的结果。
    （详见第~\ref{sec:encoding-variables}、\ref{sec:encoding-constants}~和~\ref{sec:encoding-multi}~节。）

  \item \textbf{连接。}
    对于体原子之间的每个共享变量,内层 \mintinline{yaml}{__Acceptance} 的 \mintinline{yaml}{__VisitorMap} 同时继承自 \mintinline{yaml}{__WithVisitorTail}（来自被连接关系）和 \mintinline{yaml}{__WithVisiteeTail}（来自被迭代关系）。
    常量 $d$ 仅当出现在两个关系中时才产生 \mintinline{yaml}{__Visitor} 条目,从而实现值级连接。
    随后 \mintinline{yaml}{__Visitor} 接收 \mintinline{yaml}{__VisiteeTail} 和 \mintinline{yaml}{__VisitorTail} 两者,从而可访问匹配常量在两个关系中的数据。

  \item \textbf{输出构造。}
    在最内层的 \mintinline{yaml}{__Nil} 确认（验证每个体原子的元组已完整）内部,从最内层输出列向外链式执行 \mintinline{yaml}{__Replacement} 操作,每步将 \mintinline{yaml}{__NewTail} 设为前一个 \mintinline{yaml}{__Replaced} 值,最终由 \mintinline{yaml}{__Nil} 终止。
    最终的 \mintinline{yaml}{__Replaced} 值赋给 \mintinline{yaml}{__Visited}。
\end{enumerate}
\else
A Datalog rule $H(\bar{X}) \mathrel{{:}\!{-}} B_1(\bar{Y}_1), \ldots, B_n(\bar{Y}_n)$
is encoded by nesting three mechanisms inside that control
skeleton:

\begin{enumerate}
  \item \textbf{Column iteration.}
    For each column position, an \mintinline{yaml}{__Acceptance} dispatch
    iterates over the constants present.
    The \mintinline{yaml}{__VisitorMap} is populated from
    \mintinline{yaml}{__WithVisiteeTail}, so each matched constant's
    \mintinline{yaml}{__Visitor} receives \mintinline{yaml}{__VisiteeTail}: the
    per-constant sub-trie at that position.
    The result of processing each constant is placed in
    \mintinline{yaml}{__Visited}, and \mintinline{yaml}{__Accepted} collects all
    per-constant results.
    (Details in Sections~\ref{sec:encoding-variables},
    \ref{sec:encoding-constants}, and~\ref{sec:encoding-multi}.)

  \item \textbf{Join.}
    For each shared variable between body atoms, the inner
    \mintinline{yaml}{__Acceptance}'s \mintinline{yaml}{__VisitorMap} inherits from
    \emph{both} \mintinline{yaml}{__WithVisitorTail} (from the joined
    relation) and \mintinline{yaml}{__WithVisiteeTail} (from the iterated
    relation).
    A constant $d$ produces a \mintinline{yaml}{__Visitor} entry only if
    $d$ appears in both relations, achieving a value-level
    join.
    The \mintinline{yaml}{__Visitor} then receives both
    \mintinline{yaml}{__VisiteeTail} and \mintinline{yaml}{__VisitorTail},
    providing access to both relations' data for the
    matched constant.

  \item \textbf{Output construction.}
    Inside the innermost \mintinline{yaml}{__Nil} confirmation (verifying
    that each body atom's tuple is complete), chain
    \mintinline{yaml}{__Replacement} operations from the innermost
    output column outward, each setting \mintinline{yaml}{__NewTail} to
    the previous \mintinline{yaml}{__Replaced} value, terminated by
    \mintinline{yaml}{__Nil}.
    The final \mintinline{yaml}{__Replaced} value is assigned to
    \mintinline{yaml}{__Visited}.
\end{enumerate}
\fi

\subsection{\bilingual{Variable Join}{变量连接}}
\label{sec:encoding-variables}

\ifInheritanceChinese
考虑最简单的连接:
$\mathrm{composed}(X, Z) \mathrel{{:}\!{-}} \alpha(X, Y),\; \beta(Y, Z)$。
该模式的可执行自连接实例为
\nolinkurl{RelationalTwoHopRepeatedVisitor.mixin.yaml},
其作用域名称在下文括号中给出。
这里可以清楚地看到节引言中的分工:访问者分派提供控制流,trie 提供对匹配尾部的直接访问,而 $\supers$ 累积每次成功匹配产生的输出。

共享变量 $Y$ 要求在 $\alpha$ 的第二列和 $\beta$ 的第一列之间匹配常量。
编码嵌套五次 \mintinline{yaml}{__Acceptance} 分派:从 $\alpha$ 迭代 $X$、在 $\alpha$ 与 $\beta$ 之间对 $Y$ 连接、确认 $\alpha$ 中 $Y$ 之后的 \mintinline{yaml}{__Nil}、从 $\beta$ 匹配行迭代 $Z$、确认 $Z$ 之后的 \mintinline{yaml}{__Nil}。
我们将每一层作为一个命名嵌套作用域呈现。
\else
Consider the simplest join:
$\mathrm{composed}(X, Z) \mathrel{{:}\!{-}} \alpha(X, Y),\; \beta(Y, Z)$.
The executable self-join instance of this pattern is
\nolinkurl{RelationalTwoHopRepeatedVisitor.mixin.yaml},
whose scope names are shown in parentheses below.
This is the right place to see the division of labour from the
section introduction: visitor dispatch provides the control flow,
the trie provides direct access to the matching tails, and
$\supers$ accumulates the outputs produced by each successful
match.

The shared variable $Y$ requires matching constants
between $\alpha$'s second column and $\beta$'s first column.
The encoding nests five \mintinline{yaml}{__Acceptance} dispatches:
iterate $X$ from $\alpha$, join on $Y$ between $\alpha$
and $\beta$, confirm \mintinline{yaml}{__Nil} after $Y$ in $\alpha$,
iterate $Z$ from $\beta$'s matched row, and confirm
\mintinline{yaml}{__Nil} after $Z$.
We present each level as a named nested scope.
\fi

\ifInheritanceChinese
\paragraph{第~1~层: 从 \texorpdfstring{$\alpha$}{α} 迭代 \texorpdfstring{$X$}{X}}
(可执行作用域: \mintinline{yaml}{__XAcceptance}。)
\else
\paragraph{Level~1: iterate $X$ from $\alpha$}
(Executable scope: \mintinline{yaml}{__XAcceptance}.)
\fi
{\small
\begin{minted}[escapeinside=||]{yaml}
- __xAcc:
  - [|$\alpha$|, __Acceptance]
  - __VisitorMap:
    - [|$\alpha$|, __WithVisiteeTail]
  - __Visitor:
    - __X:
      - [__VisiteeTail]
    - |$\ldots$|
    - __Visited:
      - [__yAcc, __Accepted]
\end{minted}
}

\ifInheritanceChinese
\paragraph{第~2~层: 将 \texorpdfstring{$\alpha$}{α} 的 \texorpdfstring{$Y$}{Y} 列与 \texorpdfstring{$\beta$}{β} 的第一列连接}
(可执行作用域: \mintinline{yaml}{__YAcceptance}。)
\else
\paragraph{Level~2: join $\alpha$'s $Y$-column with $\beta$'s first column}
(Executable scope: \mintinline{yaml}{__YAcceptance}.)
\fi

\ifInheritanceChinese
\mintinline{yaml}{__VisitorMap} 同时继承自 $\beta.\olbl{__WithVisitorTail}$ 和 $\olbl{__X}.\olbl{__WithVisiteeTail}$,仅当常量 $d$ 同时出现在 $\alpha$ 的第二列（通过 \mintinline{yaml}{__X}）和 $\beta$ 的第一列时才产生条目。
\else
The \mintinline{yaml}{__VisitorMap} inherits from \emph{both}
$\beta.\olbl{__WithVisitorTail}$ and
$\olbl{__X}.\olbl{__WithVisiteeTail}$, producing an entry for
constant $d$ only if $d$ appears in both $\alpha$'s
second column (via \mintinline{yaml}{__X}) and $\beta$'s first column.
\fi
{\small
\begin{minted}[escapeinside=||]{yaml}
- __yAcc:
  - [__X, __Acceptance]
  - __VisitorMap:
    - [|$\beta$|, __WithVisitorTail]
    - [__X, __WithVisiteeTail]
  - __Visitor:
    - __Y0:
      - [__VisiteeTail]
    - __VisitorTail:
      - [|$\beta$|, __Tail]
    - |$\ldots$|
    - __Visited:
      - [__nilAcc0, __Accepted]
\end{minted}
}

\ifInheritanceChinese
\paragraph{第~3~层: 确认 \texorpdfstring{$\alpha$}{α} 中 \texorpdfstring{$Y$}{Y} 之后的 \texorpdfstring{$\olbl{__Nil}$}{__Nil}}
(可执行作用域: \mintinline{yaml}{__NilAcceptance0}。)
\else
\paragraph{Level~3: confirm $\olbl{__Nil}$ after $Y$ in $\alpha$}
(Executable scope: \mintinline{yaml}{__NilAcceptance0}.)
\fi

\ifInheritanceChinese
此步骤检查 $\alpha(X, Y)$ 是一个完整的二列事实（其 trie 路径在 $Y$ 之后终止于 \mintinline{yaml}{__Nil}）。
在 \mintinline{yaml}{__Nil} 分支内部,编码继续通过 \mintinline{yaml}{__VisitorTail} 迭代 $\beta$ 的第二列:
\else
This checks that $\alpha(X, Y)$ is a complete two-column
fact (its trie path terminates at \mintinline{yaml}{__Nil} after $Y$).
Inside the \mintinline{yaml}{__Nil} branch, the encoding proceeds to
iterate $\beta$'s second column via \mintinline{yaml}{__VisitorTail}:
\fi
{\small
\begin{minted}[escapeinside=||]{yaml}
- __nilAcc0:
  - [|$\YsubZero$|, __Acceptance]
  - __VisitorMap:
    - __Nil:
      - |$\ldots$|
      - __Visited:
        - [__zAcc, __Accepted]
\end{minted}
}

\ifInheritanceChinese
\paragraph{第~4~层: 从 \texorpdfstring{$\beta$}{β} 的匹配行迭代 \texorpdfstring{$Z$}{Z}}
(可执行作用域: \mintinline{yaml}{__ZAcceptance}。)
\else
\paragraph{Level~4: iterate $Z$ from $\beta$'s matched row}
(Executable scope: \mintinline{yaml}{__ZAcceptance}.)
\fi
{\small
\begin{minted}[escapeinside=||]{yaml}
- __zAcc:
  - [__VisitorTail, __Acceptance]
  - __VisitorMap:
    - [__VisitorTail, __WithVisiteeTail]
  - __Visitor:
    - __Z:
      - [__VisiteeTail]
    - |$\ldots$|
    - __Visited:
      - [__nilAcc1, __Accepted]
\end{minted}
}

\ifInheritanceChinese
\paragraph{第~5~层: 确认 \texorpdfstring{$Z$}{Z} 之后的 \texorpdfstring{$\olbl{__Nil}$}{__Nil} 并构造输出}
(可执行作用域: \mintinline{yaml}{__NilAcceptance1},包含 \mintinline{yaml}{__ZReplacement} 和 \mintinline{yaml}{__XReplacement}。)
\else
\paragraph{Level~5: confirm $\olbl{__Nil}$ after $Z$ and construct output}
(Executable scope: \mintinline{yaml}{__NilAcceptance1}, with
\mintinline{yaml}{__ZReplacement} and \mintinline{yaml}{__XReplacement}.)
\fi
{\small
\begin{minted}[escapeinside=||]{yaml}
- __nilAcc1:
  - [__VisiteeTail, __Acceptance]
  - __VisitorMap:
    - __Nil:
      - zR:
        - [__Z, __Replacement]
        - __NewTail:
          - [__Nil]
      - xR:
        - [__X, __Replacement]
        - __NewTail:
          - [zR, __Replaced]
      - __Visited:
        - [xR, __Replaced]
\end{minted}
}

\paragraph{\bilingual{Final result}{最终结果}}
\begin{minted}{yaml}
- composed:
  - [__xAcc, __Accepted]
\end{minted}

\paragraph{\bilingual{Semantics}{语义}}
\ifInheritanceChinese
在第~2~层,组合的 \mintinline{yaml}{__VisitorMap} 确保常量 $d$ 仅当同时出现在 $\alpha$ 的第二列\emph{和} $\beta$ 的第一列时才产生 \mintinline{yaml}{__Visitor}。
每个匹配常量 $d$ 的 \mintinline{yaml}{__Visitor} 独立接收 \mintinline{yaml}{__VisiteeTail}（$\alpha$ 中 $d$ 的数据）和 \mintinline{yaml}{__VisitorTail}（$\beta$ 中 $d$ 的行）。
输出构造（第~5~层）在这个逐常量的 \mintinline{yaml}{__Visitor} 内部执行,因此 \mintinline{yaml}{__Replacement} 作用于\emph{单个}常量的数据,而非合并后的作用域。
每个匹配常量 $d$ 在 \mintinline{yaml}{__Visited} 中产生其自身的输出元组,这些元组由 \mintinline{yaml}{__Accepted} 通过 trie 并（$\supers$）收集。
这产生了标准 Datalog 元组级语义:只有实际在两个关系中均匹配的常量才对输出有贡献。
对于具体实例
$\mathrm{edge} = \{(a,b),\; (b,c),\; (c,a)\}$,如编码于
\nolinkurl{RelationalTwoHopRepeatedVisitor.mixin.yaml},
第~1~层选取 $a$,第~2~层匹配 $b$,第~3~层确认 $\mathrm{edge}(a,b)$ 终止于 \mintinline{yaml}{__Nil},第~4~层迭代 $\olbl{edge}.\olbl{__TailMap}.b$ 并选取 $c$,第~5~层再次确认 \mintinline{yaml}{__Nil},随后两次 \mintinline{yaml}{__Replacement} 操作重建 $(a,c)$。
\else
At Level~2, the combined \mintinline{yaml}{__VisitorMap} ensures that a
constant $d$ produces a \mintinline{yaml}{__Visitor} only when $d$
appears in both $\alpha$'s second column \emph{and}
$\beta$'s first column.
The \mintinline{yaml}{__Visitor} for each matched constant $d$
independently receives \mintinline{yaml}{__VisiteeTail} ($\alpha$'s
data for~$d$) and \mintinline{yaml}{__VisitorTail} ($\beta$'s row
for~$d$).
Output construction (Level~5) operates inside this
per-constant \mintinline{yaml}{__Visitor}, so \mintinline{yaml}{__Replacement}
acts on the \emph{individual} constant's data, not on the
merged scope.
Each matched constant $d$ produces its own output tuples
in \mintinline{yaml}{__Visited}, and these are collected by
\mintinline{yaml}{__Accepted} via trie union ($\supers$).
This yields standard Datalog tuple-level semantics: only
constants that actually match between both relations
contribute to the output.
For the concrete instance
$\mathrm{edge} = \{(a,b),\; (b,c),\; (c,a)\}$, as encoded in
\nolinkurl{RelationalTwoHopRepeatedVisitor.mixin.yaml},
Level~1 selects $a$, Level~2 matches $b$, Level~3 confirms
that $\mathrm{edge}(a,b)$ ends in \mintinline{yaml}{__Nil}, Level~4
iterates $\olbl{edge}.\olbl{__TailMap}.b$ and selects $c$,
and Level~5 confirms \mintinline{yaml}{__Nil} again before the two
\mintinline{yaml}{__Replacement} operations reconstruct $(a,c)$.
\fi

\subsection{\bilingual{Constant Match and Navigation}{常量匹配与导航}}
\label{sec:encoding-constants}

\ifInheritanceChinese
Datalog 原子中的常量使用两种机制,均非连接。
它们比变量连接更简单,因为不需要将第二个关系的值与当前选定的常量进行匹配:
\else
Constants in Datalog atoms use two mechanisms, neither of
which is a join.
They are simpler than variable joins because no value from a
second relation needs to be matched against the currently chosen
constant:
\fi

\paragraph{\bilingual{Type existence match}{类型存在匹配}}
\ifInheritanceChinese
原子 $R(X, b)$（第二位置为常量）要求检查常量 $b$ 是否存在于 $R$ 在匹配行处的第二列。
这编码为一次 \mintinline{yaml}{__Acceptance},其手工构造的 \mintinline{yaml}{__VisitorMap} \emph{仅}包含 $b$ 分支（而非从 \mintinline{yaml}{__WithVisitorTail} 组合,后者会为所有常量创建分支）:
{\small
\begin{minted}[escapeinside=||]{yaml}
- __VisitorMap:
  - |$b$|:
    - |$\ldots$|
\end{minted}
}
若 $b$ 存在于分派作用域中,则 $b$ 分支触发；否则 \mintinline{yaml}{__Accepted} 不接收任何匹配分支。
\else
An atom $R(X, b)$ with a constant in the second position
requires checking that constant $b$ exists in $R$'s
second column at the matched row.
This is encoded as an \mintinline{yaml}{__Acceptance} with a
hand-crafted \mintinline{yaml}{__VisitorMap} containing \emph{only}
the $b$ branch (instead of composing from
  \mintinline{yaml}{__WithVisitorTail}, which would create branches for
all constants):
{\small
\begin{minted}[escapeinside=||]{yaml}
- __VisitorMap:
  - |$b$|:
    - |$\ldots$|
\end{minted}
}
If $b$ exists in the dispatching scope, the
$b$ branch fires; otherwise, \mintinline{yaml}{__Accepted}
receives no matching branch.
\fi

\paragraph{\bilingual{Direct navigation}{直接导航}}
\ifInheritanceChinese
原子 $R(b, Z)$（第一位置为常量）直接导航到 $b$ 的子 trie:
\begin{minted}[escapeinside=||]{yaml}
- |$\edgeBZ$|:
  - [|$R$|, __TailMap, |$b$|]
\end{minted}
这完全绕过了访问者机制,通过访问逐常量的 \mintinline{yaml}{__TailMap} 条目将作用域限制为仅 $b$ 的数据。
\else
An atom $R(b, Z)$ with a constant in the first position
navigates directly to $b$'s sub-trie:
\begin{minted}[escapeinside=||]{yaml}
- |$\edgeBZ$|:
  - [|$R$|, __TailMap, |$b$|]
\end{minted}
This bypasses the visitor mechanism entirely, accessing
the per-constant \mintinline{yaml}{__TailMap} entry to restrict the
scope to only $b$'s data.
\fi

\paragraph{\bilingual{Example}{示例}}
\ifInheritanceChinese
$\mathrm{viaB}(X, Z) \mathrel{{:}\!{-}} \mathrm{edge}(X, b),\; \mathrm{edge}(b, Z)$
结合了两种机制。
可执行的 MIXINv2 编码为
\anon[\nolinkurl{RelationalConstant.mixin.yaml}~(补充材料)]{\nolinkurl{RelationalConstant.mixin.yaml}~\cite{yang2026-mixinv2}}。
编码迭代 \mintinline{yaml}{edge} 以得到 $X$（第~1~层）,然后用手工构造的 \mintinline{yaml}{__VisitorMap} 检查 $X$ 的行中是否存在常量 $b$（第~2~层）,导航 $\olbl{edge}.\olbl{__TailMap}.b$ 以得到 $Z$（第~3~层）,确认 \mintinline{yaml}{__Nil}（第~4~层）,并通过 \mintinline{yaml}{__Replacement} 构造 $(X, Z)$。
\else
$\mathrm{viaB}(X, Z) \mathrel{{:}\!{-}} \mathrm{edge}(X, b),\; \mathrm{edge}(b, Z)$
combines both mechanisms.
The executable MIXINv2 encoding is
\anon[\nolinkurl{RelationalConstant.mixin.yaml}~(supplementary material)]{\nolinkurl{RelationalConstant.mixin.yaml}~\cite{yang2026-mixinv2}}.
The encoding iterates \mintinline{yaml}{edge} for $X$ (Level~1),
then checks for constant $b$ in $X$'s row with a
hand-crafted \mintinline{yaml}{__VisitorMap} (Level~2), navigates
$\olbl{edge}.\olbl{__TailMap}.b$ for $Z$ (Level~3),
confirms \mintinline{yaml}{__Nil} (Level~4), and constructs $(X, Z)$
via \mintinline{yaml}{__Replacement}.
\fi

\subsection{\bilingual{Multi-Body Rules}{多体规则}}
\label{sec:encoding-multi}

\ifInheritanceChinese
含 $n > 2$ 个体原子的规则链接额外的连接层。
每个共享变量在前一层的 \mintinline{yaml}{__Visitor} 内引入一个组合的 \mintinline{yaml}{__WithVisitorTail}/\mintinline{yaml}{__WithVisiteeTail} 连接,输出构造放置在最内层的 \mintinline{yaml}{__Nil} 确认内部。
因此,二元规则的编码在结构上已是完整故事:更大的规则体只是增加同一模式的更多嵌套副本。
\else
Rules with $n > 2$ body atoms chain additional join levels.
Each shared variable introduces one combined
\mintinline{yaml}{__WithVisitorTail}/\mintinline{yaml}{__WithVisiteeTail} join
inside the previous level's \mintinline{yaml}{__Visitor},
and the output construction is placed inside the innermost
\mintinline{yaml}{__Nil} confirmation.
So the binary-rule encoding is the whole story structurally:
larger bodies only add more nested copies of the same pattern.
\fi

\paragraph{\bilingual{Three-body rule}{三体规则}}
\ifInheritanceChinese
$\mathrm{chain}(X, W) \mathrel{{:}\!{-}} \alpha(X, Y),\; \beta(Y, Z),\; \gamma(Z, W)$
需要两次连接:对 $Y$（在 $\alpha$ 和 $\beta$ 之间）以及对 $Z$（在 $\beta$ 和 $\gamma$ 之间）。
结构为:从 $\alpha$ 迭代 $X$、对 $Y$ 连接（获取 $\alpha$ 和 $\beta$ 两者的数据）、确认 $\alpha$ 中 $Y$ 之后的 \mintinline{yaml}{__Nil},然后在 \mintinline{yaml}{__Nil} 分支内将 $\beta$ 的 $Z$ 列与 $\gamma$ 的第一列连接、从 $\gamma$ 匹配行迭代 $W$、确认 \mintinline{yaml}{__Nil},并构造 $(X, W)$。
$Z$ 连接嵌套在 $Y$ 连接的 \mintinline{yaml}{__Nil} 确认内部,确保 $\beta$ 中匹配的 $Y$ 常量的数据（通过 \mintinline{yaml}{__VisitorTail} 可得）被用于迭代 $\beta$ 的第二列。
\else
$\mathrm{chain}(X, W) \mathrel{{:}\!{-}} \alpha(X, Y),\; \beta(Y, Z),\; \gamma(Z, W)$
requires two joins: on $Y$ (between $\alpha$ and $\beta$)
and on $Z$ (between $\beta$ and $\gamma$).
The structure is: iterate $X$ from $\alpha$, join on $Y$
(getting both $\alpha$'s and $\beta$'s data), confirm
\mintinline{yaml}{__Nil} after $Y$ in $\alpha$, then inside the
\mintinline{yaml}{__Nil} branch join $\beta$'s $Z$-column with
$\gamma$'s first column, iterate $W$ from $\gamma$'s
matched row, confirm \mintinline{yaml}{__Nil}, and construct
$(X, W)$.
The $Z$-join is nested inside the $Y$-join's
\mintinline{yaml}{__Nil} confirmation, ensuring that $\beta$'s data
for the matched $Y$ constant (available via
\mintinline{yaml}{__VisitorTail}) is used to iterate $\beta$'s
second column.
\fi

\paragraph{\bilingual{Mixed constants and variables}{混合常量与变量}}
\ifInheritanceChinese
$\mathrm{constJoin}(X, W) \mathrel{{:}\!{-}} \mathrm{edge}(X, Y),\; \mathrm{edge}(Y, b),\; \mathrm{edge}(b, W)$
结合了对 $Y$ 的变量连接、对 $b$ 的常量匹配（手工构造的仅含 $b$ 条目的 \mintinline{yaml}{__VisitorMap}）以及通过 $\olbl{edge}.\olbl{__TailMap}.b$ 对 $W$ 的直接导航。
对应的可执行 MIXINv2 文件为
\anon[\nolinkurl{RelationalConstantJoin.mixin.yaml}~(补充材料)]{\nolinkurl{RelationalConstantJoin.mixin.yaml}~\cite{yang2026-mixinv2}}。
\else
$\mathrm{constJoin}(X, W) \mathrel{{:}\!{-}} \mathrm{edge}(X, Y),\; \mathrm{edge}(Y, b),\; \mathrm{edge}(b, W)$
combines a variable join on $Y$, a constant match on $b$
(hand-crafted \mintinline{yaml}{__VisitorMap} with only the $b$ entry),
and direct navigation via $\olbl{edge}.\olbl{__TailMap}.b$
for $W$.
The corresponding executable MIXINv2 file is
\anon[\nolinkurl{RelationalConstantJoin.mixin.yaml}~(supplementary material)]{\nolinkurl{RelationalConstantJoin.mixin.yaml}~\cite{yang2026-mixinv2}}.
\fi

\subsection{\bilingual{Multi-Variable Join}{多变量连接}}

\ifInheritanceChinese
当一条规则在两个体原子之间共享\emph{多个}变量时,每个共享变量需要其自身的 \mintinline{yaml}{__Acceptance} 检查。
这些检查链式排列:外层 \mintinline{yaml}{__Acceptance} 使用组合的 \mintinline{yaml}{__WithVisitorTail}/\mintinline{yaml}{__WithVisiteeTail} 对一对列连接,其 \mintinline{yaml}{__Visitor} 内含对另一对列进行连接的内层 \mintinline{yaml}{__Acceptance}。
\else
When a rule shares \emph{multiple} variables between two
body atoms, each shared variable requires its own
\mintinline{yaml}{__Acceptance} check.
The checks are chained: the outer \mintinline{yaml}{__Acceptance}
joins on one column pair using combined
\mintinline{yaml}{__WithVisitorTail}/\mintinline{yaml}{__WithVisiteeTail}, and
its \mintinline{yaml}{__Visitor} contains an inner \mintinline{yaml}{__Acceptance}
joining on another column pair.
\fi

\paragraph{\bilingual{Example}{示例}}
\ifInheritanceChinese
$\mathrm{multiJoin}(X, Y) \mathrel{{:}\!{-}} r(X, Y),\; s(X, Y)$
共享 $X$ 和 $Y$ 两者。
可执行的 MIXINv2 编码为
\anon[\nolinkurl{RelationalMultiVariableJoin.mixin.yaml}~(补充材料)]{\nolinkurl{RelationalMultiVariableJoin.mixin.yaml}~\cite{yang2026-mixinv2}}。
外层 \mintinline{yaml}{__Acceptance} 使用组合的 \mintinline{yaml}{__WithVisitorTail}/\mintinline{yaml}{__WithVisiteeTail} 连接 $r$ 和 $s$ 的第一列（变量 $X$）。
在外层 \mintinline{yaml}{__Visitor} 内部,内层 \mintinline{yaml}{__Acceptance} 类似地连接 $r$ 和 $s$ 的第二列（变量 $Y$）。
两次连接均须匹配才能产生输出,逐常量的 \mintinline{yaml}{__Visited} 模式确保只有两个关系中均存在的常量才有贡献。
\else
$\mathrm{multiJoin}(X, Y) \mathrel{{:}\!{-}} r(X, Y),\; s(X, Y)$
shares both $X$ and $Y$.
The executable MIXINv2 encoding is
\anon[\nolinkurl{RelationalMultiVariableJoin.mixin.yaml}~(supplementary material)]{\nolinkurl{RelationalMultiVariableJoin.mixin.yaml}~\cite{yang2026-mixinv2}}.
The outer \mintinline{yaml}{__Acceptance} joins $r$'s and $s$'s first
columns (variable $X$) using combined
\mintinline{yaml}{__WithVisitorTail}/\mintinline{yaml}{__WithVisiteeTail}.
Inside the outer \mintinline{yaml}{__Visitor}, an inner
\mintinline{yaml}{__Acceptance} joins $r$'s and $s$'s second columns
(variable $Y$) similarly.
Both joins must match for the output to be produced, and
the per-constant \mintinline{yaml}{__Visited} pattern ensures that
only constants present in both relations contribute.
\fi

\subsection{\bilingual{Recursive Rules}{递归规则}}
\label{sec:recursive-rules}

\ifInheritanceChinese
递归 Datalog 规则（例如传递闭包
$\mathrm{path}(X, Y) \mathrel{{:}\!{-}} \mathrm{edge}(X, Z),\; \mathrm{path}(Z, Y)$）的编码与非递归规则完全相同:头部关系 \mintinline{yaml}{path} 既作为已定义的作用域出现,也作为规则体中的引用出现（为 $Z$ 连接提供 \mintinline{yaml}{__WithVisitorTail}）。
这不应被理解为单值惰性程序中普通的 $\this$。
在操作上,tabling 求值（第~\ref{sec:mixin-trees}~节,附录~\ref{app:well-definedness}）计算 $\lfp(T_P)$:对 \mintinline{yaml}{path} 的每次查询通过递归规则体触发进一步求值,trie 并（$\supers$）累积所有可达元组。
这与 $T_P$ 算子~\cite{vanemden1976-predicate-logic-semantics}的\emph{朴素求值}~\cite{bancilhon1986-recursive-query-strategies}对应:每次迭代将所有规则应用于当前事实集并添加新导出的事实。
由于 Datalog 编码的关系在每次查询中只有有限多个常量,查询的答案域是有限的,收敛性由有限格上的升链条件保证。
\emph{半朴素求值}~\cite{bancilhon1986-recursive-query-strategies}（限制每次迭代仅使用至少一个新导出事实的规则）是一种自然的优化,留待未来工作。
\else
Recursive Datalog rules, such as transitive closure
$\mathrm{path}(X, Y) \mathrel{{:}\!{-}} \mathrm{edge}(X, Z),\; \mathrm{path}(Z, Y)$,
are encoded identically to non-recursive rules: the
head relation \mintinline{yaml}{path} appears both as a defined scope
and as a reference in the body (providing
\mintinline{yaml}{__WithVisitorTail} for the $Z$-join).
This should not be read as ordinary $\this$ of a
single-valued lazy program.
Operationally, tabled evaluation
(Section~\ref{sec:mixin-trees},
Appendix~\ref{app:well-definedness})
computes $\lfp(T_P)$: each query into
\mintinline{yaml}{path} triggers further evaluation through
the recursive body, and the trie union ($\supers$)
accumulates all reachable tuples.
This corresponds to \emph{Naive
evaluation}~\cite{bancilhon1986-recursive-query-strategies} of the
$T_P$
operator~\cite{vanemden1976-predicate-logic-semantics}:
each iteration applies every rule to the current
fact set and adds newly derived facts.
Because Datalog-encoded relations have finitely many
constants per query, the per-query answer domain is
finite, and convergence is guaranteed by the ascending
chain condition on a finite lattice.
\emph{Semi-Naive} evaluation~\cite{bancilhon1986-recursive-query-strategies},
which restricts each iteration to rules that use at
least one newly derived fact, is a natural
optimisation left for future work.
\fi

\paragraph{\bilingual{The recursive rule as monadic pseudocode}{递归规则作为单子伪代码}}
\ifInheritanceChinese
上述递归规则大致对应以下伪代码（类 Haskell 语法,带有 \mintinline{haskell}{RebindableSyntax} 和 \mintinline{haskell}{OverloadedLists} 语义,其中 do 记法脱糖为基于 \mintinline{haskell}{Set} 的绑定,列表字面量表示集合）:
\else
The recursive rule above roughly corresponds to the
following pseudocode (Haskell-like syntax with
  \mintinline{haskell}{RebindableSyntax} and \mintinline{haskell}{OverloadedLists}
  semantics, where do-notation desugars to a
\mintinline{haskell}{Set}-based bind and list literals denote sets):
\fi
\begin{listing}[htbp]
  \caption{\bilingual{The recursive Datalog rule as monadic pseudocode}{递归 Datalog 规则的单子伪代码}}
  \label{lst:datalog-monadic}
\begin{minted}{haskell}
{-# LANGUAGE RebindableSyntax, OverloadedLists #-}
s >>= f = unions [f x | x <- s]
return x = [x]
guard True  = [()]
guard False = []
edge = [("a","b"), ("b","c"), ("c","a")]
path = edge <> do
  (x, y0) <- edge
  (y1, z) <- path
  guard (y0 == y1)
  return (x, z)
\end{minted}
\end{listing}
\noindent
\ifInheritanceChinese
伪代码的每一行在继承演算编码（第~\ref{sec:encoding-variables}~节）中都有直接对应:
\else
Each line of the pseudocode has a direct counterpart in
the inheritance-calculus encoding
(Section~\ref{sec:encoding-variables}):
\fi
\begin{description}
  \item[\mintinline{haskell}{edge}]
    \ifInheritanceChinese 对应基础事实 trie;\else corresponds to the base-fact trie;\fi
  \item[\mintinline{haskell}{path = edge <> do ...}]
    \ifInheritanceChinese
    对应 \mintinline{yaml}{path} 继承 \mintinline{yaml}{edge}（基础事实）和递归规则体（通过 $\supers$ 得到的导出事实）;
    \else
    corresponds to \mintinline{yaml}{path} inheriting \mintinline{yaml}{edge}
    (base facts) and the recursive body (derived facts via
    $\supers$);
    \fi
  \item[\mintinline{haskell}{(x, y0) <- edge}]
    \ifInheritanceChinese
    对应第~1~层（\mintinline{yaml}{__xAcc}）,从 \mintinline{yaml}{edge} 迭代 $X$;
    \else
    corresponds to Level~1 (\mintinline{yaml}{__xAcc}), iterating $X$
    from \mintinline{yaml}{edge};
    \fi
  \item[\mintinline{haskell}{(y1, z) <- path}]
    \ifInheritanceChinese
    对应第~2~层（\mintinline{yaml}{__yAcc}）,与 \mintinline{yaml}{path} 对 $Y$ 进行连接,以及第~4~层（\mintinline{yaml}{__zAcc}）,迭代 $Z$;
    \else
    corresponds to Level~2 (\mintinline{yaml}{__yAcc}), joining on $Y$
    with \mintinline{yaml}{path}, and Level~4 (\mintinline{yaml}{__zAcc}),
    iterating $Z$;
    \fi
  \item[\mintinline{haskell}{guard (y0 == y1)}]
    \ifInheritanceChinese
    对应第~2~层中的继承 \mintinline{yaml}{[[__WithVisitorTail], [__WithVisiteeTail]]}:常量仅当同时出现在两个关系中时才产生 \mintinline{yaml}{__Visitor} 条目;
    \else
    corresponds to the inheritance
    \mintinline{yaml}{[[__WithVisitorTail], [__WithVisiteeTail]]} in
    Level~2: a constant produces a \mintinline{yaml}{__Visitor} entry
    only if it appears in both relations;
    \fi
  \item[\mintinline{haskell}{return (x, z)}]
    \ifInheritanceChinese
    对应第~5~层的 \mintinline{yaml}{__Replacement} 链,构造输出元组 $(X, Z)$。
    \else
    corresponds to the \mintinline{yaml}{__Replacement} chain in
    Level~5, constructing the output tuple $(X, Z)$.
    \fi
\end{description}
\ifInheritanceChinese
唯一的区别是求值策略:Haskell 惰性求值 \mintinline{haskell}{path}（发散）,而继承演算通过 tabling $T_P$ 迭代求值（收敛到 $\lfp$）。

然而,该单子程序并不收敛,因为 \mintinline{haskell}{path} 是递归定义的。%
\footnote{
  启用 RecursiveDo 扩展并无帮助:\mintinline{haskell}{mfix} 通过惰性在\emph{元素}级别打结,而传递闭包方程需要在\emph{容器}级别求不动点,即找到满足 $S = F(S)$ 的集合 $S$。
}
继承演算通过 tabling 求值（第~\ref{sec:mixin-trees}~节）解决了这一问题,它通过在 mixin 树的格上迭代来计算 $T_P$ 算子的最小不动点,恰好是上述伪代码无法表达的容器级不动点。
\else
The only difference is the evaluation strategy:
Haskell evaluates \mintinline{haskell}{path} lazily (diverges),
whereas inheritance-calculus evaluates it via tabled
$T_P$ iteration (converges to $\lfp$).

However, this monadic program does not converge because
\mintinline{haskell}{path} is defined recursively.%
\footnote{
  Enabling the RecursiveDo extension does not help:
  \mintinline{haskell}{mfix} ties a knot at the \emph{element} level via laziness,
  whereas the transitive closure equation requires a
  fixed point at the \emph{container} level, that is, finding a
  set~$S$ such that $S = F(S)$.
}
Inheritance-calculus solves this through tabled evaluation
(Section~\ref{sec:mixin-trees}), which computes the
least fixed point of the $T_P$ operator by iterating
over the lattice of mixin trees, precisely the
container-level fixed point that the pseudocode above cannot express.
\fi

\subsection{\bilingual{Semantic Correspondence with Standard Datalog}{与标准 Datalog 的语义对应}}
\label{sec:datalog-semantics}

\ifInheritanceChinese
本编码针对\emph{正向}(无否定)、\emph{范围受限}(安全)、论域有限、以基本事实给出的 Datalog；在路径 interning 下可达答案域有限。该编码产生标准 Datalog 的元组级语义。
综合前几节:trie 提供带索引的元组存储,访问者分派提供控制骨架,逐常量的 \mintinline{yaml}{__Replacement} 将输出构造局限于每个见证。
关键机制是逐常量的 \mintinline{yaml}{__Visited} 模式:输出构造（\mintinline{yaml}{__Replacement}）发生在每个匹配常量的 \mintinline{yaml}{__Visitor} 作用域内部,作用于\emph{单个}常量的数据,而非合并后的作用域。
随后 \mintinline{yaml}{__Accepted} mixin 通过 trie 并（$\supers$）收集所有逐常量的 \mintinline{yaml}{__Visited} 结果。

例如,给定 $\mathrm{edge} = \{(a,b),\; (b,c),\; (c,a)\}$（一个 3-环），两跳自连接
$\mathrm{twoHop}(X, Z) \mathrel{{:}\!{-}} \mathrm{edge}(X, Y),\; \mathrm{edge}(Y, Z)$
恰好产生 $3$ 对:$\{(a,c),\; (b,a),\; (c,b)\}$,与标准 Datalog 一致。
这是
\nolinkurl{RelationalTwoHopRepeatedVisitor.mixin.yaml}
所呈现的行为。
当 $Y$ 连接匹配到常量 $b$（同时出现在 $(a,b)$ 产生的 \mintinline{yaml}{edge} 第二列和 \mintinline{yaml}{edge} 的第一列中）时,$b$ 的 \mintinline{yaml}{__Visitor} 接收 $\olbl{__VisitorTail} = \olbl{edge}.\olbl{__TailMap}.b$（仅含 $c$），\mintinline{yaml}{__Replacement} 构造单个输出元组 $(a, c)$,而非 $\{a,b,c\}$ 的全部。

这种值级精确性源于每个常量的 \mintinline{yaml}{__Replacement} 作用于 $\olbl{__TailMap}.d$（逐常量的子 trie）,而非合并后的 trie。
同样的观点在重复访问者自连接回归示例中也可见:重复的访问者分派阻止不同常量的见证被错误合并。
尽管 $\bases$（方程~\ref{eq:bases}）在基本层仍计算叉积,该编码在任何合并发生之前将计算路由通过逐常量的 \mintinline{yaml}{__TailMap} 条目,从而达到与标准 Datalog 元组级求值相同的效果。
\else
This encoding targets \emph{positive} (negation-free),
\emph{range-restricted} (safe) Datalog over a finite domain with ground facts;
under path-interning the reachable answer domain is finite.
The encoding produces standard Datalog tuple-level semantics.
Putting the previous subsections together: tries provide indexed
tuple storage, visitor dispatch supplies the control skeleton,
and per-constant \mintinline{yaml}{__Replacement} keeps output
construction local to each witness.
The key mechanism is the per-constant \mintinline{yaml}{__Visited}
pattern: output construction (\mintinline{yaml}{__Replacement})
occurs inside each matched constant's \mintinline{yaml}{__Visitor}
scope, operating on the \emph{individual} constant's data
rather than on the merged scope.
The \mintinline{yaml}{__Accepted} mixin then collects all per-constant
\mintinline{yaml}{__Visited} results via trie union ($\supers$).

For example, given $\mathrm{edge} = \{(a,b),\; (b,c),\; (c,a)\}$
(a~3-cycle), the two-hop self-join
$\mathrm{twoHop}(X, Z) \mathrel{{:}\!{-}} \mathrm{edge}(X, Y),\; \mathrm{edge}(Y, Z)$
produces exactly $3$ pairs:
$\{(a,c),\; (b,a),\; (c,b)\}$, matching standard Datalog.
This is the behavior exhibited by
\nolinkurl{RelationalTwoHopRepeatedVisitor.mixin.yaml}.
When the $Y$-join matches constant $b$ (present in both
  \mintinline{yaml}{edge}'s second column from $(a,b)$ and
\mintinline{yaml}{edge}'s first column), the \mintinline{yaml}{__Visitor} for
$b$ receives $\olbl{__VisitorTail} = \olbl{edge}.\olbl{__TailMap}.b$
(containing only~$c$), and \mintinline{yaml}{__Replacement} constructs
a single output tuple $(a, c)$, rather than all of $\{a,b,c\}$.

This value-level precision arises because each constant's
\mintinline{yaml}{__Replacement} operates on $\olbl{__TailMap}.d$
(the per-constant sub-trie) rather than on the merged
trie.
The same point is visible in
the repeated-visitor self-join regression example,
where repeated visitor dispatch prevents witnesses from
different constants from being merged spuriously.
Although $\bases$ (equation~\ref{eq:bases}) still computes
cross-joins at the primitive level, the encoding routes
computation through per-constant \mintinline{yaml}{__TailMap} entries
before any merging occurs, achieving the same effect as
standard Datalog's tuple-level evaluation.
\fi

\begin{proposition}[\bilingual{Datalog correspondence}{Datalog 对应}]%
  \label{prop:datalog-correspondence}
  \ifInheritanceChinese
  对于正向、范围受限、论域有限且以基本事实给出的 Datalog 程序,编码的继承演算求值恰好计算出最小 Herbrand 模型:一个基本元组在程序中可导出,当且仅当对应路径存在于编码的 $\lfp(T_P)$ 中。
  \else
  For a positive, range-restricted Datalog program over a finite domain with
  ground facts, the inheritance-calculus evaluation of the encoding computes
  exactly the least Herbrand model: a ground tuple is derivable in the program
  if and only if the corresponding path exists in $\lfp(T_P)$ of the encoding.
  \fi
\end{proposition}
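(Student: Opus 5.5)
The plan is to establish the two inclusions separately, comparing the Datalog immediate-consequence operator $T_\Pi$ of the source program $\Pi$ with the operator $T_P$ of its encoding. Both are least fixed points reached by $\omega$-iteration (Lemma~\ref{lem:kleene-iteration} on the inheritance side, and the classical van~Emden--Kowalski result on the Datalog side). I will make precise the \emph{tuple path} of a ground atom $R(d_1,\ldots,d_k)$: the path $(R) \cat (\text{``\_\_TailMap''}, d_1, \ldots, \text{``\_\_TailMap''}, d_k)$, ending in a node whose $\supers$ contain the $\olbl{__Nil}$ terminator. ``The corresponding path exists'' will mean existence in the sense of Lemma~\ref{lem:path-existence} together with this $\olbl{__Nil}$ witness. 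The first step is a \emph{trie lemma}. For each relation scope, the set of root-to-$\olbl{__Nil}$ tuple paths equals the union, over the mixins it inherits, of their tuple paths. This follows from equation~\eqref{eq:overrides} and Lemma~\ref{lem:path-existence}, because $\supers$ of a merged node collects the written overrides of all its bases. The lemma justifies reading inheritance as set union of tuples.

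\textbf{Soundness} (a path in $\lfp(T_P)$ implies derivability). I will argue by strong induction on the Kleene rank of the existence witness, using last-step inversion as in the proof of Lemma~\ref{lem:sim-backward}. Inverting a $\olbl{__Nil}$-terminated tuple path of a head relation $H$ shows that it was contributed either by a written ground fact or by the $\olbl{__Accepted}$ of some rule encoding. In the second case I peel the nested levels of Section~\ref{sec:encoding-variables} one at a time:
\begin{itemize}
\item a $\olbl{__Visitor}$ entry at a join level exists only when both $\olbl{__WithVisitorTail}$ and $\olbl{__WithVisiteeTail}$ supply the constant $d$, which yields equal bindings;
\item the $\olbl{__Nil}$ confirmations certify complete body tuples;
\item the $\olbl{__Replacement}$ chain rebuilds exactly $(X,Z)$ from the per-constant $\olbl{__TailMap}.d$ entries.
\end{itemize}
Each body tuple used is witnessed at strictly smaller rank, so the induction hypothesis makes it derivable, and one Datalog rule application then derives the head.

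\textbf{Completeness} (derivability implies a path). I will induct on the stage $i$ of $T_\Pi^i(\varnothing)$. Given a rule instance with ground substitution $\theta$ whose body atoms are derived earlier, the induction hypothesis gives their tuple paths. The goal is to build, level by level, the $\supers$ facts selecting the $\olbl{__Visitor}$ branch for $\theta(X)$, $\theta(Y)$, and so on, ending in a $\olbl{__Visited}$ whose $\olbl{__Replaced}$ is the head's tuple path. Monotonicity (Lemma~\ref{lem:monotonicity}) then lets these facts persist into the $\olbl{__Accepted}$ of $H$. Constant matches and direct navigation (Section~\ref{sec:encoding-constants}) are the degenerate cases of the same pattern, and multi-body and multi-variable rules follow by iterating it.

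The main obstacle is the soundness direction's \emph{locality} claim. I must show that the trie union at merged scopes never combines the tail of one witness with the head of another, since $\bases$ still forms cross products, as the text admits. My first attempt will be an invariant stating that every $\olbl{__Replacement}$ instance is resolved through $\this$ with a frontier contained in a single per-constant $\olbl{__TailMap}.d$ scope. I expect to prove this with a Single-Path-style argument in the spirit of Lemma~\ref{lem:single-path}, and to use Corollary~\ref{cor:merge-this-preservation} to dismiss regrouping artifacts. If that invariant fails for self-joins, I will fall back on the repeated-visitor discipline and strengthen the induction hypothesis so that it tracks the binding environment along the whole nested scope chain.
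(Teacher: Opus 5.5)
The paper gives no proof of this proposition. It says a rigorous argument would have to reconcile the schematic encoding with the executable RepeatedVisitor modules, and it reports only test-suite checks. Your outline (two inclusions, Kleene-rank inversion for soundness, stage induction for completeness) is a sensible skeleton. But it supplies neither of the two ingredients the paper leaves open, and both are load-bearing.

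First, you treat the schematic levels as if they were the encoding. Read literally under~\eqref{eq:helpers}, they are unsound already on ground facts:
\begin{itemize}
\item In the paper's trie for $R(a,b)$, $R(b,c)$, the reference \texttt{[b]} is written at $(\text{``R''}, \text{``\_\_TailMap''}, \text{``a''})$.
\item There $\lexical\bigl((\text{``R''}, \text{``\_\_TailMap''}), \text{``b''}\bigr) = 0$, because the sibling row for $b$ is defined in that scope. So the reference resolves to $(\text{``R''}, \text{``\_\_TailMap''}, \text{``b''})$, not to the constant $b$.
\item The $a$-row therefore inherits the $b$-row. The path $(\text{``R''}, \text{``\_\_TailMap''}, \text{``a''}, \text{``\_\_TailMap''}, \text{``c''})$ becomes \texttt{\_\_Nil}-terminated, a spurious $R(a,c)$.
\item The $a$-row also inherits the constant $c$ but not $b$, so visitor dispatch over it sees $c$ and misses $b$. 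In the 3-cycle $\mathrm{edge}$ of the running example, the three rows even inherit one another cyclically.
\end{itemize}
So the base case of your soundness inversion (``contributed by a written ground fact'') fails, and your tuple-path notion can disagree with the constant-driven dispatch. You must first fix a capture-free encoding, for instance by routing every reference to a constant through a qualified this anchored at a reserved enclosing label. Your invariant must then include the coherence between the constants a trie node inherits and its \texttt{\_\_TailMap} entries, for relations built by \texttt{\_\_Replacement} as well as for written ones. Completeness needs exactly this coherence to obtain the visitor branch for $\theta(Y)$ from the induction hypothesis, which only hands you tuple paths.

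Second, the locality invariant, which you rightly call the crux, is left open, and the tools you name cannot close it. Lemma~\ref{lem:single-path} is read off from the coincidence clause of Lemma~\ref{lem:sim-backward}, that is, from lock-step agreement with the deterministic Krivine machine on pure $\lambda$-images. Nothing analogous exists for the encoding, whose frontiers are not singletons in general. For example, a constant's \texttt{\_\_Replacement} is written once, inside that constant's definition, but incorporated once per derived tuple with that head. When the head relation merges two such tuples, the reference through which \texttt{\_\_Replaced} reads \texttt{\_\_NewTail}, resolved at the merged row, climbs through a frontier with one site per derivation. That multi-target step is precisely what forms the union of their tails; the paper makes the same observation for the Cartesian-product test of Section~\ref{sec:case-study}. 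A singleton-style invariant is therefore false, not merely hard to prove. Corollary~\ref{cor:merge-this-preservation} concerns only the reserved \texttt{\_\_merge} scaffold of Theorem~\ref{thm:merge-associative}, not the encoding.

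What soundness needs is a union-stable invariant indexed by the binding environment recorded in the nested branch path, in the spirit of the ladder of Lemma~\ref{lem:decode-invariant} but with sets of sites. Every site in every consumed frontier should incorporate the rule body under one ground substitution whose body atoms are all derivable, and the tails observed at a merged node should be exactly the union over those substitutions. Maintaining this through self-joins is where the repeated-visitor discipline must enter. Your fallback names that discipline but not the strengthened hypothesis, so as it stands the soundness direction, and with it the proposition, remains unproved.
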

\ifInheritanceChinese
我们不给出完整的形式化证明。严格建立它需要把上面的示意编码与实现逐常量分派的可执行 RepeatedVisitor 模块对齐。我们已在编码的示例上验证了这一对应(\nolinkurl{mixinv2-examples} 测试套件,含 3-环两跳连接的无虚假元组检查);完整的形式化超出本文范围。
\else
We do not give a full formal proof. Establishing it rigorously would require
reconciling the schematic encoding above with the executable RepeatedVisitor
modules that realize the per-constant dispatch. We have verified the
correspondence on the encoded examples (the \nolinkurl{mixinv2-examples} test
suite, including the no-spurious-tuple checks for the 3-cycle two-hop join); a
complete formalization is beyond the scope of this paper.
\fi

\section{\bilingual{Evaluation Trace of Multi-Target \texorpdfstring{$\this$}{this} Resolution}{多目标 \texorpdfstring{$\this$}{this} 解析的求值轨迹}}
\label{app:evaluation-trace}

\ifInheritanceChinese
本附录给出附录~\ref{app:scala-multi-target}~程序的逐步求值轨迹,通过第~\ref{sec:definitions}~节的五个函数（方程~\ref{eq:supers}~至~\ref{eq:this}）判定路径 $(\text{``HasMultipleOuters''},\, \text{``outer''},\, \text{``MyInner''})$ 是否存在。
$\overrides$、$\bases$、$\resolve$、$\this$ 和 $\supers$ 的每次调用均连同其参数和返回集合一起展示。
求值轨迹存在于元语言中:每个参数和结果都是一条\emph{路径},即标识树中某个位置的标签序列 $(\ell_1, \ldots, \ell_n)$（第~\ref{sec:definitions}~节），而非继承演算表达式。
为使标签与第~\ref{sec:definitions}~节的元语言变量（斜体 $p$、$\ell$、$n$、$S$）区分,我们对每个具体标签加引号,因此路径写作带括号的逗号分隔引号标签序列。
从而 $(\text{``HasMultipleOuters''}, \text{``outer''})$ 是继承演算投影 $\olbl{HasMultipleOuters}.\olbl{outer}$ 所标识的路径,根路径为空序列 $()$。
\else
This appendix gives a step-by-step evaluation trace of the program of
Appendix~\ref{app:scala-multi-target}, deciding whether the path
$(\text{``HasMultipleOuters''},\, \text{``outer''},\, \text{``MyInner''})$
exists by the five
functions of Section~\ref{sec:definitions}
(equations~\ref{eq:supers} to~\ref{eq:this}).
Every call to $\overrides$, $\bases$, $\resolve$, $\this$, and $\supers$
is shown with its arguments and returned set.
The trace lives in the metalanguage: every argument and result is a
\emph{path}, that is, a sequence of labels
$(\ell_1, \ldots, \ell_n)$ identifying a position in the tree
(Section~\ref{sec:definitions}), not an inheritance-calculus expression.
To keep labels distinct from the metalanguage variables of
Section~\ref{sec:definitions} (the italic $p$, $\ell$, $n$, $S$), we
quote each concrete label, so a path is written as a parenthesized,
comma-separated sequence of quoted labels.
Thus $(\text{``HasMultipleOuters''}, \text{``outer''})$ is the path that
the inheritance-calculus projection
$\olbl{HasMultipleOuters}.\olbl{outer}$ identifies, and the root path
is the empty sequence $()$.
\fi

\paragraph{\bilingual{Program}{程序}}
\begin{minted}{yaml}
- MyOuter:
  - MyInner:
    - outer:
      - [MyOuter, ~]
- Object1:
  - [MyOuter]
- Object2:
  - [MyOuter]
- HasMultipleOuters:
  - [Object1, MyInner]
  - [Object2, MyInner]
\end{minted}

\paragraph{\bilingual{The primitive \texorpdfstring{$\inherits$}{inherits} and the helper \texorpdfstring{$\has$}{has}}{原始量 \texorpdfstring{$\inherits$}{inherits} 与辅助函数 \texorpdfstring{$\has$}{has}}}
\ifInheritanceChinese
解析填充 $\has$ 和 $\inherits$（第~\ref{sec:definitions}~节）:
\else
Parsing populates $\has$ and $\inherits$
(Section~\ref{sec:definitions}):
\fi
\begin{align*}
  \has((), \ell) &\iff \ell \in \{
    \text{``MyOuter''},\;
    \text{``Object1''}, \\
    &\hphantom{{}\iff{}}
    \text{``Object2''},\;
  \text{``HasMultipleOuters''}\}, \\
  \has((\text{``MyOuter''}), \ell) &\iff \ell = \text{``MyInner''}, \\
  \has((\text{``MyOuter''}, \text{``MyInner''}), \ell) &\iff \ell = \text{``outer''}, \\
  \inherits((\text{``MyOuter''}, \text{``MyInner''}, \text{``outer''})) &=
  \{(1,\, ())\}, \\
  \inherits((\text{``Object1''})) &= \{(0,\, (\text{``MyOuter''}))\}, \\
  \inherits((\text{``Object2''})) &= \{(0,\, (\text{``MyOuter''}))\}, \\
  \inherits((\text{``HasMultipleOuters''})) &=
  \{(0,\, (\text{``Object1''}, \text{``MyInner''})), \\
  &\hphantom{{}={}}
  (0,\, (\text{``Object2''}, \text{``MyInner''}))\}.
\end{align*}
\ifInheritanceChinese
在所有其他路径上 $\has$ 恒为假,$\inherits$ 在其上无定义(不落在 $\dom(\inherits)$ 内)。
每个 $\inherits$ 条目是引用对 $(n,\, w)$,其中 $n$ 是向上步数,$w$ 是向下投影列表（第~\ref{sec:definitions}~节）。
\else
On all other paths $\has$ is false, and they lie outside $\dom(\inherits)$.
Each $\inherits$ entry is a reference pair $(n,\, w)$,
where $n$ is the number of upward steps and $w$ the
downward projection list (Section~\ref{sec:definitions}).
\fi

\paragraph{\bilingual{Trace}{轨迹}}
\ifInheritanceChinese
我们以依赖顺序将求值轨迹呈现为语义函数调用的编号列表:某次调用出现在其所依赖的调用之后,每个条目说明它使用了哪些较早的条目、如何计算以及产生什么结果。
每个不同的调用列出一次,因此单个条目可被多个后续条目使用,这正是实现可以通过 tabling~\cite{tamaki1986-tabled-resolution}实现的复用。
参数和结果是元语言值:标签是引号字符串（如 $\text{``outer''}$），路径是标签的元组（如 $(\text{``HasMultipleOuters''}, \text{``outer''})$），根路径写作 $()$。
查询为：路径 $(\text{``HasMultipleOuters''}, \text{``outer''}, \text{``MyInner''})$ 是否存在。
\else
We present the trace as a numbered list of semantic-function calls in
dependency order: a call appears after the calls it depends on, and each
item states, in English, which earlier items it uses, how it computes, and
what it yields. Each distinct call is listed once, so a single item may be
used by several later items, the reuse that an implementation could realize
by tabling~\cite{tamaki1986-tabled-resolution}.
The arguments and results are metalanguage values: a label is a quoted
string such as $\text{``outer''}$, and a path is a tuple of labels such as
$(\text{``HasMultipleOuters''}, \text{``outer''})$, with the root path
written $()$.
The query is whether the path
$(\text{``HasMultipleOuters''}, \text{``outer''}, \text{``MyInner''})$ exists.
\fi

\begin{enumerate}
  \item \label{trace:overrides-MyOuter-MyInner}Compute $\overrides((\text{``MyOuter''}, \text{``MyInner''}))$, the paths sharing the identity of $(\text{``MyOuter''}, \text{``MyInner''})$. It keeps $(\text{``MyOuter''}, \text{``MyInner''})$ and adds any same-label definition reached through the supers of its enclosing scope:
  \[
    \overrides((\text{``MyOuter''}, \text{``MyInner''})) = \{(\text{``MyOuter''}, \text{``MyInner''})\}
  \]
  \item \label{trace:supers-MyOuter-MyInner}Compute $\supers((\text{``MyOuter''}, \text{``MyInner''}))$ using item~\ref{trace:overrides-MyOuter-MyInner}. It takes the reflexive-transitive closure of $\bases$ from $(\text{``MyOuter''}, \text{``MyInner''})$, pairing each reachable override with the inheritance site through which it is reached:
  \[
    \supers((\text{``MyOuter''}, \text{``MyInner''})) = \{((\text{``MyOuter''}), (\text{``MyOuter''}, \text{``MyInner''}))\}
  \]
  \item \label{trace:overrides-Has-outer}Compute $\overrides((\text{``HasMultipleOuters''}, \text{``outer''}))$, the paths sharing the identity of $(\text{``HasMultipleOuters''}, \text{``outer''})$ using item~\ref{trace:supers-MyOuter-MyInner}. It keeps $(\text{``HasMultipleOuters''}, \text{``outer''})$ and adds any same-label definition reached through the supers of its enclosing scope:
  \[
    \begin{aligned}
      &\overrides((\text{``HasMultipleOuters''}, \text{``outer''})) ={} \\
      &\quad \left\{\begin{aligned}
          &(\text{``HasMultipleOuters''}, \text{``outer''}), \\
          &(\text{``MyOuter''}, \text{``MyInner''}, \text{``outer''})
        \end{aligned}\right\}
    \end{aligned}
  \]
  \item \label{trace:bases-Has-outer}Compute $\bases((\text{``HasMultipleOuters''}, \text{``outer''}))$. It resolves the references carried by the overrides of $(\text{``HasMultipleOuters''}, \text{``outer''})$ at the enclosing scope, and collects their targets:
  \[
    \begin{aligned}
      &\bases((\text{``HasMultipleOuters''}, \text{``outer''})) ={} \\
      &\quad \left\{\begin{aligned}
          &(\text{``Object1''}), \\
          &(\text{``Object2''})
        \end{aligned}\right\}
    \end{aligned}
  \]
  \item \label{trace:bases-MyOuter-MyInner-outer}Compute $\bases((\text{``MyOuter''}, \text{``MyInner''}, \text{``outer''}))$. It resolves the references carried by the overrides of $(\text{``MyOuter''}, \text{``MyInner''}, \text{``outer''})$ at the enclosing scope, and collects their targets:
  \[
    \bases((\text{``MyOuter''}, \text{``MyInner''}, \text{``outer''})) = \{(\text{``MyOuter''})\}
  \]
  \item \label{trace:this-Has-MyOuter-MyInner-1}Compute one $\this$ step: from the frontier $\{(\text{``HasMultipleOuters''})\}$, take the supers of each frontier path and keep those whose definition site is $(\text{``MyOuter''}, \text{``MyInner''})$, collecting their inheritance sites as the new frontier:
  \[
    \begin{aligned}
      &\this(\{(\text{``HasMultipleOuters''})\}, (\text{``MyOuter''}, \text{``MyInner''}), 1) ={} \\
      &\quad \left\{\begin{aligned}
          &(\text{``Object1''}), \\
          &(\text{``Object2''})
        \end{aligned}\right\}
    \end{aligned}
  \]
  \item \label{trace:resolve-Has-MyOuter-MyInner-1}Compute $\resolve$ for the reference pair $(1, ())$ defined at $(\text{``MyOuter''}, \text{``MyInner''})$, reached from inheritance site $(\text{``HasMultipleOuters''})$ using item~\ref{trace:this-Has-MyOuter-MyInner-1}. It takes one step upward through $\this$ and then appends the projection:
  \[
    \begin{aligned}
      &\resolve((\text{``HasMultipleOuters''}), (\text{``MyOuter''}, \text{``MyInner''}), 1, ()) ={} \\
      &\quad \left\{\begin{aligned}
          &(\text{``Object1''}), \\
          &(\text{``Object2''})
        \end{aligned}\right\}
    \end{aligned}
  \]
  \item \label{trace:resolve-0-MyOuter}Compute $\resolve$ for the reference pair $(0, (\text{``MyOuter''}))$ defined at $()$, reached from inheritance site $()$. It takes no step upward through $\this$ and then appends the projection:
  \[
    \resolve((), (), 0, (\text{``MyOuter''})) = \{(\text{``MyOuter''})\}
  \]
  \item \label{trace:supers-Has-outer}Compute $\supers((\text{``HasMultipleOuters''}, \text{``outer''}))$ using items~\ref{trace:overrides-Has-outer}, \ref{trace:bases-Has-outer}, \ref{trace:bases-MyOuter-MyInner-outer}, \ref{trace:resolve-Has-MyOuter-MyInner-1} and~\ref{trace:resolve-0-MyOuter}. It takes the reflexive-transitive closure of $\bases$ from $(\text{``HasMultipleOuters''}, \text{``outer''})$, pairing each reachable override with the inheritance site through which it is reached:
  \[
    \begin{aligned}
      &\supers((\text{``HasMultipleOuters''}, \text{``outer''})) ={} \\
      &\quad \left\{\begin{aligned}
          &((), (\text{``MyOuter''})), \\
          &((), (\text{``Object1''})), \\
          &((), (\text{``Object2''})), \\
          &((\text{``HasMultipleOuters''}), (\text{``HasMultipleOuters''}, \text{``outer''})), \\
          &((\text{``HasMultipleOuters''}), (\text{``MyOuter''}, \text{``MyInner''}, \text{``outer''}))
        \end{aligned}\right\}
    \end{aligned}
  \]
  \item \label{trace:property}Decide whether $\text{``MyInner''}$ is a property at $(\text{``HasMultipleOuters''}, \text{``outer''})$, that is, whether the path $(\text{``HasMultipleOuters''}, \text{``outer''}, \text{``MyInner''})$ exists. Item~\ref{trace:supers-Has-outer} exhibits a witnessing override:
  \[
    \begin{aligned}
      &(\_,\; (\text{``MyOuter''})) \in \supers((\text{``HasMultipleOuters''}, \text{``outer''})), \\
      &(\text{``MyOuter''}, \text{``MyInner''}) \in \dom(\inherits), \\
      &\quad \text{reached through } (\text{``Object1''}) \text{ and } (\text{``Object2''}), \\
      &\supers((\text{``HasMultipleOuters''}, \text{``outer''}, \text{``MyInner''})) \neq \varnothing.
    \end{aligned}
  \]
  so $\text{``MyInner''}$ is a property at $(\text{``HasMultipleOuters''}, \text{``outer''})$.
\end{enumerate}

\subsection{\bilingual{Recorded Outcomes of the Nat Case
    Study}{Nat 案例研究的运行记录}}
\label{app:nat-answers}

\ifInheritanceChinese
第~\ref{sec:case-study}~节的行为陈述是测试套件执行该节逐字展示的源文件所
记录的结果。下列条目由套件的生成器直接产出:读回诸项经测试套件的 ToPython
入口 mixin(算术入口即该节脚注所述者,笛卡尔积诸项用相应的笛卡尔积测试
入口),末项是不含 FFI 的元语言见证。
\else
The behavioral statements of Section~\ref{sec:case-study} are
outcomes recorded by the test suite executing the very source files
that section displays verbatim. The following items are emitted
directly by the suite's generator: the readback items go through the
suite's ToPython entry mixins (the arithmetic entry is the one
described in that section's footnote; the Cartesian-product items
use the corresponding test entry), and the last item is an FFI-free
metalanguage witness.
\fi

\begin{enumerate}
  \item The sum of the case study's addition: the ToPython readback of the path $(\text{``ArithmeticTest''}, \text{``threePlusFour''}, \text{``sum''})$ is $\{\text{7}\}$.
  \item The equality check: the readback of $(\text{``ArithmeticTest''}, \text{``threePlusFourEqualsSeven''}, \text{``equal''})$ is $\{\text{True}\}$, so $(3+4) = 7$ holds in the executed encoding.
  \item The Cartesian-product semantics: the readback of $(\text{``CartesianProductTest''}, \text{``NatOneOrTwoPlusThreeOrFour''}, \text{``sum''})$ is $\{\text{4}, \text{5}, \text{6}\}$, the pointwise sums of $\{1, 2\}$ and $\{3, 4\}$.
  \item The relational equality: the readback of $(\text{``CartesianProductTest''}, \text{``NatResultEqualitySelf''}, \text{``equal''})$ is $\{\text{False}, \text{True}\}$: comparing the set-valued sum with itself observes both outcomes, one per pair of drawn values.
  \item The FFI-free witness: from the core path $(\text{``NatArithmeticTest''}, \text{``threePlusFourEqualsSeven''}, \text{``equal''})$ shown in the paper, iterating the $\bases$ step of equation~(\ref{eq:bases}) reaches the written definition $(\text{``NatArithmeticTest''}, \text{``BooleanFactory''}, \text{``True''})$ in 16 hops: the mixin the reference resolves to inherits \mintinline{yaml}{True}.
\end{enumerate}

\begin{remark}
\sloppy
\ifInheritanceChinese
  关键在于 $\this$ 步骤（条目~\ref{trace:this-Has-MyOuter-MyInner-1}）。
  $(\text{``MyOuter''}, \text{``MyInner''}, \text{``outer''})$ 处的继承引用向上走一步,因此 $\this$ 从 $(\text{``MyOuter''}, \text{``MyInner''})$ 向上走一步,产生前沿 $\{(\text{``Object1''}), (\text{``Object2''})\}$,即合并了 $\text{``MyInner''}$ 的两个继承站点。
  这正是 Scala~3 与 NixOS 模块系统都以静态错误拒绝的多目标情况（附录~\ref{app:scala-multi-target}）:在那里 $\this$ 必须选择单个站点。
  之所以 $(\text{``MyOuter''})$ 之下的路径可被命中,是因为 $(\text{``Object1''})$ 和 $(\text{``Object2''})$ 都以 $(\text{``MyOuter''})$ 为基础,因此最终的 $\supers$ 通过两条路径都包含 $(\text{``MyOuter''})$；由于一条路径只要某条路由上的 override 定义了其末标签即存在,两条路由折叠为同一结论:路径 $(\text{``HasMultipleOuters''}, \text{``outer''}, \text{``MyInner''})$ 与 $(\text{``MyOuter''}, \text{``MyInner''})$ 都存在。
\else
  The crux is the $\this$ step (item~\ref{trace:this-Has-MyOuter-MyInner-1}).
  The inheritance reference at
  $(\text{``MyOuter''}, \text{``MyInner''}, \text{``outer''})$ takes one step
  upward, so $\this$ moves one step up from
  $(\text{``MyOuter''}, \text{``MyInner''})$ and yields the frontier
  $\{(\text{``Object1''}), (\text{``Object2''})\}$, the two inheritance
  sites that incorporated $\text{``MyInner''}$. This is precisely the
  multi-target situation that Scala~3 and the NixOS module system both reject
  with a static error (Appendix~\ref{app:scala-multi-target}): there $\this$
  would have to select a single site. The paths under $(\text{``MyOuter''})$
  are reached because both $(\text{``Object1''})$ and
  $(\text{``Object2''})$ have $(\text{``MyOuter''})$ as a base, so the final
  $\supers$ contains $(\text{``MyOuter''})$ through both routes; since
  a path exists as soon as one override on some route defines its last label,
  the two routes collapse to the same conclusion: both
  $(\text{``HasMultipleOuters''}, \text{``outer''}, \text{``MyInner''})$ and
  $(\text{``MyOuter''}, \text{``MyInner''})$ exist.
\fi
\end{remark}

\fi 

\ifInheritanceBody\else
\bibliographystyle{ACM-Reference-Format}
\bibliography{references}
\fi

\ifInheritanceChinese\clearpage\end{CJK*}\fi
\end{document}